\documentclass{article}
\usepackage{latexsym, amssymb, enumerate, amsmath}
\usepackage{setspace}
\usepackage{enumitem}

\usepackage{amsthm}
\usepackage{bbm}
\usepackage[margin=1in]{geometry}


\usepackage{graphicx}
\usepackage[labelformat=empty]{subfig}
\usepackage{wrapfig}
\usepackage{color}
\sloppy

\thinmuskip = 0.5\thinmuskip \medmuskip = 0.5\medmuskip
\thickmuskip = 0.5\thickmuskip \arraycolsep = 0.3\arraycolsep

\renewcommand{\theequation}{\arabic{section}.\arabic{equation}}
\makeatletter
\@addtoreset{equation}{section}
\makeatother


\newcommand{\N}{\mathbb{N}}

\newcommand{\q}[0]{f_0 +\frac{1}{2} h_{0}^2}
\makeatletter
    
    \newcommand{\Rmnum}[1]{\expandafter\@slowromancap\romannumeral #1@}
  \makeatother

\def\({\left(}
\def\){\right)}
\def\[{\left[}
\def\]{\right]}

\newcommand{\one}[0]{\mathbbm{1}}

\newcommand{\matr}[4]{\left( \begin{array}{ll} #1 & #2 \\ #3 & #4 \end{array}\right)}
\newcommand{\vect}[2]{\left( \begin{array}{l} #1 \\ #2 \end{array} \right)}

\newcommand{\wt}[1]{\widetilde{#1}}

\newcommand{\R}[0]{\wt{R}}


\newtheorem{thm}{Theorem}[section]
\newtheorem{prop}[thm]{Proposition}
\newtheorem{lem}[thm]{Lemma}
\newtheorem{cor}[thm]{Corollary}

\newtheorem{defn}[thm]{Definition}

\newtheorem{conj}[thm]{Conjecture}

\newtheorem{theorem}{Theorem}
\newtheorem{lemma}{Lemma}
\newtheorem{corollary}[lemma]{Corollary}

\newcommand{\beann}{\begin{eqnarray*}}
\newcommand{\eeann}{\end{eqnarray*}}
\newcommand{\bea}{\begin{eqnarray}}
\newcommand{\eea}{\end{eqnarray}}

\newcommand{\ip}[1]{\left\langle #1 \right\rangle}
\newcommand{\dmu}[0]{\exp\left[-N V(\lambda)\right]\,d\lambda}

\newtheorem{rem}[thm]{Remark}

\begin{document}
\date{}

\title{RELATING RANDOM MATRIX MAP ENUMERATION TO A UNIVERSAL SYMBOL CALCULUS FOR RECURRENCE OPERATORS IN TERMS OF BESSEL-APPELL POLYNOMIALS}
\author{Nicholas M. Ercolani
\thanks {Department of Mathematics, The University of Arizona, Tucson, AZ 85721--0089, ({\tt ercolani@math.arizona.edu}). Supported by NSF grant DMS-1615921.}
\and Patrick Waters \thanks {Liftoff Mobil, Inc., 
900 Middlefield Road, 2nd Floor
Redwood City, CA 94063}}


\maketitle


\begin{abstract}  
Maps are polygonal cellular networks on Riemann surfaces. This paper analyzes the construction of {\it closed form} general representations for the enumerative generating functions associated to maps of fixed but arbitrary genus. The method of construction developed here involves a novel asymptotic symbol calculus for difference operators based on the relation between spectral asymptotics for Hermitian random matrices and asymptotics of orthogonal polynomials with exponential weights. These closed form expressions have a universal character in the sense that they are independent of the explicit valence  distribution of the cellular networks within a broad class. Nevertheless the valence distributions may be recovered from the closed form generating functions by a remarkable {\it unwinding identity} in terms of Appell polynomials generated by Bessel functions. Our treatment reveals the generating functions to be solutions of nonlinear conservation laws and their prolongations. This characterization enables one to gain insights that go beyond more traditional methods that are purely combinatorial. 
Universality results are connected to stability results for characteristic singularities of conservation laws that were studied by Caflisch, Ercolani, Hou and Landis \cite{CEHL93} as well as directly related to universality results for random matrix spectra. 
\end{abstract}

{\bf keywords:}
random matrices, Toda lattices, Motzkin paths, string equations, conservation law hierarchies, map enumeration, 
combinatorial generating functions, Hopf algebras, orthogonal polynomials, Appell polynomials

\smallskip

{\bf subject classifications.} 
05A40, 05C30, 16T05, 34M55, 60B20, 60C05, 70S10, 83E30

\tableofcontents

\section{Introduction} \label{sec:0} 

This paper focusses on the use of random matrix theory methods to derive explicit closed form expressions of generating functions for map enumeration.  There is a large literature of works on map enumeration including those that use a matrix integral approach (a brief review is given in Section \ref{approaches}). What is novel in what we present here is the development of a broad set of universal structures that underlie these generating functions and provide insight into their calculation. These structures involve ideas from the semiclassical symbol calculus for difference operators and associated Poisson structures, the umbral calculus of Appell polynomials, singularity formation in conservation laws, and the arithmetic geometry of rational ruled surfaces. By {\it universal} here we mean properties that are valence independent as regards map enumeration within a broad class of valence distributions (or, equivalently, independent of the polynomial random matrix potential within a corresponding broad class). 

Within this opening introduction, Section \ref{motivation} provides basic background relating random matrix theory to map enumeration with further elaboration on enumeration given in Section \ref{statement}. Section \ref{key} develops the key ingredients related to the new ideas in this paper. Section 2 will lay out a concise summary of our results and concludes with an outline of the remainder of the paper.

\subsection{Motivation} \label{motivation} 

The intuitive notion of a {\it map} is that of a graph embedded in a compact connected oriented surface such that the complement of the graph is a cellular decomposition of the surface. A depiction of a map in a {\it local chart} on a surface is illustrated by the solid graph embedded in a planar region shown in Figure \ref{dualmap}. Note that in this example all vertices have valence 3 (in the graph-theoretic sense). Maps whose vertices all have the same valence, $j$, are referred to as $j-regular$ maps in analogy with the terminology for graphs. Figure \ref{dualmap} also (locally) illustrates 
the dual map (depicted in terms of the dashed graph). The 3-regularity of the original map results in the dual map being a triangulation of the surface. 
\begin{figure}[h] 
\begin{center} 
\resizebox{3in}{!}{\includegraphics{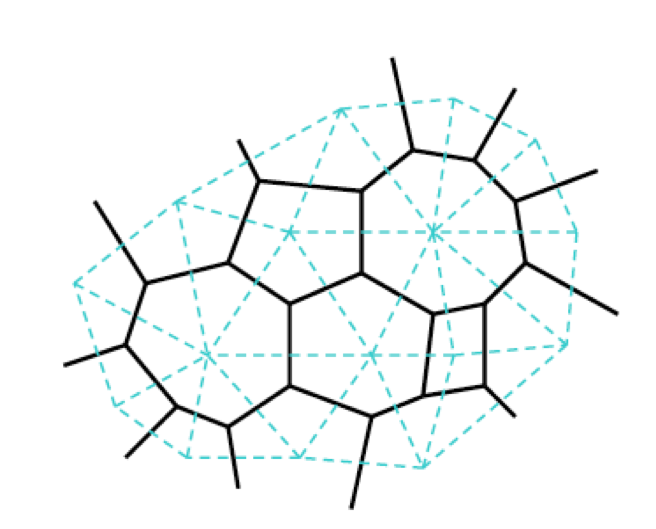}}
\caption{3-Valent Map and its Triangular Dual}  \label{dualmap}
\end{center}
\end{figure}

Maps are combinatorial objects and, as with graphs, it is natural to try to enumerate them (up to some appropriate notion of equivalence). {Map enumeration} has a long history involving several different approaches which we will briefly review at the end of this introduction. But for now it will suffice to state that the approach discussed in this paper is primarily analytical, rather than combinatorial, based on a remarkable connection to random matrix theory. The main thrust of the analytical approach to map enumeration centers on the large-$n$ expansion of Hermitian random matrix ensembles that extend the { \it Gaussian Unitary Ensemble} (GUE) model defined on the space of $n \times n$ Hermitian matrices, $\mathcal{M}_n$, with the family of probability measures
\begin{eqnarray} \label{RMmeasures}
d\mu_{\bf t} &=& \frac1{\tilde{Z}_{n}} \exp\left\{-n \text{Tr } [ V_{\bf t}(M)] \right\} dM
\end{eqnarray}
for $M \in \mathcal{M}_n$ where $V_{\bf t}$ is typically a polynomial 
\beann
V_{\bf t}(\lambda) &=& \frac1{2x} \lambda^2 + \frac1x\sum_{j=1}^J t_j \lambda^j
\eeann
referred to as the {\it potential} of the ensemble. The parameter $x$ (positive) is referred to as the t'Hooft parameter. (This parameter may be introduced in 
another way which will be used in later sections of this paper -- see section \ref{sec:3.1}.)

The large $n$ asymptotics of the {\it partition function} 
\beann
\tilde{Z}_{n}({\bf t}) = \int_{\mathcal{M}_n} \exp\left\{-n \text{Tr } V_{\bf t} \right\} dM,
\eeann 
which serves to normalize $d\mu_{\bf t}$ to be a probability measure, plays the foundational role in constructing generating functions for map enumeration.  (For $J$ odd one needs to consider analytic deformations in order to make sense of this integral and the associated measure--see section \ref{OddValentExt}.) Due to the unitary invariance of the partition function it, as well as expectations of unitarily invariant random variables with respect to \ref{RMmeasures}, may be reduced to integrals over eigenvalues of Hermitian matrices. For this reason this approach to map enumeration is sometimes referred to as the {\it matrix integral method}.
This too has a long history going back to Bessis, Itzykson and Zuber who, in \cite{BIZ80},  first introduced the idea of relating map enumeration to a path integral type formulation of these integrals that could be evaluated via an idea due to Dyson using orthogonal polynomials with exponential weights. The proper formulation of the large $n$ asymptotic analysis in this approach was placed on the firm footing of Riemann-Hilbert analysis in the work of Fokas, Its and Kitaev \cite{FIKII}. Ercolani and McLaughlin in \cite{EM03} provided a rigorous verification of the $1/n^2$ structure of the full asymptotic expansions of generating functions related to the partition functions $\tilde{Z}_n({\bf t})$. 

This last result is expressed in terms of the one point function for random matrix eigenvalues, 

\begin{eqnarray} \label{eq:onepoint}
\rho^{(1)}_n(\lambda) &=& \frac{d}{d\lambda} \mathbb{E}_{\vec{{\bf t}}}\left( \frac1n \# \left\{ k : \lambda_k \in (-\infty, \lambda) \right\}\right)\\
&=& \frac1n K_n(\lambda, \lambda)
\end{eqnarray}
where $\lambda_k$ runs over the eigenvalues of $M$ and 
\begin{eqnarray*}
K_n(\lambda,\eta) &=& e^{ -(n/2) ( V_{\vec{{\bf t}}}(\lambda) + V_{ \vec{{\bf t}}} (\eta)   ) } \sum_{\ell = 0}^{n-1} \pi_\ell(\lambda) \pi_\ell(\eta) 
\end{eqnarray*}
is the Mercer kernel associated to the family of monic orthogonal polynomials $\{ \pi_\ell\}$ with exponential weight 
$e^{-n V_{\bf t}(\lambda)}$ on the real line. 
 
Precisely, the fundamental analytical result states that 
\begin{theorem} \cite{EM03, BD16} \label{Workhorse}
There exist $T >0$ and $\gamma > 0$ such that one has an asymptotic expansion, uniformly valid for $x$ sufficiently close to 1 and all  ${\bf t} \in \mathbb{T}(T, \gamma) = \left\{  {\bf t} \in \mathbb{R}^J : \left|{\frac1x {\bf t}}\right| \leq T , \,\, t_J > \gamma \sum_{j=1}^{J-1} |t_j| \right\}$,  of the form
\beann
\int_{-\infty}^\infty F(\lambda) \rho^{(1)}_n (\lambda) d\lambda = F_0(x, {\bf t}) + n^{-2} F_1(x, {\bf t}) + n^{-4} F_2(x, {\bf t}) + \cdots , 
\eeann
provided the function $F(\lambda)$ is $C^\infty$ and grows no faster than polynomially. The coefficients $F_m$ depend analytically on $x$ and ${\bf t}$ for ${\bf t} \in \mathbb{T}(T, \gamma)$ and the asymptotic expansion may be differentiated 
term by term with respect to $x$ and $\bf t$. 
\end{theorem}

The proof of this result relies centrally on analysis of a Riemann-Hilbert problem for the orthogonal polynomials 
$\{ \pi_\ell\}$ mentioned above. Clearly this makes apriori sense only if the dominant term in the polynomial $V_{\bf t}(\lambda)$ is even, in order to ensure existence of the orthogonal polynomials. This is the case that was treated in  \cite{EM03}. However,  Bleher and Dea\~{n}o \cite{ BD16} showed how the result may be extended to the case where the dominant term in $V_{\bf t}(\lambda)$ is odd by using {\it non-Hermitian} orthogonal polynomials (see Section \ref{OddValentExt}).

Applying Theorem \ref{Workhorse} for monomial $F(\lambda) (= \lambda^m)$ and making appropriate anti-differentiations one derives the asymptotic expansion of the logarithm of the partition function \cite{EM03}:

\bea \label{TopRec}
\log \frac{ \tilde{Z}_n(x, {\bf t})}{\tilde{Z}_n(x, 0)} &=&
n^{2} E_{0}(x, {\bf t}) + E_{1}(x, {\bf t}) + \frac{1}{n^{2}} E_{2}(x, {\bf t}) + \cdots +  \frac{1}{n^{2g-2}} E_{g}(x, {\bf t}) + \dots .
\eea

This expansion provides the fundamental link between random matrix theory and map enumeration due to the remarkable fact that $E_{g}(x,{\bf t})$ is the exponential generating function whose $n^{th}$ term enumerates labelled maps with $n$-vertices on a compact surface of genus $g$ with valence distribution characterized by the parameter values $\bf t$. (This is defined precisely at the start of Section \ref{statement}.)

In \cite{EMP08, Er09} the orthogonal polynomial methodology underlying Theorem \ref{Workhorse} was used to derive closed form expressions for the $E_{g}(x, {\bf t})$ in the case of even potentials. (See Appendix \ref{evenvalence}). This was extended to odd order dominated potentials in \cite{EP11}.  
\medskip

This paper is based on the matrix integral/orthogonal polynomial approach. This provides the direct connections that are fundamental to what we do: the difference operators (the recursion operators for orthogonal polynomials) and the symbol calculus for the continuum limits, via Theorem \ref{Workhorse}, of these operators. In the remainder of this Introduction we outline the key novel ingredients for our analysis.  

 \subsection{Key Ingredients} \label{key}
      
\subsubsection{Pivotal Characterization of the Equilibrium Measure}

All methods of map enumeration ultimately come to rest upon two fundamental generating functions, (\ref{endpoint1}) and (\ref{endpoint2}) below. In the analytical approaches these are naturally interpreted in terms of the support of the equilibrium measure for the asymptotic density of states for the random matrix ensemble (\ref{RMmeasures}) which also turns out to equal the asymptotic distribution of the zeroes of the orthogonal polynomials $\{ \pi_\ell\}$, \cite{Deift}, introduced ealier. 

By definition, the equilibrium measure is the large $n$ limit of the one point function (\ref{eq:onepoint}),

\bea
d\mu_{eq} &=& \lim_{n \to \infty} \rho^{(1)}_n(\lambda)
\eea
which can be shown to be absolutely continuous with respect to Lebesgue measure with Radon-Nikodym derivative determined by solving a scalar Riemann-Hilbert problem \cite{DKM}:

\bea \label{dmu}
d\mu_{eq} &=& \Lambda \,\, d\lambda\\ \nonumber
\Lambda(\lambda) &=& \frac1{2 \pi } \mathbbm{1} _{[r_-, r_+]} (\lambda) \sqrt{(\lambda - r_-)(r_+ - \lambda)} H(\lambda) 
\eea
As a consequence of elementary Riemann-Hilbert analysis $r_\pm$ depends locally analytically on $x$ and $\bf t$ as do the following symmetric combinations
\bea \label{endpoint1}
h_0(x, {\bf t}) &=& \frac{r_+ + r_-}{2}\\ \label{endpoint2}
f_0(x, {\bf t}) &=& \frac{(r_+ -r_-)^2}{16}
\eea 

$H(\lambda)$ is a polynomial of degree $J-2$ which is completely determined by $r_\pm$ \cite{DKM, EMP08}
and in fact may be explicitly written solely in terms of $h_0$ and $f_0$ \cite{Er09, W2}:
\beann
H(\lambda) = \sum_{k=0}^{J-2} (\lambda - r_+)^k   \sum_{m=0}^{k} \sqrt{f_0}^{m-k-1}\left(c^{(\phi)}_{k,m} \sqrt{f_0}\phi_{V,m} + c^{(\psi)}_{k,m} \psi_{V,m}\right)
\eeann
where $c^{(\phi)}_{k,m}$ and $c^{(\psi)}_{k,m}$ are constants independent of the choice of potential $V$, and $\phi_{V,\ell}, \psi_{V,\ell}$ are explicit polynomial functions of $h_0$ and $\sqrt{f_0}$ referred to as {\it string polynomials} for reasons that will be explained shortly. Sending $\sqrt{f_0}$ to $-\sqrt{f_0}$ (which interchanges $r_+$ with $r_-$ ) yields an equivalent representation of $H(\lambda)$. $H$ may be thought of as a function that modulates the classical {\it semicircle law} of the 
{GUE} ensemble.

One of the first key results of this paper is the discovery of a fundamental relation between the string polynomials  and Bessel functions $I_0$ and $I_1$, the {\it modified Bessel functions} of orders $0$ and $1$ respectively: 
\begin{thm} \label{thm:1.1}
\beann
\phi_{V,m} &=& \frac1x \sum_{j=1}^J t_j j!  [s^{j-1-m}] e^{s h_0} I_0(2 s \sqrt{f_0})\\
\psi_{V,m} &=&  \frac1x \sum_{j=1}^J t_j j!  [s^{j-1-m}] e^{s h_0} f_0^{1/2} I_1(2 s \sqrt{f_0})
\eeann
where $[s^\ell]F(s)$ denotes the coefficient of $s^\ell$ in the Maclaurin series for $F$. 
\end{thm}
A corollary of this is that the polynomial factor in (\ref{dmu}) may be written as
\bea \label{Hlambda}
H(\lambda) &=& \frac1{\sqrt{f_0}} \sum_{k=0}^{J-2}  \Xi_k(\sqrt{y_0}) (\lambda - r_+)^k \,\,\,\, \mbox{where}\\ \label{normmean}
y_0 &=& \frac{h^2_0}{f_0}
\eea
and $\Xi_k(\zeta)$ is a certain linear combination of {\it Appell polynomials} (Section \ref{AppellBessel}.) 
The generating functions of  the Appell polynomials are the reciprocals of $I_0$ and $I_1$. Bessel functions arise in several limiting cases of random matrix kernels but their appearance here is something new. 
\smallskip

The observation of the central role played by the ratio $y_0$ in this determination of the equilibrium measure also appears to be something new. To better appreciate this we note that $h_0$ and $f_0$ arise as limits of natural pivotal statistics in random matrix theory:
We will see in Section \ref{background},  that these generating functions can be recovered from the basic genus 0 map generating function, $E_0$ itself as
\begin{eqnarray}  \label{var} 
f_0({\bf t}, x) &=& \lim_{n \to \infty}  \mathbb{E}_{\bf t}\left( \left(\text{Tr }  V_{\bf t}(M)\right)^2\right) \\ \label{e0tt}
 &=& \frac{\partial^2}{\partial t_1^2} E_0({\bf t}, x)\\ \label{xmean}
h_0({\bf t}, x) &=& \lim_{n \to \infty}  \mathbb{E}_{\bf t}\left( \left(\text{Tr } V_{\bf t}(M)\right)_x\right)  \\ \label{e0tx}
&=& \frac{\partial^2}{\partial t_1 \partial x} E_0({\bf t}, x).
\end{eqnarray}

One sees from (\ref{var}) and (\ref{xmean}) that $f_0$ is the asymptotic expectation of the random matrix variance, 
$ \left(\text{Tr }  V_{\bf t}(M)\right)^2$ while $h_0$ is the asymptotic expectation of the $x$-differentiated random matrix mean. These matrix statistics are very much in the character of what one sees with {\it non-commutative expectations}. 
In this setting $\sqrt{y_0}$, appearing as the argument in the Appell polynomials, has the character of an asymptotic normalized mean (mean divided by standard deviation). And indeed, if one rescales $\lambda$ in  (\ref{dmu}) as $\lambda = \sqrt{f_0} \sigma$ then $d\mu_{eq}$ as a function of $\sigma$ has precisely the form of the classical semicircle law centered at 
$\sqrt{y_0}$ and modulated by $H$ as a function of $\sigma$. 

There is a self-similar scaling structure in the equilibrium measure which stems from  fundamental relations between the t'Hooft parameter $x$ and the parameters $\xi_j$ known as {\it string relations}: 
\bea \label{scalingvars}
\xi_j &=& t_j x^{j/2 - 1}\\ \label{fscale}
f_0(x, \vec{t}) &=& x z_0(\dots,  \xi_j ,\dots)\\ \label{hscale}
 h_0(x, \vec{t}) &=& x^{1/2} u_0(\dots,  \xi_j ,\dots)
\eea
Though not immediately evident from the random matrix perspective, this scaling structure is fundamentally related to map combinatorics basically as a consequence of Euler's relation as will be explained in section \ref{background}. In the setting of {\it string theory} \cite{Pol}, $x$ plays the role of a fundamental coupling parrameter.  This is the origin of some of the nomenclature we use (such as {\it string polynomials}).
We see from (\ref{fscale}) and (\ref{hscale}) that the pivotal quantity, $y_0$, on which the modulation factor $H$ solely depends, has no external $x$-weight; i.e., it depends on the parameter $x$ only through the scaling variables (\ref{scalingvars}).

\subsubsection{Symbols of Difference Operators and Semi-Classical Limits} \label{sec:diffsymb}

To develop the monic orthogonal polynomials used in the Riemann-Hilbert analysis, one begins with a weighted Hilbert space 
$ \mathbb{H} = L^2\left(\mathbb{R}, e^{-n V(\lambda)}\right)$
where $V$ is a polynomial of the form $V_{\bf t}$ considered in (\ref{RMmeasures}). Again for $J$ odd one must consider complex deformations of $\mathbb{H}$ (see Section \ref{OddValentExt}); for simplicity of exposition we will initially here take $J$ to be even.  Next consider the basis of monic orthogonal polynomials with respect to this measure.

\begin{eqnarray*}
\pi_\ell(\lambda) &=& \lambda^\ell+ \,\,\mbox{lower order terms}\\
\int \pi_\ell(\lambda) \pi_m(\lambda) e^{-nV(\lambda)} d\lambda &=& 0\,\, \mbox{for}\,\, \ell \ne m.
\end{eqnarray*}
Let $\pi= ( \pi_\ell(\lambda))_{\ell\geq 0}$ be the column vector of all orthogonal polynomials for the potential $V$; then the operator of multiplication by $\lambda$ is representable as 

$$\lambda \pi = L \pi,$$
in terms of the semi-infinite tri-diagonal matrix (three term recurrence relation) , 
\begin{equation} \label{multop}
L = \left(\begin{array}{cccc} a_0 & 1 &  &\\
                              b^2_1 & a_1 & 1 &\\
			          & b^2_2 & a_2  & \ddots \\
                                  &      & \ddots & \ddots 
\end{array}\right)\,.
\end{equation}
$L$ is commonly referred to as the {\it recursion operator} for the orthogonal polynomials and its entries as {\it recursion coefficients}. (For $V$  an entirely even potential, it follows from symmetry that $a_\ell = 0$ for all $\ell$.)

In \cite{EMP08,EP11}, Ercolani, McLaughlin and Pierce analyzed the recurrence coefficients $a_\ell$ and $b^{2}_\ell$ using 
the {\it string equations} 
\begin{align}		
\frac{1}{n}I &= \left[V'(L )_-, L \right].	\label{string1}	
\end{align}
and the {\it Toda lattice equations}
\begin{align}
\frac{x}{n}\partial_{t_j} L =& [L ^{j}_-,L] \label{Toda1}
\end{align}
Here the subscript in $V'(L )_-$ and $L^j_-$ denotes the strictly lower triangular part of the semi-infinite matrices $V'(L )$ and $L^j$, respectively. Equations (\ref{string1}) and (\ref{Toda1}) are, respectively,  direct coordinate realizations of the canonical Heisenberg relation, $[\partial_\lambda, \lambda] = 1$ , and the variational equations for deformations of the exponential weights.

In this paper the explicit calculation of expansions, such as (\ref{TopRec}), that are guaranteed by Theorem \ref{Workhorse} is based on the asymptotic analysis of equations (\ref{string1}) and (\ref{Toda1}). Conceptually it is helpful to view this analysis in terms of a type of WKB {\it symbol calculus}. Such a perspective has been adopted in other settings such as for {\it continuum convexity theorems} \cite{BFR93} and {\it Toeplitz quantization} \cite{BGPU03}. What we do here differs from those cases firstly in that we consider a thermodynamic limit rather than a hydrodynamic limit. Second we explicitly calculate our symbol expansions to higher order rather than to just leading order. To formulate this perspective one makes use of the infinite order pseudo-differential operator (sometimes also called a {\it vertex operator}) defined by the obvious Taylor expansion relation:
\begin{eqnarray} \label{boson1}
e^{c\partial_x} f(x) &\doteq& \sum_{n \geq 0} \frac{c^n}{n!} \frac{\partial^n}{\partial w^n} f(w)|_{w = x} = f(w)
\end{eqnarray}
where $w = x + c$. Then the semi-classical limit of the recursion operator $L$ may be expressed as

\bea \label{symbol}
L &\sim & \mathcal{L} = f(x) e^{-\frac1n\partial_x} + h(x) + e^{\frac1n\partial_x}
\eea
where 
\beann
f(x) &=& f_0({\bf t}, x) + \dots + \frac1{n^{2g}}f_{g}({\bf t}, x) + \dots \\
 h(x) &=&  h_0({\bf t}, x) + \dots + \frac1{n^{g}}h_{g}({\bf t}, x) + \dots 
\eeann
are the asymptotic expansions of $b_{n}^2$ and $a_{n}$ determined by Theorem \ref{Workhorse} (described further in Section \ref{workhorse}); in this context they describe the symbol sequence for $L$ associated to a WKB representation of its eigenvectors with respect to the basis $\{\pi_\ell\}$. To get the idea one may formally take these eigenvectors to be asymptotically of the form
$e^{n S(x)}$. A straightforward expansion of 
\beann
\mathcal{L} \,\,\, e^{n S(x)}
\eeann
yields a leading order or {\it principal symbol} of the form
\bea \label{symbol0}
\mathcal{L}_0 = f_0(x) \,\,\, e^{-\theta} + h_0(x) + e^\theta 
\eea
where $\theta = S_x$. This further reveals the leading order form of (\ref{string1}) and (\ref{Toda1}) to be
\bea \label{PoissonString}
1 &=& \left\{ V'(\mathcal{L}_0)_- , \mathcal{L}_0 \right\} \\ \label{PoissonToda}
x \partial_{t_j} \mathcal{L}_0 &=& \left\{ \mathcal{L}^j_{0_- }, \mathcal{L}_0 \right\}
\eea
where the subscript here denotes truncation to terms with negative powers of $e^\theta$ and the bracket is the 
canonical Poisson bracket   
\beann
\left\{ \mathcal{F}(x,\theta) , \mathcal{G}(x, \theta) \right\}  &=& \frac{\partial \mathcal{F}}{\partial \theta} \frac{\partial \mathcal{G}}{\partial x} - \frac{\partial \mathcal{F}}{\partial x}\frac{\partial \mathcal{G}}{\partial \theta}. 
\eeann
For the 3-regular case, from these equations, by matching the coefficients of 1 and $e^{-\theta}$ respectively (all other coefficients cancel) one gets the systems 

\bea \label{pathstring}
\left( \begin{array}{c} 0 \\ 1 \end{array} \right) &=& \left( \begin{array}{cc} 1 + 6 t h_0  & 6t\\ 6t f_0 & 1 + g t h_0  \end{array} 
\right) \left( \begin{array}{c} h_{0 x} \\ f_{0x} \end{array} \right)
\eea
and 
\bea
\left( \begin{array}{c} h_{0 t} \\ f_{0 t} \end{array} \right) &=& \frac3{x} \left( \begin{array}{cc} 2 h_0 f_0  & 2 f_0 + h_0^2\\ 
 (2 f_0 + h_0^2) f_0 & 2 h_0 f_0  \end{array} 
\right) \left( \begin{array}{c} h_{0 x} \\ f_{0x} \end{array} \right)
\eea
This leading order limit is in accord with the standard quantum mechanical correspondence principle. The interesting relation with Poisson structures, which also relates to quantum groups, will be taken up elsewhere. 
We will more typically be using a path-integrated version ($x$-antiderivative) of (\ref{PoissonString}) which for the 3-regular 
case of (\ref{pathstring}) has the form
\bea \label{3string}
\left( \begin{array}{c} 0 \\ x \end{array} \right) &=& \left( \begin{array}{c} h_0 + 3t(2f_0 +h_0^2) \\ f_0 + 6 t h_0 f_0 \end{array} \right)
\eea
\smallskip

The fundamental connection between these symbol expansions and the map generating functions in (\ref{TopRec}) follows from the semi-classical expansion of a Hirota relation (\ref{Hirota}). This is explicitly given by
\begin{prop} \cite{EMP08} 
\begin{eqnarray} \label{boson12}
\left(e^{\frac1{n} \partial_x} - 2 + e^{- \frac1{n} \partial_x}\right) \frac1{n^2}\log \frac{ \tilde{Z}_n(x, {\bf t})}{\tilde{Z}_n(x, 0)}  &=& \left(e^{\frac1{n} \partial_x} - 2 + e^{- \frac1{n} \partial_x}\right)\sum_{g \geq 0}  \frac{1}{n^{2g}} E_{g}(x, \bf t\;) \\ \nonumber 
= \sum_{\ell \geq 1} \frac{2 n^{-2\ell}}{2\ell !} \frac{\partial^{2\ell}}{\partial w^{2\ell}}\left(\sum_{g \geq 0}  \frac{1}{n^{2g}} E_{g}(w, {\bf t}\;)\right)\Big|_{w=x}  
&=& \sum_{g \geq 0} \sum_{\ell \geq 1} \frac{2 n^{-2(g +\ell)}}{2\ell !} \frac{\partial^{2\ell}}{\partial w^{2\ell}} E_{g}(w, {\bf t}\;)\Big|_{w=x}\\ \nonumber
= \sum_{g \geq 0} \sum_{\ell = 0}^g \frac{2 n^{-2g}}{(2\ell + 2) !} \frac{\partial^{2\ell + 2}}{\partial w^{2\ell + 2}} E_{g-\ell}(w, {\bf t}\;)\Big|_{w=x}
&=& \log f_0(x, {\bf t}) + \sum_{g \geq 1} \frak{C}_g(f_0, f_1, \dots, f_g)(x, {\bf t}) n^{-2g}
\end{eqnarray}
where 
\bea \label{Cum}
\frak{C}_g(f_0, f_1, \dots, f_g) =  \mbox{the}\,\,\,  \mathcal{O}\left(n^{-2g}\right) \,\,\, \, \mbox{terms of}\,\,\, \log\left( 1 + \sum_{m=1}^\infty \frac1{n^{2m}} \frac{f_m}{f_0}\right). 
\eea
and $\frak{C}_0 = \log \frac{f_0}{x}$.
\end{prop}

\noindent The formal manipulations above are justified by the validity of the large $n$ expansion (\ref{TopRec}). This proposition may be rephrased as a recursive scheme for calculating the map generating functions:

\begin{eqnarray} \label{EggDE}
\frac{\partial^2}{\partial x^2} \widehat{E}_g &=& \frak{C}_g(f_0, f_1, \dots, f_g) \,\,\,\, \mbox{where} \\ \nonumber
\widehat{E}_g  &=&  E_g + \sum_{\ell = 1}^g
\frac{2}{(2\ell + 2)!} \frac{\partial^{2\ell}}{\partial x^{2\ell}} E_{g - \ell}
\end{eqnarray}

\begin{rem}
If one views the coefficients $\frac{f_m}{f_0}$ in (\ref{Cum}) as moments in a moment generating function, then 
the $\frak{C}_g$ are just the {\rm cumulants} for those moments which, by (\ref{EggDE}), are also given by recursive 
anti-differentiations in $x$ of those cumulants.
\end{rem}

\medskip

In this paper we will apply a slight modification of the symbol calculus just described in order to align with our full asymptotic expansions from Theorem \ref{Workhorse} and to describe how higher order equations deriving from (\ref{string1}) and (\ref{Toda1}) relate to the string polynomials and Bessel functions described in the previous sub-section. In section \ref{sec:44}
we will develop a path integrated version of the string equations which may be compactly expressed in terms of the resolvent 
of the recursion operator as

\beann
0 &=& a_n + \sum_{j=1} ^ J j t_j [z^{j-1}] \left[ (I - z L)^{-1}\right]_{n,n}\\
x  &=& b^{2}_n + \sum_{j=1} ^ J j t_j [z^{j -1}] \left[ (I - z L)^{-1}\right]_{n,n-1}.
\eeann
where $[z^{k}]$ picks out the $k^{th}$ coefficient in the geometric expansion of the resolvent. The semi-classical expansion of these equations has the form
\bea \label{scstring}
\left( \begin{array}{c} 0 \\ x \end{array} \right) &=&  \left( \begin{array}{c} h\\ f \end{array}  \right) + \sum_{j=1} ^ J j t_j [z^{j-1}] \left( \begin{array}{c} [\eta^0] \cr [\eta^{-1}] \sqrt{f}\end{array} \right) \left[ (1 - z \mathcal{L})^{-1}\right],
\eea
where here
\bea \label{symbol1}
\mathcal{L} &=& \sqrt{f(x)} \eta + h(x) + \sqrt{f(x)} \eta^{-1}.
\eea
Its principal symbol is
\bea \label{symbol1-0}
\mathcal{L}_0 &=& \sqrt{f_0(x)} \eta + h_0(x) + \sqrt{f_0(x)} \eta^{-1}\\ \nonumber
&=& \sqrt{f_0(x)} \left( \eta + \sqrt{y_0} + \eta^{-1}\right)
\eea
which provides the connection to the arguments of the linear combination of Appell polynomials appearing in (\ref{Hlambda}). Comparing this to (\ref{symbol0}) one sees that $\eta$ here plays a role similar to that of $e^\theta$ and this $\mathcal{L}_0$ is the symmetrized version of (\ref{symbol0}) corresponding to the symmetric recursion operator one would have for orthonormal rather than monic orthogonal polynomials. 

The deeper connection to Bessel functions will be seen, in Section \ref{StringApple}, to arise from the identity 
\beann
e^{X (\eta + \eta^{-1})} &=& \sum_{n = - \infty}^\infty I_n(2X) \eta^n
\eeann
where $I_n$ is the modified Bessel function of order $n$.

\subsubsection{Conservation Laws }

In addition to having a semi-classical limit (\ref{PoissonToda}) that is Hamiltonian, the Toda equations (\ref{Toda1}) have a semiclassical limit via (\ref{symbol1-0}) that has the form of a commuting hierarchy of conservation laws. Such multiple representations are not uncommon in integrable systems theory but the additional structure of the string equations in our systems is a special feature that we want to understand better and make use of.

These conservation laws are two dimensional systems of nonlinear PDE whose coefficients are rational functions of $f_0$ and $h_0$ which take the form

\begin{eqnarray}  \label{SYSTEM}
\frac{\partial}{\partial t_j} 
\left(
\begin{array}{c}
h_0 \\ f_0 + \frac12 h_0^2
\end{array}
\right) &+& 
\frac{\partial}{\partial x} 
\left(
\begin{array}{c}
\mathcal{F}_{1}\\ 
\mathcal{F}_{2} + h_0 \mathcal{F}_{1}
\end{array}
\right)  = 0,
\end{eqnarray}
with the fluxes defined by
\begin{eqnarray}
\mathcal{F}_{1} &=&    f_0^{\frac{j+1}{2}} \partial_\zeta R_{j}(\zeta)|_{\zeta = \sqrt{y_0}} \\
 \mathcal{F}_{2} &=&   f_0^{\frac{j}{2} +1} \left(S_{j}(\sqrt{y_0}) - R_{j}(\sqrt{y_0})\right).
\end{eqnarray}
An explanation of the origin and derivation of these conservation laws will be given in Corollary \ref{cor02}.
The functions $S_j$ and $R_j$ are polynomials of degree $j$ with coefficients in $\mathbb{Q}$. They are in fact the Appell polynomials that appeared in the description of the equilibrium measure, (\ref{Hlambda}), whose generators are reciprocal Bessel functions (see section \ref{AppellBessel}). In Theorem \ref{UNIVCONSLAW} we present a universal structure for this hierarchy. 
\medskip

We will see a further connection between the equilibrium measure and these conservation laws in Corollary \ref{RIEMINVTS} which establishes that the latter may be placed in Riemann invariant form with the Riemann invariants being the centered endpoints, $r_\pm = h_0 \pm \sqrt{f_0}$,  for the support of the measure given by (\ref{endpoint1}) and (\ref{endpoint2}).
\medskip

Finally, this will help to explain the unique role that the string equations play here. We will see in Lemma \ref{lem:extdiff} that the semiclassical limit of the path integrated form of those equations implicitly define, as a hodograph transformation, the integral surfaces of the conservation laws. 

\subsubsection{Arithmetic Rational Ruled Surfaces}. \label{RatRuled}

The leading order symbol of the semiclassical string equations (\ref{scstring}) yields equations of the form
\begin{eqnarray} \label{STRINGEQUATIONS}
\left(
\begin{array}{c}
0 \\ x
\end{array}
\right)
 &=& 
\left(
\begin{array}{c}
h_0 +  B_{12} \\ f_0 +  B_{11}
\end{array}
\right)
\end{eqnarray}
generalizing (\ref{3string}) and implicitly defining an algebraic surface  ${\mathcal{S}}$ 
in the four dimensional space with coordinates $(\xi, x, f_0, h_0)$ where 
\begin{eqnarray*}
B_{12} &=& \sum_{j=1}^J j t_j f_0^{\frac{j-1}{2}} S_{j-1}({\zeta})|_{\zeta = \sqrt{y_0}}\\
B_{11} &=&  \sum_{j=1}^J j t_j f_0^{\frac{j}{2}} \partial_\zeta R_{j-1}(\zeta)|_{\zeta = \sqrt{y_0}}.
\end{eqnarray*}
(The coefficients $B_{ij}$ in equations (\ref{STRINGEQUATIONS}) involve the irrationalities $\sqrt{f_0}$ and $\sqrt{y_0}$; however when the Appell polynomials are expanded and $\sqrt{y_0}$ is set to $h_0/\sqrt{f_0}$, these become regular polynomials in $f_0$ and $h_0$.) 

We note that (\ref{STRINGEQUATIONS}) describes a multi-parameter solution of the systems (\ref{SYSTEM}) in the sense 
that it is a solution of the $j^{th}$ conservation for initial data corresponding corresponding to all values of $t_k. \, k \ne j$ held fixed. 
 
${\mathcal{S}}$ is in fact a {\it ruled surface}, with ruling parametrized by $x$, over a rational curve.  It is an integral surface of the the conservation law (\ref{SYSTEM}) .  We illustrate this in the case of a regular map where the integral surface $\mathcal{S}$ --see section \ref{chargeom}--reduces to a cone over a rational algebraic curve $\mathcal{C}$. 
This is a consequence of the fact that for regular maps the parameter dependence reduces to a single self-similar variable  
$\xi = \xi_j = t_j x^{j/2 - 1}$. In the case that $j$ is odd, $\mathcal{C}$ may be explicitly parametrized as
\begin{eqnarray} \label{CURVE}
\xi^2  &=& \frac1{j^2} y_0 \frac{(S_{j-1}(\sqrt{y_0}) - y_0^{1/2} \partial R_{j-1}(\sqrt{y_0}))^{j-2}}{S_{j-1}(\sqrt{y_0})^{j}}
\end{eqnarray}
in the $(\xi^2, y_0)$-plane, where again $y_0 = h_0^2/f_0$, and the numerator and denominator of this function are both polynomials in just $y_0$ with rational numbers as coefficients.  This rationality means that the surface $\mathcal{S}$ is arithmetic, a feature that will imply that the map generating functions are definable as rational functions of $h_0$ and $f_0$ with coefficients in $\mathbb{Q}$. (In fact for at least the regular case the generating functions, with $x=1$  can be expressed purely in terms of the pivotal quantity $y_0$--see Theorem \ref{REGMAINTHEOREM}).

While the equation for $\mathcal{C}$ depends on $j$, the geometric form of this curve is independent of $j$(odd) having exactly two finite (real) turning points, or branch points, with respect to projection onto the $\xi^2$-coordinate and a single positive $y_0$-intercept which is an inflection point for $j > 3$ (see Figure \ref{5curve}), though not inflectionary for $j=3$ .  This is a first instance, a geometric one, of universality with respect to valence that we mentioned at the outset. 
(In a related vein, since for the mixed valence case the the string equations (\ref{STRINGEQUATIONS}) that define $\mathcal{S}$ are multiscale with respect to the various weighted quantities, $\xi_j$, they would appear to lack any kind of global homogeneity. However, if one assigns to $t_j$ an $x$-weight of $x^{1-j/2}$ (which is natural given its relation to $\xi_j$) then the first component of these string equations has global homogeneous weight $x^{1/2}$ and the second component has weight $x$. This is in fact stemming from a general homogeneity property of the string polynomials.)

For reasons indicated in Corollary \ref{RIEMINVTS} we refer to $\mathcal{C}$ as the {\em spectral curve}.
For the sake of concreteness we may illustrate this in the case of tri-valent maps ($j=3$):
The curve $\mathcal{C}$ (\ref{CURVE}) in this case becomes
\begin{eqnarray} \label{3CURVE}
\xi^2 &=& \frac1 9\,\frac {y_0 \left( 2-y_0 \right) }{ \left( 2+y_0 \right) ^{3}},
\end{eqnarray}
\begin{figure}[h] 
\begin{center}
\resizebox{2in}{!}{\includegraphics{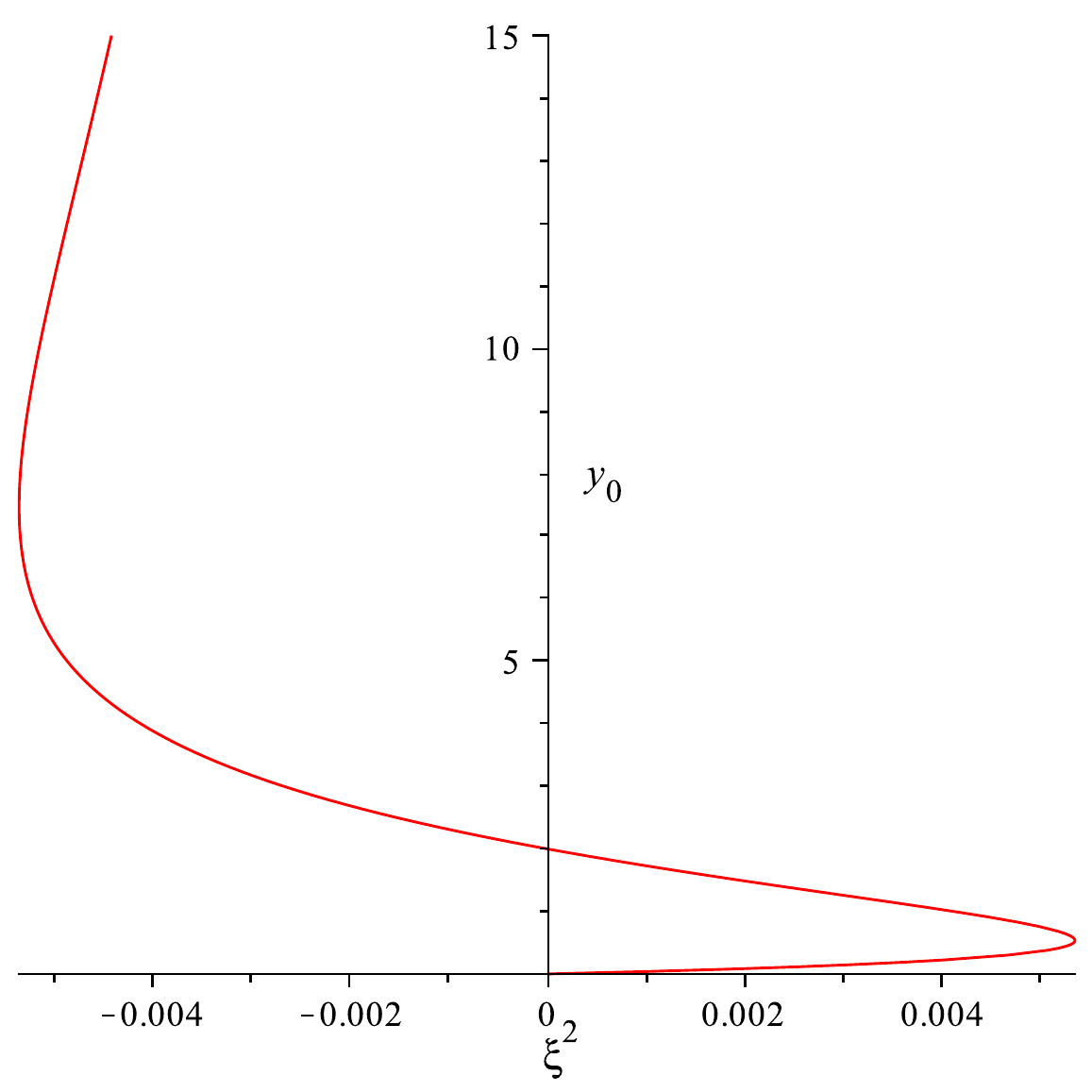}}
\caption{\label{GEOMUNIV} Spectral curve for 3-valent maps }
\end{center}
\end{figure}
a plot of whose real points for $y_0 > 0$ is given in Figure \ref{GEOMUNIV}.

As mentioned above the generating functions for odd regular maps with $x=1$ are rational functions on the spectral curve expressible in the single variable $y_0$. For $j=3$ this is illustrated by 

\beann
E_0|_{x=1} &=& \frac12 \log\left( \frac{2 + y_0}{2 - y_0}\right) - \frac{3 y_0^3-4y_0^2+8y_0-64}{30 (2-y_0)^2} \\
E_1|_{x=1} &=& -\frac1{24} \log\left( \frac{y_0^2 - 8 y_0 + 4}{(2 - y_0)^2}\right)\\
E_2|_{x=1} &=& {\frac { \left( 2800-4240\,y_0+2712\,{y_0}^{2}-1060\,{y_0}^{3}+175\,{y_0
}^{4} \right) {y_0}^{3}}{ 30 \left( 4-8\,y_0+{y_0}^{2} \right) ^{5}}}.
\eeann
As far as we we know this is the first exhibition in the literature of global rational {\it univariate} (i.e. expressible just in terms of the single pivotal quantity $y_0$) representations of odd valence generating functions.

For $j$ even, the conservation law collapses, due to reasons of symmetry, to a scalar conservation law, independent of $h_0$ (which may effectively be set to zero) and the equation for $\mathcal{C}$ takes the form (in terms of $z_0 = f_0|_{x=1}$ )
\begin{eqnarray*}
\xi &=& \frac{1 - z_0}{c_j z_0^{j/2}}\\
c_j &=& j {j-1 \choose j/2 -1}
\end{eqnarray*}
which has just a single finite turning point under projection to the $\xi$-coordinate. (The geometric form is again universal for $j$ even.)

Here too one has a global, rational, univariate representation of generating functions 
(see Appendix \ref{evenvalence}). For instance for $j=4$ one has
\beann
E_0|_{x=1} &=& \frac12 \log(z_0) + \frac1{24} (z_0 - 1) (z_0 - 9)\\
E_1|_{x=1} &=& -\frac1{12} \log(2 - z_0)\\
E_2|_{x=1} &=& \frac1{720} \frac{(z_0 -1)^3 (3 z_0^2 - 21 z_0 - 82)}{(2 - z_0)^5}.
\eeann
\smallskip

More generally, the setting of generating functions for mixed valence maps corresponds to considering solutions of the conservation law (\ref{SYSTEM}) for more general initial data that determines an arithmetic ruled surface more general than a cone. The leading order string equation (\ref{STRINGEQUATIONS}) again implicitly defines a mapping: $(x, \xi) \to (h_0, f_0)$. From the classical viewpoint of the singularity theory of mappings, going back to at least Whitney, one knows that the stable singularities of analytic mappings from the plane to the plane amount to just folds and cusps. However, the mappings here are not general as mappings; they are constrained to be solutions of (\ref{SYSTEM}). In this case it was shown in \cite{CEHL93} that such a constrained mapping could be viewed as a solution of (\ref{SYSTEM}) which is an analytic function on a Riemann surface for which the branch points move as part of the solution. There it is also found that the generic singularity types (i.e., those that are stable with respect to perturbations of initial data) are folds, cusps and nondegenerate umbilic points with non-zero 3-jet. An isolated singularity is generically a square root branch point corresponding to a fold. Since the fold is an envelope of characteristics it travels with the characteristic velocity of that fold. This summary is all that we shall essentially require for our discussion in Section \ref{stability}, but for precise definitions and details of proof we refer the reader to section 3 of \cite{CEHL93}.

The deeper relevance of this geometry for map enumeration is seen, for instance, in the fact (Theorem \ref{MAINTHEOREM}) that for $g > 1$, the map generating functions may be expressed in the form
\beann
E_g &=& \partial^{-2}_x \frac{P_g(h_0, f_0, h_{0x}, f_{0x}, h_{0xx}, \dots)}{f_0^g(f^2_{0x} - f_0 h^2_{0x})^{8g - 3}}
\eeann
where $P_g$ is a valence independent polynomial and $\partial_x^{-2}$ denotes an exact second anti-derivative of the given expression. In other words the higher map generating functions are expressible in terms of higher jets along the characteristic directions (rulings) of the integral surface $\mathcal{S}$. Again, the valence independence of this expression is another, even more fundamental instance of universality.

\section{Statement of Results} \label{statement}

Map counts are generally formulated in terms of generating functions. These are analytic functions whose $n^{th}$ Taylor-Maclaurin coefficient counts the number of labelled $g$-maps with exactly $n$ vertices in a way that we will now make precise.

The exponential generating functions, $E_g$, for enumerating labelled maps, $\gamma$, of genus $g$ are prima facie defined as
\begin{eqnarray} \label{GenFunc}
E_g(x,\vec{t}) &\doteq& \sum_{\gamma \in \{ g- \,\, {\rm maps}| {\rm valence} \leq J \}} x^{{|\rm Faces}(\gamma)|} \frac{\prod_{j=1}^J(-t_j)^{{n_j}(\gamma)}}{{{n_j}(\gamma)!}},
\end{eqnarray}
where $n_j(\gamma)$ is the number of $j$-valent vertices in the map $\gamma$. To reduce such enumerations to a combinatorial question, one needs to define when two maps are equivalent. (One counts maps modulo equivalence which makes the set of equivalence classes finite.) On a genus $g$ surface, two maps are equivalent if there is an orientation-preserving homeomorphism from the surface to itself that induces a homeomorphism of the graph to itself preserving the sets of vertices and edges but possibly respectively permuting them. Non-trivial equivalences, for which such a permutation is non-trivial, do arise (though for $n$ sufficiently large with respect to a fixed $g$, this does not occur). To avoid potential ambiguities one begins with enumerating labelled maps. By {\it labelling} one means that each map carries a distinct labelling, $1, \dots, n$ of its vertices and a cyclic labelling of the edges around each vertex (see \cite{EM03} for a detailed description). Division by the factorials then has the effect of removing the distinctions of labelling so that the enumeration becomes one of just geometric objects, modulo symmetries. In this regard $t_j$ should be replaced by $\frac{t_j}{j}$, to remove the edge labelings, but for notational convenience we absorb this $j$ into $t_j$; the pure geometric count can be recovered later by dividing coefficients by the appropriate factors of $j$.

A word of caution is in order here. Other treatments, such as \cite{EBook}, of map generating functions consider the dual case mentioned earlier in which the Taylor series terms correspond to holding the number of {\it faces} in the map fixed rather than the number of vertices. The two cases are of course completely symmetrical, by duality; however, specific counts can superficially appear to be different unless one is careful to make clear what is being held fixed. Throughout this paper we will always be holding the number of vertices fixed for each term.

As a consequence of Euler's formula, which in our case states that, for a $g$-map 
\beann
2-2g &=&  |{\rm Faces}(\gamma)| - |{\rm Edges}(\gamma)| + |{\rm Vertices}(\gamma)|\\
&=& |{\rm Faces}(\gamma)| - 1/2 \sum_{j=1}^J j n_j + \sum_{j=1}^J n_j
\eeann
 the terms in the above series for $E_g$ may be rewritten, via
\begin{eqnarray} \label{GenFunc2}
\prod_{j=1}^J t_j^{{n_j}(\gamma)} x^{{|\rm Faces}(\gamma)|}
 &=& x^{2 - 2g}  \prod_{j=1}^J \left(t_j x^{j/2 - 1}\right)^{n_j(\gamma)}, \,\, \mbox{as}\\ \label{GenFunc3}
E_g(x,\vec{t}) &=&   x^{2 - 2g} e_g (\vec{\xi})\\ \label{GenFunc4}
e_g(\vec{\xi}) &=&\sum_{\gamma \in \{{\rm unlabelled} \,\, g- \,\, {\rm maps}\}}{\frac{\prod_{j=1}^J\xi_j^{n_j(\gamma
)}}{\left|{\rm Aut}(\gamma)\right|}}
\end{eqnarray}
where $\xi_j = t_j x^{j/2 - 1}$ and the sum is taken over maps with maximum valence $\leq J$. ${\rm Aut}(\gamma)$ is the finite group of map automorphisms 
given as the quotient of equivalences of the $g$-map $\gamma$ by trivial equivalences as defined above.

\medskip

We have shown \cite{EM03} that these generating functions are analytic  in $\vec{\xi}$ near $0$ with Taylor-MacLaurin expansions of the form
\begin{eqnarray} \label{TMexp}
E_g(x,\vec{t}) &=& x^{2-2g} \sum_{n_j \geq 1} \frac1{n_1! \cdots n_J !} \prod_{j=1}^J\left(-\xi_j^{n_j}\right) \kappa_{g}(n_1, \cdots n_J).
\end{eqnarray}
where
\medskip

$\kappa_{g}(n_1, \cdots n_J) =$ the number of labelled $g$-maps with $n_j$ $j$-valent vertices for $j = 1, \dots, J$. 
\medskip

We refer to the overall factor $x^{2-2g}$ of $E_{g}$ as the {\it external weight} of the expression.
\bigskip

Our first main result, Theorem \ref{MAINTHEOREM} (just below), will effectively state that the generating functions for maps with $g >0$ can be expressed in terms of {\it universal} rational functions of the two fundamental combinatorial generating functions, $f_0$ and $h_0$, and their respective $x$-derivatives of bounded order. By {\it universal} here we mean that these representations of the $E_g$ ($g > 0$) have no explicit dependence on the valence parameters $\vec{t}$. In other words the  $\vec{t}$-dependence enters only through the dependence of the fundamental generating functions, $f_0$ and $h_0$, on these parameters. 

The coefficients of the conservation laws we have previously discussed do depend explicitly on the valence coefficients, $t_j$; however, Theorem \ref{UNIVCONSLAW} will show that this hierarchy can be written in a more universal form, in terms of the generating functions for Appell polynomials, with the valence parameters entering in the initial data for these PDE. 

We can now turn to the precise statement of our main results.
\medskip

All the generating functions we consider are homogeneous with respect to two weight gradings that reflect the self-similar scaling structure mentioned just above and in (\ref{scalingvars}):

\begin{defn}
The {\rm polynomial weight} of $f_0$ or any $x$-derivative $f^{(m)}_0 (= d^m f_0/dx^m)$ is 1. The  polynomial weight of $h_0$ or any $x$-derivative $h^{(m)}_0$ is $1/2$. The {\rm differential weight} of any $x$-derivative, $h^{(m)}_0$ or  $f^{(m)}_0$ is $m$.
\end{defn}
\begin{defn}
A polynomial in $h_0, f_0, h_{0x}, f_{0x}, h_{0xx}, \dots $ is {\rm homogeneous} if all its terms have the same polynomial weight and the same differential weight. The polynomial weight of a product  (or quotient) of homogeneous polynomials is given by adding (or subtracting) the polynomial weights of the factors (numerator - denominator) and similarly for differential weights of products and quotients. 
\end{defn}

\begin{theorem} \label{MAINTHEOREM} 
\begin{enumerate}[label=\alph*)]

\item All of the generating functions, $E_g$ for $g > 1$, have closed form expressions that are rational in terms of the auxiliary functions $h_0, f_0$ and their $x$-derivatives. As was also seen in  (\ref{TMexp}), these expressions have an overall external weight of the form $x^{2-2g}$. $E_g$ has polynomial weight 0 and differential weight $2g-2$. For genus 1, one has
\begin{eqnarray} \label{Eone} 
E_1 &=& \frac1{24} \log\left( \frac{x^2(f^2_{0x} - f_0 h^2_{0x})}{f^2_0}\right).
\end{eqnarray}
so that in this case the expression inside the logarithm is rational with polynomial weight 0, differential weight 2 and external weight 2; so this is at least an analogue of the weight relation for the higher genus generating functions.

\item The closed form expressions for $E_g$, $g > 0$, are universal in the sense that they have no explicit dependence on the valence parameters. We have just seen this for $g=1$. For $g > 1$ one has,
\beann
E_g &=& \partial^{-2}_x \frac{P_g(h_0, f_0, h_{0x}, f_{0x}, h_{0xx}, \dots)}{f_0^g(f^2_{0x} - f_0 h^2_{0x})^{8g - 3}}
\eeann
where $P_g$ is a valence independent polynomial.  The symbol $\partial^{-2}_x$  is an exact second antiderivative of the given expression which has the form of a rational funciton of $f_0, h_0$ and their x-derivatives, which is again universal. 
\item Alternatively, $E_g$, for $g>1$ is expressed as a rational function of just $f_0$ and $h_0$ (without any $x$-derivatives) but which {\it does} depend on valence. More explicitly the $E_g$  are functions on the rational ruled surface ${\mathcal{S}}$ 

(The coefficients $B_{ij}$ in equations (\ref{STRINGEQUATIONS}) involve the irrationality $\sqrt{f_0}$ and so one might, prima facie, expect the expressions for $E_g$ to involve this irrationality; however, that is not the case.)

\item $E_0$ has a {non-locally}  universal expression 
\beann
E_0 &=& \partial^{-2}_x \log\left( \frac{f_0}{x}\right).
\eeann
In the regular case of single valence $j$, this takes the form 
\begin{eqnarray}  \label{Ezero}
E_0 &=& \frac{1}{2} x^2 \log \frac{f_0}{x} + \frac{3(j-2) x^2}{4j}-\frac{(j-2)(j+1)x}{j(j+2)}\left(\q\right)+\frac{(j-2)^2}{2j(j+2)}\left(\frac{1}{2} f_{0}^2 + h_{0}^2 f_0 \right).
\end{eqnarray}
When $j$ is even, in (\ref{Ezero}), $h_0$ is set identically to zero.
\end{enumerate}
 \end{theorem}
 The proofs of parts (a) and (b) of Theorem \ref{MAINTHEOREM} are given in Section \ref{UnivEg}. Part (c) is established in Section \ref{RatEg} and the formulae in part (d) are derived in Section \ref{DerE0}. The universal differential expressions described in part (b) of this theorem become quickly quite complicated as $g$ increases. However, remarkably, those expressions can be iteratively "integrated up" in $x$ to get less complicated new expressions for $E_g$  that depend functionally only on $h_0, f_0$ and $x$ (and not their $x$-derivatives). However, in doing this, as one would expect, one picks up "integration constants" in our formulae, and {\it these} can depend on the valences in terms of the initial data (classically known as the {\it initial strip}) for the conservation law. This situation is somewhat reminiscent of solving a differential equation in which the equation itself is independent of parameters (i.e., it is autonomous) but solutions will depend on conditions initially or at the boundary. The fact that we can carry out these integrations explicitly is reminiscent of integrable differential equations. However these differential expressions are not differential equations in the ordinary sense and one does not have integrability here in the ordinary sense. Nevertheless they are calculated by a recursive scheme that is derived as a continuum limit of a hierarchy of integrable systems, the Toda Lattice equations to be precise.  The ability to integrate up is based on what we refer to as an {\it unwinding identity} (section \ref{Unwinding}). 

\begin{theorem} \label{UNIVCONSLAW}
The {\it hierarchy} of conservation law initial value problems, stated in (\ref{SYSTEM}), that uniquely determine $(f_0, h_0)$ may be re-expressed as the following universal hyperbolic system, 

\begin{eqnarray}  \label{UnivConsLaw}
t \frac{d}{dV} 
\left(
\begin{array}{c}
h_0 \\ f_0 
\end{array}
\right) &+& 
\left(x - \left(
\begin{array}{cc}
f_0 & h_{0} \\ f_{0}h_0 & f_0
\end{array}
\right) \right)  \frac{\partial}{\partial x} \left(
\begin{array}{c}
h_0 \\ f_0 
\end{array}
\right) = 0,
\end{eqnarray}
where $\frac{d}{dV}$ denotes the directional  derivative through the space of polynomial potentials in the "direction" of the fixed general background potential, 
$V = \sum_{j=1}^J \alpha_j \lambda^j$, initialized at the Gaussian potential $\frac12 \lambda^2$.  The PDE is a hyperbolic system for the flow in that direction. 
\end{theorem}
\noindent The proof of Theorem \ref{UNIVCONSLAW} is given in Section \ref{MV}, based on the conservation laws developed in Section \ref{sec:LOCL}. As given, the universal system (\ref{UnivConsLaw}) may appear to be just a formal linear combination of flows. However, its meaning for map enumeration arises when considering asymptotic map enumeration as the number of vertices limits to infinity (see Section \ref{sec:UnivAsympEnum}). In this mixed valence case, the asymptotics of the now multi-scale generating functions describes the behavior of maps as the number of vertices goes to infinity but with relative valence distribution fixed by the coefficients in the directional derivative. A more systematic discussion of mixed valence asymptotics will be deferred to a subsequent paper; however, we will comment on its larger significance in sections \ref{stability} and \ref{asympenum}.
\medskip

We now illustrate the rational, completely anti-differentiated, form of our generating functions in just the regular map case. For $j$-regular maps, the genus $0$ generating function has the form, already given in \eqref{Ezero}, of a rational function in $f_0, h_0$ and $x$ modulo a single logarithmic term. Genus 1 represents 
a critical value. All calculations of higher genus generating functions are based on its explicit evaluation (as well as that of $E_0$). Therefore a significant portion of this paper will be based on deriving  the first part of the following result. The remaining parts of the theorem describe deep and useful Laurent expansions of the higher generating functions that we will derive.. 

\begin{theorem} \label{REGMAINTHEOREM}
The genus 1 generating function in the $j$-regular case becomes
\begin{eqnarray*}
E_1 &=& \frac1{24} \log\left( \frac{x^2}{f^2_0} \frac{4 - y_0}{\widehat{D}} \right)\\
\widehat{D} &=& \left(2 - j + j\left( \frac{x}{f_0}\right)\right)^2 - (j-2)^2 y_0.
\end{eqnarray*}
$\widehat{D}$ is the discriminant of the projection from $\mathcal{C}$ onto the $\xi^2$-coordinate and so can be expressed as a polynomial in $y_0$.  It is divisible by $4-y_0$; the ratio within the above logarithm is a polynomial in $y_0$ of degree $j$. 

For $g>1$, $e_g$ (where $ e_g = x^{2g-2}E_g $) is  a rational funciton of just $f_0$ and $h_0$ on $\mathcal{C}$.  In the regular case the higher $e_g$ may be expressed as rational functions of $y_0$ whose poles are located  at  the zeroes of $\Pi(y_0)$ which is a polynomial of degree $j-1$ in $y_0$, defined in terms of the discriminant of $\mathcal{C}$ by the relation 
\beann 
\Pi(y_0)/\left(S_{j-1}(\sqrt{y_0})\right)^2 &=& \widehat{D}/(4-y_0).
\eeann 
The order at these poles is exactly $5g-5$. In particular, near the real turning point nearest to the origin, with coordinate $(\xi_c^2, y_{0c})$ where $\xi_c$ is the common radius of convergence 
for the Taylor-Maclaurin expansions of all of the  $e_g$, one has a partial  fractions expansion in $\xi$ of the form
\beann
e_g(y_0) &=& \widehat{C}^{(g)} + \cdots + \frac{c_g(j)}{(y_0 - y_{0c})^{5g-5}} \\
e_g(z_0) &=& C^{(g)} + \frac{c_0^{(g)}(\nu)}{(j/2 - (j/2-1)z_0)^{2g-2}} + \cdots + \frac{c_{3g-3}^{(g)}(j/2)}{(j/2 - (j/2-1)z_0)^{5g-5}}
\eeann
for odd and even valence $j$ respectively and with $z_0 = f_0/x$. $c_g$ and $c_{3g-3}^{(g)}$ are non-zero for all $j$.
$C^{(g)}$ are valence independent constants and we expect the same to be true for $\widehat{C}^{(g)}$.
\end{theorem}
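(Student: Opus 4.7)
The plan is to leverage Theorem \ref{MAINTHEOREM}(a)--(c) together with the self-similar structure of the regular case to reduce all quantities to explicit rational functions on $\mathcal{C}$. In the regular case only the self-similar variable $\xi = t_j x^{j/2-1}$ enters, so $f_0 = x z_0(\xi)$ and $h_0 = x^{1/2} u_0(\xi)$, and the string equations (\ref{STRINGEQUATIONS}) cut out the curve $\mathcal{C}$ of (\ref{CURVE}). Differentiating (\ref{STRINGEQUATIONS}) implicitly in $x$ gives a linear system $J (h_{0x}, f_{0x})^T = (0, 1)^T$ whose determinant is, up to an $S_{j-1}^2$ factor, the discriminant $\widehat{D}$ of the $\xi^2$-projection of $\mathcal{C}$. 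Iterating yields each higher derivative $f_0^{(m)}, h_0^{(m)}$ as a rational function on $\mathcal{C}$ with denominator a power of $\det J$.

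For $E_1$, substituting $f_{0x}, h_{0x}$ into (\ref{Eone}) and simplifying via (\ref{STRINGEQUATIONS}) gives $f_{0x}^2 - f_0 h_{0x}^2 = (4-y_0)/\widehat{D}$ on $\mathcal{C}$, producing the stated logarithmic formula. The divisibility $\widehat{D}/(4-y_0) = \Pi(y_0)/S_{j-1}(\sqrt{y_0})^2$ is then verified by evaluating $\widehat{D}$ at $y_0 = 4$, which corresponds to the inflection point of $\mathcal{C}$ over $\xi^2 = 0$, together with a degree count on (\ref{CURVE}) that fixes $\deg \Pi = j-1$.

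For $g > 1$, substitute the rational expressions for $f_0^{(m)}, h_0^{(m)}$ into Theorem \ref{MAINTHEOREM}(b) and collect powers of $\det J$. Because $P_g$ is valence-independent and polynomial, the pre-antidifferentiation integrand is rational in $y_0$ (odd $j$) or in $z_0$ (even $j$, where $h_0 \equiv 0$ and the string equation collapses to $x = f_0 + c_j \xi f_0^{j/2}$), with poles only at zeros of $\Pi(y_0)$ respectively at $j/2 - (j/2-1)z_0 = 0$: the spurious $4-y_0$ from $(f_{0x}^2 - f_0 h_{0x}^2)^{-1}$ cancels against factors generated by the combinations of higher derivatives forced by the weight grading of Theorem \ref{MAINTHEOREM}(b). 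The antidifferentiation $\partial_x^{-2}$ is carried out in the $y_0$- or $z_0$-coordinate, where $dx$ has a simple zero at each turning point; counting the pole contributions of each $f_0^{(m)}, h_0^{(m)}$ via the differential-weight grading, together with the order reduction from the two antidifferentiations, yields pole order exactly $5g-5$ at the turning point nearest the origin.

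The main obstacle is proving non-vanishing of the leading coefficients $c_g(j)$ and $c_{3g-3}^{(g)}(j/2)$ for every $g \geq 2$ and every admissible $j$. I would argue by induction on $g$ using the conservation-law hierarchy (\ref{UnivConsLaw}), or equivalently the continuum Toda recursion from which it arises: the recurrence expresses the top-order singularity of $e_g$ as a nonlinear functional of the top-order singularities of $(e_0, \ldots, e_{g-1})$, together with a definite combinatorial factor coming from the Appell--Bessel structure. Ruling out cancellation reduces to an injectivity check on the top-symbol of this recursion, which follows from expanding $R_{j-1}, S_{j-1}$ near $y_{0c}$ (odd $j$) or near $z_0 = (j/2)/(j/2-1)$ (even $j$). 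The base case $g = 2$ is a direct computation; finally, valence-independence of the constants $\hat{C}^{(g)}$ and $C^{(g)}$ follows from Theorem \ref{UNIVCONSLAW} once the $j$-dependence has been localized entirely in the initial data for (\ref{UnivConsLaw}).
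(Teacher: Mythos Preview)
Your overall architecture matches the paper's: derive the $E_1$ formula from \eqref{Eone} together with the identity $f_{0x}^2 - f_0 h_{0x}^2 = (4-y_0)/\widehat{D}$ (which is exactly the paper's \eqref{FundDisc}, obtained from the inverse of the string Jacobian \eqref{Ainv}); then for $g>1$ push the universal expression of Theorem~\ref{MAINTHEOREM}(b) down to $\mathcal{C}$ and count pole orders. This part is fine and is essentially what the paper does in Theorem~\ref{Eg} and Proposition~\ref{valindep}.

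There is one factual slip. The point $y_0 = 4$ is \emph{not} an inflection point of $\mathcal{C}$ over $\xi^2 = 0$; the inflection points sit at the zeros of $\widehat{B}_{12} - y_0^{1/2}\widehat{B}_{11}$ (the positive one is $y_0^*$), and the paper shows that $y_0 = 4$ lies strictly between $y_0^*$ and the left turning point (Theorem~\ref{genthm}(b)). The divisibility $\,(4-y_0)\mid \widehat{D}\,$ is instead verified by a direct Appell-polynomial computation at $\sqrt{y_0} = \pm 2$ (Appendix~C, equations \eqref{dmp2}--\eqref{dmp4}, or alternatively the algebraic identity in Section~\ref{arithmetic}), not by any geometric coincidence with an inflection. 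This is easy to fix, but your stated reason is wrong.

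The real gap is in the non-vanishing of the top coefficient $c_g(j)$ (and $c^{(g)}_{3g-3}$ in the even case). Your proposed ``injectivity check on the top-symbol of the recursion'' is not a viable mechanism here: the recursion for the leading singular part is \emph{quadratic} (Painlev\'e-I type), so linear-algebraic injectivity of a symbol does not by itself rule out cancellation. What the paper actually does (Proposition~\ref{p872}) is work with $f_g$ rather than $e_g$, extract from the string equations at order $n^{-2g}$ an explicit scalar recurrence
\[
0 = \zeta_{g+1} + C_1 C_2 z_{0c}^{2g-1}(25g^2 - 1)\,\zeta_g + 6 C_1 \sum_{m=1}^{g} \zeta_m \zeta_{g+1-m},
\]
and then prove $\zeta_g > 0$ for all $g$ by a \emph{sign argument}: one shows $C_1(j) < 0$ by writing it as a positive linear combination of $\partial_{f_0}^2$, $\partial_{f_0}\partial_{h_0}$, $\partial_{h_0}^2$ acting on the manifestly positive quantity $[\eta^0](\eta + h_0 + f_0\eta^{-1})^{j-1}$ (equations \eqref{q150}--\eqref{q350}), together with $\xi_c < 0$. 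Positivity of all $\zeta_g$ then follows inductively from the structure of the recurrence. The transfer to $c_g$ for $e_g$ goes through the cumulant relation \eqref{eggDE} and the antidifferentiation count of Theorem~\ref{Eg}. Your sketch does not isolate this sign mechanism, and without it there is no obstruction to the top coefficient of the quadratic recurrence vanishing at some $g$.
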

The derivation of the formulae for $E_1$ is carried out in Section \ref{toprec} and that of the Laurent expansions in section \ref{sec:Laurent}.

\begin{rem} \label{remark23}
More generally, in the case of mixed valence the poles of the $E_g$, for $g>1$, are confined to the discriminant locus of the projection of $\mathcal{S}$ onto the $\xi^2$ coordinate.
\end{rem}
 
Again, we illustrate these expressions in the case that $j=3$:
\beann
\widehat{D} &=& \frac{-y_0^3 +12 y_0^2 - 36 y_0 + 16}{(2 + y_0)^2}\\
S_2(\sqrt{y_0}) &=& 2 + y_0 \\
\Pi &=& y_0^2 - 8 y_0 + 4 \\
e_0 &=& \frac12 \log\left( \frac{2 + y_0}{2 - y_0}\right) - \frac{3 y_0^3-4y_0^2+8y_0-64}{30 (2-y_0)^2} \\
e_1 &=& -\frac1{24} \log\left( \frac{y_0^2 - 8 y_0 + 4}{(2 - y_0)^2}\right)\\
e_2 &=& {\frac { \left( 2800-4240\,y_0+2712\,{y_0}^{2}-1060\,{y_0}^{3}+175\,{y_0
}^{4} \right) {y_0}^{3}}{ 30 \left( 4-8\,y_0+{y_0}^{2} \right) ^{5}}}.
\eeann
\bigskip

The utility of these results for enumeration may now be described. Our stated goal has been to construct generating functions for the enumeration of maps from which explicit counts can be made while at the same time revealing, hopefully, more general combinatorial and probabilistic insight about these random structures and their universal characteristics. We have already seen some of these universal characteristics and there will be others to follow but we want to first describe the practical utility of our compact generating functions for extracting counts.  To explain this we continue to restrict attention to the case of regular maps. 

As just stated, we started with the goal of deriving explicit expressions for the $e_g$ as functions of $\xi^2$. By means of the PDE system (\ref{SYSTEM}) we were able to re-express these generating functions as comparatively much simpler elementary functions of the variable $y_0$.  The relation between $\xi^2$ and $y_0$ is mediated by the spectral curve (\ref{CURVE}). For each specific value of $j$ this is also a rather simple relation (see for example the tri-valent case, (\ref{3CURVE})). So to relate $e_g$ as a function of $y_0$ to its form in terms of $\xi^2$ comes down to eliminaitng     $y_0$ in terms of $\xi^2$  along the spectral curve (Figure \ref{GEOMUNIV}).  However, the larger point we wish to make is that for the purpose of extracting key combinatorial information from the generating functions, it is not necessary to make this last step.  A primary example of this is the Taylor-Maclaurin coefficients of $e_g$ for regular maps that provide the basic enumerations, (\ref{TMexp}), that motivated the original studies of maps. So, for instance, we find in Section \ref{integform} that,
\begin{eqnarray} \label{KAPPA}
\frac{\kappa^{(j)}_{1}(2m)}{(2m)!} &=& \frac1{2\pi i} \frac{j^{2m}}{24 m} \oint \frac{(2 (\partial S_{j-1}/ \partial y_0) \Pi  - S_{j-1} (\partial \Pi / \partial y_0 ) ) S_{j-1}^{jm-1}}{\Pi (S_{j-1} - (j-1) y_0^{1/2}R_{j-2})^{(j-2)m}} \frac{dy_0}{y_0^m}
\end{eqnarray}
This rational representation allows for efficient and elementary evaluation of the counts. In the tri-valent case this integral directly yields
\beann
\left[\frac{\kappa^{(3)}_{1}(2)}{3 \cdot 2!}, \frac{\kappa^{(3)}_{1}(4)}{3 \cdot 4!}, \frac{\kappa^{(3)}_{1}(6)}{3 \cdot 6!}, \frac{\kappa^{(3)}_{1}(8)}{3 \cdot 8!}, \frac{\kappa^{(3)}_{1}(10)}{3 \cdot 10!}, \dots\right] 
= \left[\frac32, 135, 16524, 2291976, \frac {1701555984}{5}, \dots \right]
\eeann
with odd coefficients all zero for topological reasons. These counts are further illustrated in Section \ref{integform}. 
\medskip

In the general mixed valence setting, the $x$ dependence does not scale out self-similarly; or, more precisely, one has functions of several distinct self-similar variables. However, a generic transverse $x$-slice of the integral surface $\mathcal{S}$ is a curve analogous to $\mathcal{C}$. In complete generality these slices may have more than just two real turning points. However, if one chooses valence parameters to be near a single pure valence, i.e., to be in a class of $\vec{t} = (t_1, \dots , t_J)$ sufficiently near $\vec{t}^{(j)} = (0, \dots, 0 , t_j, 0, \dots, 0)$ then the corresponding generic $x$-slices will have only two real turning points and be {\it close} to the regular $\mathcal{C}$. More precisely, one says that $\mathcal{C}$ is {\it stable} within the class of mappings that are solutions to conservation laws of the form that will be described in Corollary \ref{RIEMINVTS}.  Further commentary on this is provided in subsection \ref{stability}. 

In the particular case of an {\it even} regular map,  the integral surface itself becomes independent of $h_0$ and the $\xi^2$-projection has a unique turning point. This significantly simplifies the closed form of the generating functions: 

\begin{corollary} \label{EVENVALENCE}
 In the case of regular even valence $j = 2\nu$, the system of conservation laws (\ref{SYSTEM}) collapses to a scalar conservation law, the closed form expressions for $E_g$ become functions of just $f_0$ and they are realized, robustly, as the limit of the general expressions as $h_0 \to 0$.
\end{corollary}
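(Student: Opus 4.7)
The strategy is to specialize Theorems \ref{MAINTHEOREM} and \ref{UNIVCONSLAW} to the single-even-valence case and exploit the involution $h_0 \mapsto -h_0$ that emerges there. First I would check that the system (\ref{SYSTEM}), restricted to $t_j = 0$ for $j \neq 2\nu$, is invariant under $(f_0, h_0) \mapsto (f_0, -h_0)$, equivalently $\zeta \mapsto -\zeta$. The parity needed is $\mathcal{F}_1$ odd and $\mathcal{F}_2$ even in $\zeta$. Expanding the Bessel generators of Theorem \ref{UNIVCONSLAW}, the formal series $I_1(s) e^{s\zeta} = \sum_{n,k} s^{2n+1+k}\zeta^k/[2^{2n+1} n!(n+1)!k!]$ and $I_0(s) e^{s\zeta} = \sum_{n,k} s^{2n+k}\zeta^k/[4^n (n!)^2 k!]$ show that the coefficient of $s^j$ is a polynomial in $\zeta$ whose surviving monomials all have $\zeta$-exponent of parity opposite to, resp.\ equal to, that of $j$. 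For $j = 2\nu$ this forces $\partial_\zeta R_{j-1}$ odd and $S_{j-1}$ even in $\zeta$, so the involution is a symmetry of (\ref{SYSTEM}). Since the initial strip $h_0|_{\xi=0} = 0$ is itself fixed, uniqueness of the characteristic initial value problem (asserted in Theorem \ref{UNIVCONSLAW}) implies $h_0 \equiv 0$. On this slice the first equation reduces to $0 = 0$ and the second collapses to the scalar conservation law $\partial_\xi \log f_0 + \partial_x[t_{2\nu} f_0^\nu S_{2\nu-1}(0)] = 0$, whose integrated form recovers the string equation $x = f_0 + c_j \xi f_0^{j/2}$ with $c_j = j\binom{j-1}{j/2-1}$ displayed in the introduction.

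For the closed-form statement I would substitute $h_0 \equiv 0$, and therefore $h_{0x} \equiv h_{0xx} \equiv \cdots \equiv 0$ on the open set of interest, into part (b) of Theorem \ref{MAINTHEOREM}. The parity argument of the previous paragraph propagates through the entire Toda prolongation hierarchy that generates the universal numerators $P_g$, so each $P_g$ is an even polynomial in the tuple $(h_0, h_{0x}, h_{0xx}, \dots)$; the substitution therefore produces a nonzero polynomial in $f_0$ and its $x$-derivatives. The denominator $f_0^g(f_{0x}^2 - f_0 h_{0x}^2)^{8g-3}$ collapses to $f_0^g f_{0x}^{2(8g-3)}$, which is generically nonvanishing on a transverse $x$-slice of the integral surface. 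The second antiderivative $\partial_x^{-2}$ supplied by the unwinding identity of section \ref{Unwinding} commutes with the specialization, giving an $E_g$ that is rational in $f_0$ and its $x$-derivatives alone. Formula (\ref{Eone}) reduces directly to $E_1 = \frac{1}{24}\log(x^2 f_{0x}^2 / f_0^2)$, and (\ref{Ezero}) already carries the explicit instruction to set $h_0 = 0$ in the even case.

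For robustness I would argue that the reduced formulae genuinely arise as the limit $h_0 \to 0$ of the mixed-valence expressions rather than as a formal substitution.  The parity of each $E_g$ in $h_0$ rules out linear-order corrections: its Taylor expansion about $h_0 = 0$ contains only even powers, so the substitution and the smooth limit coincide, and the limit is continuous. The stability results for the spectral curve described in subsection \ref{stability} then certify that small odd-valence perturbations deform the reduced formulae continuously within the same universality class, so the scalar expressions are truly the $h_0 \to 0$ traces of the general mixed-valence generating functions. The main obstacle I expect is the parity bookkeeping in the first paragraph---pinning down the correspondence between the Appell-polynomial subscripts $R_{j-1}, S_{j-1}$ and the orders of $s$-differentiation of the Bessel generators, and propagating the parity statement through every prolongation underlying the $P_g$. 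Once $h_0 \equiv 0$ has been established for the solution, the remaining steps are essentially bookkeeping on top of Theorem \ref{MAINTHEOREM}, consistent with the corollary's status as a direct consequence rather than a new proof.
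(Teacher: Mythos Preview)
Your approach via the involution $(f_0,h_0)\mapsto(f_0,-h_0)$ is sound and is genuinely different from what the paper does. The paper does not give a self-contained argument for this corollary: it relies on the even-valence results already established in \cite{Er09,CLRM} (recapitulated in Appendix~\ref{evenvalence}) and then checks, at several scattered points in Sections~\ref{sec:0} and~\ref{closedforms}, that setting $j=2\nu$ and $h_0=0$ in the universal formulae reproduces those prior answers. The collapse of (\ref{SYSTEM}) is noted only through the degeneration $\lambda_+=\lambda_-$ of the characteristic speeds (remark after Corollary~\ref{RIEMINVTS}). Your route---parity of the fluxes plus uniqueness of the characteristic initial-value problem forces $h_0\equiv 0$, then specialise Theorem~\ref{MAINTHEOREM}(b)---is cleaner and does not need to invoke the earlier papers at all.

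Two points need tightening. First, your indexing is tangled: with the conventions of Section~\ref{Hopf-String} one has $\mathcal{F}_1=\psi_{-1}=f_0^{(j+1)/2}\partial_\zeta R_j$ and $\mathcal{F}_2=\phi_{-1}=f_0^{j/2}S_j$ (see (\ref{s2nu+1})--(\ref{dr2nu+1})), not $R_{j-1},S_{j-1}$. For $j=2\nu$ this makes $S_{2\nu}$ even and $\partial_\zeta R_{2\nu}$ odd, which is exactly the parity you need; but your intermediate line ``$\partial_\xi\log f_0+\partial_x[t_{2\nu}f_0^\nu S_{2\nu-1}(0)]=0$'' is vacuous as written since $S_{2\nu-1}(0)=0$. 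The quickest fix is to argue the parity directly on the string polynomials $\phi_{-1},\psi_{-1}$ via (\ref{phim})--(\ref{psim}), where it is transparent that $\psi_{-1}$ is odd and $\phi_{-1}$ even in $h_0$ when the exponent $j-1$ is odd. Second, the assertion that each universal numerator $P_g$ is even in the $h$-jet does not follow from the leading-order involution alone---you need the symmetry at every order of the prolongation. This is, however, an immediate consequence of the polynomial-weight grading in Theorem~\ref{MAINTHEOREM}(a): every $h_0^{(m)}$ carries weight $\tfrac12$ and every $f_0^{(m)}$ weight $1$, so a homogeneous expression of \emph{integer} polynomial weight (as $E_g$ and its denominator are) must contain an even number of $h$-jet factors in each monomial. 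That simultaneously gives the evenness of $P_g$ and justifies the ``robust limit'' claim, since the Taylor expansion about the $h$-jet $=0$ locus then begins at quadratic order.
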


For instance, in this case the genus 1 generating function bcomes
\beann
E_1 &=& \frac1{24} \log\left( \left(\frac{x f_{0x}}{f_0}\right)^2\right)\\
&=& \frac1{24} \log\left(   \frac{4x}{f_0 \widehat{D}}  \right) \\
\widehat{D} &=& \left( 2 - j + j \left( \frac{x}{f_0}\right)\right)^2\\
&=& 4 \left( 1 - \nu + \nu \left( \frac{x}{f_0}\right)\right)^2.
\eeann
Simplifying it follows that
\beann
E_1 &=& \frac1{12} \log\left( \frac{x}{\nu x - (\nu -1) f_0}\right)
\eeann
or eqivalently, setting $f_0 = x z_0$,
\beann
e_1 &=& -\frac1{12} \log\left({\nu  - (\nu -1) z_0}\right)
\eeann
which coincides with our earlier calculations in the case of even regular maps (see Appendix \ref{evenvalence}).

For genus 0 the regular generating function (\ref{Ezero}) with even valence $j = 2 \nu$ also follows this rule:

\beann
E_0 &=& \frac{1}{2} x^2\log \frac{f_0}{x} +\frac{(\nu-1)^2}{4\nu(\nu+1)}\left(f_0- x\right)\left(f_0-\frac{3(\nu+1)x}{\nu-1}\right)\\
e_0 &=& \frac{1}{2} \log z_0 +\frac{(\nu-1)^2}{4\nu(\nu+1)}\left(z_0- 1\right)\left(z_0-\frac{3(\nu+1)}{\nu-1}\right)
\eeann
which again agrees with our earlier calculations.

\begin{rem} \label{REM1.1} 
The rationality results stated in Theorem \ref{MAINTHEOREM} (c) resolve and significantly extend early conjectures in \cite{BIZ80} and \cite{BC91} about the possible structure of closed form expressions for the $E_g$. 
In particular the results here extend to the fully general case what we had already established in \cite{Er09} for regular even valence maps.
\end{rem}

\begin{rem} \label{REM1.2}
The universal form of expressions like (\ref{Eone}) is completely new to the literature on map enumeration. The universality class here is that corresponding to the class of random matrix ensembles given by (\ref{RMmeasures}) whose eigenvalue spacing statistics Deift et al \cite{DKM, DKMVZ} showed to be universal. The connection of our generating function representations to this spectral statistics is  indicated in Corollary \ref{RIEMINVTS} below and further elaborated on in section \ref{evolvemeas}. 
\end{rem}

\begin{rem} \label{REM1.3}
For scalar conservation laws the construction of an integral surface is a standard exercise in the method of characteristics. However, for higher dimensional systems this is far from being that straightforward. So the existence of the closed implicit solutions (\ref{STRINGEQUATIONS}) indicates that our system is exceptional. There are some general descriptions of exceptional hyperbolic systems of this type \cite{T85}; in our case this special structure stems from the fact that the generating functions are solutions of a continuum limit of the integrable Toda Lattice hierarchy.
\end{rem}

\begin{corollary} \label{RIEMINVTS}
In the case when the valences are not all even, the conservation law (\ref{SYSTEM}) may be effectively diagonalized with distinct eigenvalues (i.e., placed in {\it Riemann invariant form} \cite{Wh}) as
\begin{eqnarray} 
\frac{\partial}{\partial \xi} 
\left(
\begin{array}{c}
r_+ \\ r_-
\end{array}
\right) &+&  
\left(
\begin{array}{cc}
\lambda_+ & 0\\ 
0 & \lambda_-
\end{array}
\right)  \frac{\partial}{\partial x} 
\left(
\begin{array}{c}
r_+ \\ r_-
\end{array}
\right) = 0, \\
r_\pm &=& h_0 \pm 2 \sqrt{f_0}\\ \label{endpoints}
&=& \pm \sqrt{f_0} \left(2 \pm \sqrt{y_0}\right)\\
\lambda_\pm &=& B_{11} \pm \sqrt{f_0} B_{12}
\end{eqnarray}
where $r_\pm$ are the Riemann invariants and $\lambda_\pm$ are the characteristic speeds.  The invariants $r_\pm$ are the endpoints of the support of the spectral equilibrium measure for the random matrix ensemble with probabiity measure (\ref{RMmeasures}). These endpoints in fact completely determine the equilibrium measure. 
\end{corollary}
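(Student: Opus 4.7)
My plan is to establish the two assertions of the corollary in turn: first the diagonalization into Riemann invariant form, then the identification of the invariants with the endpoints of the spectral equilibrium measure.

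For the diagonalization, I would begin by rewriting (\ref{SYSTEM}) in quasilinear form. Using $\partial_\xi\log f_0 = f_0^{-1}\partial_\xi f_0$ together with the chain rule expansion of $\partial_x\mathcal{F}_i = (\partial_{h_0}\mathcal{F}_i)\partial_x h_0 + (\partial_{f_0}\mathcal{F}_i)\partial_x f_0$, the system takes the matrix form $\partial_\xi V + M\,\partial_x V = 0$ for $V=(h_0,f_0)^T$, where
$$
M \;=\; \begin{pmatrix}
\partial_{h_0}\mathcal{F}_1 & \partial_{f_0}\mathcal{F}_1 \\[2pt]
f_0\,\partial_{h_0}\mathcal{F}_2 & f_0\,\partial_{f_0}\mathcal{F}_2
\end{pmatrix}.
$$
Because $\mathcal{F}_1,\mathcal{F}_2$ depend on $(h_0,f_0)$ only through $\zeta=h_0/\sqrt{f_0}$ and explicit powers of $f_0$, the entries of $M$ expand via $\partial_{h_0}\zeta = f_0^{-1/2}$ and $\partial_{f_0}\zeta = -\zeta/(2f_0)$ into homogeneous expressions in the Appell polynomials $R_{j-1},S_{j-1}$ and their $\zeta$-derivatives. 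I would then take $r_\pm = h_0\pm 2\sqrt{f_0}$ as the candidate Riemann invariants, noting $dr_\pm = dh_0 \pm f_0^{-1/2}df_0$, and verify the left-eigenvector identity
$$
(1,\pm f_0^{-1/2})\,M \;=\; \lambda_\pm\,(1,\pm f_0^{-1/2}),\qquad \lambda_\pm = B_{11}\pm\sqrt{f_0}\,B_{12}.
$$
Matching components column-by-column reduces to a pair of polynomial identities in the Appell generators, summed over the valence index $j$. These identities are cleanest to obtain by working in the universal form (\ref{UnivConsLaw}) of Theorem \ref{UNIVCONSLAW}: the generating functions $I_1(s)e^{s\zeta}/s$ and $I_0(s)e^{s\zeta}$ of $R_k$ and $S_k$ are coupled through the Bessel recursions $I_0'(s)=I_1(s)$ and $(sI_1(s))'=sI_0(s)$, and it is exactly these recursions, together with the Appell shift rules $\partial_\zeta R_k = kR_{k-1}$ and $\partial_\zeta S_k = kS_{k-1}$, that produce the $\pm\sqrt{f_0}$ combinations appearing in $\lambda_\pm$. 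The standing hypothesis that not all valences are even ensures $B_{12}\not\equiv 0$, so $\lambda_+\neq\lambda_-$ and the diagonalization is genuine rather than degenerate.

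For the spectral interpretation, I would invoke the one-cut equilibrium-measure analysis developed in sections \ref{background} and \ref{evolvemeas}. There it is shown that in the semiclassical continuum limit of the recurrence coefficients of the orthogonal polynomials associated with the weight (\ref{RMmeasures}), the diagonal Jacobi entry tends to $h_0$ and the squared off-diagonal entry to $f_0$, so the eigenvalue equilibrium density is supported precisely on $[\,h_0-2\sqrt{f_0},\,h_0+2\sqrt{f_0}\,]=[r_-,r_+]$. With the polynomial potential $V$ fixed by $\vec t$, the one-cut variational equation $V'(\lambda)=2\,\mathrm{P.V.}\!\int d\mu(\mu)/(\lambda-\mu)$ uniquely determines the density on that interval, so the endpoints encode the entire equilibrium measure, as asserted.

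The main technical obstacle is the verification in the first part: carefully matching the Appell/Bessel identities that make $dr_\pm$ an exact left eigenvector of $M$ with eigenvalue $\lambda_\pm$. Conceptually this is one more manifestation of the integrability of the continuum Toda limit mentioned in Remark \ref{REM1.3}; in practice it amounts to bookkeeping in the $s$-variable generating-function calculus of Theorem \ref{UNIVCONSLAW}, checking that the two Bessel recursions conspire to collapse the flux Jacobian into the simple string-equation coefficients $B_{11}\pm\sqrt{f_0}\,B_{12}$.
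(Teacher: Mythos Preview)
Your proposal is correct and leads to the same result, but the paper reaches it more directly. Rather than computing the flux Jacobian $M$ from (\ref{SYSTEM}) and then invoking Bessel/Appell identities to collapse its entries, the paper already has the quasilinear matrix form (\ref{TODA}) in hand from the Motzkin-path continuum limit (Theorem~3.1): the coefficient matrix is explicitly $\mathbf{B}=\left(\begin{smallmatrix}B_{11}&B_{12}\\ f_0 B_{12}&B_{11}\end{smallmatrix}\right)$, whose special structure makes the eigenvalues $\lambda_\pm=B_{11}\pm\sqrt{f_0}\,B_{12}$ and left eigenvectors $(\pm\sqrt{f_0},1)$ a one-line $2\times 2$ computation (equations (\ref{evs})--(\ref{evecs})). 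The Riemann invariants $r_\pm=h_0\pm 2\sqrt{f_0}$ then drop out as in the proof of Lemma~\ref{lem:extdiff}. Your flux-Jacobian route is really the same content reassembled: the identities you would need (your ``Bessel recursions conspire to collapse\ldots'') are exactly the gradient/symmetry relations (\ref{grad1})--(\ref{symm4}) the paper records in the Lemma following Corollary~\ref{cor02}, which establish $M=(2\nu+1)\mathbf{B}$ componentwise. So nothing is wrong with your plan, but you can shortcut the Appell bookkeeping by citing (\ref{TODA}) directly. For the spectral-endpoint identification, your argument via the continuum limit of the Jacobi entries is the same as the paper's reference to Section~\ref{evolvemeas}.
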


In the case of even valence, $\lambda_+ = \lambda_-$ and the system collapses to a scalar conservation law as mentioned earlier. In this case the connnection to random matrix spectral statistics is different and we have detailed that elsewhere \cite{Er09}.

\begin{rem} 
Another universal feature of our results is {\it arithmetic} in character. As we have seen, the closed form generating functions will also reduce to expressions in terms of polynomials in $y_0$ with rational coefficients. It follows that all statistical quantities related to these generating functions will be defined over algebraic number fields. Certainly what these number fields are depends on the particular valences involved, but the splitting structure of the enumerative formulas over the real numbers has a universal character (see Section \ref{arithmetic}). 
\end{rem}

\subsection{Universal Enumerative Asymptotics} \label{sec:UnivAsympEnum}

One final universal characteristic of the map counts is their asymptotic form for large $n$. Most, if not all, prior work on map enumeration has focussed on asymptotic enumeration (see section \ref{asympenum}). Because we have closed form expressions for our generating functions we are able to extract this information simply and in great generality. We find, for instance (Corollary \ref{cor:egasymp}), that for $j$-regular maps with $j$ odd, one has 
\begin{eqnarray} \label{KAPPA-ASYMP}
\frac{\kappa^{(j)}_{g}(2m)}{(2m)!} &\sim& \frac{c_g}{ \Gamma(\frac{5g-5}{2})} t_{c}^{1/2 - 5g/2 -2m} (2m)^{(5g-7)/2} \qquad \text{as }m\rightarrow \infty,
\end{eqnarray}
where $\xi_c = x^{j/2 - 1}t_c$ is the common radius of convergence for the Taylor-Maclaurin series of all the $E_g$.
The coefficient $c_g$ contains the information related to the limit $\xi \to \xi_c$ and is determined by a recursion of the form
\begin{eqnarray} \label{RECURRENCE}
0 &= & c_{g+1} + C_1     C_2 z_{0,c}^{2g-1} (25g^2-1)   c_{g } +  6C_1  \sum_{m=1}^{g} c_{m} c_{g+1-m}. \qquad  (g\geq 1)
\end{eqnarray}
The explicit valence dependence here is contained only in the coefficients $C_k$ (see (\ref{p872})) and $z_{0,c}$ is the value of the generating function $z_0$ (where $f_0 = x z_0$) at $\xi = \xi_c$. We note that the form of this asymptotics is quite similar to that of the even valence case (see (\ref{eg-asymp})). Moreover, the above recurrence mirrors the universal form of the recurrence in the even valence case which calculates the coefficients in an asymptotic expansion of the tritronqu\'ee form of the first Painlev\'e transcendent \cite{CLRM}.
As described in Appendix \ref{POne}, this further advances, to the odd valence case, the connection between the role that the Painlev\'e hierarchy and its extensions play in relating the double scaling limit of random matrix theory to 2D quantum gravity.

\subsection{Other Approaches to Map Enumeration} \label{approaches} 
 
The original approaches to map enumeration, going back to Tutte \cite{Tu}, used purely combinatorial methods.
More recent developments in this vein \cite{JV, BGR, CMS, Chap}. have discovered remarkable and elegant combinatorial bijections, in some cases between maps and group generators related to the character theory of the symmetric group and in other cases between maps and decorated trees that reduce map enumeration to other more tractable enumeration problems.
\bigskip

There is an alternative, formal, approach to map enumeration based on resolvent identities for random matrices, as opposed to recurrence operators, that goes back to Ambjorn, Chekov, Kristjansen, and Makeenko \cite{A} which also yields a recursive scheme for calculating the $e_{g}({\bf{t}})$. This approach is referred to as the method of {\it loop equations}. Eynard  \cite{Eynard, EBook} later streamlined the calculation of generating functions from loop equations and established a direct connection between loop equations and Tutte's equations that are key to the combinatorial method mentioned earlier. Using Theorem \ref{Workhorse} this formal method can also be derived rigorously \cite{EM7}. See also \cite{GS} for another approach to rigorously establish loop equations.

\subsection{Outline}

In section \ref{background} we provide a more technically precise but self-contained summary of the foundations
underlying the key ingredients outlined in the Introduction, related to random matrix theory, orthogonal polynomials and Riemann-Hilbert analysis.  Subsection \ref{workhorse} amplifies the statement of Theorem \ref{Workhorse} as it relates to the map generating functions, $E_g$ as well as the fundamental generating functions, $f_0, h_0$, and their higher genus analogues $f_g, h_g$. In subsection \ref{OddValentExt} we outline how our orthogonal polynomial asymptotic methods get extended to the case of odd degree dominated exponential weights via use of non-Hermitian orthogonal polynomials.

Section \ref{sec:44} introduces the Toda lattice and String equations that the recurrence operators satisfy and uses these to develop a symbol calculus based on the contnuum limits of these equations. The leading order forms of the Toda and String equations are then described in detail. In particular, their conservation law and universal form, as stated in (\ref{SYSTEM}) and Theorem \ref{UNIVCONSLAW} respectively are  revealed as well as their Riemann invariant form as it was stated in Corollary \ref{RIEMINVTS} (see subsection \ref{RINVARIANTS}). In subsection \ref{ODEreduction} it is shown how, on the integral surfaces corresponding to regular maps, these equations reduce to ODEs. 

Section \ref{Hopf-String} starts with a brief introduction to the binomial Hopf algebra and its general connection to Appell polynomials. We then build the fundamental link between Appell polynomials associated to Bessel functions and the conservation laws and string polynomials introduced in Section \ref{sec:44}. In particular, the crucial unwinding identity for string polynomials is proven in section \ref{Unwinding}.

Sections \ref{chargeom} - \ref{arithmetic} develop the characteristic geometry of the conservation laws on integral surfaces $\mathcal{S}$ associated to regular maps introduced in subsection \ref{RatRuled}. In this regular case we demonstrate that  the surface $\mathcal{S}$ is a cone over a rational algebraic curve $\mathcal{C}$, explictly given by (\ref{CURVE}), which we refer to as the spectral curve.   In particular Section \ref{evolvemeas} illustrates the relation between $\mathcal{C}$ and the equilbrium measures of the associated matrix ensembles, thereby justifying its name. Section \ref{gaussleaf} relates the characteristic geometry of the conservation laws to the algebraic geometry of the spectral curve in terms of the Appell polynomials introduced in section \ref{Hopf-String}. Section \ref{arithmetic} details the arithmetic structure of $\mathcal{C}$ that explains the local partial fractions expansions of the map generating functions at branch points stated in Theorem \ref{REGMAINTHEOREM}. This section also illuminates Galois symmetries and divisibility properties of the generating functions.

In Sections \ref{toprec} - \ref{closedforms} we establish the results that underlie the statements in Theorem \ref{MAINTHEOREM} as well as the pole order result of Theorem \ref{REGMAINTHEOREM} (see Theorem \ref{Eg}). 
Section \ref{toprec} provides a detailed account of how the higher genus generating functions are calculated and the source of their universality. Section \ref{finestruc} collects  information about the closed form structure of the higher genus auxiliary map generating functions $h_g$ and $f_g$.  
Section \ref{closedforms} focusses on the explicit structure of $E_0$ and $E_1$ and basic properties of the closed form expressions for the remaining $E_g$. Subsection \ref{UnivEg} establishes the universality results stated in Theorem \ref{MAINTHEOREM} and subsection \ref{RatEg} establishes the rationality results of that theorem. Finally in subsection \ref{integform} we show how these closed forms are used to extract explicit map counts and explain the form of these results that were illustrated in (\ref{KAPPA-ASYMP}) 
and (\ref{RECURRENCE}).

The results summarized in Corollary \ref{EVENVALENCE} are more fully explained in Appendix \ref{evenvalence}.

\section{Background} \label{background}

This section explains the origin and background for deriving (in section  \ref{sec:44} ) the conservation laws (and their higher genus prolongations) that were presented in the Introduction.  Much of this background will just be summarized; fuller details may be found in \cite{EM03, EMP08, BD10, EP11}.
The combinatorial significance of the fundamental generating functions, $f_0$ and $h_0$  introduced in section \ref{sec:0} (along with, again, their higher genus prolongations) will be explained here.

\subsection{Random Matrix Partition Functions} \label{sec:3.1}

Starting with this chapter we will introduce a useful change of notation related to the t'Hooft parameter initially presented in the introduction. Precisely we introduce a new parameter, $N$, implicitly defined by 
\bea \label{tHooft2}
x = \frac{n}{N}.
\eea 
This has the effect of separating out two physical parameters that comprise $x$: the system size $n$ and an effective "inverse temperature" $N$. In our asymptotic statements we will always take $n$ and $N$ to go together to $\infty$ so that $x$ remains 
$\mathcal{O}(1)$. This has the effect of modifying the random matrix measure  to read 
\begin{eqnarray} \label{RMmeasures2}
d\mu_{\bf t} &=& \frac1{\tilde{Z}_n} \exp\left\{-N \text{Tr } [ V_{\bf t}(M)] \right\} dM
\end{eqnarray}
for $M \in \mathcal{M}_n$ where $V_{\bf t}$ is now the polynomial 
\begin{eqnarray} \label{eq:genpot}
V_{\bf t}(\lambda) &=& \frac1{2} \lambda^2 + \sum_{j=1}^J t_j \lambda^j.
\end{eqnarray}
Note that this removes the dependence of $V_{\bf t}$ on $x$; however, the dependence of our asymptotic expansions on $x$ continue to hold as described in the Introduction. For much of this section we will, for simplicity, restrict attention to the special case
where
\beann
V(\lambda) &=& \frac12 \lambda^2 + t_1 \lambda + t \lambda^j.
\eeann
This potential corresponds to the case of regular maps.  Extensions to more general potentials will be discussed in section \ref{MV}.
There are of course some potential convergence issues here; these will be addressed later in Section \ref{OddValentExt}.
Due to the unitary invariance of the partition function  it, as well as expectations of unitarily invariant random variables, may be reduced to integrals over eigenvalues of Hermitian matrices. The latter is the form of the partition function we will typically work with (see (\ref{GenPot}) ).

\subsection{Orthogonal Polynomials} \label{Orth}

One may apply standard methods of orthogonal polynomial theory that go back to Szeg\"o (see also \cite{EP11}), to deduce the
existence of a semi-infinite lower unipotent matrix $A$ such that the recursion operator $L$, (\ref{multop}), may be conjugated to a shift operator,
\begin{eqnarray*}
{L} = A^{-1}\epsilon A \,\,\,\,\,&\mbox{where}&\,\,\,\,\,
\epsilon = \left(\begin{array}{cccc} 0 & 1 &  &  \\
                              0 & 0 & 1 &\\
			          & 0 & 0  & \ddots \\
                                  &      & \ddots & \ddots 
\end{array}\right)\,.
\end{eqnarray*}

This is related to the Hankel matrix 
\begin{eqnarray*}
\mathcal{H} =  \left(\begin{array}{cccc} m_{0} & m_1 &  m_2 & \dots\\
                              m_1 & m_2 & m_3 & \dots \\
			      m_2    & m_3 & m_4  & \dots \\
                                \vdots &  \vdots   & \vdots & \ddots 
\end{array}\right),
\,\,\,\,\, & \mbox{where} & \,\,\,\,\,\,
m_k = \int_\mathbb{R} \lambda^k e^{-N V(\lambda)} d \lambda
\end{eqnarray*}
is the $k^{th}$ moment of the measure, by
\begin{eqnarray*}
A D A^{\dagger} =  \mathcal{H} , \qquad 
D = \mbox{diag}\,\left\{ d_{0}, d_{1} \dots \right\} , \qquad
d_n = \frac{\det \mathcal{H}_{n+1}}{\det \mathcal{H}_{n}}
\end{eqnarray*}
where $\mathcal{H}_n$ denotes the $n \times n$ principal sub-matrix of $\mathcal{H}$ whose determinant may be expressed as 
\begin{eqnarray}
\det \mathcal{H}_n &=& n! {Z}^{(n)} \left(\xi_1,  \xi_{2\nu} \right)\\ \nonumber
\label{szego} {Z}^{(n)} \left(\xi_1, \xi_{2\nu}\right) &=& \int_\mathbb{R} \cdots \int_\mathbb{R} \exp\left\{
-N^2\left[\frac1{N}\sum_{m=1}^{n} V(\lambda_{m}; \xi_1, \ \xi_{2\nu})  - \right.\right. \\
\label{GenPot}
&& \phantom{\int_\mathbb{R} \cdots \int_\mathbb{R} \exp}\hspace{2cm}
\left. \left.
\frac1{N^2} \sum_{m\neq \ell} \log{|
\lambda_{m} -
\lambda_{\ell} | } \right] \right\}  d^{n} \lambda,
\end{eqnarray}
where $V(\lambda; \xi_1, \ \xi_{2\nu + 1}) = \frac12 \lambda^2 + \xi_1 \lambda +  \xi_{2\nu} \lambda^{2\nu}$.
Set $\det \mathcal{H}_0 = 1$.
\medskip

\begin{rem}
$\det \mathcal{H}_n$ is referred to as a Hankel determinant ($\mathcal{H}_n$ is a Hankel matrix). 
In a set of celebrated works, Szeg\"o analyzed the leading asymptotic behavior of Toeplitz determinants. The above identity shows that in studying the asymptotic behavior of the partition function, we are also exploring the analogue of Szeg\"o's theorems but for Hankel determinants. Indeed this provides another perspective on what was done in \cite{EM03}.  
\end{rem}
\smallskip

One sometimes needs to extend the domain of the partition function to include more than one parameter, such as $\xi_1$, as was done here. (See (\ref{ett}) - (\ref{hcount}) to understand the combinatorial motivation for this.) Doing this presents no difficulties in the prior constructions as long as the highest order term is even. 
\smallskip

The diagonal elements of $D$ may in fact be expressed as
\begin{eqnarray} \label{szego1}
d_n = \frac{\tau^2_{n+1, N}}{\tau^2_{n, N}} d_n(0) \,\,\,\,\,& \mbox{where} &  \,\,\,\,\,
\tau^2_{n,N} = \frac{{Z}^{(n)}\left(\xi_1, \xi_{2\nu}\right)}{{Z}^{(n)}\left(0,0\right)}
\end{eqnarray}
Tracing through these connections, from $\mathcal{L}$ to $D$, one may derive the fundamental identity relating the random matrix partition function to the recurrence coefficients,
\begin{eqnarray}\label{Hirota}
b^2_{n,N} = \frac{d_n}{d_{n-1}}&=& \frac{\tau^2_{n+1, N}\tau^2_{n-1, N}}{\tau^4_{n, N}} b^2_{n,N}(0)
\end{eqnarray}
which is the basis for analyzing continuum limits. (Note that $b^2_{0,N}(0) = 0$ and therefore $b^2_{0,N} \equiv 0$.) A differential version of these relations and also of the diagonal recursion elements when the external potential is not even (as above when we add the parameter $\xi_1$) is given by:

\begin{eqnarray} \label{RecCoeff}
\label{a} a_{n, N} &=& -\frac1{N} \frac{\partial}{\partial\xi_1} \log \left[ \frac{\tau^2_{n+1, N}}{\tau^2_{n, N}} \right]  
                                  =  -\frac1{N} \frac{\partial}{\partial \xi_1} \log \left[ \frac{Z^{(n+1)}(\xi_1,\xi_{2\nu})}{Z^{(n)}(\xi_1, \xi_{2\nu})} \right]\\ \nonumber
\label{b}  b_{n, N}^2 &=& \frac1{N^2} \frac{\partial^2}{\partial \xi_1^2} \log \tau^2_{n, N}
                                       = \frac1{N^2} \frac{\partial^2}{\partial \xi_1^2} \log Z^{(n)}(\xi_1, \xi_{2\nu})\,,
\end{eqnarray}

All of the above can be extended to the case when the polynomial potential's leading term is an odd power (as in $V(\lambda) = \frac12 \lambda^2 + \xi_j \lambda^j$ where $j$ may be odd) by deforming the contour of integration, for the measure of orthogonality, away from the negative real axis. This was first descrirbed in \cite{BD10} and later extended in \cite{EP11}.
See also section \ref{OddValentExt}.

The structure of the recurrence coefficients may be studied in depth via the {\it String Equations}:
\begin{eqnarray}		
0 =&  V'(L)_{n,n}	\label{UnDiff1}		\\
\frac{n}{N} =&  V'(L)_{n,n-1}	\label{UnDiff2}		
\end{eqnarray}  
Equation (\ref{UnDiff1}) can be derived as follows.  Integrating by parts one find that
\begin{eqnarray*}		
0 =\int \pi'_n(\lambda) \pi_n(\lambda)\, e^{-NV(\lambda)}\,d\lambda	
&=& N \int \pi_n(\lambda)^2  V'(\lambda)\, e^{-NV(\lambda)}\,d\lambda 	
= N ||{\pi_n}||^2 V'(L)_{n,n}.
\end{eqnarray*}
An analogous calculation, starting with $n ||{\pi_{n-1}}||^2 = \int \pi'_{n}(\lambda) \pi_{n-1}(\lambda)e^{-NV(\lambda)} \, d\lambda$ establishes (\ref{UnDiff2}).

For the elementary potential, $V = \frac12 \lambda^2 + \xi \lambda^j$ the string equations become
\begin{eqnarray}
0 &=& \left(L +j \xi L^{j-1}\right)_{n, n}\\
\frac{n}{N} &=& \left(L +j \xi L^{j-1}\right)_{n, n-1}.
\end{eqnarray}

\subsection{Passage to the Continuum Limit} \label{workhorse}

We now turn to the continuum limits of the fundamental equations.  The key element here is the following asymptotic result, derived in \cite{EM03}, for the logarithm of the tau-function (\ref{szego1})

\begin{eqnarray} \label{I.002} 
\log \tau^2_{n, N}({\xi}\;) &=&
N^{2} E_{0}(x, {\xi \;) + E_{1}(x, \xi}\;) + \frac{1}{N^{2}} E_{2}(x, \xi \;) + \cdots +  \frac{1}{N^{2g-2}} E_{g}(x, \xi\;) + \dots \,\,\,\,\,\,\,\,\,\,
\end{eqnarray}
as $n, N \to \infty$ with  the {\it 't Hooft parameter}, $x = \frac{n}{N}$, held fixed. Moreover, for $Re \,\, \xi > 0$ and $x \sim 1$,
\begin{enumerate}

\item[i.] \label{unif} the expansion is uniformly valid on compact subsets in $(x,\xi)$;

\item[ii.] \label{analyt} $e_g(\xi)$ has a finite radius of convergence, $\xi_c$, in a neighborhood of $\xi = 0$ determined by an algebraic singularity at the point $\xi_c$ on the real axis 
(for fixed valence, $\xi_c$ is the {\it same} for all $g$);

\item[iii.] \label{diff} the expansion may be differentiated term by term in $(x,\xi\;)$ with uniform error estimates for these derivatives similar to those given in (\ref{errorest}).

\end{enumerate}

The meaning of (i) is that for each $g$ there is a constant, $K_g$, depending only on the compact subset  and $g$ such that 
\begin{eqnarray} \label{errorest}
\left| \log \tau^2_{n, N} \left(\xi\;\right) - N^2 E_0(x, \xi\;)  -  \dots  - \frac{1}{N^{2g-2}} E_{g}(x, \xi\;) \right|  \leq \frac{K_g}{N^{2g}}
\end{eqnarray}
for all $(x, \xi\;)$ in the compact subset. The estimates referred to in (iii) have a similar form with $\tau^2_{n, N}$ and $E_{j}(x, \xi\;)$ replaced by their mixed derivatives (the same derivatives in each term) and with a possibly different set of constants $K_g$.
The radius of convergence mentioned in property (ii) is related to the critical singularity mentioned earlier.

The next observation is that because of (iii) and because of representations (\ref{RecCoeff}) in terms of the tau-function, the recurrence coefficients themselves inherit a genus expansion from that of (\ref{I.002}):

\begin{eqnarray}   \label{b-shift}
b_{n, N}^2 &\doteq& f(x) = f_0(\xi, x) + \dots + \frac1{n^{2g}}f_{g}(\xi, x) + \dots 
\qquad f_{g}(\xi, x) = x^{1-2g} z_g(\xi)\\
\label{a-shift}
a_{n, N} &\doteq& h(x) = h_0(\xi, x) + \dots + \frac1{n^{g}}h_{g}(\xi, x) + \dots 
\qquad h_{g}(\xi, x) = x^{1/2-g} u_g(\xi)
\end{eqnarray}
where here one needs to bear in mind that $\xi$ depends on $x$ as $\xi = tx^{j/2 - 1}$. The representations (\ref{RecCoeff})
also impart a combinatorial interpretation to the coefficients $f_g(\xi, x)$ and $h_g(\xi, x)$ as generating functions. The continuum limits of (\ref{RecCoeff}) imply that
\begin{eqnarray} \label{ett}
f_g(\xi, x) &=& \frac{\partial^2}{\partial t_1^2} E_g(\xi, x, \xi_1)|_{\xi_1 = 0}\\ \label{tx}
h_g(\xi, x) &=& \frac{\partial^2}{\partial t_1 \partial x} E_g(\xi, x, \xi_1)|_{\xi_1 =0}.
\end{eqnarray}
These parameter derivatives for maps play the same role as they do in graphical enumeration where combinatorially this is referred to as {\it pointing} \cite{FS}. Here this implies that
\begin{eqnarray} \label{fcount}
z^{(m)}_g(0) = \frac{d^m f_g(\xi,x)}{d\xi^m}\Big|_{x =1, \xi =0} &=& ^\sharp\left\{ \mbox{two-legged $g$-maps  with $m$ $j$-valent vertices} \right\}\\ \label{hcount}
u^{(m)}_g(0) = \frac{d^m h_g(\xi,x)}{d\xi^m}\Big|_{x =1, \xi =0} &=& ^\sharp\left\{ \mbox{one-legged $g$-maps  with $m$ $j$-valent vertices and one marked face} \right\}\,\,\,\,\,\,\,\,\,\,\,\,\,\,\,\,\,\,\,\,\,\,\,\,\,\,\,\,\,\,
\end{eqnarray}
(A {\it leg} is an edge emerging from a univalent vertex, so that the leg is the only edge incident to that vertex.  So, for instance, the maps being counted in (\ref{fcount}) have exactly two vertices of valence one; all other vertices have valence $j$.)

\begin{rem}
In the mixed valence case these combinatorial correspondences extend naturally; so, for example,
\beann
\frac{\partial^m f_g(\xi,x)}{\partial\xi_1^{m_1} \cdots \partial\xi_J^{m_J}}\Big|_{x =1, \xi_j =0} &=& ^\sharp\left\{ \mbox{two-legged $g$-maps  with $m_j$ $j$-valent vertices}, j = 1 \dots J\right\}
\eeann
\end{rem}
\medskip

To further analyze the continuum limits, we observe that for $n >>1$, one is far out along the diagonal in the recurrence matrix  which takes the form

\begin{eqnarray}
\left(\begin{array}{ccccc} \ddots & \ddots &  & &\\
                              b^2_{n\left(1 - 1/n\right)} & a_{n\left(1 - 1/n\right)} & 1 & &\\
			          & b^2_n & a_n  & 1 & \\
                                  &      & b^2_{n\left(1 + 1/n\right)} & a_{n\left(1 + 1/n\right)} & 1\\
                                  &    &   &  \ddots & \ddots \end{array} \right)
\end{eqnarray}


To describe local variations along the diagonal that are small in comparison to $n$ one introduces the variable $w = x\left(1 + \ell/n\right) = (n + \ell)/N$. (This will be motivated in more detail in section \ref{sec:44}).

\subsection{Extensions to Odd Valence via Non-Hermitian Orthogonal Polynomials} \label{OddValentExt}
 In \cite{BD10} an extension of the analysis in \cite{EM03} and \cite{BI} to the case of matrix ensembles with a cubic dominant weights was carried out using non-Hermitian orthogonal polynomials \cite{DHK}. Using this key idea, the Toda continuum limits derived in \cite{EMP08} were extended, in \cite{EP11}, to the case of {\it odd} dominant weights thereby filling in the missing odd times in the previous analysis. That paper also derived the continuum limits of the difference string equations for these orthogonal polynomials. 

For valence $j$ odd, we again consider a simple potential function of the form
\begin{align} 
V(\lambda;t) =& \frac{1}{2} \lambda^2 + t \lambda^j . \label{VDef1}
\end{align}
The parameter $j$ can be regarded as fixed; this potential will correspond to the enumeration of $j$-valent maps.
 We consider a partition function defined by 
\begin{align}
Z_{n,N}(t)=&  \int_{\Gamma^n}\exp\left[-N\sum_{i=1}^n  V(\lambda_i ;t)\right]\prod_{1\leq i<j\leq n} (\lambda_i-\lambda_j)^2 \, d^n \lambda . \label{ZDef1}
\end{align}
Here the contour $\Gamma$ is the union of a ray from $e^{i\theta_j}\infty$ to zero and a ray from zero to positive infinity, where
$\theta_j$ is chosen so that the integral converges for $t\geq 0$:
\begin{align*}
\theta_j =\begin{cases} 19\pi/24 & \text{ if }j=3 \\ (j-1)\pi/j & \text{ if }j=5,7,9,\ldots \end{cases}
\end{align*}
The case $j=3$ is handled separately because $\theta_j$ must be greater than $3\pi/4$ for the integral to converge when $t=0$.  
The partition function defined in (\ref{ZDef1}) can be thought of as the normalization constant for an induced measure on eigenvalues arising from a unitarily invariant matrix ensemble of matrices with eigenvalues on the contour $\Gamma$ \cite{BD10}, \cite{W}.  However, our discussion in this article will not rely on this interpretation.

The existence of an expansion of the form
\begin{align}
\log \frac{Z_{n,N}(t)}{Z_{n,N}(0)} \sim & \sum_{g\geq 0} e_g(x,t)N^{2-2g}, \label{TopExp1}
\end{align}
where the asymptotic expansion is taken as $n,N \rightarrow \infty$ with the ratio $n/N=x$ held fixed, has been proven in the case $j=3$ by Bleher and Dea\~{n}o \cite{BD10}.  The arguments in \cite{BD10} extend to handle arbitrary odd values of $j$.  Precise statements giving analytical properties of the topological expansion can be found in \cite{EP11}.  In particular, expansions of derivatives of $\log Z_{n,N}(t)/Z_{n,N}(0)$ with respect to $t$ can be computed by termwise differentiation of the series (\ref{TopExp1}).  The same is true for differentiation with respect to $x$, but for more complicated reasons; see \cite{W} for a proof that this property is equivalent to the string equations.

As with the case of even valence, for the purpose of computing the generating functions $e_g$, it is difficult to work with the partition function directly.  Szeg\"o's relation (\ref{SzegoRel1}) and the Hirota-Szeg\"o relation (\ref{SzegoRel2}) \cite{EMP08} give two fundamentally different relations between the partition function and recurrence coefficients for a family of orthogonal polynomials:
\begin{align}
\frac{Z_{n+1,N} Z_{n-1, N}}{Z_{n,N}^2} =& \frac{n+1}{n} b^{2}_{n,N}. \label{SzegoRel1} \\
b^{2}_{n,N} \sim & \frac{1}{N^2}\frac{\partial^2}{\partial t_{1}^2} \log \frac{Z_{n,N}(t)}{Z_{n,N}(0)}   \label{SzegoRel2}
\end{align}
To make sense of (\ref{SzegoRel2}), one must consider a potential $V(\lambda)=t_1 \lambda + \frac{1}{2} \lambda^2 +t_j \lambda^j$.  After the derivatives with respect to $t_1$ are resolved, one may then set $t_1=0$.  The coefficients $b^{2}_{n,N}$ are defined by the three term recurrence
\begin{align}
\lambda p_{n,N}(\lambda) =& p_{n+1,N}(\lambda) + a_{n,N} p_{n,N}(\lambda) + b^{2}_{n,N} p_{n-1,N}(\lambda), \label{ThreeTermRec1}
\end{align}
where the monic orthogonal polynomials $p_{n,N}$ arise from the following (non-hermitian) inner product and orthogonality relation:
\begin{align}
\ip{F(\lambda),G(\lambda)}=& \int_{\Gamma} F(\lambda) G(\lambda)\exp\left(-N V(\lambda;t)\right)\, d\lambda . \label{OrthRel1} \\
\ip{p_{n,N}(\lambda),\lambda^m}=&0 &\text{ for all }m<n \label{OrthRel2}
\end{align}
Let us note that because the inner product is non-hermitian, denominators may vanish in the formula
\begin{align*}
p_{n,N}(\lambda)=& \lambda^n - \frac{\ip{\lambda^n, p_{n-1,N}}}{\ip{p_{n-1,N},p_{n-1,N}}} p_{n-1,N}(\lambda)-\ldots -\frac{\ip{\lambda^n, p_{0,N}}}{\ip{p_{0,N},p_{0,N}}} p_{0,N}(\lambda).
\end{align*}
Thus it is not clear that $p_{n,N}(\lambda)$ exists.  However, it has been shown \cite{BD10} in the case $j=3$ that for sufficiently large $n$, and $t$ in an open set with $0$ on its boundary, that there exists a polynomial $p_{n,N}$ satisfying (\ref{OrthRel2}).  Although we have not explicitly carried out the similar analysis for higher odd valences $j$, it is clear that this will extend (see \cite{EP11}).

\section{Fundamental Equations in the Continuum Limit and Symbol Calculus} \label{sec:44} 

In the first four subsections of this section we will be restricting attention to $j$-regular potentials, i.e., those of the form (\ref{VDef1}) and in particular for the 
case of $j$ odd. However, starting in subsection \ref{MV} we extend our attention to more general potentials of mixed valence which build on what we find for the regular case.

\subsection{Fundamental Equations} \label{FundEqns}

\subsubsection{String equations}
In the case that orthogonal polynomials $p_n(\lambda)$ exist for all $n$, the three term recurrence (\ref{ThreeTermRec1}) can be expressed in terms of a Jacobi matrix $L$:
\begin{align} \label{jacobiMatrix}
L=\left( \begin{array}{ccccc}
a_0 & 1 & 0 & 0 &  \\
b^{2}_1 & a_1 & 1 & 0 &  \\
0 & b^{2}_2 & a_2 & 1 & \\
 & & \ddots & \ddots & \ddots
 \end{array}\right)
 \end{align}
Let $p= ( p_n(\lambda))_{n\geq 0}$ be the column vector of all (potentially non-hermitian) orthogonal polynomials for the potential $V$; then the three term recurrence can be encoded as a matrix equation
\begin{align}
\lambda p =& L p. \label{matrixRecurrence}
\end{align}
If orthogonal polynomials $p_n$ only exist for sufficiently large $n\geq n^{\ast}$, then the first $n^{\ast}+2$ rows of the matrix equation (\ref{matrixRecurrence}) are invalid; however the remaining rows still hold, and this is sufficient for our purposes.

Higher order asymptotics of the recurrence coefficient, $a_{n}$ and $b^{2}_n$, in (\ref{jacobiMatrix}) can be computed using the string equations (\ref{string1}).  Since the partition function is related to the recurrence coefficients by (\ref{SzegoRel1}), this facilitates the calculation of the generating functions $e_g$.
An undifferenced form of these equations was used by Bleher and Dea\~{n}o in \cite{BD10}:
\begin{align}		
0 =&  V'(L)_{n,n}	\label{unDiff1}		\\
\frac{n}{N} =&  V'(L)_{n,n-1}	\label{unDiff2}		
\end{align}  
The derivation of these equations is completely analogous to that of (\ref{UnDiff1}) and (\ref{UnDiff2}).

\subsubsection{Motzkin Paths}
We define a {\it  Motzkin path} of length $r$ to be a function $p$ from the discrete set $0,1,2,\ldots ,r$ to $\mathbb{Z}$ such that $p(i+1)-p(i) \in \{-1,0,1\}$.  The Motzkin path may also be thought of as the piecewise linear graph  determined by the step height increments of $p$.  We use the notation $\mathcal{P}^{r}(m,n)$ for the set of Motzkin paths of length $r$ which begin at height $m$ (that is $p(0)=m$), and end at height $n$.  It is typically the case that Motzkin paths are required to never cross the $x$-axis; i.e., they are what one refers to in probability theory as {\it excursions}. However in our work we must allow the more general setting where such zero-crossings are permitted. Motzkin paths allow a combinatorial interpretation of the entries of powers of a tridiagonal matrix:
\begin{align} \label{Motzkincount}
L^{r}_{n,m} =& \sum_{p\in \mathcal{P}^{r}(n,m)} C(p) \\
C(p)=& \prod_{i=0}^{r-1} \begin{cases} 
1 & \text{ if }p(i+1)=p(i)+1 \\
a_{p(i)} & \text{ if }p(i+1)=p(i) \\
b^{2}_{p(i)} & \text{ if }p(i+1)=p(i)-1
\end{cases} .
\end{align}
This may be more compactly expressed in terms of the algebraic resolvent operator for $L$:
\beann
L^{r}_{n,m} &=& [z^r] \left[ (I - z L)^{-1}\right]_{n,m}
\eeann
We call $C(p)$ the contribution of the Motzkin path $p$.  The string equations can be expressed in Motzkin path form:
\begin{eqnarray}\label{stringpath1}
0 &=& a_n + j t \sum_{p\in \mathcal{P}^{j-1}(n,n)} C(p) \\  \nonumber
&=& a_n + j t [z^{j-1}] \left[ (I - z L)^{-1}\right]_{n,n}\\
\label{stringpath2}
\frac{n}{N} &=& b^{2}_n + j t \sum_{p\in \mathcal{P}^{j-1}(n,n-1)} C(p)\\
&=& b^{2}_n + j t [z^{j -1}] \left[ (I - z L)^{-1}\right]_{n,n-1}.
\end{eqnarray}

This description has some relation to combinatorial path processes that are comprised of excursions. The general description of this and connections to continued fractions (which are also related to orthogonal polynomial recursions) was already noticed by Flajolet in his seminal paper \cite{Flaj}. More recently Bouttier and Guitter \cite{BG} applied this to the study of geodesic distance in planar maps (related to random surfaces) in terms of a combinatorial objects called mobiles which can be encoded by restricted Motzkin paths that, in particular, are excursions. However, as we have mentioned, in the work here the Motzkin paths must be allowed to have zero-crossings, as will become evident in the subsequent sections, and in this generality Flajolet's method breaks down. Nevertheless there are some suggestive analogies between our work and that of \cite{BG}. Their {\it characteristic equation} ( linear recurrence equation) appears to play a related role to that of the nonlinear PDE, (\ref{TODA}) which underlies the characteristic geometry (Section \ref{chargeom}) in our work.

\subsubsection{Toda equations} \label{sec:Toda}
 Equations of motion for the recurrrence coefficients are obtained by extracting matrix entries from the diagonal or first subdiagonal in (\ref{Toda1}).  For example, the $t_1$-Toda equations are
\begin{align}
N^{-1}\partial_{t_1} a_n =& b^{2}_{n} -b^{2}_{n+1} \label{t1TodaA} \\
N^{-1} \partial_{t_1} b^{2}_n =& b^{2}_{n}\left(a_{n-1} -a_{n}\right) \label{t1TodaB}
\end{align}
Let us briefly sketch a derivation of the Toda equations (\ref{Toda1}).
Fix a value of $j$; we will use an overdot to represent differentiation with respect to $t_j$.  If $m>n$ then
\begin{align}
0=&\partial_{t_j}\int_{\Gamma} p_m(\lambda) p_n(\lambda) \dmu \nonumber \\
=&\int_{\Gamma} \dot{p}_m(\lambda) p_n(\lambda) \dmu-\int_{\Gamma} (N\lambda^j)p_m(\lambda) p_n(\lambda) \dmu.
\end{align}
If $m\leq n$ the polynomial $\dot{p}_m$ is of strictly smaller degree than that of $p_n$ and so their inner product vanishes. Hence,
\begin{align}
\int_{\Gamma} \dot{p}_m(\lambda) p_n(\lambda) \dmu  
=& \begin{cases} 0 & \text{ if $m\leq n$}\\N\int_{\Gamma} \lambda^j p_m(\lambda)  p_n(\lambda) \dmu  & \text{ if $m>n$}\end{cases} \nonumber \\
=& N(L^{j}_{-})_{m,n} \left\|p_n\right\|^2. 
\end{align}
The Toda equation (\ref{Toda1}) may now be obtained by calculating $\int \lambda \dot{p}_m(\lambda) p_n(\lambda)\dmu$ in two ways: by applying the three term recurrence either to $\lambda p_n(\lambda)$ or to $\partial_t(\lambda p_m(\lambda))$.

 We use the identity $\left\|p_{n}\right\|^2=b^{2}_n\left\|p_{n-1}\right\|^2$.  The first evaluation is
\begin{align}
\int\lambda \dot{p}_m p_n \,d\mu =& \int \dot{p}_m \left(p_{n+1}+a_np_n+b^{2}_np_{n-1}\right)\,d\mu \nonumber \\
=&N(L^{j}_{-})_{m,n+1}\left\|p_{n+1}\right\|^2+Na_n(L^{j}_{-})_{m,n}\left\|p_{n}\right\|^2+Nb^{2}_n(L^{j}_{-})_{m,n-1}\left\|p_{n-1}\right\|^2 \nonumber \\
=& N\left\|p_n\right\|^2\left((L^{j}_{-})_{m,n+1}b^{2}_{n+1}+(L^{j}_{-})_{m,n}a_n+(L^{j}_{-})_{m,n-1}\right) \nonumber \\
=& N \left\|p_n\right\|^2 \left(L^{j}_{-}L\right)_{m,n}. \label{Toda pf 1}
\end{align}
The second way is
\begin{align}
\int\lambda \dot{p}_m p_n \,d\mu =& \int \partial_{t_j}\left(\lambda p_m\right) p_n \,d\mu \nonumber \\
=& \int \partial_{t_j}\left(p_{m+1}+a_mp_m+b^{2}_mp_{m-1}\right) p_n \,d\mu \nonumber \\
=&\dot{a}_m\left\|p_n\right\|^2\delta_{m,n}+\dot{b}^{2}_m \left\|p_{n}\right\|^2\delta_{m-1,n} +\int \left(\dot{p}_{m+1}+a_m\dot{p}_m+b^{2}_m\dot{p}_{m-1} \right)p_n\,d\mu_{\bf t} \nonumber \\
=&\dot{a}_m\left\|p_n\right\|^2\delta_{m,n}+\dot{b}^{2}_m \left\|p_{n}\right\|^2\delta_{m-1,n} \nonumber \\
& +N\left\|p_n\right\|^2\left((L^{j}_{-})_{m+1,n} + a_m (L^{j}_{-})_{m,n} +b^{2}_m (L^{j}_{-})_{m-1,n}\right)\nonumber \\
=&\left\|p_{n}\right\|^2\dot{L}_{m,n} +N\left\|p_n\right\|^2\left(L L^{j}_{-}\right)_{m,n}.\label{Toda pf 2}
\end{align}
Combining (\ref{Toda pf 1}) and (\ref{Toda pf 2}) proves (\ref{Toda1}).
\bigskip
 
In general the Toda equations can be written in Motzkin path form as
\begin{align}
N^{-1} \partial_{t_j} a_n =& \sum_{p\in \mathcal{P}^{j}(n,n-1)} C(p) - \sum_{p\in \mathcal{P}^{j}(n+1,n)} C(p) \label{MotzkinTodaA} \\
=&   [z^{j}] \left[ (I - z L)^{-1}\right]_{n,n-1} - [z^{j}] \left[ (I - z L)^{-1}\right]_{n+1,n}\nonumber \\
N^{-1} \partial_{t_j} b^{2}_n =&  \sum_{p\in \mathcal{P}^{j}(n,n-2)} C(p) -  \sum_{p\in \mathcal{P}^{j}(n+1,n-1)} C(p) +(a_{n-1}-a_n) \sum_{p\in \mathcal{P}^{j}(n,n-1)} C(p) \label{MotzkinTodaB}\\ 
=& [z^{j}] \left[ (I - z L)^{-1}\right]_{n,n-2} 
- [z^{j}] \left[ (I - z L)^{-1}\right]_{n+1,n-1} 
+ (a_{n-1} - a_n) [z^{j}] \left[ (I - z L)^{-1}\right]_{n,n-1}. \nonumber
\end{align}

\subsection{Continuum Limits}
The continuum limits of the string and Toda equations will be described in terms of certain scalings of the independent variables, both discrete and continuous. The positive parameter $\frac1N$ sets the scale for the potential in the random matrix partition function and is taken to be small. The discrete variable $n$ labels the lattice {\it position} on $\mathbb{Z}^{\geq 0}$ that marks, for instance, the $n^{th}$ orthogonal polynomial and recurrence coefficients.  We also always take $n$ to be large and in fact to be of the same order as $N$; i.e., as $n$ and $N$ tend together to $\infty$ , they do so in such a way that their ratio, the t'Hooft parameter introduced in section \ref{background},
\begin{equation}\label{xdef} 
x \doteq  \frac{n}{N}
\end{equation}
remains fixed at a value close to $1$. 

In addition to the {\it global} or {\it absolute} lattice variable $n$, we also introduce a {\it local} or {\it relative} lattice variable denoted by $\ell$. It varies over integers but will always be taken to be small in comparison to $n$ and independent of $n$. The Motzkin lattice paths naturally  introduce the composite discrete variable $n + \ell$ into the formulation of the difference string and Toda equations which we think of as small discrete variations around a large value of  $n$. The spatial homogeneity of those equations manifests itself in their all having the same form, independent of what $n$ is, while $\ell$ in those equations has a bounded 
\emph{bandwidth}. For instance, when $j = 2\nu +1$, this varies over $\{-\nu - 1, \dots, -1,0,1, \dots, \nu + 1\}$. In that case taking $\nu + 1 << n$ will insure the necessary separation of scales between $\ell$ and $n$. We define 
\begin{eqnarray}
\label{wdef}{w} &\doteq& (n+\ell)/N\\
&=& x +   \frac{\ell}{N} = x\left( 1 + \frac{\ell}{n}\right).
\end{eqnarray} 
as a {\em spatial} variation close to $x$ which will serve as a continuous analogue of the lattice location along a Motzkin path relative to the starting location of the path.  

We also recall $\xi_j$ and introduce $\tilde{w}$: 
\begin{eqnarray}
\label{nuscaling2} \xi_{j} &\doteq& x^{j/2 - 1} {t_j}\\
\label{nuscaling2.5} \widetilde{w} &\doteq& \left( 1 + \frac \ell n\right)
\end{eqnarray}
In terms of these scalings, 
the large $n$ expansions of the recursion coefficients may be rewritten, \cite{EMP08} and \cite{EP11}, as 
\begin{eqnarray}\label{bs-asymp}  
b_{n, N}^2 &=& x\left( z_0(\xi_j) + \dots + \frac1{n^{2g}}z_{g}(\xi_j) + \dots\right)\\
z_g(\xi_j) &=& \frac{d^2}{d\xi_1^2} e_g(\xi_1, \xi_j)|_{\xi_1 = 0}\\ \label{as-asymp}
a_{n, N} &=&  x^{1/2}\left( u_0(\xi_j) + \dots + \frac1{n^{g}}u_{g}(\xi_j) + \dots\right)\\ 
\label{us-asymp}
u_g(\xi_j) &=& \sum_{\begin{matrix} 2 g_1 + m = g+1 \\  g_1 \geq 0\,, m>0\end{matrix}}  - \frac1{m!} \frac{\partial^{m+1}}{\partial \xi_1\partial x^m} \bigg[ 
x^{2-2g_1} e_{g_1}\left(\xi_1, \xi_j\right)\bigg]_{\xi_1 = 0}.
\end{eqnarray}
The expansion (\ref{bs-asymp}) is derived by setting $F(\lambda)= \lambda$ in Theorem \ref{Workhorse}, differentiating the resulting expansion term by term with respect to $t_1$ and then setting $t_1 = 0$.
We further define 
\begin{eqnarray} \label{eq:b-shift}
b_{n+\ell, g_s}^2 &=& f^{[\ell]}_0(\xi_j, w) + \dots + \frac1{n^{2g}}f^{[\ell]}_{g}(\xi_j, w) + \dots \\
\label{b-shift_g} f^{[\ell]}_{g}(\xi_j, w) &=& w^{1-2g} z_g(\xi_j\widetilde{w}^{j/2-1})\\
 \label{eq:a-shift}
a_{n+\ell, g_s} &=& h^{[\ell]}_0(\xi_j, w) + \dots + \frac1{n^{g}}h^{[\ell]}_{g}(\xi_j, w) + \dots \\
\label{a-shift_g} h^{[\ell]}_{g}(\xi_j, w) &=& w^{1/2-g} u_g(\xi_j\widetilde{w}^{j/2-1}). 
\end{eqnarray}
Note that when $\ell=0, f^{[0]}_{g}(\xi_j, w) = x^{1-2g} z_g(\xi_j)$ and 
$h^{[0]}_{g}(\xi_j, w) = x^{1/2-g} u_g(\xi_j)$; so we set
\begin{eqnarray} \label{f-coeffs}
f_g(\xi_j,w) &=& f^{[0]}_{g}(\xi_j, x) = x^{1-2g}z_g(x^{j/2-1} t_j/j)\\ \label{h-coeffs}
h_g(\xi_j,w) &=& h^{[0]}_{g}(\xi_j, x) = x^{1/2-g} u_g(x^{j/2-1}t_{j}/j).
\end{eqnarray}
For future notational convenience we will denote the asymptotic expansions in (\ref{bs-asymp}) and (\ref{as-asymp})  by $f(\xi_j, w)$ and $h(\xi_j, w)$ respectively so that
\begin{eqnarray} \label{f-exps}
f(\xi_j, w) &=& f_0(\xi_j, w) + \dots + \frac1{n^{2g}}f_{g}(\xi_j, w) + \dots\\ \label{h-exps}
h(\xi_j, w) &=& h_0(\xi_j, w) + \dots + \frac1{n^{g}}h_{g}(\xi_j, w) + \dots.
\end{eqnarray}
By our prior results, $f(\xi_j, w)$ and $h(\xi_j, w)$ may be differentiated term by term as uniformly valid asymptotic expansions. Hence one may construct expansions of $f^{[\ell]}$ (resp. $h^{[\ell]}$) in terms of {\it Taylor type} expansions of $f$ (resp. $h$) in $w$ around $x$; or, equivalently, in terms of the vertex operator notation introduced in (\ref{boson1}).
\begin{eqnarray} \label{f1k}
f^{[\ell]}(\xi_j, x) &=& \sum_{m=0}^\infty \frac{f_{w^{(m)}}|_{w=x}}{m!} \left( \frac{\ell}{N} \right)^m = e^{\frac{\ell}{N} \partial_x} f(x)\\
\label{h1k}
h^{[\ell]}(\xi_j, x) &=& \sum_{m=0}^\infty \frac{h_{w^{(m)}}|_{w=x}}{m!} \left( \frac{\ell}{N} \right)^m =  e^{\frac{\ell}{N} \partial_x} h(x)
\end{eqnarray}
where the subscript $w^{(m)}$ denotes the operation of taking the $m^{th}$ term-by-term derivative of the asymptotic expansion of $h(\xi_j, w)$ or $f(\xi_j, w)$ with respect to $w$:
\begin{eqnarray*}
h_{w^{(m)}} &=& \sum_{g \geq 0} \frac{\partial^{m}}{\partial w^m}  h_g(\xi_j,w) \frac{1}{n^g}\\
f_{w^{(m)}}   &=& \sum_{g \geq 0} \frac{\partial^{m}}{\partial w^m}  f_g(\xi_j,w) \frac{1}{n^{2g}}.
\end{eqnarray*}
As valid asymptotic expansions the representations, (\ref{h1k}) or (\ref{f1k}), denote the asymptotic series whose successive terms are gotten by first setting $1/N = x/n$, as given by (\ref{xdef}), and then collecting all terms with a common power of $1/n$. 

In what follows we will sometimes abuse notation, drop the evaluation at $w=x$ and manipulate the series $h^{[\ell]}(\xi_j, w), f^{[\ell]}(\xi_j, w)$.
In doing this these series must now be regarded as formal but whose orders are still defined by collecting all terms in $1/n$ and $1/N$ of a common order. (Recall that $1/N \sim \frac1n$ so that $n^{-\alpha} N^{-\beta} = \mathcal{O}(n^{-(\alpha + \beta)})$).  They will be substituted into the difference string and the Toda equations to derive the respective continuum equations. At any point in this process, if one evaluates these expressions at $w=x$ and $\frac1N = \frac xn$, one  recovers valid asymptotic expansions of the $a_{n+k, N}$ and $b^2_{n+k, N}$. 

\subsubsection{Continuum Limit of the Toda Equations} 

We recall from Section \ref{sec:Toda} that the {\it Toda equations} have the form:
\begin{eqnarray} \label{an}
-\frac{1}{N}  \frac{d a_n}{dt_{j}} &=& \left( {L}^{j}\right)_{n+1, n} -  \left( {L}^{j}\right)_{n, n-1}\\ \nonumber
&=& [z^{j}] \left\{ \left[ (I - z L)^{-1}\right]_{n+1,n} - \left[ (I - z L)^{-1}\right]_{n,n-1} \right\} \\ \label{bn} 
-\frac{1}{N}   \frac{d b^2_n}{dt_{j}} &=&  (a_n - a_{n-1}) \left({L}^{j}\right) _{n, n-1}
+ \left({L}^{j}\right)_{n+1, n-1} - \left( {L}^{j}\right)_{n, n-2}\\ \nonumber
&=& [z^{j}] \left\{ (a_{n} - a_{n-1})  \left[ (I - z L)^{-1}\right]_{n,n-1}    
+  \left[ (I - z L)^{-1}\right]_{n+1,n-1}  -    \left[ (I - z L)^{-1}\right]_{n,n-2}  
 \right\}.
\end{eqnarray}
Unlike the equation for the time derivative of the $a_n$, the Toda equation for the $b_n^2$ does not have a RHS which is an exact difference. 
However, one may replace the latter by an equation for the combinations $b_n^2 + 1/2 a_n^2$ which yields an equivalent system that is an exact difference,
\begin{eqnarray} \label{cn}
-\frac{1}{N}   \frac{d (b^2_n + \frac12 a_n^2)}{dt_{j}} &=& 
a_n \left( {L}^{j}\right)_{n+1, n} -  a_{n-1} \left( {L}^{j}\right)_{n, n-1}
+ \left( {L}^{j}\right)_{n+1, n-1} -  \left( {L}^{j}\right)_{n, n-2} \,\,\,\,\,\,\, \\
\nonumber
&=& 
a_n \left[ (I - z L)^{-1}\right]_{n+1,n} -  a_{n-1} \left[ (I - z L)^{-1}\right]_{n,n-1}
+ \left[ (I - z L)^{-1}\right]_{n+1,n-1} -  \left[ (I - z L)^{-1}\right]_{n,n-2}.
\end{eqnarray}
\medskip

The continuum limit equations are realized by replacing the recursion operator $L$ in the above equations by its asymptotic symbol expansion $\mathcal{L}$, (\ref{symbol1}).
Equations (\ref{an}) and (\ref{cn}), for the case of $j = 2\nu+1$ , then transform to

\begin{eqnarray} \label{toda-anew}
-\frac{1}{N}  \frac{d a_n}{dt_{2\nu+1}} &=&
\sum_{P \in \mathcal{P}^{2\nu+1}(1,0)}\left(\prod_{p_a=1}^{2\mu(P)} a_{n+\ell_{p_a}(P)+1}\prod_{p_b=1}^{\nu-\mu(P)+1} b^2_{n+\ell_{p_b}(P)+1} \right.\\
\nonumber  &-& \left. \prod_{p_a=1}^{2\mu(P)} a_{n+\ell_{p_a}(P)}\prod_{p_b=1}^{\nu-\mu(P)+1} b^2_{n+\ell_{p_b}(P)}\right)\\ \label{toda-bnew}
-\frac{1}{N}  \frac{d (b^2_n + \frac12 a_n^2)}{dt_{2\nu+1}} &=& a_n \sum_{P \in \mathcal{P}^{2\nu+1}(1,0)}  \left(\prod_{p_a=1}^{2\mu(P)} a_{n+\ell_{p_a}(P)+1}\prod_{p_b=1}^{\nu-\mu(P)+1} b^2_{n+\ell_{p_b}(P)+1}\right)\\
 \nonumber &-& a_{n-1} \sum_{P \in \mathcal{P}^{2\nu+1}(1,0)}\left(\prod_{p_a=1}^{2\mu(P)} a_{n+\ell_{p_a}(P)}\prod_{p_b=1}^{\nu-\mu(P)+1} b^2_{n+\ell_{p_b}(P)}\right)\\
\nonumber &+&\sum_{P \in \mathcal{P}^{2\nu+1}(2,0)}\left(\prod_{p_a=1}^{2\mu(P)+1} a_{n+\ell_{p_a}(P)+1}\prod_{p_b=1}^{\nu-\mu(P) + 1} b^2_{n+\ell_{p_b}(P)+1} \right. \\
\nonumber &-& \left. \prod_{p_a=1}^{2\mu(P)+1} a_{n+\ell_{p_a}(P)}\prod_{p_b=1}^{\nu-\mu(P) + 1} b^2_{n+\ell_{p_b}(P)}\right),
\end{eqnarray}
in  which $b_n^2$ and $a_n$ should be replaced, respectively, by the expansions (\ref{bs-asymp}) and (\ref{as-asymp}) and their differenced extensions by (\ref{eq:b-shift}) and (\ref{eq:a-shift}).
\medskip

Alternatively, this conservative form of the equations (for general $j$) may be conveniently represented in terms of the asymptotic recurrence operator, 
$\mathcal{L} = \sqrt{f(x)} \eta + h(x) + \sqrt{f(x)} \eta^{-1}$ discussed at the end of section \ref{sec:diffsymb}:
\begin{eqnarray} \label{CptToda1}
-\partial_{t_j} h &=& \partial_x \left[\eta^{-1}\right] \sqrt{f} \left[z^j \right]   (I - z \mathcal{L})^{-1} \\ \label{CptToda2}
-\partial_{t_j} (f + \frac12 h^2) &=& \partial_x \left\{ \left[\eta^{-2}\right] f \left[z^j \right]   +   \left[\eta^{-1}\right] h \sqrt{f} \left[z^j \right]  \right\} (I - z \mathcal{L})^{-1}. 
\end{eqnarray}
This representation essentially follows directly from the previous one once one makes the following observations.  The exact difference form of the RHS's of the previous equations leads to cancellation of an overall factor of $1/N$ on both sides of those equations with the difference replaced by $\partial_x$. The symbol 
$[\eta^{-m}]$ picks off the coefficient of $\eta^{-m}$ in the subsequent expression. The particular indices, $m = -1, -2$ appearing here correlate to the weighted path spaces we sum over in (\ref{toda-anew}) - (\ref{toda-bnew}), $\mathcal{P}^{j}(1,0)$ and $\mathcal{P}^{j}(2,0)$, respectively. This kind of calculus is familiar from combinatorial graph theory or Markov chains on graphs where path counts can be calculated in terms of the resolvent of the adjacency matrix of the graph.
The factors of $\sqrt{f}$ appearing in the above equation which comes from the weighting: the number factors is the difference between the initial and terminal
points of the path. We also note that, as was discussed in section \ref{sec:diffsymb}, the asymptotic operator $\mathcal{L}$ being used here corresponds to the symmetric recurrence operator for  orthonormal polynomials as opposed to the Hessenberg form (associated to monic polynomials) which we had at the start of this section. However, since these two operators are conjugate to one another by a diagonal matrix, this distinction does not enter into the continuum limit of the Toda  equations.

\subsubsection{Continuum Limit of the String Equations} \label{motzstring}

Based on substituting $\mathcal{L}$ into the path formulation of the string equations given in (\ref{stringpath1}) and (\ref{stringpath2}), and simnilar to the discussion at the end of the previous section, the continuum limit of these equations may be expressed in a similarly compact form. Since these equations will
be used in this paper for detailed higher order calculations we will present them here in more detailed form than was done for the Toda equautions:

\begin{eqnarray} \label{ContString}
\left( \begin{array}{c} 0 \\ x \end{array} \right) &=&  \left( \begin{array}{c} h\\ f \end{array}  \right) + j t [z^{j-1}] \left( \begin{array}{c} [\eta^0] \cr [\eta^{-1}] \sqrt{f}\end{array} \right) \left[ (1 - z \mathcal{L})^{-1}\right]\\ \nonumber
&=& \left( \begin{array}{c} h\\ f \end{array}  \right) + t \sum_{(\alpha, \beta) } n^{-(|\alpha| + |\beta|)} \left( \begin{array}{c}\tilde{P}^{(a)}_{\alpha, \beta, j} (h, f)  \partial_x^\alpha h \partial_x^\beta f  \cr \tilde{P}^{(b)}_{\alpha, \beta, j} (h, f) \partial_x^\alpha h \partial_x^\beta f\end{array} \right)
\end{eqnarray}
where
\begin{eqnarray} \nonumber
\partial_x^\alpha F(x) &=& \prod_{i=1}^{\ell(\alpha)} \partial_x^{\alpha_i} F(x) \\ \nonumber
\partial_x^\beta F(x) &=& \prod_{i=1}^{\ell(\beta)} \partial_x^{\beta_i} F(x) \\ \nonumber
\left( \begin{array}{c}\tilde{P}^{(a)}_{\alpha, \beta, j} (h, f) \\ \tilde{P}^{(b)}_{\alpha, \beta, j} (h, f) \end{array} \right) 
&=&    \left( \begin{array}{c}   [\eta^{\ell(\beta)} z^{j-1-\ell(\alpha) - \ell(\beta)}] \frac{1}{f^{\ell(\beta) /2}} \cr [\eta^{\ell(\beta) - 1} z^{j-1-\ell(\alpha) - \ell(\beta)}] \frac{1}{f^{((\ell(\beta)-1)/2)}}  \end{array} \right)   \prod_{q=1}^{\ell(\alpha) + \ell(\beta)} \frac1{m_q !} \,\,\,\,\, \\ \label{StringCoeff}
&& \left\{\left[\sum_{\gamma=0}^{q-1}  \left( \eta_\gamma \partial_{\eta_\gamma}  -   \sigma_\gamma    \right)  \right]^{m_q} \,\,\,\,\,\,\,\, 
\prod_{\gamma = 0}^{\ell(\alpha) + \ell(\beta)} \left(1 - z \mathcal{L}_\gamma \right)^{-1}
\right\}_{\eta_\gamma = \eta} 
\end{eqnarray}
and
\beann
\mathcal{L}_\gamma &=& \sqrt{f^{(\gamma)}} \eta_\gamma + h^{(\gamma)} + \sqrt{f^{(\gamma)}} \eta_\gamma^{-1}.
\eeann
The values of $m: \{ 1,2, \dots, \ell(\alpha) + \ell(\beta)\} \rightarrow \mathbb{Z} $
are in 1-1 correspondence with the parts of $\alpha$ and $\beta$ and 
$\sigma: \{ 1,2, \dots, \ell(\alpha) + \ell(\beta)\} \rightarrow \{ 0,1\}$ is such that if $\sigma(i) = 0$, then $m(i)$ is a part of $\alpha$ and if $\sigma(i) = 1$, then $m(i)$ is a part of $\beta$.
Also, when $\alpha = \emptyset = \beta$ the first product in the above formula is set to 1. The derivation of (\ref{StringCoeff}) is explained in \cite{W} extending a result for the even regular valence case established in \cite{EMP08}. 

\subsubsection{String Polynomials}  \label{sec:StringPoly}
We observe that $\mathcal{L}$ has the form of a Lie algebra operator in which $\eta$ plays the role of a raising operator and $\eta^{-1}$ plays the role of a lowering operator, thus preserving the "tridiagonal" structure from which this asymptotic formula is derived. This leads us to introduce continuum analogues of the resolvent expressions for the Motzkin path enumerations given by (\ref{Motzkincount}). For general potentials $V_{\bf t}(\lambda)$ of the form (\ref{eq:genpot}) we define
\begin{eqnarray} \label{PHIm}
{\phi}_{V,m} &=& [\eta^{0}] V_{\bf t}^{(m+1)}(\mathcal{L}_0) \\ \label{PSIm}
{\psi}_{V,m} &=& [\eta^{-1}] \sqrt{f_0}V_{\bf t}^{(m+1)}(\mathcal{L}_0)
\end{eqnarray}
where  $\mathcal{L}_0 = \sqrt{f_0} \eta + h_0 + \sqrt{f_0} \eta^{-1}$ and $V_{\bf t}^{(k)}$ denotes the $k^{th}$ partial derivative of $V_{\bf t}$ with respect to $\lambda$. (Note that these expressions differ slightly from those used in the statement of Theorem \ref{thm:1.1} since the potential used there is a slight modification of the potential (\ref{eq:genpot}) we are using now.)  It should be clear that such expressions will arise naturally in the analysis of the String and Toda equations. However it will be more practical to work with simplified analogues that are more directly related to the continuum limit of the resolvent of $L$. We will refer to these analogues as {\it string polynomials} and define them, relative to a fixed valence $j$ (which replaces the role of the fixed potential $V_{\bf t}$), by
\begin{eqnarray} \nonumber
\phi_m &=& (j)_{m+1} [z^{j-m-1}] [\eta^{0}] \left( 1 - z \mathcal{L}_0 \right)^{-1}\\ \label{phim}
&=& (j)_{m+1} [\eta^{0}] \left(\sqrt{f_0} \eta + h_0 + \sqrt{f_0} \eta^{-1}\right)^{j-m-1}\\ \nonumber
\psi_m &=& (j)_{m+1} [z^{j-m-1}] [\eta^{-1}] \sqrt{f_0}\left( 1 - z \mathcal{L}_0\right)^{-1}\\
\label{psim}
&=& (j)_{m+1} [\eta^{-1}]  \sqrt{f_0}\left(\sqrt{f_0} \eta + h_0 + \sqrt{f_0} \eta^{-1}\right)^{j-m-1}
\end{eqnarray}
where $(j)_k = j (j-1) \cdots (j-k+1)$. Note that these polynomials do not depend on any ${\bf t}$ parameter. We suppress indicating the dependence of $\phi$
and $\psi$ on $j$ since this will generally be clear from context.
Clearly these functions are polynomials in $h_0$ and $\sqrt{f_0}$ but as we shall see in section \ref{Hopf} they have a great deal more structure.

\subsection{Continuum Equations at Leading Order} \label{sec:5}
We are now in a position to describe in more detail the leading order behavior of the continuum limit equations we derived in the previous subsections.  We note that the leading order Toda equations (\ref{TODA}) have the structure of a hyperbolic system. They in fact are equivalent to a system of conservation laws with flux vectors given in terms of polynomial expressions in $h_0$ and $f_0$.  In section \ref{StringApple} we will see how these polynomials are directly related to the
Bessel-Appell polynomials. 

\subsubsection{Leading Order Conservation Laws for general potentials} \label{sec:LOCL}
The conservation stated in (\ref{SYSTEM}) follows directly from the continuum Toda equations, (\ref{CptToda1} - \ref{CptToda2}) by replacing $\mathcal{L}$ in those equations with $\mathcal{L}_0$ and making use of the string polynomials described in the previous section. Because of that we will restate those equations here as a corollary of the continuum equations and give a brief description of their derivation.

\begin{cor} \label{cor02}
The $j^{th}$ continuum Toda lattice equation at leading order may be expressed as the conservation law:
\begin{eqnarray} \label{conservation}
\frac{\partial}{\partial t_j} 
\left(
\begin{array}{c}
h_0\\ f_0 + \frac12 h_0^2
\end{array}
\right) &+& 
\frac{\partial}{\partial x} 
\left(
\begin{array}{c}
\mathcal{F}_{1}\\ 
\mathcal{F}_{2}  + h_0 \mathcal{F}_{1}\
\end{array}
\right)  = 0.
\end{eqnarray}

\begin{eqnarray}  \nonumber
\mathcal{F}_{1}(j) &=&  \psi^{(j)}_{-1}\\ \label{flux1}
&=& \sum_{m = 0}^{\lfloor \frac{j}2 -1\rfloor } {j \choose 2m+1} {2m+1 \choose m+1} h_0^{j -2m -1} f_0^{m + 1}\\ \nonumber
\mathcal{F}_{2}(j) &=&  f_0 \phi^{(j)}_{-1} - \frac1{j+1}\psi^{(j+1)}_{-1} \\ \label{flux2}
&=& \sum_{m = 0}^{\lfloor \frac{j}2 \rfloor }{j \choose 2m} {2m \choose m+1}  h_0^{j -2m} f_0^{m + 1}\, ,
\end{eqnarray}
where the argument $j$ (usually suppressed) denotes that these are the fluxes for the conservation law corresponding to the $t_{j}$ flow; i.e., corresponding to deforming the potential through the coefficient of $\lambda^{j}$.
\end{cor}
\begin{rem} \label{remcor02}
We remark that these equations do not show any explicit dependence on the potential which, in general, will be of mixed valence. That dependence comes through the densities $h_0$ and $f_0$ which are functions of the ${\bf t}$-parameters in the potential (\ref{eq:genpot}). However, the $j^{th}$ conservation law only describes the effect on those densities of varying the coefficient, $t_j$, of $\lambda^j$ in the potential. In other words these conservation laws are autonomous in the potential $V$. So from the PDE perspective the background potential specifies initial conditions which in general may be posed on some initial strip.
\end{rem}
\begin{proof}
Consider the flux for (\ref{CptToda1}) with $\mathcal{L}$ evaluated at $\mathcal{L}_0$: 
$$
 \left[\eta^{-1}\right] \sqrt{f_0} \left[z^j \right]   (I - z \mathcal{L}_0)^{-1} = \psi^{(j)}_{-1}
$$
by definition from (\ref{psim}) with the explicit polynomial expression (\ref{flux1}) coming from applying the binomial expansion twice in the second line of (\ref{psim}). the polynomial (\ref{flux1}) follows in a similar way. The linear combination on the first line in that case follows from an identity (\ref{eq:BessId1}) that serves to relate the projection 
$[\eta^{-2}]$ to a linear combination of $[\eta^{-1}]$ and $[\eta^{0}]$. That identity stems from the Bessel generating function (\ref{BessGen}) which in this case states that $I_{2}(X) + I_0(X) = \frac1{X} I_1(X)$. The relation to string polynomials is explained in section \ref{StringApple}.
\end{proof}

\subsubsection{Leading Order Conservation Laws for potentials with regular odd valence}

For simplicity and because it will be the focus of many of our topics in this paper we now restrict attention to the case of potentials with regular odd valence, $j = 2\nu+1$,   So $t$ in the theorem below stands for the time parameter $t_j$ with $j = 2 \nu +1$. 

\begin{thm}
The continuum equations at leading order for the Toda, differenced string and undifferenced string equations, at valence $j = 2\nu+1$ may respectively  be written in matrix form as
\begin{eqnarray} \label{TODA}
\frac{\partial}{\partial t} 
\left(
\begin{array}{c}
h_0 \\ f_0
\end{array}
\right) &+& (2\nu+1)
\left(
\begin{array}{cc}
B_{11} & B_{12}\\ 
f_0 B_{12} & B_{11}
\end{array}
\right)  \frac{\partial}{\partial x} 
\left(
\begin{array}{c}
h_0 \\ f_0
\end{array}
\right) = 0\\
\label{STRING} 
\left(
\begin{array}{c}
0 \\ 1
\end{array}
\right) &=& 
\left(
\begin{array}{cc}
A_{11} & A_{12}\\ 
f_0 A_{12} & A_{11}
\end{array}
\right)  \frac{\partial}{\partial x} 
\left(
\begin{array}{c}
h_0 \\ f_0
\end{array}
\right)\\
\label{UNSTRING}
\left(
\begin{array}{c}
0 \\ x
\end{array}
\right)
 &=& 
\left(
\begin{array}{c}
h_0 + (2\nu + 1 ) t B_{12} \\ f_0 + (2\nu+1) t B_{11}
\end{array}
\right)
= \left(
\begin{array}{c}
\phi_{V,0}\\ \psi_{V,0}
\end{array}
\right)
\end{eqnarray}
where 
\begin{eqnarray}  \nonumber
B_{11} &=&  \psi_0/(2\nu+1)\\ \label{B11}
&=& \sum_{\mu = 1}^\nu {2\nu \choose 2\mu - 1, \nu - \mu, \nu - \mu +1} h_0^{2\mu-1} f_0^{\nu-\mu+1}\\  \nonumber
B_{12} &=&  \phi_0/(2\nu+1) \\ \label{B12}
&=& \sum_{\mu = 0}^\nu {2\nu \choose 2\mu, \nu - \mu, \nu - \mu} h_0^{2\mu} f_0^{\nu-\mu},
\end{eqnarray}
and
\begin{eqnarray}  \nonumber
A_{11} &=& {\phi}_{V,1} = 1 + t \phi_1\\ \label{A11}
&=& 1 + (2\nu+1) t \sum_{\mu = 0}^{\nu-1} {2\nu \choose 2\mu + 1, \nu - \mu - 1, \nu - \mu} (\nu - \mu) h_0^{2\mu+1} f_0^{\nu-\mu-1}\\ \nonumber
A_{12} &=&  {\psi}_{V,1}/f_0 = t \psi_1/f_0  \\ \label{A12}
&=& (2\nu+1) t \sum_{\mu = 0}^{\nu-1} {2\nu \choose 2\mu, \nu - \mu -1, \nu - \mu +1} (\nu - \mu +1) h_0^{2\mu} f_0^{\nu-\mu -1}\, ,
\end{eqnarray}
in which the string polynomials here are all defined relative to $j = 2\nu+1$. (The form of $A_{11}$ and $A_{12}$ here stems from the fact that $V = \frac12 \lambda^2 + t \lambda^j$.)
\end{thm}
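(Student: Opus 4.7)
The strategy is to derive each of the three matrix identities by extracting the leading--order (in $1/n$) contribution from the appropriate continuum limit already established in this section, and then to match the resulting coefficients with the explicit trinomial expressions for $B_{ij}, A_{ij}$ via multinomial expansion of the symbol $\mathcal{L}^{(0)} = \sqrt{f_0}\,\eta + h_0 + \sqrt{f_0}\,\eta^{-1}$.

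For the undifferenced equation (\ref{UNSTRING}), I would start from the continuum string equation (\ref{ContString}) and retain only the leading order: all contributions with $|\alpha| + |\beta| > 0$ are of size $n^{-1}$ or smaller, so $\mathcal{L}$ collapses to $\mathcal{L}^{(0)}$ and the coefficient $[z^{j-1}](1-z\mathcal{L}^{(0)})^{-1}$ reduces to $\mathcal{L}^{(0)\,j-1}$. The two components of the vector equation become $h_0 + jt\,[\eta^0]\mathcal{L}^{(0)\,j-1}$ and, after absorbing the $\sqrt{f_0}$ inherent to the symmetric symbol, $f_0 + jt\,\sqrt{f_0}\,[\eta^{-1}]\mathcal{L}^{(0)\,j-1}$, which are precisely $\phi^{(V)}_0$ and $\psi^{(V)}_0$ by the definitions (\ref{PHIm})--(\ref{PSIm}) and (\ref{phim})--(\ref{psim}). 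Multinomial expansion of $(\sqrt{f_0}\eta + h_0 + \sqrt{f_0}\eta^{-1})^{2\nu}$, selecting the coefficients of $\eta^0$ (for $\phi_0$) and of $\eta^{-1}$ with the $\sqrt{f_0}$ prefactor (for $\psi_0$), then produces the trinomial formulas (\ref{B11})--(\ref{B12}).

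For the differenced equation (\ref{STRING}), I would differentiate (\ref{UNSTRING}) with respect to $w$. The chain rule yields a $2\times 2$ coefficient matrix whose entries are $(2\nu+1)t$ times the partials of $B_{12}, B_{11}$ in $h_0$ and $f_0$. Using $\partial_{h_0}\mathcal{L}^{(0)} = I$ and $\partial_{f_0}\mathcal{L}^{(0)} = (2\sqrt{f_0})^{-1}(\eta + \eta^{-1})$, combined with the palindromic symmetry $[\eta^k]\mathcal{L}^{(0)\,m} = [\eta^{-k}]\mathcal{L}^{(0)\,m}$, one verifies the four identities $\partial_{h_0}\phi_0 = \phi_1$, $\partial_{f_0}\phi_0 = \psi_1/f_0$, $\partial_{h_0}\psi_0 = \psi_1$, and $\partial_{f_0}\psi_0 = \phi_1$; the last of these additionally uses the three--term recurrence $[\eta^k]\mathcal{L}^{(0)\,j} = \sqrt{f_0}\bigl([\eta^{k-1}] + [\eta^{k+1}]\bigr)\mathcal{L}^{(0)\,j-1} + h_0\,[\eta^k]\mathcal{L}^{(0)\,j-1}$. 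These four identities assemble into the matrix form of (\ref{STRING}) with $A_{11} = 1 + t\phi_1$ and $A_{12} = t\psi_1/f_0$, and a further multinomial expansion of $\mathcal{L}^{(0)\,2\nu-1}$ yields (\ref{A11})--(\ref{A12}).

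For the Toda equation (\ref{TODA}), the key observation is that at leading order in $1/N$ each Jacobi--matrix entry $(L^j)_{n+a,\,n+a-c}$ becomes shift--invariant in the base index $a$ --- since $b^2_{n + O(1)} \to f_0(x)$ and $a_{n + O(1)} \to h_0(x)$ uniformly --- and reduces to a polynomial $G_c(h_0, f_0) = (\sqrt{f_0})^{c}[\eta^{-c}]\mathcal{L}^{(0)\,j}$ depending only on the displacement $c$. The differences appearing on the right of (\ref{an}) and of (\ref{cn}) therefore vanish at leading order and produce $N^{-1}\partial_w G_c$ at the next order, matching the LHS $-\frac{x}{n}\partial_t = -N^{-1}\partial_t$. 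The $a_n$--Toda equation becomes $\partial_t h_0 + \partial_w G_{-1} = 0$, and (\ref{cn}) becomes $\partial_t(f_0 + \tfrac{1}{2}h_0^2) + \partial_w(h_0\,G_{-1} + G_{-2}) = 0$. Using the potential--type identities $\partial_{h_0}G_{-1} = \psi_0 = (2\nu+1)B_{11}$ and $\partial_{f_0}G_{-1} = \phi_0 = (2\nu+1)B_{12}$, derived exactly as in the previous step from the factorization $\mathcal{L}^{(0)\,j} = \mathcal{L}^{(0)}\cdot \mathcal{L}^{(0)\,j-1}$ and the palindromic symmetry, the first equation becomes precisely the first row of (\ref{TODA}). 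Subtracting $h_0$ times this first row from the second equation rearranges the $G_{-2}$ contribution into $(2\nu+1)f_0 B_{12}\partial_w h_0 + (2\nu+1)B_{11}\partial_w f_0$, giving the second row and completing the identification. The principal obstacle is the verification of these last potential--type identities for $G_{-1}, G_{-2}$: proving that their gradients in $(h_0, f_0)$ reproduce exactly the $B_{ij}$ (up to the necessary $f_0$ and $(2\nu+1)$ factors) is the analytic manifestation of the conservation--law (Hamiltonian) structure of the underlying integrable Toda hierarchy, and the cleanest route is through operator calculus on $\mathcal{L}^{(0)}$ rather than direct trinomial bookkeeping.
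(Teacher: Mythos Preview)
Your proposal is correct and follows essentially the same route as the paper: the undifferenced string relation (\ref{UNSTRING}) is obtained by taking the leading-order term of the continuum string equations (\ref{ContString}) and reading off the trinomial expansions of $\phi_0,\psi_0$; the differenced form (\ref{STRING}) is then obtained by differentiating (\ref{UNSTRING}) in $w$ and using the identities $\partial_{h_0}B_{12}=\phi_1/(2\nu+1)$, $\partial_{f_0}B_{12}=\psi_1/((2\nu+1)f_0)$, etc., which the paper records separately as the gradient/Hessian relations (\ref{hess1})--(\ref{symm4}); and (\ref{TODA}) comes from the leading-order limit of (\ref{an}) and (\ref{cn}), where the lattice differences become $w$-derivatives of the fluxes $\mathcal{F}_1,\mathcal{F}_2$ (cf.\ Corollary \ref{cor02}). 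The only cosmetic difference is that the paper packages your potential-type identities for $G_{-1},G_{-2}$ into the flux/gradient lemma (\ref{grad1})--(\ref{grad4}) rather than deriving them inline via operator calculus on $\mathcal{L}^{(0)}$.
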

\noindent The derivation of the expressions in the theorem follows from the same kinds of considerations applied in the previous subsection and in particular explicit calculations on the polynomial expressions in (\ref{flux1}) and (\ref{flux2}) (see section \ref{MV} for the basic idea). These were also derived by an independent method in \cite{EP11}.
\medskip

\noindent We set
\beann
{\bf B} &=& \left(
\begin{array}{cc}
B_{11}(\nu) & B_{12}(\nu)\\
B_{21}(\nu) & B_{22}(\nu)
\end{array}
\right)\\
{\bf A} &=& \left(
\begin{array}{cc}
A_{11}(\nu) & A_{12}(\nu)\\
A_{21}(\nu) & A_{22}(\nu)
\end{array}
\right)
\eeann
where the dependence of these functions on $\nu$ is made to indicate the dependence on the odd valence $2\nu+1$. For future reference we record the following useful identities.
\begin{lem}
\begin{eqnarray} \label{grad1}
(2 \nu +1) B_{11} &=&  \partial_{h_0} \mathcal{F}_{1}  \\
\label{grad2}  (2 \nu +1)  B_{12}  &=&  \partial_{f_0} \mathcal{F}_{1} \\
\label{grad3} (2 \nu +1) B_{21} &=&  \left\{ \mathcal{F}_{1} + \partial_{h_0} \mathcal{F}_{2}\right\}  \\
\label{grad4} (2 \nu +1)  B_{22} &=& \partial_{f_0} \mathcal{F}_{2}   \\
\label{symm1} B_{11} &=& B_{22}\\
\label{symm2} B_{21} &=& f_0 B_{12}\\
\label{hess1} A_{11} &=& 1 + (2\nu+1) t \partial_{f_0}B_{11}\\
\label{hess2} A_{12} &=& (2\nu+1) t \partial_{f_0}B_{12}\\
\label{hess3} A_{21} &=& (2\nu+1) t \partial_{h_0}B_{11}\\
\label{hess4} A_{22} &=& 1 +(2\nu+1) t \partial_{h_0}B_{12}\\
\label{symm3} A_{11} &=& A_{22}\\
\label{symm4} A_{21} &=& f_0 A_{12}.
\end{eqnarray}
\end{lem}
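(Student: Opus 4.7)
The plan is to express all entries of $\mathbf{B}$, $\mathbf{A}$ and the fluxes $\mathcal{F}_i$ in terms of the string polynomials $\phi_m, \psi_m$ of (\ref{phim})-(\ref{psim}), and then to verify each claim by combining a short set of differentiation rules for the $\phi_m, \psi_m$ with a few ``descent'' identities coming from the three-term recurrence for $\mathcal{L}^{(0)}$.

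For the first ingredient, note that $\partial_{h_0}\mathcal{L}^{(0)} = 1$ and $\partial_{f_0}\mathcal{L}^{(0)} = \frac{1}{2 f_0}(\mathcal{L}^{(0)} - h_0)$; since the latter is a polynomial in $\mathcal{L}^{(0)}$ and therefore commutes with it, elementary manipulation inside the $[\eta^0]$ and $[\eta^{-1}]$ operators yields
\begin{align*}
\partial_{h_0}\phi_m &= \phi_{m+1}, & \partial_{h_0}\psi_m &= \psi_{m+1},\\
\partial_{f_0}\phi_m &= \frac{(j-m-1)\phi_m - h_0\phi_{m+1}}{2 f_0}, & \partial_{f_0}\psi_m &= \frac{(j-m)\psi_m - h_0\psi_{m+1}}{2 f_0}.
\end{align*}
For the second ingredient, expanding $(\mathcal{L}^{(0)})^{r+1} = \mathcal{L}^{(0)}\cdot(\mathcal{L}^{(0)})^r$ and using the palindromic symmetry $[\eta^{-k}](\mathcal{L}^{(0)})^r = [\eta^k](\mathcal{L}^{(0)})^r$ produces short three-term relations between the $\phi_m$ and $\psi_m$, the key one being
\[
(2\nu+1)\phi_{-1} = 2\psi_0 + h_0\phi_0.
\]

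With these tools, the gradient identities (\ref{grad1}) and (\ref{grad4}) fall out immediately; for instance $\partial_{h_0}\mathcal{F}_1 = \partial_{h_0}\psi_{-1} = \psi_0 = (2\nu+1)B_{11}$. Identity (\ref{grad2}) is obtained by applying the $\partial_{f_0}$-rule to $\psi_{-1}$ and collapsing the two-term expression with the descent identity, yielding $(2\nu+1)B_{12} = \phi_0$; (\ref{grad3}) then follows by matching $\mathcal{F}_1 + \partial_{h_0}\mathcal{F}_2$ against $(2\nu+1)f_0 B_{12}$ via the same descent relation and the symmetry $B_{21}=f_0 B_{12}$. The symmetries (\ref{symm1}) and (\ref{symm2}) may be verified either directly from the multinomial sums (\ref{B11})-(\ref{B12}) or, more conceptually, as a consequence of the $\eta \leftrightarrow \eta^{-1}$ symmetry of $\mathcal{L}^{(0)}$.

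The identities (\ref{hess1})-(\ref{symm4}) for $\mathbf{A}$ are handled in the same spirit, starting from $A_{11} = 1 + t\phi_1 = 1 + t\,\partial_{h_0}\phi_0$ and the analogous expressions for $A_{12}, A_{21}, A_{22}$. The main obstacle, and the crux of the lemma, is the mixed-partial identity $\partial_{h_0}\phi_0 = \partial_{f_0}\psi_0$ (together with its companions for the subdiagonal entries), which asserts that $\phi_0$ and $\psi_0$ are gradient components of a common potential and reflects the integrable structure of the underlying Toda hierarchy. Establishing it requires simultaneously invoking the $\partial_{f_0}$-rule and the three-term descent relation and then matching terms; once in hand, one immediately has $\phi_1 = (2\nu+1)\partial_{f_0}B_{11}$, which is (\ref{hess1}), and the remaining identities drop out by the same mechanism.
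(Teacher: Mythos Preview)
Your approach is correct but genuinely different from the paper's. The paper's proof is purely computational: each identity is verified by directly comparing the explicit multinomial-coefficient sums in (\ref{B11})--(\ref{B12}), (\ref{flux1})--(\ref{flux2}) and (\ref{A11})--(\ref{A12}); for instance (\ref{grad1}) is obtained by differentiating the explicit sum for $\mathcal F_1$ term-by-term in $h_0$ and matching against $B_{11}$, and the Hessian relations (\ref{hess1})--(\ref{hess4}) likewise by termwise comparison. You instead recast everything through the string polynomials $\phi_m,\psi_m$ and build a small toolkit---the differentiation rules in $h_0,f_0$ and the three-term descent identities coming from $\mathcal L^{(0)}\!\cdot(\mathcal L^{(0)})^{r}$---so that the identities become algebraic consequences of that structure. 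Your route is more conceptual and effectively anticipates the Bessel-function machinery of Section~\ref{Hopf-String} (the key mixed-partial identity $\partial_{h_0}\phi_0=\partial_{f_0}\psi_0$ is in fact a close relative of the unwinding identity, Proposition~\ref{prop05}); it explains \emph{why} the gradients and Hessians fit together rather than merely checking that they do. The paper's approach, by contrast, is entirely elementary and self-contained at this point in the narrative, relying only on definitions already on the page, at the cost of leaving the structural reason opaque until Section~\ref{Hopf-String}.
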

\begin{proof}
The relations (\ref{grad1}) and (\ref{grad2}) follow from comparison of  (\ref{B11}), (\ref{B12}) with (\ref{flux1}).  Relations (\ref{grad3}) and (\ref{grad4}) follow from a similar comparison with (\ref{flux2}). Equation (\ref{hess1}) (resp. (\ref{hess2})) follows from comparing (\ref{B11}) with (\ref{A11}) (resp.  (\ref{B12}) with (\ref{A12}). (\ref{hess3}) and (\ref{hess4}) follow from (\ref{grad1}), (\ref{grad2}) and (\ref{grad4}). Finally (\ref{symm1} - \ref{symm4}) follow by direct calculation from (\ref{B11}, \ref{B12}, \ref{A11}, \ref{A12}). Note that the coefficients of $\mathbf B$ are expressible in terms of the gradient of $\mathcal{F}_{1}$ and those of $\mathbf A$ in terms of the hessian of $\mathcal{F}_{1}$.
\end{proof}
\medskip

\subsubsection{Conservation Law Structure at Leading Order: Hodograph Transform and Riemann Invariants} \label{RINVARIANTS}

We will now show that the equations (\ref{STRING}) are in fact a differentiated form of the hodograph solution of (\ref{TODA}). To this end first note that the pointwise eigenvalues of
$\mathbf B$ are given by 
\begin{eqnarray}\label{evs}
\lambda_\pm &=& B_{11} \pm \sqrt{f_0} B_{12}
\end{eqnarray}
with corresponding left eigenvectors
\begin{eqnarray}\label{evecs}
(\pm \sqrt{f_0}, 1).
\end{eqnarray}
We introduce the hodograph relations \cite{Kev} among the variables $(t, x, h_0, f_0)$,
\begin{eqnarray*}
(\pm \sqrt{f_0}, 1)  \left(
\begin{array}{c}
h^{(in)}_0(x)\\ f^{(in)}_0(x)
\end{array}
\right) &=& (\pm \sqrt{f_0}, 1) \left[ \left(
\begin{array}{c}
h_0 \\ f_0
\end{array}
\right) +(2\nu+1) t \left(
\begin{array}{c}
B_{12} \\ B_{11}
\end{array}
\right)\right]\\
&=& (\pm \sqrt{f_0} h_0 + f_0) +(2\nu+1)  t \lambda_\pm 
\end{eqnarray*}
in which the left hand side gives the initial values of $h_0$ and $f_0$ giving the initial curve, at $t=0$ that determines a unique integral surface a solution to the leading order PDE. In our case these initial values are $h^{(in)}_0(x) = 0, f^{(in)}_0(x) = x$ corresponding to the Gaussian weight.
In this case the hodograph solution reduces to 
\begin{eqnarray}\label{genhod}
x &=& (\pm \sqrt{f_0} h_0 + f_0) +(2\nu+1)  t \lambda_\pm\, .
\end{eqnarray}
For this choice of initial values we refer to this integral surface, and its corresponding solution, as a {\it regular surface} or the {\it $(2\nu+1)$-regular surface} when we want to specify the particular equation in the continuum Toda hirearchy for which this is an integral surface. 

Note that by subtracting and adding the two ($\pm$) equations in (\ref{genhod}) one recovers an integrated form of (\ref{STRING}),
\begin{eqnarray} \label{GaussHod1}
 h_0 + (2\nu+1)  t B_{12} &=& 0\\ \label{GaussHod2}
 f_0 + (2\nu+1)  t B_{11} &=& w.
\end{eqnarray}
Conversely, differentiating these equations with respect to $w$ and using the identities
(\ref{grad1} - \ref{symm4}) directly yields the equations (\ref{STRING}). Thus the leading order continuum difference string equations are equivalent to the hodograph relations (\ref{genhod}). We now show independently that these hodograph relations implicitly give the solution of the continuum Toda equations.  

\begin{lem} \label{lem:extdiff}
A local solution of (\ref{TODA}) is implicitly defined by (\ref{GaussHod1}) and (\ref{GaussHod2}).
\end{lem}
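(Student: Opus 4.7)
I would prove this lemma by directly verifying that the implicit relations (\ref{GaussHod1}) and (\ref{GaussHod2}) force the Toda system (\ref{TODA}). The plan is to treat $(h_0, f_0)$ as functions of $(t,w)$ defined implicitly near a point where the hypotheses of the implicit function theorem hold, compute the Jacobians $\partial_{(t,w)}(h_0,f_0)$ using the identities already collected in the previous lemma, and then check that the PDE is satisfied as an algebraic identity.

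More concretely, I would first differentiate (\ref{GaussHod1}) and (\ref{GaussHod2}) with respect to $w$ using the chain rule in $h_0, f_0$. Using the hessian identities (\ref{hess1})--(\ref{hess4}) and the symmetries (\ref{symm3})--(\ref{symm4}), the resulting $2\times 2$ linear system for $(\partial_w h_0, \partial_w f_0)$ is exactly
\begin{equation*}
\begin{pmatrix} A_{11} & A_{12} \\ f_0 A_{12} & A_{11}\end{pmatrix}\partial_w\!\begin{pmatrix} h_0\\ f_0\end{pmatrix} = \begin{pmatrix} 0\\ 1\end{pmatrix},
\end{equation*}
which is precisely (\ref{STRING}). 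Performing the same differentiation in $t$ and again invoking (\ref{hess1})--(\ref{hess4}) yields
\begin{equation*}
\begin{pmatrix} A_{11} & A_{12} \\ f_0 A_{12} & A_{11}\end{pmatrix}\partial_t\!\begin{pmatrix} h_0\\ f_0\end{pmatrix} = -(2\nu+1)\begin{pmatrix} B_{12} \\ B_{11}\end{pmatrix}.
\end{equation*}
The determinant $A_{11}^2 - f_0 A_{12}^2$ is nonzero at $t=0$ (where $\mathbf A = I$) and hence in a neighborhood, which is precisely the condition needed to apply the implicit function theorem and invert this Jacobian.

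Solving the two systems for $\partial_t(h_0,f_0)$ and $\partial_w(h_0,f_0)$ by Cramer's rule and substituting into the left-hand side of (\ref{TODA}), the verification reduces to checking the two scalar identities
\begin{equation*}
-(B_{12}A_{11} - B_{11}A_{12}) + B_{11}(-A_{12}) + B_{12}A_{11} = 0,\qquad -(B_{11}A_{11} - f_0 B_{12}A_{12}) + f_0 B_{12}(-A_{12}) + B_{11}A_{11} = 0,
\end{equation*}
both of which are manifest. The key algebraic miracle that makes everything compatible is the coincidence $\mathbf A = I + (2\nu+1)t\,\nabla_{(h_0,f_0)}(B_{12},B_{11})$ furnished by (\ref{hess1})--(\ref{hess4}), together with the $f_0$-symmetries (\ref{symm3})--(\ref{symm4}), which ensure that the Jacobian of the hodograph map has the same $(1, f_0)$-skew structure as the coefficient matrix $\mathbf B$ in (\ref{TODA}).

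The only nontrivial step is really the invertibility of $\mathbf A$: I would note that at $t=0$ the hodograph relations force $h_0 = 0$, $f_0 = w$, and $\mathbf A = I$, so by continuity $\det \mathbf A \neq 0$ in a neighborhood of the initial curve, which both justifies the implicit function theorem and produces the unique local integral surface of (\ref{TODA}) matching the Gaussian initial data. All other steps are bookkeeping with the identities already established, so the proof should be short once this framework is set up.
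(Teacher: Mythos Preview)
Your argument is correct. You differentiate the two hodograph relations in $t$ and in $w$, use (\ref{hess1})--(\ref{symm4}) to recognize the resulting Jacobian as $\mathbf A$, invoke invertibility of $\mathbf A$ near $t=0$, and then verify (\ref{TODA}) by a direct algebraic cancellation. All of the steps check out; in particular the two scalar identities you display are tautologies, and the invocation of the implicit function theorem at $t=0$ is exactly what is needed to produce the local solution.

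The paper's own proof takes a somewhat different route. Rather than working in the original $(h_0,f_0)$ coordinates, it passes immediately to the $\pm$-combined hodograph relations (\ref{genhod}) and computes the full differentials $d(\pm\sqrt{f_0}h_0+f_0+(2\nu+1)t\lambda_\pm)$. After using (\ref{hess1})--(\ref{hess4}) and (\ref{GaussHod1}) the annihilator system factors as
\[
dw-(2\nu+1)\lambda_\pm\,dt \;=\; \sqrt{f_0}\bigl(A_{11}\pm\sqrt{f_0}A_{12}\bigr)\,dr_\pm,
\]
with $r_\pm=h_0\pm2\sqrt{f_0}$ the Riemann invariants. Away from the zero locus of $A_{11}^2-f_0A_{12}^2$ this exterior differential system determines an integral surface whose characteristic curves are $dw/dt=(2\nu+1)\lambda_\pm$, which is (\ref{TODA}) in Riemann invariant form (\ref{RinvtToda}). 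Your approach is more elementary and gets to the stated lemma more quickly; the paper's approach has the side benefit of simultaneously establishing the diagonal form (\ref{RinvtToda}) and identifying the characteristic speeds $\lambda_\pm$ with the factors of $\det\mathbf A$, both of which are used repeatedly downstream (e.g.\ in the discussion of caustics and in Section~\ref{chargeom}).
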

\begin{proof}
The annihilator of the differentials of (\ref{genhod})  satisfy
\begin{eqnarray} \nonumber
\left(
\begin{array}{c}
dx -  (2\nu+1) \lambda_+ dt \\ dx -  (2\nu+1) \lambda_- dt
\end{array}
\right)
 &=& \left[ \left(
\begin{array}{cc}
\sqrt{f_0} & 1 + \frac12 \frac{h_0}{\sqrt{f_0}}\\ \label{extdiff}
-\sqrt{f_0} & 1 + \frac12 \frac{h_0}{\sqrt{f_0}}
\end{array}
\right) 
+ (2\nu+1) t\left(
\begin{array}{cc}
\frac{\partial \lambda_+}{\partial h_0} & \frac{\partial \lambda_+}{\partial f_0}\\ 
 \frac{\partial \lambda_-}{\partial h_0} & \frac{\partial \lambda_-}{\partial f_0}
\end{array}
\right) \right] \left(
\begin{array}{c}
d h_0 \\ d f_0
\end{array}
\right)  \\
&=& \left[ \begin{array}{cc}
\sqrt{f_0} \left(A_{11} + \sqrt{f_0} A_{12} \right) & \left(A_{11} + \sqrt{f_0} A_{12} \right)\\ 
-\sqrt{f_0} \left(A_{11} - \sqrt{f_0} A_{12} \right) & \left(A_{11} - \sqrt{f_0} A_{12} \right)
\end{array} \right] 
\left(
\begin{array}{c}
d h_0 \\ d f_0
\end{array}
\right) 
\end{eqnarray}
which determines a two-dimensional distribution locally on the space $(t, x, h_0, f_0)$. In deriving the second line of this annihilator equation, use was made of the identities (\ref{hess1} - \ref{hess4}) and the first of the hodograph equations (\ref{GaussHod1}). This may be rewritten in diagonal form as
\begin{eqnarray*}
\frac{dx}{dt} -  (2\nu+1) \lambda_\pm  &=& \sqrt{f_0} \left(A_{11} \pm \sqrt{f_0} A_{12} \right) \frac{dr_\pm}{dt}
\end{eqnarray*}
where 
\begin{eqnarray} \label{char}
r_\pm = h_0 \pm 2\sqrt{f_0}.
\end{eqnarray}
From this one sees that away from where the matrix on the right hand side fails to have maximal rank; i.e., away from the locus where $A_{11}^2 - f_0 A_{12}^2 = 0$, this exterior differential system determines a well-defined integral surface over the $(t, x)$ plane whose characteristic curves are given by the left hand side: $\frac{dx}{dt} = (2\nu+1)\lambda_\pm$ which is equivalent to the Toda equations (\ref{TODA}) in Riemann inavariant form,
\begin{eqnarray}\label{RinvtToda}
\frac{\partial}{\partial t} 
\left(
\begin{array}{c}
r_+ \\ r_-
\end{array}
\right) &+&  (2\nu+1)
\left(
\begin{array}{cc}
\lambda_+ & 0\\ 
0 & \lambda_-
\end{array}
\right)  \frac{\partial}{\partial x} 
\left(
\begin{array}{c}
r_+ \\ r_-
\end{array}
\right) = 0.
\end{eqnarray}
\end{proof}

We now state and prove a standard result which is, nevertheless, interesting in out setting.

\begin{prop}
The Riemann invariants, $r_\pm$ are respectively constant along the integral curves (characteristics) of the respective ode's $\frac{dx}{dt} = j \lambda_\pm(r_+, r_-)$, where $j = 2\nu+1$.
\end{prop}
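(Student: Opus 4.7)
The plan is straightforward: the proposition is a direct consequence of the diagonalized (Riemann invariant) form of the Toda system already derived in Lemma \ref{lem:extdiff}, namely equation (\ref{RinvtToda}). The whole point of putting a hyperbolic system into Riemann invariant form is that each equation decouples into a pure transport equation for a single invariant along its own characteristic field, so the invariants are carried unchanged along those characteristics.

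First, I would fix a sign $\pm$ and consider a characteristic curve $w = w_\pm(t)$ defined by the initial value problem
\begin{equation*}
\frac{dw}{dt} = j\,\lambda_\pm(r_+(t,w), r_-(t,w)),\qquad w(0) = w_0,
\end{equation*}
with $j = 2\nu+1$. (Strictly speaking this ODE has coefficients depending on the solution, but local existence and uniqueness follows on the region where $(r_+,r_-)$ is smooth, which is guaranteed in a neighborhood of the initial strip by Lemma \ref{lem:extdiff}.) Note that the variable denoted $x$ in the proposition is the spatial variable called $w$ in (\ref{RinvtToda}); I will use $w$ throughout.

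Next, compute the total derivative of $r_\pm$ along this curve using the chain rule:
\begin{equation*}
\frac{d}{dt}\,r_\pm\!\big(t, w_\pm(t)\big) = \frac{\partial r_\pm}{\partial t} + \frac{dw_\pm}{dt}\,\frac{\partial r_\pm}{\partial w} = \frac{\partial r_\pm}{\partial t} + j\,\lambda_\pm\,\frac{\partial r_\pm}{\partial w}.
\end{equation*}
The right-hand side is precisely the $\pm$ component of the left-hand side of the diagonal system (\ref{RinvtToda}), which vanishes identically. Hence $r_\pm$ is constant along its characteristic $\frac{dw}{dt} = j\lambda_\pm$.

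There is no real obstacle here; the work has already been done in converting the Toda system into Riemann invariant form via the hodograph relations. The only subtlety worth mentioning is the implicit nature of the characteristic ODE itself (since $\lambda_\pm$ depends on $r_\pm$, which depends on $(t,w)$): once the invariants are known to be constant along their respective characteristic, those characteristics are in fact straight lines in the $(t,w)$ plane parametrized by the constant values of $r_+$ and $r_-$ on them, which is the standard mechanism by which the hodograph solution (\ref{GaussHod1})--(\ref{GaussHod2}) encodes the solution implicitly. I would close the proof by remarking on this consistency with the hodograph representation, tying the statement back to the implicit-solution structure used earlier.
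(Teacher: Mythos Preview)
Your proof is correct and follows essentially the same approach as the paper: both invoke the diagonalized system (\ref{RinvtToda}) and compute the total derivative of $r_\pm$ along the characteristic $\frac{dw}{dt}=j\lambda_\pm$ via the chain rule to conclude it vanishes. Your additional remarks on local existence and the tie-back to the hodograph representation are fine but go slightly beyond what the paper records.
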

\begin{proof}
We have from (\ref{RinvtToda}) that
\beann
\partial_t r_\pm + j \lambda_\pm \partial_x r_\pm &=& 0
\eeann
So, if $x_\pm(t)$ satisfies 
\beann
\frac{dx_\pm}{dt} = j \lambda_\pm,
\eeann
then
\beann
\partial_t r_\pm + \frac{dx}{dt} \partial_x r_\pm &=& 0\\
\frac{d}{dt} r_\pm (t, x_\pm(t)) &=& 0
\eeann
where the second equation follows from the first by the chain rule. Hence $r_\pm$ is indeed constant along the respective charaacteristics $x_\pm(t)$.
\end{proof}

Of course the precise form of these characteristics are determined by the curve of initial data one chooses for the characteristic ode's. In the case of regular valence this curve is determined by the self-similar scaling associated to Euler's relation:
\beann
(x, x^{\frac{j}{2}-1}_\pm).
\eeann
As we will see, in section \ref{chargeom} the inegral surface $\mathcal{S}$ in this case has the form of a cone over an algebraic curve $\mathcal{C}.$

\subsubsection{ODE reduction in the case of regular maps} \label{ODEreduction}
We continue to restrict our attention to a $j$-regular surface (for $j = 2\nu + 1$), defined by (\ref{GaussHod1}) and (\ref{GaussHod2}) in the 0-jet space $\left\{t, x, h_0, f_0\right\}$ to write the leading order Toda equations (\ref{TODA})  in the form
\begin{eqnarray} \label{GaussToda}
t \left(
\begin{array}{c}
h_{0,t } \\ f_{0,t }
\end{array}
\right) &+& \left(x - \left(
\begin{array}{cc}
f_0 & h_{0} \\ f_{0}h_0 & f_0
\end{array}
\right) \right) 
\left(
\begin{array}{c}
h_{0,x} \\ f_{0,x}
\end{array}
\right) = 0 \, .
\end{eqnarray}
One may also evaluate these equations on $h_0, f_0$ in their self-similar form:
\begin{eqnarray*}
h_0(t ,x) &=& x^{1/2} u_0(t x^{\nu - 1/2})\\
f_0(t ,x) &=& x z_0(t x^{\nu - 1/2}).
\end{eqnarray*}
From this scaling relation one may deduce the {\it zero-order exchange relations} 
\begin{eqnarray} \label{0exchange}
x\partial_x f_0 &=& f_0 + \frac{j-2}{2} t\partial_t f_0\\
x\partial_x h_0 &=& (1/2)h_0 + \frac{j-2}{2} t\partial_t h_0.
\end{eqnarray}
It is straightforward to then calculate that the induced ODE reduction, with respect to the similarity variable $\xi = tx^{\nu -1/2}$ of (\ref{TODA}) takes the form
\begin{eqnarray} \label{GaussTodaSS} 
\left(
\begin{array}{c}
\xi h^\prime_{0} \\ \xi f^\prime_{0} 
\end{array}
\right) &=& - \frac1 D \left(jx -(j-2) \left(
\begin{array}{cc}
f_0 & -h_{0} \\ - f_{0}h_0 & f_0
\end{array}
\right) \right)
\left(x - \left(
\begin{array}{cc}
f_0 & h_{0} \\ f_{0}h_0 & f_0
\end{array}
\right) \right) 
\left(
\begin{array}{c}
h_{0} \\ 2 f_{0}
\end{array}
\right)\\
&=& \frac{-jt} D \left(2x + j(j-2)t \left(\begin{array}{cc}
B_{11} &- B_{12}\\- f_{0} B_{12}& B_{11}
\end{array}\right)\right){ \bf B}
\left(
\begin{array}{c}
h_{0} \\ 2 f_{0}
\end{array}
\right)
\end{eqnarray}
where $\,^\prime = d/d\xi$ and
\begin{eqnarray} \label{discrim}
D &=& (jx - (j-2)f_0)^2 - f_0(j-2)^2h_0^2\\
\nonumber &=& d_+ \cdot d_- \,;\\ \label{discrimpm}
d_\pm &=& (jx - (j-2) f_0) \pm (j-2) \sqrt{f_0} h_0.
\end{eqnarray}
Combining (\ref{GaussToda}) and the above exchange relations one derives
\begin{eqnarray*}
 \left( j x - (j-2) \left(
\begin{array}{cc}
f_0 & h_{0} \\ f_{0}h_0 & f_0
\end{array}
\right) \right) 
\left(
\begin{array}{c}
h_{0,x} \\ f_{0,x}
\end{array}
\right) &=& \left(
\begin{array}{c}
h_{0} \\2  f_{0}
\end{array}
\right).
\end{eqnarray*}
Inverting this gives
\begin{eqnarray} \label{odvalinv}
\left(
\begin{array}{c}
h_{0,x} \\ f_{0,x}
\end{array}
\right) &=& \frac1D \left( j x + (j-2) \left(
\begin{array}{cc}
- f_0 & h_{0} \\ f_{0}h_0 & - f_0
\end{array}
\right) \right) \left(
\begin{array}{c}
h_{0} \\2  f_{0}
\end{array}
\right).
\end{eqnarray}
From this one may directly derive that the $x$-derivatives of the Riemann invariants (\ref{char}) are concisely given by
\begin{eqnarray} \label{wrpm}
\partial_x r_{\pm} &=& \frac{r_\pm}{d_\mp}
\end{eqnarray}
Combining this with (\ref{RinvtToda}) one has the diagonal system

\begin{eqnarray}\label{RinvtToda2}
\partial_t r_\pm &=& - j \frac{\lambda_\pm r_\pm}{d_\mp} \,;\\
\lambda_\pm &=& B_{11} \pm \sqrt{f_0}B_{12}\\
&=& \frac{(x-f_0) \mp \sqrt{f_0} h_0}{j s}
\end{eqnarray}
where the third line follows from the hodograph equations (\ref{GaussHod1} - \ref{GaussHod2})
so that (\ref{RinvtToda2}) may be rewritten as
\begin{eqnarray}\label{RinvtToda3}
t\partial_t r_\pm &=& \frac{(f_0 - x) \pm \sqrt{f_0} h_0}{d_\mp}  r_\pm \,.
\end{eqnarray}
We note that (\ref{wrpm}) and (\ref{RinvtToda3}) only depend on the valence through $d_\pm$.
\medskip

Multiplying out (\ref{odvalinv}) one can prolong the regular surface to the 1-jet space in the patch $\left\{x, h_0, f_0\right\}$ via the formula
\begin{eqnarray} \nonumber 
\left(
\begin{array}{c}
h_{0, x} \\ f_{0,x}
\end{array}
\right)&=& \frac1 D \left(jx -(j-2) \left(
\begin{array}{cc}
f_0 & - h_{0} \\ - f_{0}h_0 & f_0
\end{array}
\right) \right)
\left(
\begin{array}{c}
h_{0} \\ 2 f_{0}
\end{array}
\right)\\ \label{jet1}
&=& \left(
\begin{array}{c}
 h_0 \frac{j(f_0 + x) -2 f_0}{D}\\ - f_0 \frac{\delta_{f_0} D}{(j-2) D} 
\end{array}\right).
\end{eqnarray}

We note that, in the coordinate patch $\left\{s, h_0, f_0\right\}$, this extension takes the form
\begin{eqnarray} \label{jet1Alt}
\left(
\begin{array}{c}
h_{0, x} \\ f_{0,x}
\end{array}
\right) &=& \left(
\begin{array}{c}
 \frac{-A_{12}}{A^2_{11} - f_0 A^2_{12}}\\ \frac{A_{11}}{A^2_{11} - f_0 A^2_{12}}
\end{array}\right). 
\end{eqnarray}

We next introduce a relation that will be important for the further topics that we examine in this paper. We already noted that, with regard to (\ref{extdiff}), well-defined solution curves of the continuum Toda system could be constructed away from the locus where $A^2_{11} - f_0 A^2_{12} = 0$. From (\ref{wrpm}) and (\ref{RinvtToda3}) it seems reasonable that this locus should bear some relation to that of $D = d_+ \cdot d_-$. In fact one has
\begin{eqnarray} \label{basic1}
A^2_{11} - f_0 A^2_{12} &=& \frac{1}{f_{0x}^2 - f_0 h_{0x}^2}\\
\label{basic2} &=& \frac{D}{f_0 (4 f_0 -  h_0^2)}\\
&=& \frac{-1}{f_0}\frac{d_+ \cdot d_-}{r_+ \cdot r_-}\\
&=& \frac{-1}{f_0 \partial_x r_+ \cdot \partial_x r_-}
\end{eqnarray}
where (\ref{basic1}) follows from the fact that
\begin{eqnarray} \label{Ainv}
{\bf A}^{-1} &=&  \left[\begin{array}{cc}  f_{0x} & h_{0x} \\ f_0 h_{0x} & f_{0x} \end{array} \right] \,,
\end{eqnarray}
which in turn follows from (\ref{STRING}), by taking determinants;  (\ref{basic2}) from the last line of (\ref{jet1}) by direct calculation and the next two equations by definition and (\ref{wrpm}) respectively.

Finally, comparing the last line of (\ref{jet1}) to (\ref{jet1Alt}) one also has
\begin{eqnarray}
A_{12} &=& \frac{h_0(jx + (j-2)f_0)}{f_0 (h_0^2 -4 f_0)}\\
A_{11} &=& \frac{2jx +(j-2) (h_0^2 -2f_0)}{4 f_0 - h_0^2}.
\end{eqnarray}

For subsequent use we introduce some rescalings which will be further motivated in Section \ref{chargeom}
\begin{eqnarray*}
y_0 &=& \frac{h_0^2}{f_0}\\
\widehat{\mathcal{F}}_{1} &=&  \frac1{2\nu+1}\sum_{\mu = 0}^\nu {2\nu + 1 \choose 2\mu, \nu-\mu, \nu - \mu +1} y_0^\mu\\
 \widehat{B}_{11} &=& 2 y_0^{1/2} \partial_{y_0} \widehat{\mathcal{F}}_{1}\\ 
 \widehat{B}_{12} &=& (\nu+1) \widehat{\mathcal{F}}_{1} - y_0 \partial_{y_0} \widehat{\mathcal{F}}_{1}.
\end{eqnarray*}

\subsection{Universal System for Mixed Valence} \label{MV}

We now return to the case of a general potential having the form (\ref{eq:genpot}). We have already noted in (\ref{scstring}) that general string equations take the form

\begin{eqnarray} \label{eq:symbol}
\left( \begin{array}{c} 0 \\ x \end{array} \right) &=&  \left( \begin{array}{c} h\\ f \end{array}  \right) +\sum_{j=1}^J  j t_j [z^{j-1}]\left( \begin{array}{c} [\eta^0] \cr [\eta^{-1}] \sqrt{f} \end{array} \right) \left[ (1 - z \mathcal{L})^{-1}\right].
\end{eqnarray}
This is consistent with the general form of the string polynomials given by (\ref{PHIm}) and (\ref{PSIm}) which replaces $\mathcal{L}$ above by its principal symbol $\mathcal{L}_0$ (and $h, f$ by $h_0, f_0$). Making that replacement, one finds at leading order the undifferenced string equations become
\begin{eqnarray} \label{eq:prinsymbol}
\left( \begin{array}{c} 0 \\ x \end{array} \right) &=&  \left( \begin{array}{c} h_0 \\ f_0 \end{array}  \right) + \sum_{j=1}^J   t_j \left( \begin{array}{c} \phi^{(j)}_0 \cr \psi^{(j)}_0 \end{array} \right)
\end{eqnarray}
from the definition of string polynomials introduced in section \ref{sec:StringPoly}.
\smallskip

Similarly, as pointed out in Remark \ref{remcor02}, the form of the $j^{th}$ conservation law remains unchanged in the mixed valence case:

\begin{eqnarray} \label{genflow}
\frac{\partial}{\partial t_j} 
\left(
\begin{array}{c}
h_0 \\ f_0 +\frac12 h_0^2
\end{array}
\right) &+& 
\frac{\partial}{\partial x} 
\left(
\begin{array}{c}
\mathcal{F}_{1}{(j)}\\ 
\mathcal{F}_{2}{(j)} + h_0 \mathcal{F}_{1}{(j)}
\end{array}
\right)  = 0.
\end{eqnarray}
Converting to their form as hyperbolic systems these become
\begin{eqnarray} \label{eq:hyp}
\frac{\partial}{\partial t_j} 
\left(
\begin{array}{c}
h_0 \\ f_0 
\end{array}
\right) &+& 
\left(
\begin{array}{cc}
\partial_{h_0} \mathcal{F}_{1}{(j)} & \partial_{f_0} \mathcal{F}_{1}{(j)}\\ 
\partial_{h_0} \mathcal{F}_{2}{(j)} + \mathcal{F}_{1}{(j)} & \partial_{f_0} \mathcal{F}_{2}{(j)} 
\end{array}
\right) \frac{\partial}{\partial x} \left(
\begin{array}{c}
h_0 \\ f_0 
\end{array}
\right) = 0.
\end{eqnarray}
Moreover, by direct calculation on the polynomial fluxes in Corollary \ref{cor02} one finds that 
\begin{eqnarray} 
\partial_{h_0} \mathcal{F}_{1}{(j)} &=& \partial_{f_0} \mathcal{F}_{2}{(j)} = \psi^{(j)}_0 \\
\partial_{h_0} \mathcal{F}_{2}{(j)} + \mathcal{F}_{1}{(j)} &=& f_0 \partial_{f_0} \mathcal{F}_{1}{(j)} = f_0 \psi^{(j)}_0,
\end{eqnarray}
so that the system (\ref{eq:hyp}) simplifies to the form
\begin{eqnarray} 
\frac{\partial}{\partial t_j} 
\left(
\begin{array}{c}
h_0 \\ f_0 
\end{array}
\right) &+& 
\left(
\begin{array}{cc}
\psi^{(j)}_0 & \phi^{(j)}_0\\ 
f_0 \phi^{(j)}_0 & \psi^{(j)}_0 
\end{array}
\right) \frac{\partial}{\partial x} \left(
\begin{array}{c}
h_0 \\ f_0 
\end{array}
\right) = 0.
\end{eqnarray}

Now we consider the general linear combination of conservation laws,
\begin{eqnarray} 
t \sum_{j=1}^J \alpha_j \frac{\partial}{\partial t_j} 
\left(
\begin{array}{c}
h_0 \\ f_0 
\end{array}
\right) &+& 
t \sum_{j=1}^J \alpha_j \left(
\begin{array}{cc}
\psi^{(j)}_0 & \phi^{(j)}_0\\ 
f_0 \phi^{(j)}_0 & \psi^{(j)}_0 
\end{array}
\right) \frac{\partial}{\partial x} \left(
\begin{array}{c}
h_0 \\ f_0 
\end{array}
\right) = 0,
\end{eqnarray}
where the $\alpha_j$ are the coefficients of a fixed background potential $V - \frac12 \lambda^2$ of generally mixed valence. We recall that $\frac{d}{dV}$ denotes the derivative combination on the LHS which is a directional  derivative in the "direction" of the background potential (initialized at the Gaussian potential).  The PDE is a hyperbolic system for the flow in that direction. It is straightforward to see, by working backward, that this equation can be rewritten as a conservation law for that flow. Now, substituting from the leading order continuum string equation (\ref{eq:prinsymbol}), with $t_j = \alpha_j t$, into this system yields a system of {\it universal} form.
\begin{eqnarray} 
t \frac{d}{dV} 
\left(
\begin{array}{c}
h_0 \\ f_0 
\end{array}
\right) &+& 
\left(x - \left(
\begin{array}{cc}
f_0 & h_{0} \\ f_{0}h_0 & f_0
\end{array}
\right) \right)  \frac{\partial}{\partial x} \left(
\begin{array}{c}
h_0 \\ f_0 
\end{array}
\right) = 0.
\end{eqnarray}

\subsection{Some bosonic versions of string equations} \label{bosestrings} 

Using some of the continuum limit formulations we have derived in the general valence setting, one may further derive a number of elegant and useful relations. For instance, one may invert (\ref{boson12}) to express the coefficients of the genus expansion in terms of these cumulants:

\begin{align*}
\log b^{2}_k(x,\vec{t}) 
=& \left(e^{\frac{1}{k}\partial_x}-2+ e^{-\frac{1}{k}\partial_x}\right) \frac{1}{k^2} \log \tau^{2}_{k}(x,\vec{t}) \\
=& 4\,\text{sinh}^2 \left(\frac{1}{2k}\partial_x \right)\frac{1}{k^2} \log \tau^{2}_{k}(x,\vec{t})
\end{align*}
The Taylor expansion of hyperbolic cosecant squared is
\begin{align*}
\text{csch}^2(X) =& \sum_{m\geq 0} \frac{(1-2m)2^{2m} B_{2m}}{(2m)!} X^{2m-2}
\end{align*}
where the coefficients $B_m$ are the Bernoulli numbers. This establishes an elegant relation that had been formally derived, using Euler-Maclaurin expansions, in \cite{BIZ80}:

\begin{align}
\frac{1}{k^2} \log \tau^{2}_{k}(x,\vec{t})
=& \frac{1}{4} \text{csch}^2\left(\frac{1}{2k}\partial_x\right) \log b^{2}_k(x,\vec{t}) \nonumber \\
=& \sum_{m\geq 0} \frac{(1-2m) B_{2m}}{(2m)!} \left(\frac{1}{k}\partial_x\right)^{2m-2} \log b^{2}_k(x,\vec{t}) \nonumber \\
=& \sum_{g\geq 0} N^{-2g} \sum_{m=0}^g \frac{(1-2m)B_{2m}}{(2m)!} \partial_{x}^{2m-2} \mathcal{C}_{g-m}(f_0,f_1,\ldots ,f_g)(x,\vec{t})
\end{align}

\smallskip

Next, we recall from Theorem 3.5 of \cite{EP11} the formula relating the coefficients of the asymptotic partition function to the asymptotic $a$ recursion coefficients (see also (\ref{as-asymp})):
\begin{eqnarray} \label{BOSON}
h_g(x, \vec{t}) &=& \sum_{\begin{matrix} 2 g_1 + m = g+1 \\  g_1 \geq 0\,, m>0\end{matrix}}  - \frac1{m!} \frac{\partial^{m+1}}{\partial \xi_1\partial w^m} \bigg[ 
E_{g_1}\left(w, \xi_1, t_2, t_3, \dots\right)\bigg]_{\xi_1 = t_1, w = x}
\end{eqnarray}
Multiplying by $k^{-g}$ and summing one has 
\begin{eqnarray}
\sum_{g \geq 0} h_g(x, \vec{t}) k^{-g} &=& \sum_{g \geq 0} k^{-g} \sum_{\begin{matrix} 2 g_1 + m = g+1 \\  g_1 \geq 0\,, m>0\end{matrix}}  - \frac1{m!} \frac{\partial^{m+1}}{\partial \xi_1\partial w^m} \bigg[ E_{g_1}\left(w, \xi_1, t_2, t_3, \dots\right)\bigg]_{\xi_1 = t_1, w = x}\\
\nonumber &=& -\sum_{g_1 \geq 0} \sum_{m \geq 0}  \frac1{(m+1)!} k^{-(m+1)} \frac{\partial^{m+1}}{\partial w^{m+1}} \bigg[ k^{-2 g_1 + 1} \frac{\partial}{\partial s_1}E_{g_1}\left(w, s_1, t_2, t_3, \dots\right)|_{s_1 = t_1}\bigg]_{ w = 1}\\ 
&=& - \left(e^{\frac1{k} \partial_w} - 1 \right) \sum_{g \geq 0} k^{-(2 g - 1)} \bigg[\frac{\partial}{\partial \xi_1}E_{g}\left(w, \xi_1, t_2, t_3, \dots\right)|_{\xi_1 = t_1}\bigg]_{ w = x}.
\end{eqnarray}

The bosonic operator (see (\ref{boson1})) appearing in the last line here is invertible, in the sense of pseudo-differential operators, modulo "constants" (in $w$). As seen from (\ref{boson1}) the only functions in the kernel are those taken into themselves under translation in $w$ by $1/k$ for general $k$; this amounts to saying that the recursion operators are constant in $n$ at least asymptotically. However this is not the case for the exponentially weighted orthogonal polynomials we consider.
(Alternatively, one could argue that there are no terms constant in $w$ in the coefficients of the genus expansion since these would correspond to faceless maps, but every map has at least one face.) Hence we may work orthogonal to the kernel. To make this effective, we open up the $w$-differentiation to one further order before evaluating at $x$. This is possible by the analyticity with respect to parameters of the genus expansion that was established in \cite{EM03}. Thus one may write
\begin{eqnarray*}
\frac{\partial}{\partial w}\sum_{g \geq 0} h_g(w, \vec{t})|_{w=x} k^{-g}
&=& - \left(e^{\frac1{k} \partial_w} - 1 \right) \sum_{g \geq 0} k^{-(2 g - 1)} \frac{\partial^2}{\partial w \partial \xi_1}E_{g}\left(w, \xi_1, t_2, t_3, \dots\right)|_{\xi_1 = t_1,  w = x}
\end{eqnarray*}
so that the bosonic operator is acting purely on gradients.
One may  now write
\begin{eqnarray*}
\frac{\frac{1}{k} \partial_w}{e^{\frac{1}{k} \partial_w} - 1} \sum_{g \geq 0} h_g(w, \vec{t})\Big|_{ w = x} k^{-g} &=& - \sum_{g \geq 0}  k^{-2 g} \frac{\partial^2}{\partial w\partial \xi_1}E_{g}\left(w, \xi_1, t_2, t_3, \dots\right)|_{s_1 = t_1, w=x}.
\end{eqnarray*}
But now we recognize from the generating function for Bernoulli numbers that by the pseudo-differential calculus one may expand the operator on the left-hand side of the previous equation to get
\begin{eqnarray} \nonumber
 \sum_{m=0}^\infty \frac{B_m}{m!} \frac{1}{k^{m}} \frac{\partial^{m}}{\partial w^{m}}\sum_{g \geq 0} h_g(w, \vec{t})\Big|_{ w = x} k^{-g} &=& - \sum_{g \geq 0} k^{-2 g} \frac{\partial^2}{\partial w\partial \xi_1}E_{g}\left(w, \xi_1, t_2, t_3, \dots\right)|_{\xi_1 = t_1, w=x}\\ \label{boson22}
\end{eqnarray}
where the coefficients $B_m$ are again the Bernoulli numbers. Since the right hand side of this equation is an expansion in even powers of $k$, one may deduce a useful corollary by collecting coefficients of odd powers of $k$ on the left-hand side and setting them equal to zero:
\begin{eqnarray} \label{hoddident}
\sum_{m = 0}^{2g + 1} \frac{B_m}{m!} \frac{\partial^m}{\partial w^m} h_{2 g +1 - m}(w, \vec{t})|_{w = x} &=& 0.
\end{eqnarray}
This identity was also derived by purely elementary combinatorial arguments in \cite{W}.
\medskip

Again, all seemingly formal manipulations made above may be justified by the results of \cite{EM03}.

\section{The Hopf Algebra Characterization of String Polynomials} \label{Hopf-String}
This section will reveal a remarkable connection between Bessel functions, Appell polynomials and explicit solutions to the Continuum Toda-String Equations. At some level this is perhaps not surprising given the seminal role that Bessel functions play in the work of Baik, Deift, and Johansson \cite{AD}. However, there it came out of a structural relation with Toeplitz determinants, whereas here it stems from Hankel determinants. Nevertheless, this may suggest deeper relations.
\subsection{The Binomial Hopf Algebra and its Umbral Calculus} \label{Hopf}
We briefly review here the elements of the theory of Hopf algebras, and in particualr the binomial Hopf algebra, that we will need. For background, and more details we refer the reader to \cite{Roman} and \cite{JR}. For us the binomial Hopf algebra is the ring of polynomials $\mathbb{Q}[\zeta]$ with co-multiplication given by 
\begin{eqnarray*}
\Delta \zeta^n &=& \sum_{k=0}^n {n \choose k} \zeta^k \otimes \zeta^{n-k}.
\end{eqnarray*}

One considers the dual vector space, $\mathbb{Q}[\zeta]^*$ of all linear functionals on this algebra
and denotes the action of such a functional, $\Lambda$, on a polynomial $p(\zeta)$ by $\langle \Lambda | p(\zeta)\rangle$. An element of this dual space is determined by its values on a basis and so, in particular one may uniquely associate the functional to a sequence of numbers
\begin{eqnarray*}
a_k &=& \langle \Lambda | \zeta^k \rangle.
\end{eqnarray*}
There is then a naturally induced convolution on any two functionals, $\Lambda, \Xi$ with respective sequences, $a_k, b_k$, given in terms of the co-multiplication by
\begin{eqnarray*}
\langle \Lambda*\Xi | \zeta^n \rangle &=&( \Lambda \otimes \Xi) \Delta \zeta^n \\ 
& = & \sum_{k=0}^n {n \choose k}  \langle \Lambda | \zeta^k \rangle  \langle \Xi | \zeta^{n-k} \rangle\\
&=& \sum_{k=0}^n {n \choose k} a_k b_{n-k}.
\end{eqnarray*}
One may then associate to each element of the dual space a formal exponential generating function $g(t) = \sum_{k=0}^n \frac{a_n}{n!} t^n$. This sets up a one-to-one correspondence between the dual space of the binomial Hopf algebra and the space of formal exponenetial generating functions in which convolution of functionals corresponds to the product of associated generating functions. In other words the algebra structure on $\mathbb{Q}[\zeta]^*$ induced by convolution is isomorphic to the natural algebraic structure on the space of formal generating functions.   So from now on, discussions of this dual space will largely 
be phrased solely and simply in terms of generating functions by writing 
\begin{eqnarray*}
\langle g(t) | \zeta^k \rangle &=& a_k.
\end{eqnarray*}

Given all this it is natural to consider, more generally, linear operators on the binomial Hopf  algebra and their relation to linear operators on $\mathbb{Q}[\zeta]^*$. An example of this is the differentiation operator $\partial^k$ acting on the algebra of polynomials. With respect to the diagonal inner product for the monomial basis of $\mathbb{Q}[\zeta]$ 
\beann
(\zeta^j, \zeta^k) &=& \delta_{jk}
\eeann
this is represented as
\beann
\partial^k p(\zeta) &=& \sum_{n \geq 0} (n)_k (p(\zeta), \zeta^n) \zeta^{n-k}
\eeann
where $(n)_k = n(n-1)\cdots (n-k+1)$. Acting on the basis element $p(\zeta) = \zeta^n$, this becomes the linear functional
\beann
\partial^k \zeta^n &=& \left\{ \begin{array}{cc} (n)_k \zeta^{n-k} & 0 \leq k \leq n \cr 0 & k > n\end{array} \right.
\eeann
One may use this to introduce another role for the algebra of generating functions: take $t^k$ to denote the symbol for $\partial^k$, so that
\beann
t^k \zeta^n &=& \left\{ \begin{array}{cc} (n)_k \zeta^{n-k} & 0 \leq k \leq n \cr 0 & k > n\end{array} \right.
\eeann
Then one may extend this to define a generating function 
\beann
f(t) &=& \sum_{k=0}^\infty \frac{a_k}{k!} t^k
\eeann
as a linear  operator on  $\mathbb{Q}[\zeta]$ by
\begin{eqnarray} \label{genderiv}
f(t) \zeta^n &=& \sum_{k = 0}^n {n \choose k} a_k \zeta^{n-k}.
\end{eqnarray}
We note that we are now using a generating function operationally in two different ways as
representing a linear functional and as representing a linear operator on $\mathbb{Q}[\zeta]$, or notationally as $\langle f(t) | p(\zeta) \rangle$ and $f(t) p(\zeta)$. This notational distinction and context will hopefully keep these different roles clear. The utility of these two different usages is given by the following result which intertwines them:
\begin{thm} \cite{Roman}
\beann
\langle f(t) g(t) | p(\zeta) \rangle &=& \langle g(t) | f(t) p(\zeta) \rangle.
\eeann
\end{thm}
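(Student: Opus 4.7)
The plan is to reduce to a basis computation on both sides and then verify the identity by an explicit, symmetric pairing calculation that manifestly expresses both sides in terms of the coproduct of $\zeta^n$. By the bilinearity of the pairing $\langle \cdot | \cdot \rangle$, the bilinearity of the operator action defined in (\ref{genderiv}), and the multilinearity of the product in the algebra of formal generating functions, it suffices to verify the identity for $p(\zeta) = \zeta^n$ and for $f(t) = t^j/j!$, $g(t) = t^k/k!$. This reduction is legitimate because everything in sight is a sum of finitely many terms when applied to a fixed monomial $\zeta^n$ (the operator $f(t)$ truncates after degree $n$ by (\ref{genderiv})), so no convergence subtlety arises.

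First, I would compute the left-hand side. The product in the generating function algebra is
\[
\frac{t^j}{j!}\cdot\frac{t^k}{k!} \;=\; \binom{j+k}{j}\,\frac{t^{j+k}}{(j+k)!},
\]
and the convention $\langle t^m/m! \,|\, \zeta^n\rangle = \delta_{mn}$ (which is exactly how the correspondence between generating functions and linear functionals was set up) gives
\[
\langle f(t)g(t) \,|\, \zeta^n\rangle \;=\; \binom{j+k}{j}\,\delta_{j+k,\,n}.
\]

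Second, I would compute the right-hand side using (\ref{genderiv}). For $f(t) = t^j/j!$ the associated sequence is $a_i = \delta_{ij}$, so
\[
f(t)\,\zeta^n \;=\; \sum_{i=0}^{n}\binom{n}{i}\delta_{ij}\,\zeta^{n-i} \;=\; \binom{n}{j}\,\zeta^{n-j}
\]
(vanishing for $n<j$). Pairing against $g(t) = t^k/k!$ picks out $\delta_{k,\,n-j}$, giving
\[
\langle g(t) \,|\, f(t)\zeta^n\rangle \;=\; \binom{n}{j}\,\delta_{k,\,n-j},
\]
which is nonzero precisely when $n = j+k$, in which case it equals $\binom{j+k}{j}$. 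The two sides match.

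There is no real obstacle: the identity is essentially the definition of adjointness between the coproduct $\Delta$ on $\mathbb{Q}[\zeta]$ and the induced operator action of $\mathbb{Q}[\zeta]^*$. In fact the argument can be rephrased more conceptually by noting that both sides equal
\[
\sum_{m=0}^{n}\binom{n}{m}\,\langle f(t)\,|\,\zeta^m\rangle\,\langle g(t)\,|\,\zeta^{n-m}\rangle,
\]
the LHS by applying $f\otimes g$ to $\Delta\zeta^n$ and invoking the convolution formula, the RHS by expanding $f(t)\zeta^n$ via (\ref{genderiv}) and pairing termwise with $g(t)$. The only thing to organize carefully is the bookkeeping in the reduction to the monomial basis; once that is done, the rest is a one-line binomial identification.
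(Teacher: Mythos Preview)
Your proof is correct. The paper itself does not supply a proof of this theorem; it simply states it with a citation to Roman's \emph{The Umbral Calculus}. Your argument---reducing by bilinearity to basis elements $p(\zeta)=\zeta^n$, $f(t)=t^j/j!$, $g(t)=t^k/k!$ and checking that both sides equal $\binom{j+k}{j}\delta_{j+k,n}$---is the standard verification and matches what one finds in Roman. The conceptual rephrasing you give at the end, expressing both sides as $\sum_{m=0}^{n}\binom{n}{m}\langle f(t)\,|\,\zeta^m\rangle\langle g(t)\,|\,\zeta^{n-m}\rangle$ via the coproduct, is exactly the adjointness statement the paper is invoking when it says the convolution on $\mathbb{Q}[\zeta]^*$ is dual to $\Delta$.
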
 
The product of generating functions here corresponds to composition of operators. An immediate consequence of the theorem is 
\beann
\langle f(t) | p(\zeta) \rangle &=& \langle t^0 | f(t) p(\zeta) \rangle;
\eeann
in other words, applying the functional $f(t)$ to $p(\zeta)$ is the same as applying the operator and then evaluating at $\zeta =0$.
This operator representation gives us another way to approach the bosonic operators introduced in section \ref{bosestrings}. One has
\beann
e^{\sigma t} \zeta^n = \sum_{k=0}^\infty t^k \zeta^n =  \sum_{k=0}^n {n \choose k} \sigma^k \zeta^{n-k} = (\zeta + \sigma)^n.
\eeann
By linearity it follows that 
\beann
e^{\sigma t} p(\zeta) &=& p(\zeta+ \sigma)
\eeann
Since $t$ here stands for $\partial = \partial_\zeta$, this coincides with equation (\ref{boson1}) for polynomials and will therefore extend to anlaytic functions by  the Weierstrass approximation theorem (which applies in the case of our generating functions for map enumeration).

Finally note that one can take $\frac{t^k}{k!}$ as a dual basis in the binomial Hopf algebra; indeed,
\begin{eqnarray*}
\langle t^k | \zeta^n \rangle &=& n! \delta_{k,n}.
\end{eqnarray*}
Moreover, now that the role of generating functions in defining linear operators has been established we will often adopt the notation $g(\partial)$ in place of $g(t)$ which will also help to distinguish the usage between operators and linear functionals. One sees that the form (\ref{genderiv}) may also be written as
\begin{eqnarray*}
g(\partial) \zeta^n &=& \sum_{j=0}^n {n \choose j} \langle g(t) | \zeta^j \rangle \cdot | \zeta^{n-j} \rangle\\
&=& \sum_{k=0}^n {n \choose k} a_{n-k} \zeta^k.
\end{eqnarray*}

\subsection{Appell Polynomials Generated by Reciprocal Bessel Functions} \label{AppellBessel}
From this point on we are going to restrict attention to invertible $g(t)$ (i.e., $g(0) \ne 0$). Now multiply each $\frac{t^k}{k!}$  by $g(t)$ to get the basis $g(t) \frac{t^k}{k!}$. It clearly has a dual basis of polynomials given by $s_n(\zeta) = g(\partial)^{-1} \zeta^n$. These are called the {\it Appell polynomials determined by $g(t)$}. We list here some of the more salient properties of the Appell polynomials.
\begin{eqnarray} \label{genfcn}
\sum_{k = 0}^\infty \frac{s_k(\zeta)}{k!} t^k &=& \frac1{g(t)} e^{\zeta t}\\ \label{binom}
s_n(\zeta + \sigma) &=& \sum_{k=0}^n {n \choose k} s_k(\sigma) \zeta^{n-k}\\ \label{diffrecur}
\partial s_n(\zeta) &=& n s_{n-1}(\zeta)\\
h(t) &=& \sum_{k = 0}^\infty \frac{\langle h(t) | s_k(\zeta) \rangle}{k!} g(t) t^k
\end{eqnarray}
\begin{eqnarray}
p(\zeta) &=& \sum_{k \geq 0} \frac{\langle g(t) |\partial^k p(\zeta) \rangle}{k!} g(t) s_k(\zeta)\\
\zeta s_n(\zeta) &=& s_{n+1}(\zeta) + \sum_{k=0}^n {n \choose k} \langle g^\prime(t) | s_{n-k}(\zeta) \rangle s_k(\zeta)\\ \label{scaling}
\langle h(t) | p(a \zeta) \rangle &=& \langle h(a t) | p(\zeta) \rangle\\ 
\langle h(t) | \zeta p(\zeta) \rangle &=& \langle h^\prime(t) | p(\zeta) \rangle.
\end{eqnarray}
There are many classical examples of Appell polynomials. Among these are the Hermite polynomials associated to $g(t) = e^{t^2/2}: H_n(\zeta) = e^{- \partial^2/2} \zeta^n$. 

Pertinent to this paper, we will now show that the sequence of polynomials, $\widehat{B}_{12}(\nu)$, indexed by $\nu$ and regarded as functions of $\zeta = \sqrt{y_0}$, that were introduced at the end of Section \ref{sec:44}, are the even elements of a sequence of Appell polynomials. A related correspondence will also be shown to hold for $\widehat{B}_{11}(\nu)$. The exponential generating function which generates the $\widehat{B}_{12}(\nu)$ is the reciprocal of the modified Bessel function of the first kind of order 0,
\begin{eqnarray*}
g(t) &=& \frac1{I_0(2t)}\\
I_0(2t) &=& \sum_{\mu=0}^\infty \frac{t^{2\mu}}{(\mu!)^2}\\
&=& \sum_{\mu=0}^\infty {2\mu \choose \mu}\frac{t^{2\mu}}{2 \mu!}\\
S_n(\zeta) &\doteq& I_0(2\partial) \zeta^n
\end{eqnarray*}
\begin{eqnarray} \label{s2n}
S_{2\nu}(\zeta) &=& \sum_{\mu = 0}^{\nu} {2\nu \choose 2\mu} {2(\nu - \mu) \choose \nu  - \mu} \zeta^{2\mu} = \sum_{\mu = 0}^{\nu} {2\nu \choose 2\mu, \nu - \mu, \nu - \mu} \zeta^{2\mu}\\
\label{s2n+1}
S_{2\nu + 1}(\zeta) &=& \sum_{\mu = 0}^{\nu} {2\nu + 1 \choose 2\mu + 1} {2(\nu - \mu) \choose \nu - \mu}\zeta^{2\mu + 1} = \sum_{\mu = 0}^{\nu} {2\nu + 1\choose 2\mu + 1, \nu - \mu, \nu - \mu} \zeta^{2\mu + 1}\\ \label{b12}
\widehat{B}_{12}(\nu) &=& S_{2\nu}(\sqrt{y_0}).
\end{eqnarray}

Similarly $\widehat{B}_{11}(\nu)$ is associated to the reciprocal of the modified Bessel function of order 1.
\begin{eqnarray*}
g(t) &=& \frac{t}{I_1(2t)}\\
I_1(2t) &=& \frac12 \frac{d}{dt} I_0(2t) = \frac12 \sum_{\mu=1}^\infty \frac{ 2\mu\,\, t^{2\mu - 1}}{(\mu!)^2} = \sum_{\mu=1}^\infty \frac{t^{2\mu - 1}}{(\mu -1)! \mu!}\\
\frac{I_1(2t)}{t} &=&   \sum_{\mu=0}^\infty \frac1{2\mu + 1} {2\mu +1 \choose \mu}\frac{t^{2\mu}}{2 \mu !}\\
R_n(\zeta) &\doteq& I_1(2\partial) \partial^{-1} \zeta^n  \\
\partial R_{2\nu}(\zeta) &=& \partial \sum_{\mu = 0}^{\nu} {2\nu \choose 2\mu } \frac1{2(\nu - \mu) + 1 } {2(\nu - \mu) + 1 \choose \nu  - \mu} \zeta^{2\mu} = \sum_{\mu = 1}^{\nu} {2\nu \choose 2\mu } \frac{2 \mu}{2(\nu - \mu) + 1 } {2(\nu - \mu) + 1 \choose \nu  - \mu} \zeta^{2\mu - 1} \\ 
&=& \sum_{\mu = 1}^{\nu} {2\nu \choose 2\mu - 1, \nu - \mu, \nu - \mu + 1} \zeta^{2\mu - 1}\\
&=& 2 \nu R_{2\nu - 1}(\zeta)
\end{eqnarray*}
\begin{eqnarray} \nonumber
\partial R_{2\nu + 1}(\zeta) &=& \partial \sum_{\mu = 0}^{\nu} {2\nu +1\choose 2\mu + 1 } \frac1{2(\nu - \mu) + 1 } {2(\nu - \mu) + 1 \choose \nu  - \mu}\zeta^{2\mu + 1}\\ \nonumber
 & =&  \sum_{\mu = 0}^{\nu} {2\nu + 1 \choose 2\mu + 1} \frac{2 \mu + 1}{2(\nu - \mu) + 1 } {2(\nu - \mu) + 1 \choose \nu  - \mu} \zeta^{2\mu } \\ \label{r2n}
&=& \sum_{\mu = 0}^{\nu} {2\nu + 1\choose 2\mu, \nu - \mu, \nu - \mu + 1}\zeta^{2\mu}\\ \nonumber
&=& (2 \nu + 1) R_{2 \nu}(\zeta)\\ \label{b11}
\widehat{B}_{11}(\nu) &=& \partial R_{2\nu}(\sqrt{y_0}) = 2\nu R_{2\nu - 1}(\sqrt{y_0})\\ \nonumber
(2\nu+1)\widehat{\mathcal{F}}_{1}(\nu) &=&  R_{2\nu}(\sqrt{y_0}).
\end{eqnarray}

\begin{rem} In \cite{EMP08} and  \cite{Er09} it was observed that for even valence, the key equations at leading order have a fundamental connection to the polynomial relations satisfied by $z_0$ as a consequence of its characterization as a generating function for Catalan numbers. Something like that is going on with $\widehat{B}_{11}(\nu) $ and $\widehat{B}_{12}(\nu) $ in the case of odd valence as well, although it is a bit more intricate to observe. Applying the scaling relation (\ref{scaling}) one has that $\widehat{B}_{12}(-y_0)$ (which is relevant for studying the behavior of $\widehat{B}_{12}(y_0)$ along the negative real axis) corresponds to the generating function
$I_0(-2 i t) = J_0(2t)$ where $J_0(t)$ is the order zero Bessel function of the first kind.   
The Bessel function of order $n$ (respectively modified Bessel funciton of order $n$) satisfies the second order linear ode
\begin{eqnarray} \label{Bess}
t^2 Y^{\prime \prime} + t Y^\prime + (t^2 - n^2) Y &=& 0\\ \label{MBess}
t^2 Y^{\prime \prime} + t Y^\prime - (t^2 + n^2) Y &=& 0.
\end{eqnarray}
Applying the Fourier transform, $\mathcal{F}$, to these ode's transforms them to algebraic equations which are readily solved. One may then determine that the Fourier transform of $J_0$ satisfies
\begin{eqnarray} \label{FourierBessel}
\frac12 \mathcal{F}[J_0(t)](2\sqrt{w}) &=& (1 - 4w)^{-1/2}\\ \nonumber
&=& (1 - w \partial_w) C(w)\\ \nonumber
&=& f_{0 w}(t_4 = 1) \,\,\, \mbox{when the valence equals 4}
\end{eqnarray}
where $C(w)$ is the generating function for the standard Catalan numbers. 
\end{rem}
\bigskip

We conclude this subsection with a derivation of some striking properties of $\widehat{B}_{12}$ and $\widehat{B}_{11}$ and relations between them. First, note that

\begin{lem}
\begin{eqnarray} \label{AppDiff}
\widehat{B}_{12} - y_0^{1/2} \widehat{B}_{11} &=& S_{2\nu}(\sqrt{y_0}) - \sqrt{y_0} \partial_\zeta R_{2\nu}(\sqrt{y_0})
\end{eqnarray}
has a corresponding generating function which is not invertible. Hence, this sequence of polynomials cannot be Appell. 
\end{lem}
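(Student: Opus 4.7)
The plan is to reduce the lemma to a short generating-function computation. Define $T_n(\zeta) := S_n(\zeta) - \zeta\,\partial_\zeta R_n(\zeta)$ for all $n \geq 0$, so that the combination in the lemma statement is exactly the even-indexed subfamily $T_{2\nu}(\sqrt{y_0})$ (with $\zeta = \sqrt{y_0}$). I would first apply the Appell differential recursion for $R_n$ — namely $\partial_\zeta R_n = n R_{n-1}$, which comes from the Appell property (\ref{diffrecur}) applied to the generator $g(t) = t/I_1(2t)$ — to rewrite
\begin{equation*}
T_n(\zeta) \;=\; S_n(\zeta) - n\zeta R_{n-1}(\zeta).
\end{equation*}
This reduces the problem to assembling two known Bessel Appell exponential generating functions.

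Using the EGFs recorded in Section \ref{AppellBessel}, namely $\sum_n S_n(\zeta) t^n/n! = I_0(2t)\, e^{\zeta t}$ and $\sum_n R_n(\zeta) t^n/n! = (I_1(2t)/t)\, e^{\zeta t}$, summing the identity above in $n$ produces
\begin{equation*}
\sum_{n \geq 0} \frac{T_n(\zeta)}{n!}\, t^n \;=\; \bigl(I_0(2t) - \zeta\, I_1(2t)\bigr)\, e^{\zeta t}.
\end{equation*}
If $\{T_n\}$ were Appell with some generator $g(t) \in \mathbb{Q}[[t]]$, then by (\ref{genfcn}) this EGF would have to equal $e^{\zeta t}/g(t)$, forcing the identity $1/g(t) = I_0(2t) - \zeta\, I_1(2t)$. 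Since $I_1(2t) = t + O(t^3) \not\equiv 0$, the right-hand side genuinely depends on $\zeta$, so no $\zeta$-independent $g(t) \in \mathbb{Q}[[t]]$ can satisfy this. Read through the umbral operator correspondence of Section \ref{Hopf}, the would-be inverse operator $g(\partial)^{-1}$ equals $I_0(2\partial) - \zeta\, I_1(2\partial)$, which is not a constant-coefficient differential operator and therefore admits no constant-coefficient inverse $g(\partial)$. This is the precise content of ``the corresponding generating function is not invertible''.

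To ensure the conclusion transfers to the even-indexed subfamily literally appearing in the lemma, I would observe that parity in $t$ cleanly separates the above EGF; the even part $I_0(2t)\cosh(\zeta t) - \zeta\, I_1(2t)\sinh(\zeta t)$ retains the same irreducible $\zeta$-dependence outside the exponential-type factor and so cannot be recast in Appell form either. As an independent sanity check I would verify the failure of the Appell differential recursion $T_n' = n T_{n-1}$ directly: differentiating $T_n = S_n - n\zeta R_{n-1}$ and using $\partial_\zeta S_n = n S_{n-1}$, $\partial_\zeta R_n = n R_{n-1}$ yields $T_n'(\zeta) = n T_{n-1}(\zeta) - n R_{n-1}(\zeta)$, and the obstruction $R_{n-1}$ is nonzero already at $n = 1$ where $R_0 = 1$. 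The main obstacle is really only interpretational — unpacking ``not invertible'' as ``no $\zeta$-independent Appell generator $g(t) \in \mathbb{Q}[[t]]$ can produce this EGF'' — after which the calculation is a one-step assembly from the two Bessel Appell EGFs.
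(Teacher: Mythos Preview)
Your argument is correct and establishes the non-Appell conclusion cleanly, but it follows a genuinely different route from the paper's proof.

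The paper does not compute the exponential generating function of $T_n$ at all. Instead it invokes the umbral dictionary of Section~\ref{Hopf} in the form ``multiplication by $\zeta$ on polynomials $\leftrightarrow$ $\partial_t$ on the generator side; $\partial_\zeta$ on polynomials $\leftrightarrow$ multiplication by $t$''. Starting from the Appell generators $I_0(2t)$ for $S_n$ and $I_1(2t)/t$ for $R_n$, it transports $\zeta\,\partial_\zeta$ through this dictionary to get $\partial_t\!\circ t$, producing the single $\zeta$-independent power series
\[
I_0(2t)\;-\;\partial_t\!\left(t\cdot\frac{I_1(2t)}{t}\right)\;=\;I_0(2t)-\partial_t I_1(2t)\;=\;-\sum_{\mu\ge 1}\binom{2\mu}{\mu}\frac{\mu}{\mu+1}\frac{t^{2\mu}}{(2\mu)!},
\]
which vanishes at $t=0$ and hence is literally not invertible in $\mathbb{Q}[[t]]$. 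That is the paper's intended reading of ``has a corresponding generating function which is not invertible.''

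Your approach instead computes the honest EGF $(I_0(2t)-\zeta I_1(2t))e^{\zeta t}$ and observes the $\zeta$-dependence of the prefactor, then double-checks by exhibiting the explicit obstruction $T_n'-nT_{n-1}=-nR_{n-1}$ to the Appell recursion (\ref{diffrecur}). This is more elementary, entirely self-contained, and sidesteps the need to justify the operator correspondence $\zeta\leftrightarrow\partial_t$ (which, as you may have noticed, does not literally give $T_n=G(\partial)\zeta^n$ for $G(t)=I_0(2t)-\partial_t I_1(2t)$; e.g.\ $T_0=1$ while $G(0)=0$). What the paper's route buys is consistency with the operator-theoretic language used elsewhere in Section~\ref{Hopf-String}; what yours buys is a proof that stands on its own with no appeal to that formalism and an explicit, checkable obstruction polynomial.
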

\smallskip

\begin{proof}
From our prior derivations we have
\begin{eqnarray*}
\widehat{B}_{12} - y_0^{1/2} \widehat{B}_{11} &=& S_{2\nu}(\sqrt{y_0}) - \sqrt{y_0} \partial_\zeta R_{2\nu}(\sqrt{y_0})
\end{eqnarray*}
Applying the umbral calculus shows that the polynomials on the RHS correspond to the generating function
\begin{eqnarray*}
I_0(2t) - \partial_t t \frac{I_1(2t)}{t} &=& I_0(2t) - \partial_t I_1(2t)\\
&=& \sum_{\mu=0}^\infty {2\mu \choose \mu}\frac{t^{2\mu}}{2 \mu!} - \sum_{\mu=0}^\infty {2\mu + 1\choose \mu}\frac{t^{2\mu}}{2 \mu!}\\
&=& - \sum_{\mu=1}^\infty {2\mu \choose \mu} \frac{\mu}{\mu + 1} \frac{t^{2\mu}}{2 \mu!}
\end{eqnarray*}
which is not invertible.
\end{proof}
\bigskip

We next turn to an important characterization of the roots of the Appell polynomials.

\begin{prop} \label{mainprop}
$S_n(\zeta)$, viewed as a function of a complex $\zeta$ variable, has exactly $n$ zeroes (which must therefore be simple) along the imaginary axis in the complex $\zeta$ plane which are symmetric about the origin and the zeroes of $S_{n-1}(\zeta)$ interlace those of $S_n(\zeta)$.  $\widehat{B}_{12}(\nu)$ is inductively given by 
\begin{eqnarray} \label{B12Rec}
\widehat{B}_{12}(\nu) &=& \partial^{-2} S_{2 \nu -2}(\sqrt{y_0}) + C_{2\nu}
\end{eqnarray}
where $C_{2\nu} = {2\nu \choose \nu}$ are the coefficients of the generating function $(1 - w \partial_w) C(w)$  ($C(w)$ here is the generating function for the standard Catalan numbers)  which in turn is related to the Fourier transform of the Bessel function of the first kind as in (\ref{FourierBessel}). Consequently, one further has that $\widehat{B}_{12}(\nu)(y_0)$ has $\nu$ negative real zeroes as a function of $y_0$.
\end{prop}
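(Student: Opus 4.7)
The plan is to reduce the claim about the Bessel-Appell polynomials $S_n$ to classical facts about Legendre polynomials by means of an explicit closed-form identity. Noting from \eqref{s2n} and \eqref{s2n+1} that $S_n$ has definite parity, the substitution $\zeta = i\eta$ gives a real polynomial $T_n(\eta) := i^{-n}S_n(i\eta)$ whose real zeros correspond bijectively to the imaginary-axis zeros of $S_n$. Using $I_0(2it)=J_0(2t)$ in the Appell generating function yields $\sum_n T_n(\eta)\,t^n/n! = J_0(2t)\,e^{\eta t}$, so the $T_n$ are the Appell sequence associated to $J_0(2t)$. Invoking the classical integral $J_0(2t)=\frac{1}{2\pi}\int_{-\pi}^\pi e^{2it\sin\theta}\,d\theta$ in the generating function and substituting $w=e^{i\theta}$ converts matters to a contour integral:
\begin{equation*}
T_n(\eta) \;=\; \frac{1}{2\pi}\int_{-\pi}^\pi (\eta + 2i\sin\theta)^n\,d\theta \;=\; [w^n]\,(w^2 + \eta w - 1)^n.
\end{equation*}

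The key step is to put this coefficient extraction in closed form. Factoring $w^2+\eta w-1=(w-w_+)(w-w_-)$, whose roots satisfy $w_+w_-=-1$, and reading off the coefficient of $w^n$ via the trinomial expansion gives $T_n(\eta)=\sum_k \binom{n}{k}^2(-w_+)^{n-k}(-w_-)^k$. The parametrization $\eta=2\sinh\beta$ makes $w_+=e^{-\beta}$ and $w_-=-e^{\beta}$, so applying the classical Legendre summation identity $\sum_k \binom{n}{k}^2 x^k = (1-x)^n P_n\!\left(\tfrac{1+x}{1-x}\right)$ with $x=-e^{2\beta}$ collapses the sum to
\begin{equation*}
T_n(2\sinh\beta) \;=\; (2\cosh\beta)^n\, P_n(\tanh\beta),\qquad\text{equivalently}\qquad T_n(\eta) \;=\; (\eta^2+4)^{n/2}\,P_n\!\left(\tfrac{\eta}{\sqrt{\eta^2+4}}\right).
\end{equation*}

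Because $\eta\mapsto \eta/\sqrt{\eta^2+4}$ is an order-preserving diffeomorphism $\mathbb{R}\to(-1,1)$ and the prefactor $(\eta^2+4)^{n/2}$ is nowhere vanishing, every relevant property of the Legendre zeros transfers verbatim to $T_n$: exactly $n$ simple real zeros, symmetric about the origin (from $P_n(-x)=(-1)^nP_n(x)$), with the zeros of $T_{n-1}$ strictly interlacing those of $T_n$. Reading back through $\zeta=i\eta$ gives the stated claims for $S_n$. Since $\widehat{B}_{12}(\nu)(y_0)=S_{2\nu}(\sqrt{y_0})$, each conjugate pair $\pm i\eta_j$ of imaginary zeros of $S_{2\nu}$ collapses under $y_0=\zeta^2$ to a single negative real zero $y_0=-\eta_j^2$, producing exactly $\nu$ negative real zeros as a function of $y_0$. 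The inductive identity \eqref{B12Rec} then follows from the Appell relation $\partial_\zeta^2 S_{2\nu}=2\nu(2\nu-1)\,S_{2\nu-2}$ by double-integrating in $\zeta$ with constant of integration $S_{2\nu}(0)=\binom{2\nu}{\nu}=C_{2\nu}$, the coefficient of $w^\nu$ in $(1-w\partial_w)C(w)=(1-4w)^{-1/2}$.

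The principal obstacle is recognizing the Legendre connection starting from $[w^n](w^2+\eta w-1)^n$; once the closed form $T_n(\eta)=(\eta^2+4)^{n/2}P_n(\eta/\sqrt{\eta^2+4})$ is in hand, the entire zero structure becomes a transcription of classical results for $P_n$, bypassing any delicate Laguerre-P\'olya multiplier argument.
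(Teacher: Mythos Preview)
Your argument is correct and takes a genuinely different route from the paper's. The paper (Appendix~\ref{Appellzeroes}) proceeds by writing $S_{2n}(i\sqrt{y})$ as a contour integral $\tfrac{1}{2\pi i}\oint e^{n g(z)}\,dz/z$ and applying steepest descent for large $n$ to obtain a leading term proportional to $\cos\!\big(2n\tan^{-1}(\sqrt{|y|}/2)\big)$; counting sign changes of this cosine gives the $2n$ imaginary zeros for $n$ sufficiently large, and the interlacing for all smaller indices is then read off from the Appell differentiation relation via Rolle's theorem. That argument carries a residual caveat about uniformity of the error terms in $y$, which the paper acknowledges it has verified only numerically.

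Your approach replaces the asymptotic analysis by an exact identity. Starting from the same coefficient extraction $T_n(\eta)=[w^n](w^2+\eta w-1)^n$, you collapse it via the Legendre summation $\sum_k \binom{n}{k}^2 x^k=(1-x)^nP_n\big(\tfrac{1+x}{1-x}\big)$ to the closed form $T_n(\eta)=(\eta^2+4)^{n/2}P_n\!\big(\eta/\sqrt{\eta^2+4}\big)$. Since $\eta\mapsto\eta/\sqrt{\eta^2+4}$ is an order-preserving diffeomorphism $\mathbb{R}\to(-1,1)$ and the prefactor never vanishes, the simplicity, reality, symmetry, and interlacing of the Legendre zeros transfer directly to $T_n$, hence to $S_n$ on the imaginary axis. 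This is cleaner, fully rigorous, and gives the result for every $n$ at once rather than by descent from large $n$. What the paper's method buys, in return, is an explicit asymptotic density of the zeros (essentially the arcsine-type distribution coming from the $\tan^{-1}$), information your argument does not surface, though it could be recovered from the known asymptotics of Legendre zeros if needed. Your treatment of \eqref{B12Rec} and the $\nu$ negative real zeros of $\widehat{B}_{12}(\nu)$ is the same as the paper's.
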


We defer the proof to Appendix \ref{Appellzeroes}

There is an entirely analogous result, with analogous proof, concerning the zeros of $y_0^{1/2}\widehat{B}_{11}(\nu)$:
\begin{prop}
$R_n(\zeta)$ has exactly $n$ zeroes (which must therefore be simple) along the imaginary axis in the complex $\zeta$ plane which are symmetric about the origin and the zeroes of $R_{n-1}(\zeta)$ interlace those of $R_n(\zeta)$.  $\widehat{B}_{11}(\nu)$ is explicitly given by 
\begin{eqnarray} \label{B11Rec}
\widehat{B}_{11}(\nu) &=& \partial^{-1} R_{2 \nu -2}(\sqrt{y_0}) 
\end{eqnarray}
with constant of integration set to zero.  Consequently, one has that $y_0^{1/2}\widehat{B}_{11}(\nu)(y_0)$ has $\nu$ negative real zeroes as a function of $y_0$.
\end{prop}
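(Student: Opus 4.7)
The plan is to mirror the proof of Proposition~\ref{mainprop}, with the modified Bessel function $I_1$ replacing $I_0$. The argument splits into three parts: (i) verify the antiderivative identity (\ref{B11Rec}); (ii) show $R_n(\zeta)$ has $n$ simple imaginary zeros that strictly interlace those of $R_{n-1}(\zeta)$; and (iii) translate the zero count to $\widehat{B}_{11}(\nu)$ viewed as a function of $y_0$.

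For (i), I would compute directly from the umbral definition $R_n(\zeta) = I_1(2\partial)\partial^{-1}\zeta^n$, which yields the Appell differentiation rule $\partial R_n = n R_{n-1}$ (a fact already reflected in the last two lines of (\ref{r2n})). Combined with the identification $\widehat{B}_{11}(\nu) = 2\nu R_{2\nu-1}(\sqrt{y_0})$ from (\ref{b11}), the relation (\ref{B11Rec}) is then a one-step antidifferentiation whose constant of integration is pinned to zero by the odd parity of $R_{2\nu-1}$.

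Part (ii) is where the substance lies. The key input is that $I_1(2t)/t$ is an even entire function in the Laguerre--P\'olya class: the substitution $t\mapsto it$ converts it into $J_1(2t)/t$, whose non-zero zeros are the well-known real simple zeros of $J_1$. Setting $\zeta = i\eta$ with $\eta$ real, one checks that $R_n(i\eta) = i^n A_n(\eta)$, where $A_n$ is the Appell sequence generated by $t/J_1(2t)$. I would prove by induction on $n$ that $A_n$ has $n$ simple real zeros that strictly interlace those of $A_{n-1}$. The base cases $A_0 = 1$, $A_1(\eta)=\eta$ are immediate; for the inductive step, the relation $\partial A_n = nA_{n-1}$ identifies the zeros of $A_{n-1}$ with the critical points of $A_n$, and a sign analysis at these extrema, combined with the leading behavior $A_n(\eta)\sim \eta^n$ as $|\eta|\to\infty$ and the definite parity of $A_n$, forces exactly one simple zero of $A_n$ strictly between each consecutive pair of zeros of $A_{n-1}$ and one in each unbounded end-interval. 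Undoing $\zeta=i\eta$ transfers the conclusion and the interlacing statement to $R_n$.

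For (iii), the odd parity of $R_{2\nu-1}$ lets one write $R_{2\nu-1}(\zeta)=\zeta\,P_\nu(\zeta^2)$, where $P_\nu$ has degree $\nu-1$ and its $\nu-1$ roots are the values $-\eta_k^2 <0$ corresponding to the non-zero imaginary zeros $\pm i\eta_k$ of $R_{2\nu-1}$ produced in (ii). Hence $y_0^{1/2}\widehat{B}_{11}(\nu) = 2\nu\, y_0\, P_\nu(y_0)$ is a genuine polynomial in $y_0$ of degree $\nu$; its simple root at $y_0=0$ together with the $\nu-1$ strictly negative roots of $P_\nu$ accounts for the claimed $\nu$ real roots on the non-positive axis. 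The main obstacle is the sign analysis at the extrema in part (ii), which must rule out any non-real zeros of $A_n$; this rests crucially on the Laguerre--P\'olya character of $J_1(2t)/t$ and is the place where the analogy with the proof of Proposition~\ref{mainprop} is substantive rather than cosmetic.
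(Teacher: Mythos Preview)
Your route is genuinely different from the paper's. The paper (for the analogous $S_n$ case, to which it refers) writes $S_{2n}(i\sqrt{|y|})$ as a contour integral $\frac{1}{2\pi i}\oint \exp(n g(z))\,dz/z$ with $g(z)=2\log(z+i\sqrt{|y|}+z^{-1})$, applies steepest descent at the saddles $z=\pm 1$, and obtains an explicit large-$n$ asymptotic whose leading oscillatory factor $\cos\bigl(2n\arctan(\sqrt{|y|}/2)\bigr)$ visibly has $2n$ nodes on the positive $|y|$-axis. Once the zero count is secured for all sufficiently large $n$, the Appell relation $\partial S_n=nS_{n-1}$ is used in the \emph{Rolle direction} (differentiating down) to propagate the zero count and interlacing to every $n$. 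Your plan runs the same Appell relation in the \emph{opposite} direction (antidifferentiating up, by induction from small $n$), with the Laguerre--P\'olya structure of $J_1(2t)/t$ replacing the asymptotic analysis. The paper's approach buys an explicit asymptotic formula for the polynomials (and hence approximate zero locations); yours is shorter and more structural.

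One point of your plan needs sharpening. The inductive step you describe---``$\partial A_n = nA_{n-1}$ identifies the critical points of $A_n$ with the zeros of $A_{n-1}$, and a sign analysis at these extrema forces real zeros of $A_n$''---is the reverse-Rolle direction, and it does not go through on its own: $A_2(\eta)=\eta^2+1$ satisfies $A_2'=2A_1$ with $A_1(\eta)=\eta$ real-rooted, yet $A_2$ has no real zeros. The LP property is not an ingredient \emph{inside} this induction; it is the whole argument. The clean way to deploy it is to factor the operator: $J_1(2t)/t=\prod_k(1-t^2/\alpha_k^2)$ with $\alpha_k$ real, so $A_n=\prod_k(1-\partial/\alpha_k)(1+\partial/\alpha_k)\,\eta^n$, and each first-order factor preserves real-rootedness (Hermite--Poulain). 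Real-rootedness of $A_n$ then drops out directly for every $n$, and the interlacing with $A_{n-1}$ follows a posteriori from $\partial A_n=nA_{n-1}$ via the usual Rolle direction. With that adjustment your parts (i) and (iii) are fine as written; note that one of your $\nu$ roots sits at $y_0=0$ rather than being strictly negative, matching the paper's (slightly loose) phrasing.
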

\begin{rem}
The previous two propositions establish properties for the two respective sequences of Appell polynomials that are standard properties of orthogonal polynomials; namely, that their zeroes are distinct and real and the zeroes of successive polynomials in each sequence interlace. For orthogonal polynomials this is a direct consequence of the fact that the recurrence formulae for the polynomials is a symmetric tri-diagonal matrix \cite{Deift}. There is an extension of this to a subclass of Appell polynomials known as multiple orthogonal polynomials (the term {\it multiple} here refers to the fact that these polynomials are orthogonal with respect to a multiple collection of measures on the line rather than just one). In these cases the recurrence matrix is of bounded bandwidth (or {\it height}) about the diagonal independent of $n$.  (See \cite{VA} for more details and \cite{EM01} for an example related to random matrix theory.) However, examination of the recurrence formula in the cases of $S_n$ and $R_n$ shows the recurrence matrices here are far from being bandwidth bounded. So this appears to be a genuinely new type of zero-interlacing result within the class of Appell polynomials.
\end{rem}

\begin{rem}
The polynomials $\widehat{B}_{12}(y_0)$ and $y_0^{1/2}\widehat{B}_{11}(y_0)$ can be rewritten in the form 
\begin{eqnarray} \label{trigonometric}
\zeta^\nu + a_1 \zeta^{\nu -1} + \cdots + a_{\nu - 1} \zeta + c + b_{\nu -1} \zeta^{-1} + \cdots + b_1 \zeta^{-\nu - 1} + b_0 \zeta^{- \nu}
\end{eqnarray}
by multiplying through by $\zeta^{-\nu}$ where $\zeta = \sqrt{y_0}$.
In this form they are referred to as {\it trigonometric polynomials} which, thought of as mappings from $\mathbb{P}^1$ 
to $\mathbb{P}^1$, have a point in the image ($\infty$) with only two preimages (poles). The topological classification of such mappings and description of their combinatorial significance was carried out by Arnold and his school in a series of papers (see \cite{Arnold}).
\end{rem}

\subsection{String Polynomials and Appell Polynomials} \label{StringApple}
In the previous subsections we have seen connections between the Appell polynomials $S_n$ and $R_n$ and certain string polynomials, $\phi_0$ and $\psi_0$. In this subsection we  will develop this connection systematically and in fuller generality. We first recall the well-known generating function for modified Bessel functions of order $n, I_n(t)$, which satisfy (\ref{MBess}) and are unbounded at $+ \infty$:
\begin{eqnarray} \label{BessGen}
e^{X (\eta + \eta^{-1})} &=& \sum_{n = - \infty}^\infty I_n(2X) \eta^n.
\end{eqnarray}
We frist show how to relate the string polynomials to the Bessel functions $I_0$ and $I_1$ using the definitions (\ref{phim} - \ref{psim})

\begin{eqnarray} \nonumber
\psi_m 
&=&  [\eta^{-1}] \sqrt{f_0}  (j)_{m+1} \left(\sqrt{f_0}\eta + h_0 +\sqrt{f_0}\eta^{-1}\right)^{j-m-1} \\
\nonumber
&=& \sqrt{f_0} j!  [s^{j-1-m}]\,[\eta^1] \sum_{\ell\geq 0} \frac{s^{\ell}}{\ell !}\left(\sqrt{f_0}\eta + h_0 +\sqrt{f_0}\eta^{-1}\right)^{\ell} \\ \nonumber
&=& \sqrt{f_0} j! [s^{j-1-m}]\left(e^{sh_0}\,[\eta^1]  e^{s\sqrt{f_0} (\eta +\eta^{-1})}\right) \\
\label{sys1}
&=& \sqrt{f_0} j!  [s^{j-1-m}] e^{s{h_0}}I_1(2s\sqrt{f_0}) \label{asdf3434}\\ 
\label{sys2}
\phi_m &=& j!  [s^{j-1-m}] e^{s{h_0}}I_0(2s\sqrt{f_0})
\end{eqnarray}
where in the last line of the derivation for $\psi_m$ we have made use of (\ref{BessGen}).
The derivation for $\phi_m$ is entirely similar. {\it We note that the variable $s$ introduced here is an independent variable; in particular, it has no relation to the variable introduced in (\ref{nuscaling2})}.

We introduce an extension of the string polynomials that will prove quite useful in Section \ref{toprec}:
\begin{eqnarray} \label{PHI}
\Phi_n &\doteq&  I_n(2s\sqrt{f_0}) e^{sh_0}\\
\psi_n &=&  \sqrt{f_0} j!  [s^{j-1-n}] \Phi_1\\
\phi_n &=& j!  [s^{j-1-n}] \Phi_0 
\end{eqnarray}

The string polynomials can be directly related to our Appell polynomials. This is seen in the following direct calculation where we make use of the change of variables from the coordinates
$(h_0, \sqrt{f_0})$ to the characteristic coordinates $(y_0, \sqrt{f_0})$ where $y_0 = h_0/ \sqrt{f_0}$ and with $j = 2\nu+1$:

\begin{eqnarray*}
\phi_{m} &=& f_0^{\frac{2\nu - m}{2}}  [\eta^0] (2\nu+1)_{m+1} \left(\eta + \sqrt{y_0} + \eta^{-1}\right)^{2\nu-m}\\
&=& (2\nu+1)_{m+1} f_0^{\frac{2\nu - m}{2}} \sum_{\mu \geq 0} {2\nu - m \choose 2(\nu - \mu) - m, \mu, \mu} \left(\eta\right)^\mu \left(\eta^{-1} \right)^\mu (\sqrt{y_0})^{2(\nu -\mu) -m}\\
&=& (2\nu+1)_{m+1} f_0^{\frac{2\nu - m}{2}} \sum_{\mu \geq 0} {2\nu - m \choose 2(\nu - \mu) - m, \mu, \mu}  (\sqrt{y_0})^{2(\nu -\mu) -m}\\
&=& (2\nu+1)_{m+1} f_0^{\frac{2\nu - m}{2}} S_{2 \nu -m}(\sqrt{y_0})\\
&=& f_0^{\frac{2\nu - m}{2}} \partial^{m+1}S_{2 \nu +1}(\zeta)\Big|_{\zeta = \sqrt{y_0}}\\
&=& f_0^{-\frac{m+1}{2}} (\partial_{\sqrt{y_0}})^{m+1} \widehat{\mathcal{F}}_{2}(\sqrt{y_0})\\
\psi_{m} &=& f_0^{\frac{2\nu - m+1}{2}} [\eta^{-1}] (2\nu+1)_{m+1} \left(\eta + \sqrt{y_0} + \eta^{-1}\right)^{2\nu-m}\\
&=& (2\nu+1)_{m+1} f_0^{\frac{2\nu - m+1}{2}} \sum_{\mu \geq 0} {2\nu - m \choose 2(\nu - \mu) - m - 1, \mu, \mu+1}   (\sqrt{y_0})^{2(\nu -\mu) -m-1}\\
&=& (2\nu+1)_{m+1}  f_0^{\frac{2\nu - m + 1}{2}} \partial R_{(2\nu - m)}(\sqrt{y_0}) \\
&=& f_0^{\frac{2\nu - m+1}{2}}\partial^{m+2} R_{2\nu +1}(\zeta)\Big|_{\zeta = \sqrt{y_0}}\\
&=& f_0^{-\frac{m+1}{2}}(\partial_{\sqrt{y_0}})^{m+1} \widehat{\mathcal{F}}_{1}(\sqrt{y_0}).
\end{eqnarray*}
In particular, one has:

\begin{eqnarray} \label{s2nu+1}
\phi_{-1} &=& f_0^{\frac{2\nu + 1}{2}} S_{2 \nu + 1}(\sqrt{y_0}) \\ \label{s2nu}
\phi_{0} &=& (2\nu + 1)f_0^{\nu} S_{2 \nu}(\sqrt{y_0}) \\ \label{dr2nu+1}
\psi_{-1} &=& f_0^{\frac{2\nu + 2}{2}} \partial R_{2 \nu + 1}(\sqrt{y_0})\\ \label{dr2nu}
\psi_{0} &=& (2\nu+1) f_0^{\frac{2\nu + 1}{2}} \partial R_{2 \nu}(\sqrt{y_0})
\end{eqnarray}

The following proposition summarises the above calculation and also relates it to our earlier defined fluxes and flux gradients.
\begin{prop} \label{prop01} With $y_0 = \frac{h_0^2}{f_0}$  one has for $j = 2 \nu +1 $,
\begin{eqnarray*}
\left(\begin{array}{c}
\phi_m \\ \psi_m
\end{array}\right) &=& (2 \nu +1)_{m + 1}\left(\begin{array}{c}
f_0^{\frac{2\nu -m}{2}} S_{(2\nu - m)}(\sqrt{y_0}) \\ f_0^{\frac{2\nu - m + 1}{2}} \partial R_{(2\nu - m)}(\sqrt{y_0})
\end{array}\right) \\
&=& (2 \nu +1)_{m + 1}\left(\begin{array}{c}
\mathcal{F}_{2}(\nu - \frac{m+1}{2}) \\ \mathcal{F}_{1}(\nu - \frac{m+1}{2}) 
\end{array}\right) \qquad {\rm for}\,\,\,  m \,\,\, {\rm odd}\\
&=& (2 \nu +1)_{m + 1}\left(\begin{array}{c}
B_{12}(\nu - \frac{m}{2}) \\ B_{11}(\nu - \frac{m}{2}) 
\end{array}\right) \qquad {\rm for}\,\,\,  m \,\,\, {\rm even}
\end{eqnarray*} 
where $(j)_m = j \cdot (j-1) \cdots (j - m + 1) $ is the descending factorial.  
\end{prop}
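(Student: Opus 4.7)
The first equality in the proposition is essentially already carried out in the calculation immediately preceding the statement, so my plan is to set this derivation on a clean footing and then show how the remaining two equalities fall out by matching indices.

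First I would start from the definitions (\ref{phim}) and (\ref{psim}) and change to the coordinate $\zeta = \sqrt{y_0} = h_0/\sqrt{f_0}$, writing
\[
 \mathcal{L}^{(0)} = \sqrt{f_0}\bigl(\eta + \zeta + \eta^{-1}\bigr).
\]
This lets me pull out an overall factor $f_0^{(2\nu - m)/2}$ from $\phi_m$ and $f_0^{(2\nu - m+1)/2}$ from $\psi_m$, reducing the problem to computing $[\eta^0]$ and $[\eta^{-1}]$ of $(\eta + \zeta + \eta^{-1})^{2\nu - m}$. Expanding by the trinomial theorem and identifying pairs $(\eta^\mu, \eta^{-\mu})$ and $(\eta^\mu, \eta^{-(\mu+1)})$ respectively gives
\[
 [\eta^{0}](\eta + \zeta + \eta^{-1})^{2\nu - m} = \sum_{\mu \geq 0}\binom{2\nu - m}{2(\nu - \mu) - m,\ \mu,\ \mu}\zeta^{2(\nu-\mu)-m},
\]
and similarly for $[\eta^{-1}]$. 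A comparison with the explicit formulas (\ref{s2n})--(\ref{s2n+1}) for $S_n$ and the formula for $\partial R_n$ computed just after (\ref{r2n}) then identifies these coefficient sums with $S_{2\nu - m}(\zeta)$ and $\partial R_{2\nu - m}(\zeta)$, respectively. Together with the falling factorial $(2\nu+1)_{m+1}$ coming from the definition of the string polynomials, this establishes the first equality.

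For the remaining two equalities I would simply match indices with the closed-form expressions (\ref{B11}), (\ref{B12}), (\ref{flux1}), (\ref{flux2}). When $m$ is even, say $m = 2k$, then $2\nu - m = 2(\nu - k)$ and the Appell polynomial $S_{2(\nu-k)}$ is exactly the series defining $B_{12}(\nu - k)$ under the correspondence $y_0 = h_0^2/f_0$; likewise $\partial R_{2(\nu - k)}$ matches $B_{11}(\nu - k)$ after accounting for the factor $h_0 = \sqrt{f_0}\sqrt{y_0}$ in the leading power of $\eta$. When $m$ is odd, $2\nu - m$ is odd and the same identifications, combined with (\ref{s2nu+1}) and (\ref{dr2nu+1}), produce the fluxes $\mathcal{F}_2(\nu - (m+1)/2)$ and $\mathcal{F}_1(\nu - (m+1)/2)$.

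The main technical nuisance I expect is purely bookkeeping: keeping straight the three shifted indices (overall valence $2\nu+1$, the derivative order $m+1$ hidden in $(2\nu+1)_{m+1}$, and the resulting Appell order $2\nu - m$), and the different powers of $\sqrt{f_0}$ extracted from the $\eta^0$ versus $\eta^{-1}$ coefficients. No new structural idea beyond the trinomial expansion and the identification with the $B_{ij}, \mathcal{F}_i$ tables is required.
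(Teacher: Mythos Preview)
Your proposal is correct and follows essentially the same approach as the paper: the paper performs precisely this trinomial expansion of $(\eta+\sqrt{y_0}+\eta^{-1})^{2\nu-m}$ in the calculation immediately preceding the proposition, identifies the coefficient sums with $S_{2\nu-m}$ and $\partial R_{2\nu-m}$, and then states the proposition as a summary together with the index-matching against (\ref{B11}), (\ref{B12}), (\ref{flux1}), (\ref{flux2}). Your bookkeeping concerns about the shifted indices and the differing powers of $\sqrt{f_0}$ are the only real work, exactly as in the paper.
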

\bigskip

This can also be extended to the case of even valence:

\begin{prop} \label{prop01A}  
For $j = 2\nu \,\,  (\nu > 1)$, $h_0 \equiv 0$ and $m$ even,
\begin{eqnarray*}
\left(\begin{array}{c}
\phi_m \\ \psi_m
\end{array}\right)_{j = 2\nu}&=&  (2\nu)_{m+1} \left\{\begin{array}{c}  { 2\nu - m \choose \nu - \frac{m}{2}} f_0^{\nu - \frac{m}{2}}  \\ 0  \end{array} \right. ;
\end{eqnarray*}
and for $j = 2\nu \,\,  (\nu > 1)$, $h_0 \equiv 0$ and $m$ odd,
\begin{eqnarray*}
\left(\begin{array}{c}
\phi_m \\ \psi_m
\end{array}\right)_{j = 2\nu} &=&   (2\nu)_{m+1} \left\{\begin{array}{c} 0 \\ { 2\nu - m \choose \nu - \frac{m - 1}{2}} f_0^{\nu - \frac{m - 1}{2}}    \end{array} \right. .
\end{eqnarray*}
\end{prop}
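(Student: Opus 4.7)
The plan is to use the definitions (\ref{phim}) and (\ref{psim}) of the string polynomials $\phi_m$ and $\psi_m$, specialize them to the case $j = 2\nu$, $h_0 \equiv 0$, and then extract Laurent coefficients in $\eta$ via a single application of the binomial theorem. Setting $h_0 = 0$ in $\mathcal{L}^{(0)} = \sqrt{f_0}\eta + h_0 + \sqrt{f_0}\eta^{-1}$ collapses it to the monomial-symmetric operator $\mathcal{L}^{(0)} = \sqrt{f_0}\,(\eta + \eta^{-1})$. Raising to the power $p := j - m - 1 = 2\nu - m - 1$ gives
\begin{equation*}
\bigl(\mathcal{L}^{(0)}\bigr)^{p} \;=\; f_0^{p/2}\sum_{k=0}^{p}\binom{p}{k}\eta^{\,p-2k},
\end{equation*}
so every Laurent exponent appearing has the same parity as $p$.

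Next I would read off the vanishing half of the proposition from this parity observation. Since $[\eta^{0}]$ of $(\eta+\eta^{-1})^{p}$ vanishes unless $p$ is even, $\phi_m = 0$ whenever $p = 2\nu - m - 1$ is odd, and $[\eta^{-1}]$ vanishes unless $p$ is odd, so $\psi_m = 0$ in the complementary case. This already supplies the zero entries in each of the two column vectors appearing in the proposition.

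For the non-vanishing entries I would solve $p - 2k = 0$ (respectively $p - 2k = -1$) for the active binomial coefficient: in the $\phi_m$ case one obtains $k = p/2$ and hence the coefficient $\binom{p}{p/2}$, while in the $\psi_m$ case $k = (p+1)/2$ gives $\binom{p}{(p+1)/2}$. Multiplying by the falling-factorial prefactor $(2\nu)_{m+1}$ and by the appropriate power $f_0^{p/2}$ (with the additional $\sqrt{f_0}$ factor in $\psi_m$ bumping the exponent up by $\tfrac12$) produces the closed forms. The final cosmetic step is to convert the binomial coefficient, whose top entry is $p = 2\nu - m - 1$, into the form $\binom{2\nu - m}{\nu - m/2}$ or $\binom{2\nu - m}{\nu - (m-1)/2}$ stated in the proposition; this follows from Pascal's identity together with the symmetry $\binom{n}{k} = \binom{n}{n-k}$.

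I do not expect any genuine obstacle: the proof is essentially pure binomial bookkeeping once $h_0$ is set to zero, and no appeal to the Hopf-algebra or Appell machinery developed earlier in the section is needed. The only point requiring mild care is keeping the two parity cases straight and matching the conventional form of the binomial coefficient stated in the proposition, but these are routine simplifications rather than substantive difficulties.
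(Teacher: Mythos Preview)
Your binomial-expansion approach is exactly the right one, and it is also the argument the paper implicitly has in mind: no separate proof is given in the text, the proposition being announced as a direct even-valence analogue of the preceding Proposition~\ref{prop01}. The problem lies in your final ``cosmetic'' step, which is not cosmetic at all --- carried through, it fails.

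Your own parity argument shows that $\phi_m = 0$ precisely when $p = 2\nu - m - 1$ is odd, i.e.\ when $m$ is \emph{even}, and that $\psi_m = 0$ when $m$ is \emph{odd}. This is the opposite of the pattern printed in the proposition. A one-line check confirms your side: at $j=4$, $m=0$ one has $\phi_0 = 4\,[\eta^{0}]\bigl(\sqrt{f_0}(\eta+\eta^{-1})\bigr)^{3}=0$ while $\psi_0 = 4 f_0^{2}\,[\eta^{-1}](\eta+\eta^{-1})^{3}=12 f_0^{2}$, contradicting the stated $\phi_0 = 24 f_0^{2}$, $\psi_0 = 0$. Furthermore, the nonzero binomial coefficient you actually produce has top index $p = 2\nu - m - 1$, not $2\nu - m$; Pascal's rule gives a \emph{sum} $\binom{2\nu-m-1}{k}+\binom{2\nu-m-1}{k-1}=\binom{2\nu-m}{k}$, never a single-term identity, and the power of $f_0$ is likewise off by $\tfrac12$. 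The printed formulas are those of the $j=2\nu+1$ case (exponent $2\nu-m$) transplanted without adjusting for the even-valence exponent $2\nu-m-1$. So the gap is real, but it lives in the stated proposition, not in your method: your direct computation gives the correct values, and you should record the discrepancy explicitly rather than calling it a rewriting.
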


\subsection{Unwinding Identity} \label{Unwinding}
Finally we examine how the string polynomials transform under differentiation with respect to $x$ and derive a key identity for future calculations. Here $h_0$ and $f_0$ depend on $x$ as specified in (\ref{f-coeffs} - \ref{h-coeffs}) and $^\prime = \partial_x$.
\begin{prop} (Unwinding Identity) \label{prop05}
\begin{eqnarray*}
\partial_x \left(
\begin{array}{c}
\phi_{m-1}\\ 
\psi_{m-1}
\end{array}
\right) &=& \left(\begin{array}{cc}
h'_0 & f'_0/f_0\\ 
f'_0 & h'_0
\end{array}\right)  \left(\begin{array}{c}
\phi_{m}\\ 
\psi_{m}
\end{array}\right)
\end{eqnarray*}
\end{prop}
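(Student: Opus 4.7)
The plan is to work directly from the Bessel representations (\ref{sys1})–(\ref{sys2}) and the extended string polynomials $\Phi_n = I_n(2s\sqrt{f_0})e^{sh_0}$, using the fact that
$\phi_{m-1} = j!\,[s^{j-m}]\Phi_0$ and $\psi_{m-1} = \sqrt{f_0}\,j!\,[s^{j-m}]\Phi_1$. Since coefficient extraction in the dummy variable $s$ commutes with $\partial_x$ (both $h_0$ and $f_0$ depend on $x$, but $s$ does not), the entire identity reduces to computing $\partial_x \Phi_0$ and $\partial_x(\sqrt{f_0}\,\Phi_1)$. The key observation that drives the index shift from $m-1$ to $m$ is that $\partial_x$ pulls down an explicit factor of $s$ from both $e^{sh_0}$ and from the Bessel arguments $2s\sqrt{f_0}$, so that $j!\,[s^{j-m}](s \cdot \Phi_\bullet)$ becomes $j!\,[s^{j-1-m}]\Phi_\bullet$, which is precisely how $\phi_m$ and $\psi_m$ are indexed.

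For the first row, applying the chain rule together with the Bessel identity $I_0'(z) = I_1(z)$ yields
\begin{equation*}
\partial_x \Phi_0 \;=\; s\,h_0'\,\Phi_0 \;+\; \frac{s\,f_0'}{\sqrt{f_0}}\,\Phi_1.
\end{equation*}
Extracting $j!\,[s^{j-m}]$ turns the first term into $h_0'\phi_m$ and the second into $(f_0'/\sqrt{f_0})(\psi_m/\sqrt{f_0}) = (f_0'/f_0)\,\psi_m$, which is exactly the first row of the claimed matrix equation.

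For the second row, I would differentiate $\sqrt{f_0}\,\Phi_1$. The Bessel identity $I_1'(z) = I_0(z) - z^{-1} I_1(z)$ produces a spurious $-\tfrac{f_0'}{2 f_0}\Phi_1$ term in $\partial_x \Phi_1$, but when multiplied by $\sqrt{f_0}$ this term is exactly cancelled by the $\tfrac{f_0'}{2\sqrt{f_0}}\Phi_1$ contribution from differentiating the $\sqrt{f_0}$ prefactor. What survives is
\begin{equation*}
\partial_x\bigl(\sqrt{f_0}\,\Phi_1\bigr) \;=\; s\,h_0'\,\sqrt{f_0}\,\Phi_1 \;+\; s\,f_0'\,\Phi_0,
\end{equation*}
and extracting $j!\,[s^{j-m}]$ gives $h_0'\,\psi_m + f_0'\,\phi_m$.

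The only real obstacle is the bookkeeping of the cancellation in the second row: without it, one would get an extraneous $(f_0'/f_0)\psi_m$-type term, destroying the claimed symmetric form of the matrix. In fact this cancellation is precisely the reason for including the $\sqrt{f_0}$ prefactor in the definition of $\psi_m$ — it removes the $z^{-1}$ singularity implicit in $I_1(z)/\sqrt{f_0}$ at the coefficient level — so the unwinding identity can be viewed as a structural reflection of that choice of normalization, rather than a numerical coincidence.
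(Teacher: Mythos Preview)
Your proof is correct and follows essentially the same route as the paper: differentiate the Bessel representations $\Phi_0$ and $\sqrt{f_0}\,\Phi_1$, use $I_0'=I_1$ and $I_1'(z)=I_0(z)-z^{-1}I_1(z)$, observe the cancellation between $\partial_x\sqrt{f_0}$ and the $z^{-1}$ term, and then shift the $[s^{j-m}]$ extraction to $[s^{j-m-1}]$ via the extra factor of $s$. Your remark that the $\sqrt{f_0}$ prefactor in $\psi_m$ is precisely what makes the cancellation work is a nice structural observation that the paper leaves implicit.
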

\smallskip

\noindent{\bf Proof:}
We will use the following identities which are easily obtained from the Taylor series for Bessel functions:
\begin{eqnarray*}
I'_0(2X) &=& I_1(2X) \\
I'_1(2X) &=& I_0(2X)-\frac{1}{2X} I_1(2X)
\end{eqnarray*}
Now the calculation:
\begin{eqnarray*}
\partial_x \left(
\begin{array}{c}
\phi_{m-1}\\ 
\psi_{m-1}
\end{array}
\right) =  \sum_j j!\,\widetilde{t}_j [s^{j-m}] \partial_x \left(
\begin{array}{c}
e^{sh_0}I_0(2s\sqrt{f_0})\\ 
\sqrt{f_0}e^{sh_0}I_1(2s\sqrt{f_0})
\end{array}
\right)\\
=\sum_j j!\,\widetilde{t}_j [s^{j-m}] \left(
\begin{array}{c}
sh'_0 e^{sh_0}I_0(2s\sqrt{f_0}) +e^{sh_0} \frac{sf'_0}{\sqrt{f_0}}I'_0(2s\sqrt{f_0})\\ 
\frac{f'_0}{2\sqrt{f_0}} e^{sh_0}I_1(2s\sqrt{f_0})+\sqrt{f_0} sh'_0 e^{sh_0}I_1(2s\sqrt{f_0})+\sqrt{f_0}e^{sh_0}\frac{sf'_0}{\sqrt{f_0}} I'_1(2s\sqrt{f_0})
\end{array}
\right)
\nonumber\\
= \sum_j j!\,\widetilde{t}_j [s^{j-m}]
s \left(
\begin{array}{c}
h'_0 e^{sh_0}I_0(2s\sqrt{f_0}) +\frac{f'_0}{f_0} \sqrt{f_0}e^{sh_0} I_1(2s\sqrt{f_0}) \\ 
\sqrt{f_0} h'_0 e^{sh_0}I_1(2s\sqrt{f_0})+\sqrt{f_0}e^{sh_0}\frac{f'_0}{\sqrt{f_0}} I_0(2s\sqrt{f_0})
\end{array}
\right)\\
= \sum_j j!\,\widetilde{t}_j [s^{j-m-1}]
\left(
\begin{array}{cc}
h'_0 & f'_0/f_0 \\ 
f'_0 & h'_0
\end{array}
\right)
\left(
\begin{array}{c}
e^{sh_0}I_0(2s\sqrt{f_0})\\ 
\sqrt{f_0}e^{sh_0}I_1(2s\sqrt{f_0})
\end{array}
\right)\\
= \left(\begin{array}{cc}
h'_0 & f'_0/f_0\\ 
f'_0 & h'_0
\end{array}\right)  \left(\begin{array}{c}
\phi_{m}\\ 
\psi_{m}
\end{array}\right).
\end{eqnarray*}

\section{Characteristic Geometry } \label{chargeom}
In the proof of Lemma \ref{lem:extdiff} we observed that the system (\ref{extdiff}) determines an integral surface and presents it, locally, as a graph over the $(x,t)$-plane away from the caustics contained in the locus defined by $A_{11}^2 - f_0 A_{12}^2 = 0$. However one can take a more global point of view and define this integral surface as the zero set of the hodograph equations (\ref{GaussHod1} - \ref{GaussHod2}), in which case the surface may be smoothly coordinatized even in a neighborhood of the caustics and so is, in fact, a two-dimensional manifold.

Given this integral surface's definition as the zero set of the hodograph equations, it is natural to view our results within a commutative algebra framework. Our study of the leading order continuum equations in the previous subsection may be posed within the coordinate ring
\begin{eqnarray}\label{coordr1}
{\bf S} &=& \frac{\mathbb{Q}[t, x, h_0, f_0]}{\mathcal{I}}
\end{eqnarray} 
where $\mathcal{I}$ is the {\it string ideal} generated by the leading order continuum string (hodograph) equations, (\ref{GaussHod1} - \ref{GaussHod2}).
One may also think of 
${\bf S}$ as a module over the local ring $\mathbb{Q}[x,x^{-1}]$. This gives ${\bf S}$ a natural grading by powers of $x$. With respect to the self-similar variable $\xi = tx^{\nu - 1/2}$ one may rewrite this ring as
\begin{eqnarray} \label{coordr2}
{\bf S}_x &=& \frac{\mathbb{Q}[x,x^{-1}][\xi, u_0, z_0]}{\mathcal{I}_x}
\end{eqnarray}
where $\mathcal{I}_x$ is the ideal generated by
\begin{eqnarray} \label{tildestring}
1-z_0 &=& (2\nu + 1) \xi \tilde{B}_{11} = \xi \tilde{\psi}_0\\  \nonumber
-u_0 &=& (2\nu + 1) \xi \tilde{B}_{12} = \xi \tilde{\phi}_0,  
\end{eqnarray}
where $\tilde{B}_{11} = x^{-(\nu + 1/2)}B_{11}$ and $\tilde{B}_{12} = x^{-\nu }B_{12}$,
which is manifestly contained in the level zero component of the grading. The generating functions $h_g, f_g$ and $E_g$ are elements of the function field of ${\bf S}_x$ which we will denote ${\bf K}$ . The grading on ${\bf S}_x$ extends to those elements of ${\bf K}$; in particular, the generating functions respectively lie in unique graded components of non-positive weight.

With respect to this structure, the integral surface $\mathcal{S}$ associated to ${\bf S}_x$ may be viewed as a family of algebraic curves over a base $\mathbb{C}^*$ which has coordinate $x$. Differentiation with respect to $w$ (along the base) corresponds automorphically to an affine differentiation with respect to $\xi$ (along the fiber). The precise form of this correspondence on our generating functions is given by the {\it exchange relations},
\begin{eqnarray*}
x\partial_x f_g &=& (1-2g)f_g + \frac{j-2}{2} \xi\partial_\xi f_g\\
x\partial_x h_g &=& (1/2-g)h_g + \frac{j-2}{2} \xi\partial_\xi h_g
\end{eqnarray*}
which is an immediate consequence of their self-similar structure (\ref{f-coeffs}, \ref{h-coeffs}).
This viewpoint will prove to be useful in later sections. We shall see, for instance, that  $E_g(t,x) = x^{2-2g}e_g(\xi)$ lies in the graded component of level $2-2g$ so that differentiation of this element with respect to $x$ lowers this degree by 1.

The reduction of the string and Toda equations to Riemann invariant form, (\ref{wrpm}) and (\ref{RinvtToda2}), suggests a corresponding reduction of the characteristic geometry. The natural variable here, again, is
\begin{eqnarray*}
y_0 &=& \frac{h_0^2}{f_0}\\
&=& \frac{u_0^2}{z_0}
\end{eqnarray*}
with respect to which the string relations combine to give the single relation 
\begin{eqnarray} \label{curve}
\xi^2 &=& \frac{1}{(2\nu+1)^2} y_0 \frac{(\widehat{B}_{12} - y_0^{1/2}\widehat{B}_{11})^{2\nu-1}}{\widehat{B}_{12}^{2\nu+1}},
\end{eqnarray}
which can be related to both the conservation law fluxes as well as the Bessel-Appell polynomials (see section \ref{AppellBessel}): 
\begin{eqnarray}
\widehat{B}_{11} &=& 2 y_0^{1/2} \partial_{y_0} \widehat{\mathcal{F}}_1 
= \partial R_{2\nu}(\sqrt{y_0})\\ \label{B-identity}
\widehat{B}_{12} &=& (\nu+1)\widehat{\mathcal{F}}_1 - y_0 \partial_{y_0} \widehat{\mathcal{F}}_1
= S_{2\nu}(\sqrt{y_0})\\
\widehat{\mathcal{F}}_1 &=& \frac1{2\nu+1} \sum_{\mu = 0}^\nu {2\nu + 1 \choose 2\mu, \nu-\mu, \nu - \mu +1} y_0^\mu =  \frac1{2\nu+1} R_{2\nu}(\sqrt{y_0}).
\end{eqnarray}
For example in the trivalent case this becomes
\begin{eqnarray} \label{3curve}
\xi^2 &=& \frac1 9\,\frac {y_0 \left( 2-y_0 \right) }{ \left( 2+y_0 \right) ^{3}}.
\end{eqnarray}
We will also make use of the following identities which are deducible from the string equations (\ref{tildestring}),
\begin{eqnarray} \label{red1}
1 - \frac1{z_0} &=& \frac{y_0^{1/2} \widehat{B}_{11}}{\widehat{B}_{12}}\\ \label{red2}
\widehat{d}_\pm &=& \left(2 - j \left(1 - \frac1{z_0}\right)\right) \pm (j-2) \sqrt{y_0} \\ \label{red3}
\widehat{D} &=& \widehat{d}_+ \widehat{d}_- .
\end{eqnarray}
We will see that the $e_g$, for $g \geq 2$,  are in fact defined in terms of rational functions in the funciton field of the coordinate ring
\begin{eqnarray*}
{\bf \widehat{S}} &=& \frac{\mathbb{Q}[\xi^2,{y_0}]}{\widehat{\mathcal{I}}}.
\end{eqnarray*} 
Here, $\widehat{\mathcal{I}}$ is the principal ideal generated by (\ref{curve}). ${\bf \widehat{S}} $ is the coordinate ring of a rational algebraic curve $\mathcal{C}$ which is in 1:1 correspondence with any of the $x$-slices of $\mathcal{S}$. By {\it rational curve} here we mean that $\mathcal{C}$ is isomorphic to the projective line, $\mathbb{P}^1$. This fact is immediate from  (\ref{curve}) which presents $\mathcal{C}$ as the graph of a function, $\xi^2$ of $y_0$; i.e., $y_0$ is a global uniformizing parameter for $\mathcal{C}$.

For future use we also define 
\begin{eqnarray} \label{tildeS}
{ \bf \tilde{S}} &=& \frac{\mathbb{Q}[x, x^{-1}][\xi^2,{y_0}]}{\widehat{\mathcal{I}}}
\end{eqnarray}
which is the coordinate ring of $\mathcal{C}$ base-extended over the $x$-line, thus incorporating all $x$-slices in one coordinate ring.

Fig. \ref{tricurvebranch} describes aspects of this curve in the trivalent case. Part (a) of this figure shows the branching of the curve ($x=1$ slice of the integral surface) occuring at $(\xi^2, y_0) = (-\frac1{108\sqrt{3}}, 2(2+\sqrt{3}))$ and $(\frac1{108\sqrt{3}}, 2(2-\sqrt{3}))$. These correspond respectively to the caustics $d_\pm = 0$. Part (b) shows how the $\xi^2$ coordinate of the caustic scales with change of the slice location $x$, for positive, real values of $x$. Under this change the form of the curve remains essentially the same as shown in (a), the only difference being a self-similar scaling.  

\begin{figure}[h] 
\begin{center}
\resizebox{2in}{!}{\includegraphics{tricurve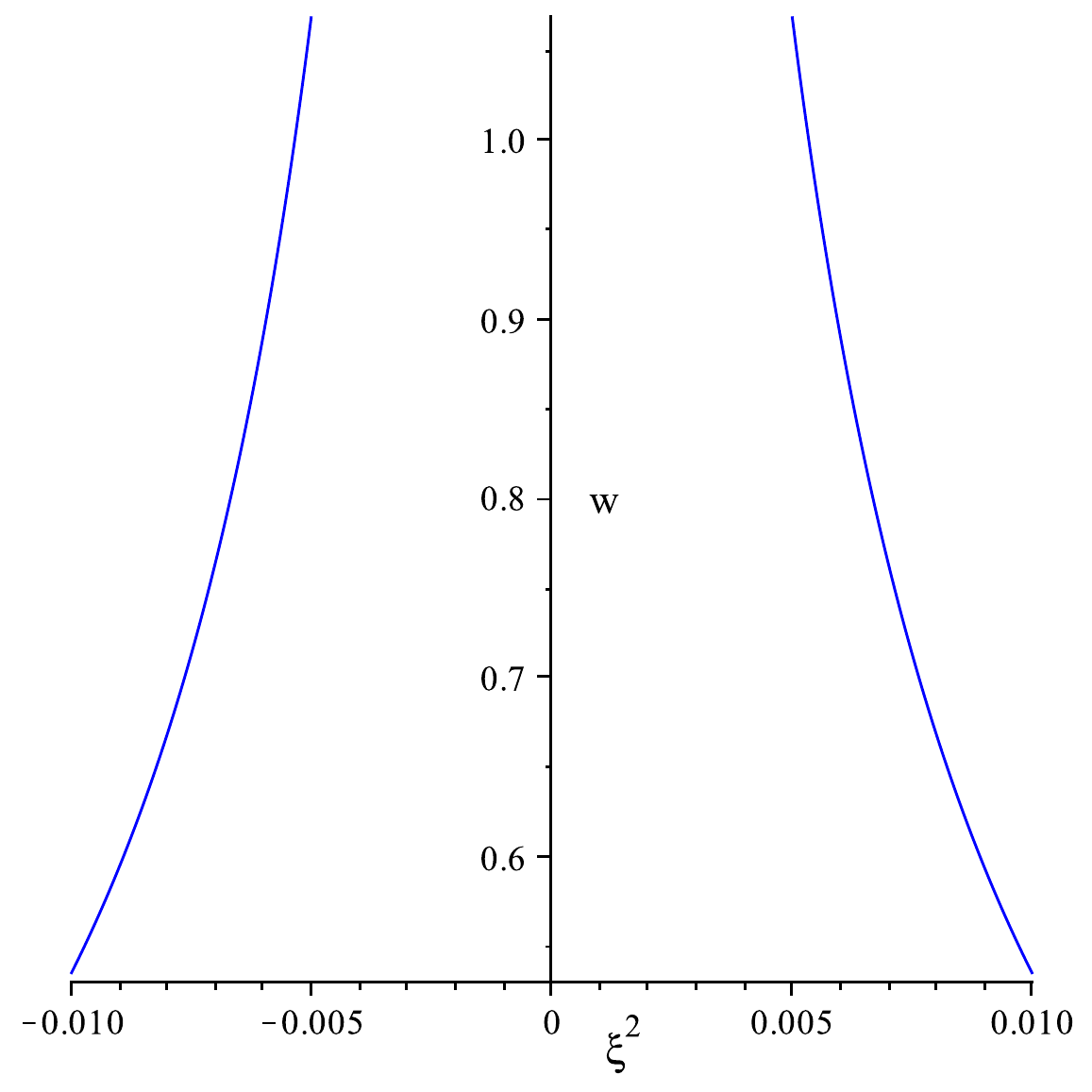}}
\resizebox{2in}{!}{\includegraphics{branch-eps-converted-to.pdf}}
\caption{\label{tricurvebranch} a) Branches near $(\xi^2, y_0) = (0,0)\,\,\, (x=1)$ \,\,\, b) Branch point locations in $(w(=x), \xi^2)$}
\end{center}
\end{figure}

The geometry of the curve, (\ref{curve}), would seem to get much more complicated as the (odd) valence increases. However, a straightforward calculation reveals that, in general,
\begin{lem} \label{yderiv}
\begin{eqnarray} \label{crit}
y_0 \frac{d \xi^2}{d y_0} &=&  z_0 \xi^2 \left({A}^2_{11} - f_0 {A}^2_{12}\right)\\ \label{crit2}
\frac{d \xi^2}{d y_0} &=& -z_0 \xi^2 \frac{\widehat{D}}{y_0(y_0 - 4)}.
\end{eqnarray}
\end{lem}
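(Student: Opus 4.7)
The two identities are equivalent, so only one needs to be proved. The rescalings $D = f_0^2\widehat{D}$ and $4f_0 - h_0^2 = f_0(4-y_0)$ follow immediately from the definitions of $\widehat{D}$ and $y_0 = h_0^2/f_0$, so (\ref{basic2}) gives $A_{11}^2 - f_0 A_{12}^2 = \widehat{D}/(4-y_0)$, which converts (\ref{crit}) directly into (\ref{crit2}). The task therefore reduces to proving (\ref{crit}).

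The plan is to compute $d\xi^2/dy_0$ by a chain-rule argument, parametrizing the curve $\mathcal{C}$ at fixed $s$ rather than fixed $w$. Since $\xi = s w^{\nu-1/2}$, at fixed $s$ one has $\partial_w \xi^2 = (2\nu-1)\xi^2/w$. Meanwhile, differentiating $y_0 = h_0^2/f_0$ and using the closed-form $w$-derivatives supplied by (\ref{jet1Alt}) (equivalently, $\binom{h_{0w}}{f_{0w}} = \mathbf{A}^{-1}\binom{0}{1}$ computed via (\ref{Ainv})) gives $\partial_w y_0 = -(h_0/f_0^2)(2 f_0 A_{12} + h_0 A_{11})/(A_{11}^2 - f_0 A_{12}^2)$. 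Taking the ratio and using $z_0 = f_0/w$ yields
\begin{eqnarray*}
\frac{d\xi^2}{dy_0} \;=\; -\,\frac{(2\nu-1)\,\xi^2\, f_0^2\,(A_{11}^2 - f_0 A_{12}^2)}{w\, h_0\,(2 f_0 A_{12} + h_0 A_{11})}.
\end{eqnarray*}

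Matching this against the target right-hand side $z_0 \xi^2 (A_{11}^2 - f_0 A_{12}^2)/y_0$ in (\ref{crit}) reduces the lemma to the single algebraic identity
\begin{eqnarray*}
2 f_0 A_{12} + h_0 A_{11} \;=\; -(2\nu-1)\, h_0.
\end{eqnarray*}
To establish this, expand $A_{11}, A_{12}$ through (\ref{hess1}, \ref{hess2}) and invoke the mixed-partials symmetry $\partial_{f_0} B_{11} = \partial_{h_0} B_{12}$, which is immediate from (\ref{grad1}, \ref{grad2}) since both $B_{11}, B_{12}$ are components of $\nabla \mathcal{F}_1$. Applying Euler's relation to the weight-$\nu$ polynomial $B_{12}$ (with $h_0$ carrying weight $1/2$ and $f_0$ weight $1$) gives $h_0 \partial_{h_0} B_{12} = 2\nu B_{12} - 2 f_0 \partial_{f_0} B_{12}$, so $2 f_0 \partial_{f_0} B_{12} + h_0 \partial_{f_0} B_{11} = 2\nu B_{12}$. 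The string equation $(2\nu+1) s B_{12} = -h_0$ from (\ref{GaussHod1}) then collapses the expression to $-(2\nu-1) h_0$, as desired.

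The principal subtlety is in the last step: one must dovetail the Euler homogeneity of $B_{12}$ with the gradient structure linking $B_{11}$ to $B_{12}$ so that $2 f_0 A_{12} + h_0 A_{11}$ reduces to a multiple of $h_0$ alone, at which point the string equation pins down the coefficient. The apparent division by $h_0$ in the calculation is harmless: the locus $h_0 \equiv 0$ forces $s=0$ via (\ref{GaussHod1}), where both sides of the identity vanish, and the general case follows by continuity.
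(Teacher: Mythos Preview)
Your proof is correct but follows a different route from the paper's. The paper proves (\ref{crit2}) first, computing $\xi\, y_0'$ via the $\xi$-ODE reduction (\ref{GaussTodaSS}) of the Toda flow together with the quotient-rule identity (\ref{y0deriv}); then (\ref{crit}) is read off from (\ref{basic2}). You instead prove (\ref{crit}) first, parametrizing $\mathcal{C}$ by $w$ at fixed $s$ and using the $w$-derivatives supplied by the \emph{string} side, namely (\ref{jet1Alt}). The novelty in your argument is the clean algebraic identity $2f_0 A_{12} + h_0 A_{11} = -(2\nu-1)h_0$, obtained by combining Euler homogeneity of $B_{12}$ with the mixed-partials symmetry $\partial_{f_0}B_{11}=\partial_{h_0}B_{12}$ and the hodograph relation (\ref{GaussHod1}). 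This identity is not made explicit in the paper's proof. Your approach thus bypasses the Toda ODE (\ref{GaussTodaSS}) entirely and relies only on the string/hodograph structure; the paper's approach is shorter given that (\ref{GaussTodaSS}) is already in hand, but yours exposes the underlying homogeneity reason for the factor $2\nu-1$.
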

\begin{proof}
Recall that the terms used on the RHS's of the above equations were defined in (\ref{A11}, \ref{A12}, \ref{red3}). Equation (\ref{crit2}) is equivalent to 
\begin{eqnarray*}
\xi y_0^\prime &=& - \frac{2 y_0  (y_0 - 4)  }{z_0 \widehat{D}}.
\end{eqnarray*}
which follows straightforwardly from (\ref{GaussTodaSS}) and the observation that
\begin{eqnarray} \nonumber
\xi y_0^\prime &=& \xi \frac{z_0 2u_0 u_0^\prime -u_0^2 z_0^\prime}{z_0^2}\\ \label{y0deriv}
&=& \frac{u_0}{z_0^2} \left(
\begin{array}{cc}
2z_{0}, & -u_{0}
\end{array}
\right) \cdot
\left(
\begin{array}{c}
\xi u_{0}^\prime \\ \xi z_{0}^\prime
\end{array}
\right).
\end{eqnarray}
Equation (\ref{crit}) then follows by (\ref{basic2}).
\end{proof}

\noindent We note that $(y_0 - 4) = 0$  if and only if ${r}_+ = 0 \,\, (y_0^{1/2} = - 2)$ or $ {r}_- =0 \,\,(y_0^{1/2} = 2)$.

For future use we also record here
\begin{eqnarray} \label{crit3}
\frac{dy_0}{d\xi^2} &=& - \frac{y_0 (y_0 - 4)}{z_0 \xi^2 \widehat{D}}\\ \label{crit4}
\frac{dz_0}{d\xi^2} &=& \frac{z_0 (y_0 - 4)}{8 \xi^2 \widehat{D}} \left[\left(1 - \frac1{z_0} + \sqrt{y_0}\right) \frac{\widehat{d}_+}{\sqrt{y_0} - 2} - \left(1 - \frac1{z_0} - \sqrt{y_0}\right) \frac{\widehat{d}_-}{\sqrt{y_0} + 2}\right]\\ \label{crit5}
\frac{dz_0}{dy_0} &=& - \frac{z_0^2}{8 y_0} \left[\left(1 - \frac1{z_0}\right)\left(\frac{\widehat{d}_+}{\sqrt{y_0} - 2} - \frac{\widehat{d}_-}{\sqrt{y_0} + 2}\right) + \sqrt{y_0} \left(\frac{\widehat{d}_+}{\sqrt{y_0} - 2} + \frac{\widehat{d}_-}{\sqrt{y_0} + 2}\right)\right]
\end{eqnarray}

\subsection{Branch Point Analysis}

The zeroes of the right hand side of (\ref{crit2}) are potential locations for branching on the integral surface, These include zeroes of $\widehat{D}$ among which we expect to find branching. The only other places where branching could occur are over $\xi^2 = 0$ or where $z_0 = 0$. However, from (\ref{curve}) we see that the zeroes of $\xi^2$ correspond to the vanishing of $(\widehat{B}_{12} - y_0^{1/2}\widehat{B}_{11})^{2\nu-1}$. If the vanishing of 
$\widehat{B}_{12} - y_0^{1/2}\widehat{B}_{11}$ is simple (as we will show to be the case in Corollary \ref{hypcurve}) then, since it comes with the odd power $2\nu -1$, such a point must be vertically inflectionary for $\nu > 1$ and so does not create a turning point. There are no positive, real zeroes of $z_0$ since, by (\ref{red1}),  these would correspond to zeroes of $\widehat{B}_{12}$ which is positive for positive $y_0$. There are also apparent singularities in (\ref{crit2}). One is at $y_0 = 0$; however, comparison with (\ref{curve}) shows that this is matched by a factor of $y_0$ appearing inside $\xi^2$ and so is removable. The apparent pole at zeores of $y_0 = 4$  could create a vertical tangent to the graph of $\xi^2$ (equivalently horizontal tangents to $\mathcal{C}$ over $\xi^2$); however, these may be canceled by zeroes of the numerator as is evident from the previous example. As mentioned,  $\widehat{B}_{12}$ appearing in $(\ref{curve})$ has no real zeroes for positive $y_0$ and so there are no further possibilities for horizontal tangents. So we see that, in general, genuine branching occurs only within the zeroes of $\widetilde{D}$. As mentioned, there may also be inflectionary points over $\xi^2 =0$. 

We illustrate this in the valence 5 case for which the curve has the equation
\begin{eqnarray} \label{5val}
\xi^2 &=& -{\frac {27}{25}}\,{\frac {y_0 \left( {y_0}^{2}-2 \right) ^{3}}{ \left( {y_0
}^{2}+12\,y_0+6 \right) ^{5}}}.
\end{eqnarray}
This is graphed in Figure \ref{5curve}.
\begin{figure}[h] 
\begin{center}
\resizebox{3in}{!}{\includegraphics{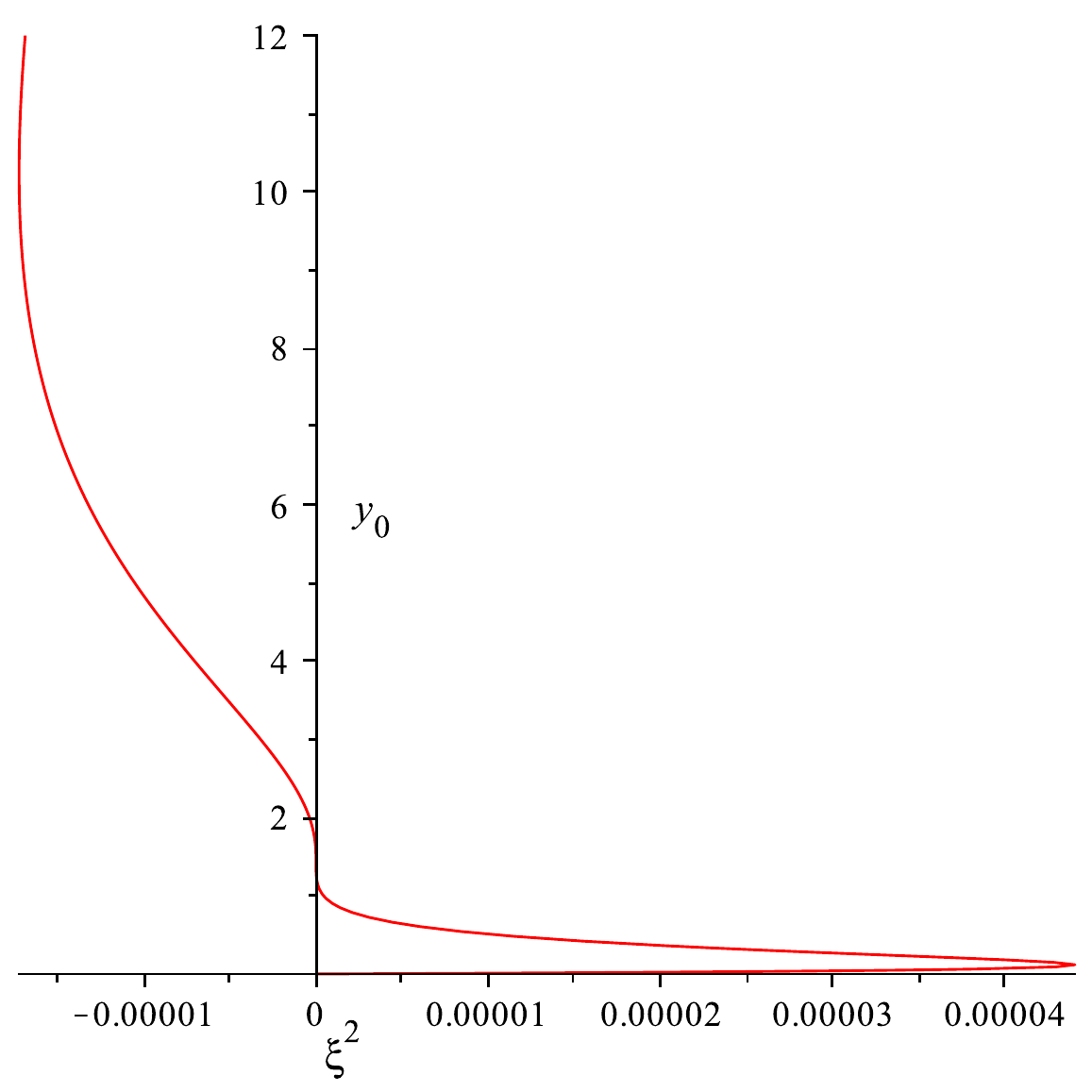}}
\end{center}
\caption{\label{5curve} Spectral curve for 5-valent maps}
\end{figure}
 Again one can discern two real branch points, one to the left and one to the right of $\xi^2 = 0$ interpolated by an inflection point at $y_0 = \sqrt{2}$ over $\xi^2 = 0$. Figure \ref{5curvebranch} shows an enlarged resolution of these two respective turning points. 
\begin{figure}[h] 
\begin{center}
\resizebox{2in}{!}{\includegraphics{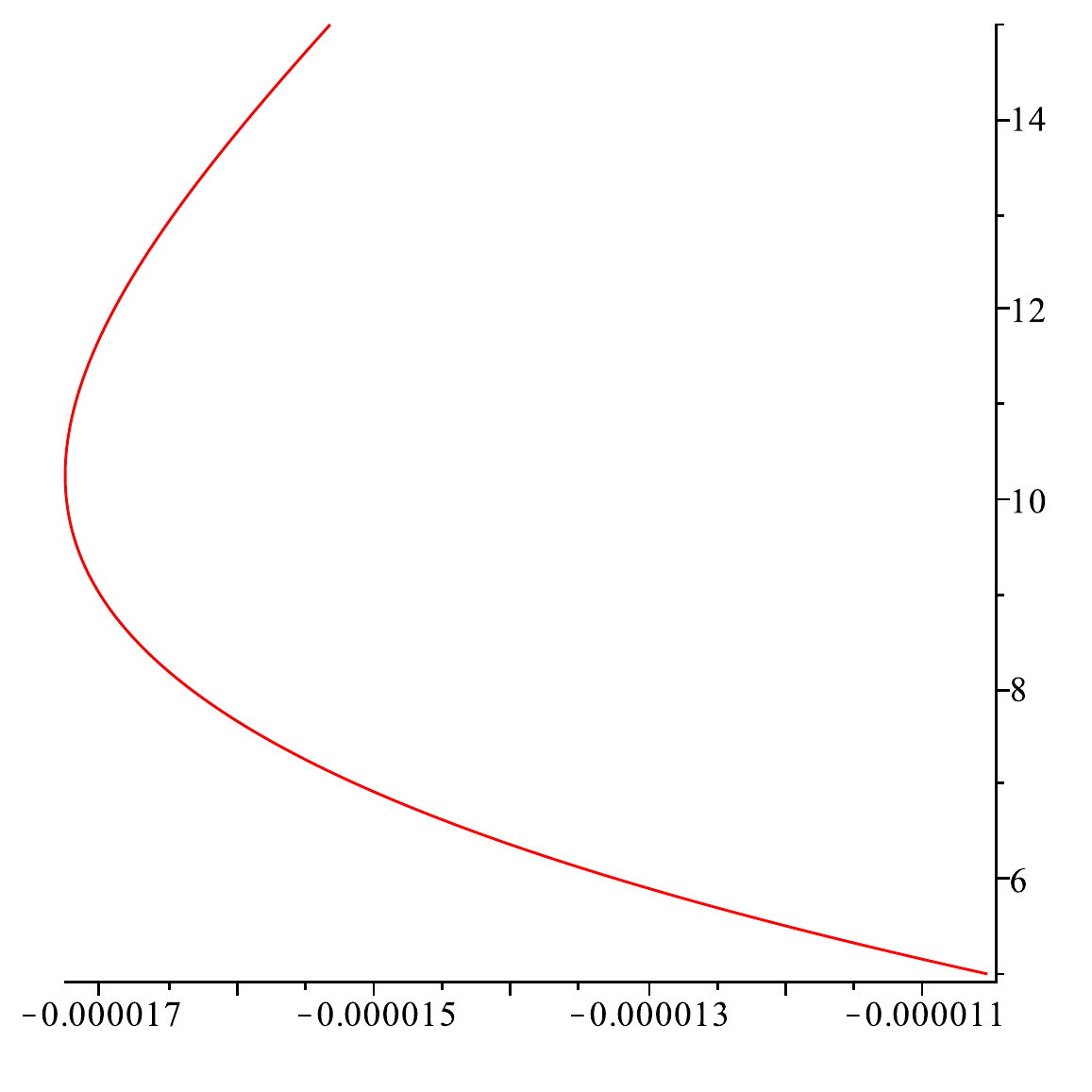}}
\resizebox{2in}{!}{\includegraphics{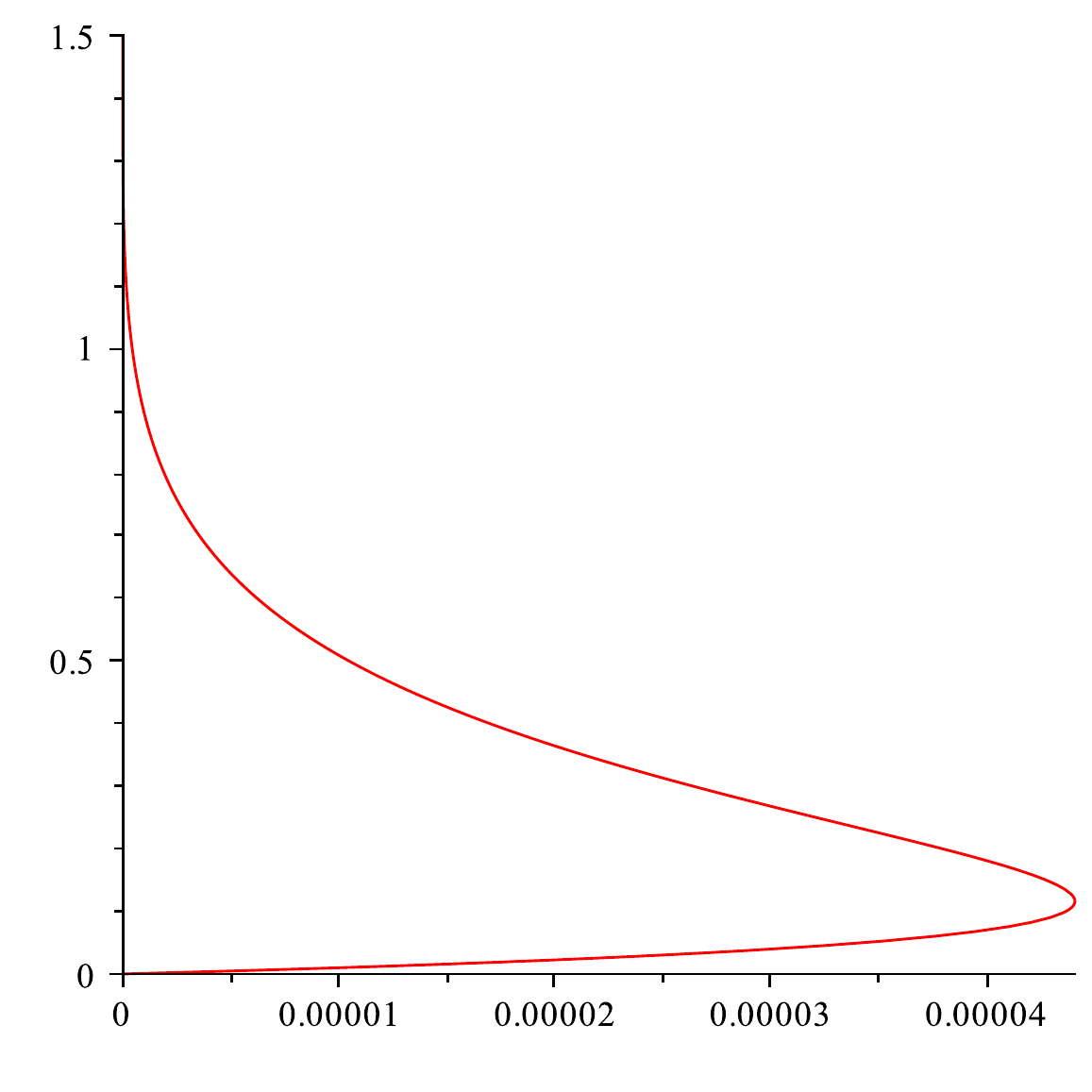}}
\caption{\label{5curvebranch} a) Left branch along $d_+ = 0$ \,\,\, b) Right branch along $d_- = 0$}
\end{center}
\end{figure}
The curve has another inflection point at $y_0 = - \sqrt{2}$ between two horizontal asymptotes at 
$y_0 = -6 + \sqrt{30}$ and $-6 - \sqrt{30}$ (Figure \ref{asympI-II} a). A real component of the curve passes through this inflection between the two asymptotes. Another real component lies below the asymptote $y_0 = -6 - \sqrt{30}$ (Figure \ref{asympI-II} b). In the extended real plane gotten by adding points over $\xi^2 = \infty$ these various "components" are connected to one another through the asymptotes so that the {\it real} curve  is topologically equivalent to an oval. There are no other turning points beyond the two real ones depicted in Figure \ref{5curvebranch}.  Note that again the potential horizontal tangent at $y_0 = 4$, was removable. The two real zeroes of $\widehat{D}/(4 - y_0) = 9 \frac{(y_0^4 - 8 y_0^3 - 20 y_0^2 - 32 y_0 + 4)}{({y_0}^{2}+12\,y_0+6) ^{2}}$ which correspond to the two places where the graph of this polynomial crosses the $y_0$-axis, shown in Figure \ref{Dhat}. It follows that the other two zeroes of this quartic polynomial are complex conjugate. 

\begin{figure}[h] 
\begin{center}
\resizebox{2in}{!}{\includegraphics{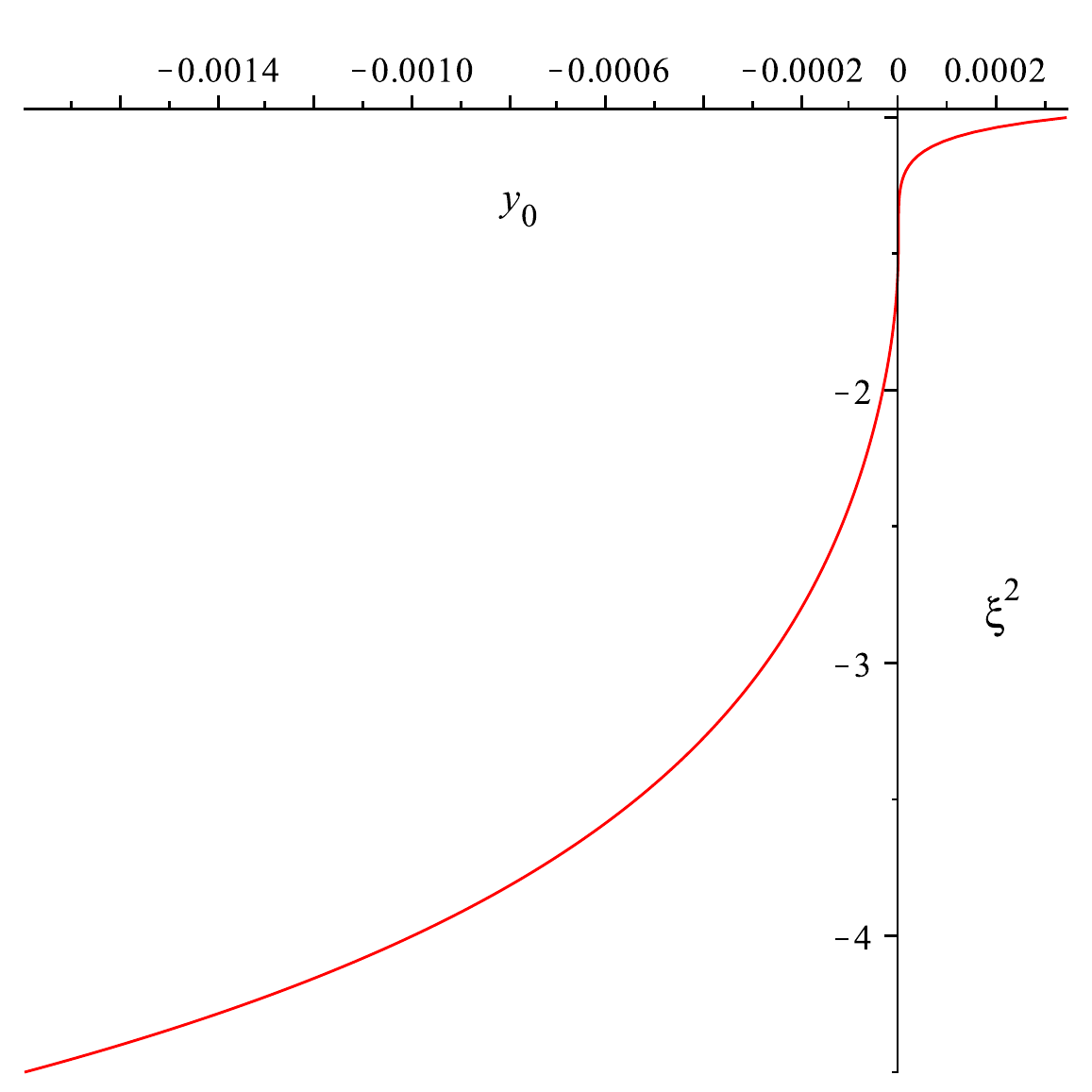}}
\resizebox{2in}{!}{\includegraphics{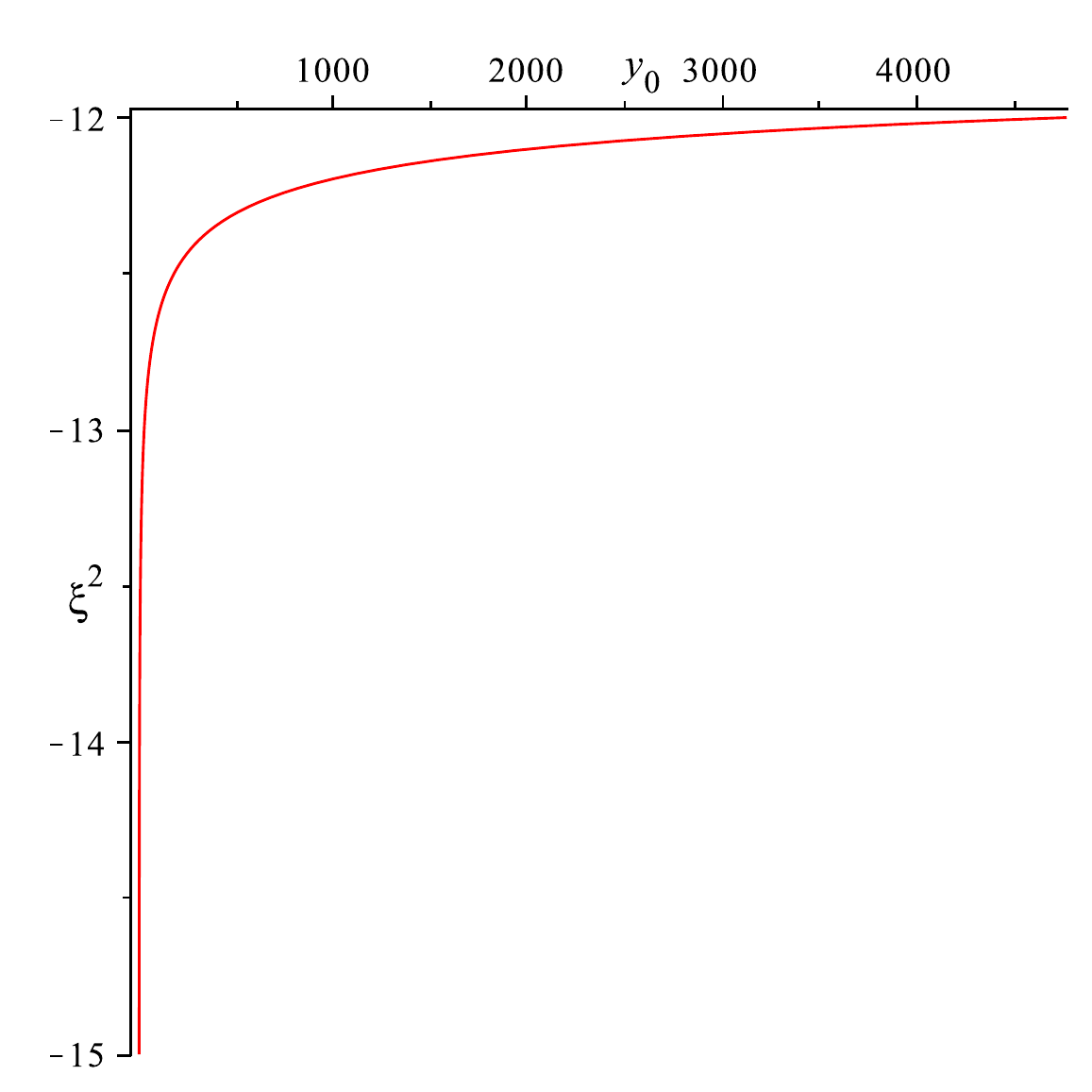}}
\caption{\label{asympI-II} a) portion of $\mathcal{C}$ between negative asymptotes b)  $\mathcal{C}$ below last asymptote}
\end{center}
\end{figure}

\begin{figure}[h] 
\begin{center}
\resizebox{2in}{!}{\includegraphics{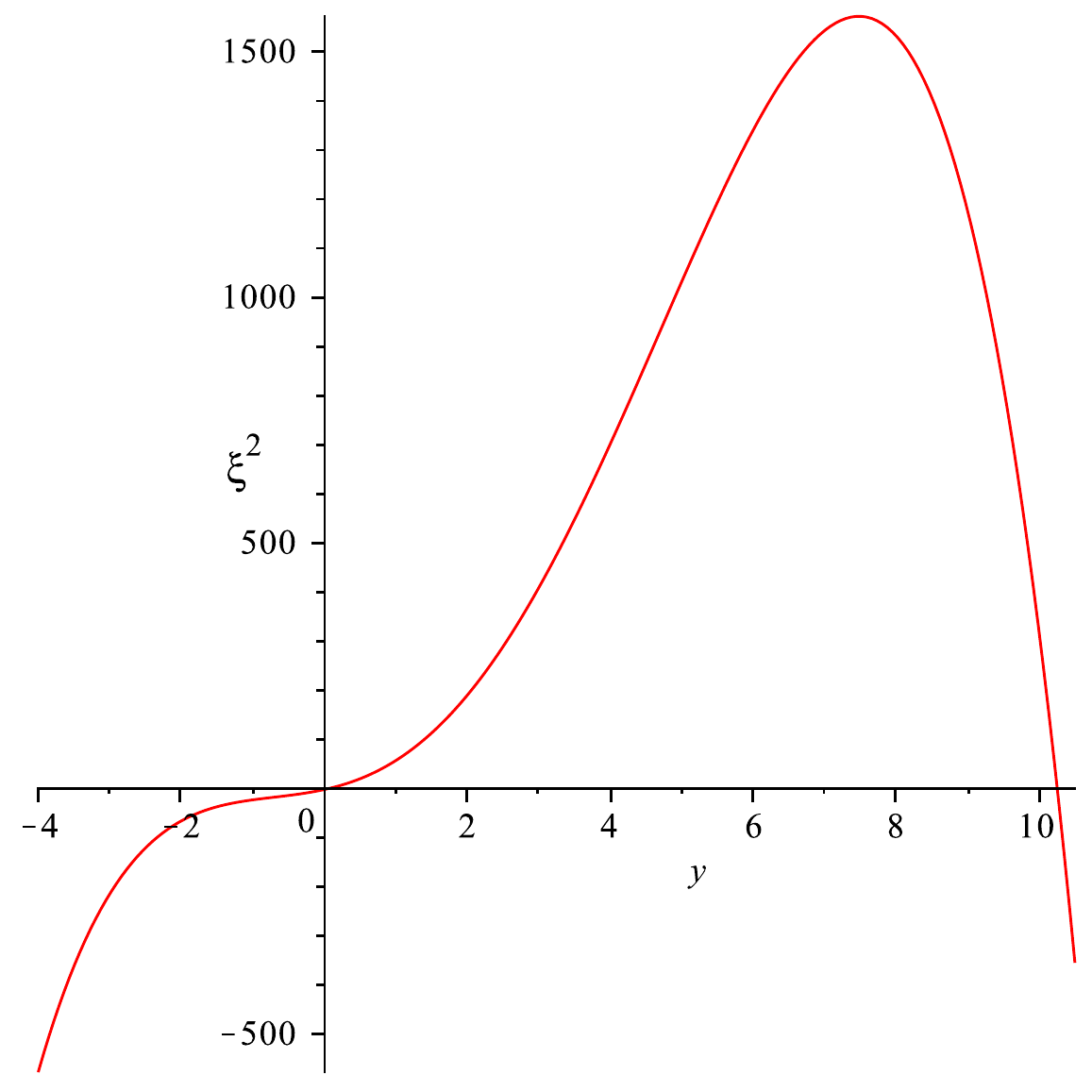}}
\caption{\label{Dhat} a) Graph of real $\widehat{D}/(y_0 - 4)$}
\end{center}
\end{figure}

The upshot is that the relevant portion of the integral surface is the positive component since it is connected to the Gaussian point at $(\xi^2, y_0) = (0,0)$; i.e., this is the {\it real} regular surface. We will see that $ f_g$ and $e_g$ are all functions of $\xi^2$ (i.e., they are even in $\xi$) and so live naturally on this curve. The  generating functions $h_{g}$ are odd functions of $\xi$; hence, up to an overall factor of $\xi$ these also live naturally on the curve. Moreover, since these are enumerative generating funcitons the coefficients in their Taylor-Maclaurin series around $\xi^2 = 0$ are real (positive rational in fact) and so they can indeed be viewed as real-valued functions on the real spectral curve with radius of convergence equal to the 
$\xi^2$-coordinate of the turning point on the right (Figure \ref{5curvebranch} b). 

This picture for the five-valent case extends to the general case of arbitrary odd valence as will be demonstrated in section \ref{gaussleaf}. 

\subsection{Evolution of the Equilibrium Measure} \label{evolvemeas}
Before proceeding to the case of general odd valence, we note that we are now in a better position to visualize the relation between the spectral curve and spectrum. In Figure  \ref{eqm}
we illustrate the point that the endpoints of the support of the equilibrium measure, as a function of the time-like variable $\xi$, uniquely determine the full density of the equilibium measure. At the same time, these endpoints are the Riemann invariants of the leading order continuum Toda equations and so, as such, completely determine the spectral curve. Figure \ref{eqm} illustrates this correspondence. We note that the "equilibrium meaure" determined by the Riemann invariants is only an actual measure over the "physical" $\xi$ interval between $0$ and the turning point on the right. 
\begin{figure}[h] 
\centerline{\includegraphics[width=1.25\linewidth]{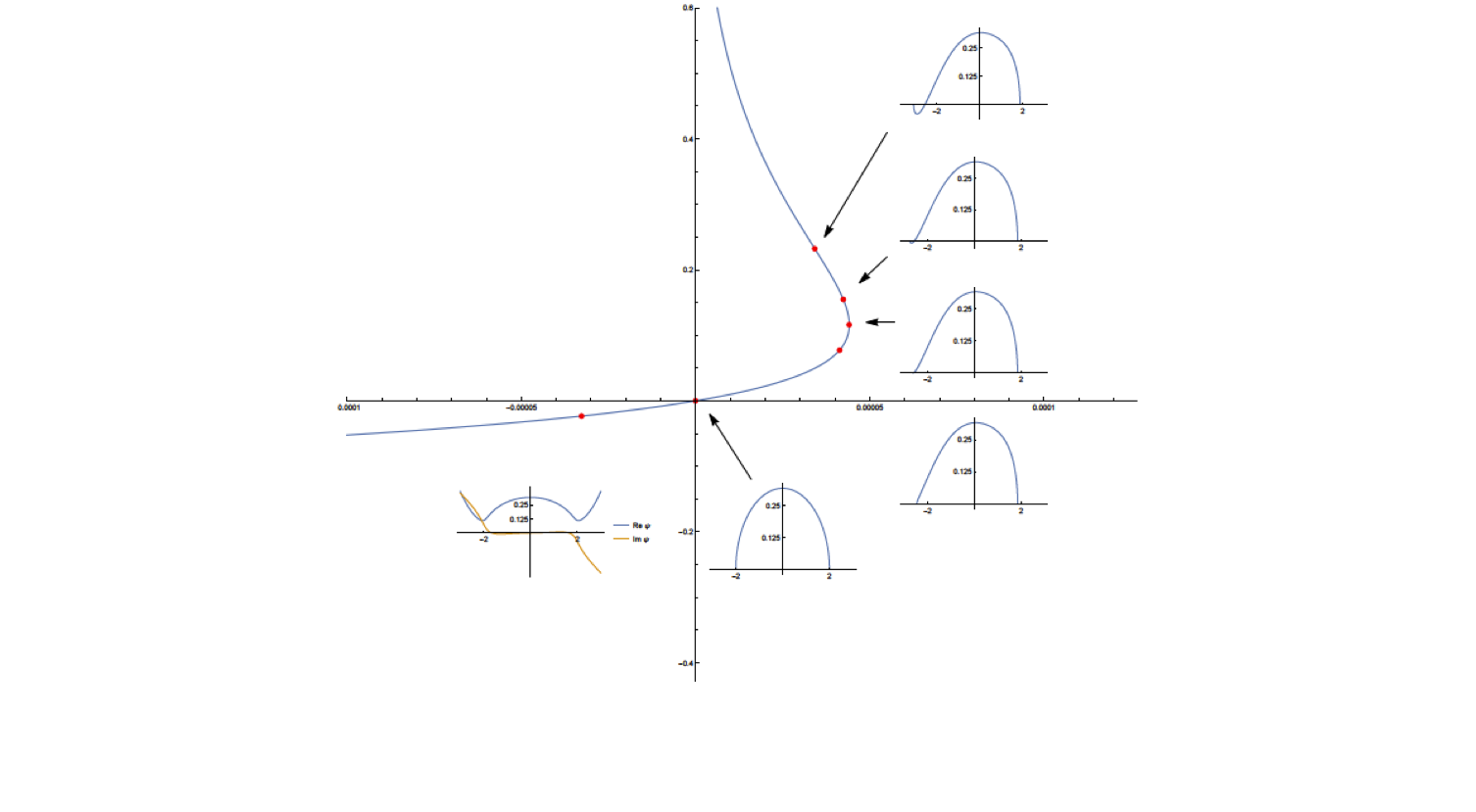}}
\caption{\label{eqm} Correspondence between the Spectral Curve and the Support of the Equilibrium Measure}
\end{figure}
\section{The Spectral Curve, $\mathcal{C}$} \label{gaussleaf}
We now turn to a general description of the algebraic curve $\mathcal{C}$, associated to the integral surface $\mathcal{S}$, that was introduced in section \ref{chargeom}. Going forward we will express everything in terms of a hyperelliptic curve  $\widehat{\mathcal{C}}$ corresponding to the trigonometric double cover (\ref{trigonometric}) of  $\widehat{B}_{12} - y_0^{1/2} \widehat{B}_{11}$ which we will refer to as the {\it spectral curve} and whose origin we now explain. In what follows, we take the subscript $\pm$ to respectively denote the branch of $\sqrt{y_0}$ whose real part is positive (respectively negative). The expression $\widehat{B}_{12} - y_0^{1/2} \widehat{B}_{11}$ we have been cosnidering thus corresponds to the "positive" branch in this sense. We will continue to denote this function in this manner, with the understanding that when we pass to the other branch, $y_0^{1/2}$ changes sign in the argument of $\widehat{B}_{11}$ as well. 

\begin{rem}
We note that, for the map generating series the relevant branch is the one whose real part is negative (i.e. $-\sqrt{y_0}$ by the just stated convention) since the Taylor-McLaurin coefficients of $u_0$ are negative. 
\end{rem}
\bigskip

\begin{prop} \label{baseprop}
The rational algebraic curve $\mathcal{C}$ may be compactly expressed as 
\begin{eqnarray} \label{spectralcurve}
\xi^2 &=& \frac{y_0}{(2\nu +1)^2} \frac{\widehat{\lambda}_\pm^{2\nu - 1}}{\widehat{B}_{12}^{2\nu + 1}}
\end{eqnarray}
where 
\begin{eqnarray}
\lambda_\pm &=& {B}_{11} \pm \sqrt{f_0}{B}_{12} \\
\widehat{\lambda}_\pm &=&  - \left(\widehat{B}_{12}(y_0) \mp y_0^{1/2} \widehat{B}_{11}(\pm \sqrt{y_0})\right)\\
&=& \frac{ - \lambda_\pm}{f_0^{\nu + 1/2}}\\ \label{zzero}
f_0 = x z_0 &=& \frac{x \widehat{B}_{12}}{\widehat{B}_{12} - y_0^{1/2} \widehat{B}_{11}} = \frac{x \widehat{B}_{12}}{\widehat{\lambda}_\pm}
\end{eqnarray}
where $\lambda_\pm$ are the eigenvalues (characteristic speeds) for the conservation law (\ref{conservation}). (Note from this representation that $\lambda_\pm = - \lambda_\mp$ changes depending on the branch of $\sqrt{y_0}$ while $\widehat{\lambda}_\pm$ does not.) Moreover, the pullback of (\ref{spectralcurve}) to the trigonometric double cover, $\widehat{\mathcal{C}}$, has the two branches
\begin{eqnarray*}
\xi &=&  \frac{\pm i}{(2\nu +1) \widehat{\lambda}_\pm}\frac{\sqrt{y_0}}{z_0^{\nu + 1/2}}
\end{eqnarray*}
as does
\begin{eqnarray*}
h_0 &=& x^{1/2} u_0 = - x^{1/2} \left(\pm\sqrt{y_0}\right)   \left(\frac{\widehat{B}_{12}}{\widehat{\lambda}_\pm}\right)^{3/2}\\
\pm \sqrt{f_0} &=& \pm x^{1/2} \sqrt{z_0} = - x^{1/2} \left(\frac{\widehat{B}_{12}}{\widehat{\lambda}_\pm}\right)^{3/2}
\end{eqnarray*}
which defines the left Riemann eigenvectors, $(\pm \sqrt{f_0}, 1)$ and motivates the appellation {\rm spectral curve} for $\widehat{\mathcal{C}}$.
\end{prop}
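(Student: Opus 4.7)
The plan is to derive every identity in the proposition as an algebraic consequence of the curve equation (\ref{curve}), the reduction identities (\ref{red1})--(\ref{red3}), and the Bessel--Appell identifications summarized in Corollary \ref{AppellSum}; no new ideas beyond careful bookkeeping of signs and branches are needed.

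First I would rewrite (\ref{curve}) in terms of $\widehat{\lambda}_{\pm}$. By the definition given in the statement and the fact that $\widehat{B}_{11}(\sqrt{y_0}) = 2\nu R_{2\nu-1}(\sqrt{y_0})$ is an odd polynomial in $\sqrt{y_0}$ (from (\ref{r2n})--(\ref{b11})), the product $y_0^{1/2}\widehat{B}_{11}$ is a polynomial in $y_0$ alone, so the factor $(\widehat{B}_{12}-y_0^{1/2}\widehat{B}_{11})^{2\nu-1}$ appearing in (\ref{curve}) is insensitive to the branch of $\sqrt{y_0}$ and equals $(-\widehat{\lambda}_{\pm})^{2\nu-1}$; since the exponent is odd this becomes $-\widehat{\lambda}_{\pm}^{2\nu-1}$, an overall sign that can be absorbed into the branch convention, yielding the compact form (\ref{spectralcurve}).

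Next I would establish (\ref{zzero}). From (\ref{red1}) one has $1-1/z_0 = y_0^{1/2}\widehat{B}_{11}/\widehat{B}_{12}$, hence $1/z_0 = (\widehat{B}_{12}-y_0^{1/2}\widehat{B}_{11})/\widehat{B}_{12}$, and multiplying by $w$ yields $f_0 = wz_0 = w\widehat{B}_{12}/(\widehat{B}_{12}-y_0^{1/2}\widehat{B}_{11}) = w\widehat{B}_{12}/\widehat{\lambda}_{\pm}$ after the same sign absorption. The identity $\widehat{\lambda}_{\pm} = -\lambda_{\pm}/f_0^{\nu+1/2}$ with $\lambda_{\pm} = B_{11}\pm\sqrt{f_0}\,B_{12}$ then follows by restricting (\ref{B11})--(\ref{B12}) to the surface $h_0 = \pm\sqrt{f_0 y_0}$ and factoring out the common power $f_0^{\nu+1/2}$; Corollary \ref{AppellSum} supplies the required translation between $B_{11}, B_{12}$ and $\partial R_{2\nu}, S_{2\nu}$, and thus between $\widehat{B}_{11}, \widehat{B}_{12}$ and the Appell polynomials.

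The trigonometric double cover formulas are then obtained by taking square roots. Substituting $\widehat{B}_{12} = z_0 \widehat{\lambda}_{\pm}$ into (\ref{spectralcurve}) gives $\xi^2 = -y_0/((2\nu+1)^2\widehat{\lambda}_{\pm}^2 z_0^{2\nu+1})$, whose two square roots on $\widehat{\mathcal{C}}$ are $\xi = \pm i\sqrt{y_0}/((2\nu+1)\widehat{\lambda}_{\pm} z_0^{\nu+1/2})$. The expressions for $h_0 = w^{1/2}u_0$ and $\pm\sqrt{f_0}$ in terms of $\widehat{B}_{12}/\widehat{\lambda}_{\pm}$ are then direct substitutions of $z_0 = \widehat{B}_{12}/\widehat{\lambda}_{\pm}$ into $h_0 = \pm\sqrt{f_0 y_0}$ and $\sqrt{f_0} = w^{1/2}\sqrt{z_0}$. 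Finally, the name \emph{spectral curve} is justified by observing that $(\pm\sqrt{f_0},1)$ are precisely the left eigenvectors of $\mathbf{B}$ recorded in (\ref{evecs}), so $\widehat{\mathcal{C}}$ parametrises the spectral data of the conservation law (\ref{conservation}). The main obstacle is purely the coordination of three sign/branch choices --- the branch of $\sqrt{y_0}$, the sign of $h_0 = \pm\sqrt{f_0 y_0}$ (the combinatorially relevant choice being the negative branch, since the Maclaurin coefficients of $u_0$ are negative, as noted in the remark preceding the proposition), and the sign built into $\widehat{\lambda}_{\pm} = -(\widehat{B}_{12}\mp y_0^{1/2}\widehat{B}_{11}(\pm\sqrt{y_0}))$; once these are fixed consistently, each remaining identity reduces to a short algebraic manipulation.
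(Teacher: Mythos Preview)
Your proposal is correct and follows essentially the same route as the paper's own proof, which consists of the single sentence ``These statements follow directly from definitions together with relations (\ref{curve}) and (\ref{red1}).'' You have simply unpacked what that sentence means: rewrite (\ref{curve}) in terms of $\widehat{\lambda}_\pm$, invoke (\ref{red1}) to obtain (\ref{zzero}), and then read off the double-cover formulas by taking square roots, with Corollary~\ref{AppellSum} and (\ref{evecs}) supplying the dictionary between $(B_{11},B_{12})$, $(\widehat{B}_{11},\widehat{B}_{12})$, and the eigenvector data. Your explicit attention to the three interacting sign/branch choices is appropriate and matches the paper's parenthetical remark that $\lambda_\pm$ swaps sign with the branch of $\sqrt{y_0}$ while $\widehat{\lambda}_\pm$ does not.
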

\begin{proof}
These statements follow directly from definitions together with relations (\ref{curve}) and (\ref{red1}).
\end{proof}
\begin{cor} \label{hypcurve}
The hyperelliptic curve $\widehat{\mathcal{C}}$ has exactly $2 \nu$ branch points (in $\sqrt{y_0}$) all of which are pure imaginary, finite, non-zero and symmetric about 0. Two of these are conjugate with a positive square that we denote by $y_0^*$; all the others come in conjugate pairs whose squares are negative and they interlace the zeroes of $\widehat{B}_{12}$. The genus of $\widehat{\mathcal{C}}$ is $\nu$.
\end{cor}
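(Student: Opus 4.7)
The plan is first to identify the hyperelliptic equation defining $\widehat{\mathcal{C}}$. Set $\Lambda(\zeta) := \widehat{B}_{12} - y_0^{1/2}\widehat{B}_{11} = S_{2\nu}(\zeta) - \zeta\,\partial R_{2\nu}(\zeta)$ with $\zeta = \sqrt{y_0}$. Since $S_{2\nu}$ has only even powers of $\zeta$ and $\zeta\,\partial R_{2\nu}$ is the product of two odd polynomials, $\Lambda(\zeta)$ is an even polynomial in $\zeta$ of degree $2\nu$, equivalently a polynomial $\tilde\Lambda(y_0)$ of degree $\nu$ in $y_0$. The hyperelliptic curve resolving the square-root ambiguity in (\ref{spectralcurve}) and carrying the trigonometric branching at $y_0 = 0,\infty$ is then $\widehat{\mathcal{C}}:\ \eta^2 = \zeta\,\Lambda(\zeta)$, a polynomial of odd degree $2\nu+1$ in $\zeta$. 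Its branch locus consists of the $2\nu$ simple zeros of $\Lambda(\zeta)$ together with $\zeta = 0$ and $\zeta = \infty$ (the latter forced by the odd total degree), giving $2\nu+2$ branch points and hence genus $\nu$ by Riemann--Hurwitz.

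The next step locates the $2\nu$ zeros of $\Lambda(\zeta)$ and identifies them as the claimed branch points. I would use the generating function identities
\begin{align*}
\sum_{\nu\ge 0} S_{2\nu}(\zeta)\frac{t^{2\nu}}{(2\nu)!} = I_0(2t)\cosh(\zeta t),\qquad
\sum_{\nu\ge 0} \partial R_{2\nu}(\zeta)\frac{t^{2\nu}}{(2\nu)!} = I_1(2t)\sinh(\zeta t),
\end{align*}
obtained by extracting the even-in-$t$ parts of the Appell series $I_0(2t)e^{\zeta t}$ and $I_1(2t)e^{\zeta t}$. Specializing $\zeta = i\sigma$ yields the real oscillatory series $I_0(2t)\cos(\sigma t) + \sigma I_1(2t)\sin(\sigma t)$, whose $(2\nu)$-th Taylor coefficient is $\Lambda_{2\nu}(i\sigma)$. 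A sign-change analysis along the positive $\sigma$-axis, analogous to the argument in Proposition \ref{mainprop} for $S_{2\nu}$, produces exactly $\nu-1$ positive real $\sigma$-roots, giving $\nu-1$ conjugate pairs $\pm ib_k$ in $\zeta$ with squares $-b_k^2 < 0$. The remaining two zeros must lie on the real $\zeta$-axis: $\Lambda_{2\nu}(0) = \binom{2\nu}{\nu} > 0$ while the leading coefficient of $\Lambda_{2\nu}(\zeta)$ is $1-2\nu < 0$, so the intermediate value theorem and the count of imaginary roots force exactly one positive real root $\zeta = \sqrt{y_0^*}$, whose mirror $-\sqrt{y_0^*}$ completes the distinguished ``conjugate'' pair with positive square $y_0^*$.

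For the interlacing with the zeros of $\widehat{B}_{12} = S_{2\nu}(\zeta)$, I would rewrite $\Lambda(\zeta) = S_{2\nu}(\zeta) - 2\nu\zeta R_{2\nu-1}(\zeta)$ and invoke the analog of Proposition \ref{mainprop} for the Appell sequence $R_n$, which provides the interlacing of the zeros of $S_{2\nu}$ and $R_{2\nu-1}$ along the imaginary axis. At each negative-$y_0$ zero $\zeta = ib_k$ of $\widehat{B}_{12}$, the value $\Lambda(ib_k) = -2\nu(ib_k)R_{2\nu-1}(ib_k)$ is nonzero and alternates in sign across consecutive $b_k$ by this interlacing, so the intermediate value theorem deposits exactly one zero of $\tilde\Lambda(y_0)$ strictly between each consecutive pair of negative-$y_0$ zeros of $\widehat{B}_{12}$, producing the claimed $\nu-1$ interlacing roots.

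The main obstacle is the precise oscillation count for $\Lambda_{2\nu}(i\sigma)$: since Lemma \ref{AppDiff} shows that $\Lambda$ is not itself an Appell sequence (its formal generating function $-I_0(2t) + I_1(2t)/t$ vanishes at the origin), the direct Appell-inversion argument of Proposition \ref{mainprop} does not transfer. A Sturm-type comparison of $\Lambda_{2\nu}(i\sigma)$ against $S_{2\nu}(i\sigma)$, whose zero pattern is already known, together with the interlacing of the companion Appell sequence $R_{2\nu-1}$, should supply the technical input needed to complete the argument rigorously.
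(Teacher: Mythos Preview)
Your approach is genuinely different from the paper's, and the difference is instructive.

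The paper does \emph{not} analyze $\Lambda(\zeta)=\widehat B_{12}-y_0^{1/2}\widehat B_{11}$ directly via oscillation counts or Sturm theory. Instead it works on the rational curve $\mathcal C$ and uses the differential identity already in hand,
\[
\frac{d\xi^2}{dy_0}\;=\;-\,z_0\,\xi^2\,\frac{\widehat D}{y_0(y_0-4)},
\]
together with the factorisation $\xi^2 z_0=\frac{1}{(2\nu+1)^2}\,y_0\,\Lambda^{2\nu-2}/\widehat B_{12}^{2\nu}$. For $y_0<0$ one checks $\widehat D>0$ and $\xi^2 z_0<0$, so $d\xi^2/dy_0>0$ there. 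Hence $\xi^2(y_0)$ is \emph{monotone increasing} between consecutive poles (the zeros of $\widehat B_{12}$, already known to be $\nu$ negative numbers by Proposition~\ref{mainprop}), and therefore has exactly one zero in each such interval. That gives the $\nu-1$ negative zeros of $\tilde\Lambda(y_0)$ interlacing those of $\widehat B_{12}$ in a single stroke. A separate short check excludes a zero between the largest pole and $y_0=0$, forcing the remaining root to lie at some $y_0^*>0$. The interlacing thus comes for free from the dynamics of the curve rather than from any cross-family comparison of Appell polynomials.

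Your route has a real gap beyond the one you flag. In your step for interlacing you write $\Lambda(ib_k)=-2\nu(ib_k)R_{2\nu-1}(ib_k)$ at the zeros $ib_k$ of $S_{2\nu}$ and then claim sign alternation ``by this interlacing''. But Propositions~\ref{mainprop} and its $R$-analogue give interlacing \emph{within} each family ($S_{n-1}$ with $S_n$, and $R_{n-1}$ with $R_n$); neither establishes that the zeros of $S_{2\nu}$ and of $R_{2\nu-1}$ interlace, which is precisely what you need for the sign of $R_{2\nu-1}$ to alternate at consecutive zeros of $S_{2\nu}$. Without that cross-family statement your intermediate-value argument does not go through, and proving it would likely require as much work as the oscillation count you already identify as the main obstacle. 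The paper's monotonicity argument sidesteps both issues entirely.

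A smaller point: your hyperelliptic model $\eta^2=\zeta\,\Lambda(\zeta)$ forces branch points at $\zeta=0$ and $\zeta=\infty$, whereas the corollary asserts the branch points are finite and non-zero. The paper views $\widehat{\mathcal C}$ as the trigonometric double cover and explicitly says there is no branching at $\infty$; whatever the precise global model, you should not be introducing extra ramification at $0,\infty$ to get the genus to come out. (In fairness the paper's own genus bookkeeping here is terse.)
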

\begin{proof}
Because $\widehat{B}_{12} - y_0^{1/2} \widehat{B}_{11}$ is polynomial in $y_0$ it suffices to show that this polynomial has one zero which is positive while all the others are negative.
This follows from examination of (\ref{curve}) by which we see that $\xi^2$ has poles (in $y_0$) exactly at the zeros of $\widehat{B}_{12}$. By Proposition \ref{mainprop} there are $\nu$ such zeros and they are all negative. Now consider (\ref{crit2}),
which we rewrite here:
\begin{eqnarray*}
\frac{d \xi^2}{d y_0} &=& -z_0 \xi^2 \frac{\widehat{D}}{y_0(y_0 - 4)}.
\end{eqnarray*}
 Using (\ref{red2})  we observe that $\widehat{D} = \widehat{d}_+\widehat{d}_- = \left(2 - j \left(1 - \frac1{z_0}\right)\right)^2 - (j-2)^2 y_0$ from which we conclude that $\widehat{D} > 0$ when $y_0 < 0$.  One then obtains that the  sign of 
$d\xi^2/dy_0$ is the same as that of $-z_0 \xi^2$. But re-writing this using (\ref{curve}) and (\ref{zzero}) one sees that
\begin{eqnarray} \label{signtest}
\xi^2 z_0 &=& \frac1{(2 \nu + 1)^2} y_0 \frac{(\widehat{B}_{12} - y_0^{1/2} \widehat{B}_{11})^{2\nu-2}}{(\widehat{B}_{12})^{2\nu}}.
\end{eqnarray}
It follows that for $y_0 < 0$, $d\xi^2/dy_0 > 0$ when defined. Hence $\xi^2$ is monotone increasing between consecutive poles. Therefore $\widehat{B}_{12} - y_0^{1/2} \widehat{B}_{11}$ must have at least one negative zero between any two consecutive zeros of $\widehat{B}_{12}$. Hence $\widehat{B}_{12} - y_0^{1/2} \widehat{B}_{11}$ does have $\nu - 1$ negative zeros which interlace the zeroes of $\widehat{B}_{12}$. This accounts for all but one zero of $\widehat{B}_{12} - y_0^{1/2} \widehat{B}_{11}$ which is a polynomial with real coefficients and therefore this last zero must be real. 
Now for $y_0$ negative but larger than the greastest zero of $\widehat{B}_{12}$,
$\widehat{B}_{12} > 0$.  Since $\xi^2 z_0 < 0$ for $y_0 < 0$, it follows from (\ref{signtest}) that $\widehat{B}_{12} - y_0^{1/2} \widehat{B}_{11}$ does not vanish there either. Nor does it vanish at $y_0 = 0$ as can be seen by direct evaluation. Hence the final zero must occur for $y_0 >0$. 

The square roots of these $\nu$ zeroes then yield the $2 \nu$ branch points, symmetric about 0 of $\widehat{\mathcal{C}}$. 
As is the case for all trigonometric polynomial mappings, there is no branch point at $\infty$. It then follows from Hurwitz's formula that $\widehat{\mathcal{C}}$ has genus $\nu$.
\end{proof}
\begin{thm} \label{genthm}
For odd valance, the rational curve, $\mathcal{C}$ in the {\it real} $(\xi^2, y_0)$-plane, has the following geometric properties:
\begin{enumerate}
\item[a)] The curve has $\nu$ horizontal asymptotes at negative values of $y_0$ equal to the $\nu$ zeros (which are all negative) of $\widehat{B}_{12}$. In between consecutive horizontal asymptotes, the curve is monotonically increasing, from the lower asymptote to the upper asymptote with a vertical tangency (inflection point) of order $2 \nu - 1$ where it crosses the $y_0$-axis at the value of $y_0$ equal to the unique zero of $\widehat{B}_{12} - y_0^{1/2} \widehat{B}_{11}$ that lies between the two horizontal asymptotes. (When $\nu = 1$ the crossing of the $y_0$-axis is transversal, rather than tangential.) The order of tangency of the curve to these horizontal asymptotes, as $\xi^2 \to \pm \infty$, is $2 \nu + 1$.
\item[b)] The curve has another component above the highest horizontal asymptote (which lies below the $\xi^2$-axis).  This component increases monotonically from the horizontal asymptote, crosses the $\xi^2$-axis at $0$ and continues to increase monotonically to a simple turning point with positive $\xi^2$ and $y_0$ coordinates corresponding to a zero of $d_-/r_+$ along the curve. It then continues to increase monotonically in $y_0$ but now with decreasing $\xi^2$, crossing the $y_0$-axis at $y_0^*$ with vertical tangency of order $2 \nu -1$. It then continues to the left of the $y_0$-axis until it reaches the zero of $d_+/r_-$ with smallest positive $y_0$ coordinate (and negative $\xi^2$ coordinate). After that, the curve continues to increase monotonically in $y_0$ with increasing $\xi^2$ and approaches $+\infty$ along the $y_0$-axis from the left. The value $y_0 = 4$ lies between $y_0^*$ and the $y_0$-coordinate of the turning point on the left. The curve has a final component below the lowest horizontal asymptote which decreases monotonically in $y_0$ from this asymptote with decreasing $\xi^2$ and approaches $-\infty$ along the $y_0$-axis from the right. The order of tangency of the curve to the $y_0$-axis at $\pm \infty$ is $2 \nu -1$.
\item[c)]  The scaled disciminant $\widehat{D}$ is divisible by $y_0 - 4$ with the ratio equal to a rational funciton of order $\nu$ in $y_0$; i.e. it is the ratio of a polynomial of degree $\nu$ in the numerator to $\widehat{B}_{12}$ in the denominator.
\end{enumerate}
\end{thm}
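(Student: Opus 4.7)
The plan is to read off the geometric features of $\mathcal{C}$ directly from the explicit formula (\ref{curve}) for $\xi^2$ as a function of $y_0$, using (\ref{crit2}) for monotonicity, Proposition \ref{mainprop} for the location of the $\nu$ zeros of $\widehat{B}_{12}$ (all negative), and Corollary \ref{hypcurve} for the location of the $\nu$ zeros of $\widehat{B}_{12}-y_0^{1/2}\widehat{B}_{11}$ (the $\nu-1$ negative ones interlacing those of $\widehat{B}_{12}$, plus one positive root which I will call $y_0^{\ast}$). For part (a) these inputs give immediately: the horizontal asymptotes are the simple poles of $\xi^2$ at the zeros of $\widehat{B}_{12}$; the exponent $2\nu+1$ in the denominator of (\ref{curve}) sets the order of tangency to each asymptote; the axis crossings between consecutive asymptotes lie at the interlacing zeros of $\widehat{B}_{12}-y_0^{1/2}\widehat{B}_{11}$, with vertical tangency of order $2\nu-1$ from the corresponding exponent in the numerator; and monotonicity on each segment follows from the argument already used in the proof of Corollary \ref{hypcurve}, since $\widehat{D}>0$ on $y_0<0$ by (\ref{red2}) forces $d\xi^2/dy_0$ to have constant sign between consecutive poles.

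For part (b), the positive-$y_0$ branch, the unique positive zero $y_0^{\ast}$ again yields a vertical-tangency crossing of order $2\nu-1$, while the asymptotic tangency order $2\nu-1$ along the $y_0$-axis as $y_0\to\pm\infty$ comes from (\ref{curve}) together with the top coefficients $1$ and $2\nu$ of $\widehat{B}_{12}$ and $y_0^{1/2}\widehat{B}_{11}$ respectively: this gives $\xi^2 \sim -(2\nu-1)^{2\nu-1}(2\nu+1)^{-2}\,y_0^{1-2\nu}$. The finite turning points of the projection to the $\xi^2$-axis are the zeros of $\widehat{D} = \widehat{d}_+\widehat{d}_-$ not cancelled by the factor $(y_0-4)$ in (\ref{crit2}); the identification of the right-hand turning point with a zero of $\widehat{d}_-$ and the left-hand one with the smallest positive zero of $\widehat{d}_+$, together with the correspondence to vanishing Riemann invariants, comes from (\ref{dpm}) and (\ref{RinvtToda3}). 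The main obstacle will be the turning-point count: namely, verifying that exactly these two real zeros of $\widehat{D}$ are relevant on the positive branch and that $y_0=4$, where $r_+r_- = h_0^2 - 4f_0 = -f_0(4-y_0)$ vanishes and the two characteristics collide, lies strictly between them. I expect this to require a careful sign analysis of $\widehat{d}_\pm$ on $y_0>0$ combined with the monotonicity already obtained from (\ref{crit2}) on each subinterval cut out by $y_0^{\ast}$ and the turning points.

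Part (c) is essentially algebraic. From (\ref{basic2}), $A_{11}^2 - f_0 A_{12}^2 = D/(f_0(4f_0 - h_0^2))$; substituting $D = f_0^2\widehat{D}$ and $4f_0-h_0^2 = f_0(4-y_0)$ gives $\widehat{D} = (4-y_0)(A_{11}^2 - f_0 A_{12}^2)$, which displays the divisibility of $\widehat{D}$ by $y_0-4$ directly. To obtain the stated structural form with numerator a polynomial of degree $\nu$ over denominator $\widehat{B}_{12}$, one rewrites $A_{11}, A_{12}$ using (\ref{hess1}--\ref{hess2}) and eliminates $t$ via the hodograph relations (\ref{tildestring}), so that $A_{11}^2 - f_0 A_{12}^2$ becomes rational in $y_0$ with denominator a power of $\widehat{B}_{12}$; the reduction of the denominator to a single factor of $\widehat{B}_{12}$ and of the numerator to degree $\nu$ then follows from the Bessel-Appell identities (\ref{s2nu}), (\ref{dr2nu}) and (\ref{B-identity}).
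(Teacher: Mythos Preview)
Your outline for part (a) is essentially the paper's argument: the asymptotes, axis crossings, and monotonicity for $y_0<0$ all follow from Corollary~\ref{hypcurve}, Proposition~\ref{mainprop}, and (\ref{crit2}) exactly as you describe.

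For part (c), your route via (\ref{basic2}) is a legitimate alternative and in some ways slicker for the bare divisibility statement. The paper instead evaluates $\widehat{B}_{12}$ and $\widehat{B}_{11}$ explicitly at $\sqrt{y_0}=\mp 2$ using the identities $\widehat{B}_{12}|_{\sqrt{y_0}=\mp 2}=\binom{4\nu}{2\nu}$ and $\widehat{B}_{11}|_{\sqrt{y_0}=\mp 2}=\mp\binom{4\nu}{2\nu+1}$ (coming from $(\eta+2+\eta^{-1})^{2\nu}=(\eta^{1/2}+\eta^{-1/2})^{4\nu}$), and checks directly that $\widehat{d}_\mp$ vanishes there. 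That computation buys more: it shows that the $\Pi_\mp$ of (\ref{PImp}) are genuine polynomials in $\sqrt{y_0}$, a fact used throughout Section~\ref{arithmetic}. Your identity $\widehat{D}=(4-y_0)(A_{11}^2-f_0 A_{12}^2)$ is correct, but to conclude divisibility you must still argue that $A_{11}^2-f_0 A_{12}^2$ is regular at $y_0=4$; this follows from the polynomial-in-$(t,h_0,f_0)$ definitions (\ref{A11})--(\ref{A12}) with $t$ eliminated via the string equations, not from the alternative formulas at the end of Section~\ref{ODEreduction}, which carry an apparent $(4-y_0)$ denominator.

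The real gap is in part (b). You correctly flag the turning-point count as the obstacle, but ``careful sign analysis of $\widehat{d}_\pm$'' understates what is actually needed. The paper's argument is a chain of specific moves: (i) a second-derivative computation at the first caustic $\widehat{d}_-=0$, using (\ref{crit5}) to evaluate $d\widehat{d}_-/dy_0$ there and show it is strictly negative, hence a genuine maximum; (ii) rewriting $\widehat{d}_-=0$ as $j/z_0=(j-2)(\sqrt{y_0}+1)$ and using $1/z_0<1$ to force $y_0<1$ at this caustic (for $j\ge 5$), hence $y_0<4$; (iii) a \emph{two-consecutive-maxima} contradiction to rule out a second zero of $\widehat{d}_-$ while $\xi^2>0$; (iv) a separate contradiction, via the sign of $d(1/z_0)/dy_0$ at a putative zero of $\widehat{d}_+$, to rule out any zero of $\widehat{d}_+$ on $\xi^2>0$; and then parallel arguments on the $\xi^2<0$ side, including a sign-chase through (\ref{crit2}) just below the left caustic to force $y_0>4$ there. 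The two-consecutive-extrema device in (iii)--(iv) is the key idea your plan is missing; without it a direct sign analysis of $\widehat{d}_\pm$ along the curve does not obviously terminate.
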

\smallskip

The proof of this theorem will be presented in Appendix \ref{GenThm}.
\medskip

\begin{rem}
The above description, though seemingly more complicated, does in fact resonate with the characteristic geometry underlying the even valence case \cite{Er09}. For even valence the integral surface is that of the inviscid Burgers equation and there is only one family of characteristics. The curve is expressed in terms of the variables $\xi$ and $z_0$ ($u_0$ is identically 0 in the even valence case).  There is just one real turning point  on the real regular surface with positive $(\xi, z_0)$ coordinates.  All other branch points are off at infinity (more precisely at $(\xi, z_0) = (\infty, 0)$). The Taylor-Maclaurin series of the generating functions, $f_g$ and $e_g$, are function elements on the real regular surface with radius of convergence equal to the $\xi$-coordinate of the unique finite turning point. This analogy will play a guiding role in later sections.
\end{rem}

\section{Arithmetic of Generating Functions on $\mathcal{C}$} \label{arithmetic}
We bring forward here two conjugate polynomials, $\Pi_\mp$, that arose fundamentally in the proof of Appendix \ref{GenThm}. Their definition, determined  from (\ref{PImp}) or (\ref{dmp}), is alternatively implicitly given by
\begin{eqnarray} \label{Pi}
\frac{\widehat{d}_\mp}{2 \pm \sqrt{y_0}} &=& \frac{\Pi_\mp}{\widehat{B}_{12}}\,\,\,\,\,\, \mbox{or}\\
\frac{d_\mp}{r_\pm} &=& \pm \frac{\sqrt{f_0} \,\,\,\, \Pi_\mp}{\widehat{B}_{12}} = \frac{ \pm x^{1/2}\,\,\, \Pi_\mp}{\sqrt{\widehat{B}_{12} \left(\widehat{B}_{12} - y_0^{1/2} \widehat{B}_{11}\right)}}.
\end{eqnarray}
$\Pi_\mp$ are polynomials of degree $j - 1 = 2 \nu$ in $\sqrt{y_0}$ which satisfy the symmetry  
$\Pi_{\mp}(-\sqrt{y_0}) =  \Pi_{\pm}(\sqrt{y_0})$ and $\Pi_-\Pi_+$ is a polynomial of degree $j-1 = 2\nu$ in $y_0$. It follows from this representation that $\Pi_\mp$ has rational coefficients and so it becomes of interest to inquire about the arithmetic properties of its roots.
\medskip

This representation extends to higher derivatives. From the definition (\ref{char}) of the characteristics $r_\pm$ one deduces
\begin{eqnarray*}
2 \partial_x h_0 &=& \partial_x (r_+ + r_-) = \frac{r_+}{d_-} + \frac{r_-}{d_+} = \frac{\widehat{B}_{12}}{\sqrt{f_0}} \left( \frac1{\Pi_-} - \frac1{\Pi_+} \right)\\
2 \frac{\partial_x f_0}{\sqrt{f_0}}  &=& \partial_x (r_+ - r_-) = \frac{r_+}{d_-} - \frac{r_-}{d_+} = \frac{\widehat{B}_{12}}{\sqrt{f_0}} \left( \frac1{\Pi_-} + \frac1{\Pi_+} \right)
\end{eqnarray*}
which yields
\begin{eqnarray} \label{h0x}
h_{0x} &=& \frac{\widehat{B}_{12}}{2 \sqrt{f_0}} \left(\frac1{\Pi_-} - \frac1{\Pi_+}\right)\\ \label{f0x}
f_{0x} &=& \frac{\widehat{B}_{12}}{2} \left(\frac1{\Pi_-} + \frac1{\Pi_+}\right)\\ \label{FundDisc}
f_{0x}^2 - f_0 h_{0x}^2 &=& \frac{\widehat{B}_{12}^2}{\Pi_- \Pi_+} = \frac{4 - y_0}{\widehat{D}}\\ \label{fminush}
f_{0x} - \sqrt{f_0} h_{0x} &=& \frac{2 - \sqrt{y_0}}{\widehat{d}_+}\\ \label{fplush}
f_{0x} + \sqrt{f_0} h_{0x} &=&  \frac{2 + \sqrt{y_0}}{\widehat{d}_-}.
\end{eqnarray}
Differentiating and combining the last two equations yields
\begin{eqnarray*}
f_{0xx} &=& - \frac12 \left\{\frac{2 - \sqrt{y_0}}{\widehat{d}_+} \left[ \partial_w \log \widehat{d}_+ + \left(1 - \frac{\widehat{d}_-}{\widehat{d}_+} \right) \frac{2 + \sqrt{y_0}}{\widehat{d}_-} \right] + \frac{2 + \sqrt{y_0}}{\widehat{d}_-} \left[ \partial_w \log \widehat{d}_- + \left(1 - \frac{\widehat{d}_+}{\widehat{d}_-} \right) \frac{2 - \sqrt{y_0}}{\widehat{d}_+}\right] \right\}\\
\left( \frac12 f_{0x} h_{0x} + f_0 h_{0xx} \right) &=& \frac{\sqrt{f_0}}2 \left\{\frac{2 - \sqrt{y_0}}{\widehat{d}_+} \left[ \partial_w \log \widehat{d}_+ + \left(1 - \frac{\widehat{d}_-}{\widehat{d}_+} \right) \frac{2 + \sqrt{y_0}}{\widehat{d}_-} \right] - \frac{2 + \sqrt{y_0}}{\widehat{d}_-} \left[ \partial_w \log \widehat{d}_- + \left(1 - \frac{\widehat{d}_+}{\widehat{d}_-} \right) \frac{2 - \sqrt{y_0}}{\widehat{d}_+}\right] \right\}
\end{eqnarray*}

It follows from Theorem \ref{genthm} that we have the factorizations
\begin{eqnarray*}
\Pi_- &=& q_-(\sqrt{y_0}) Q_-(\sqrt{y_0})\\
\Pi_+ &=& q_+(\sqrt{y_0}) Q_+(\sqrt{y_0})\\
\Pi_- \Pi_+ &=& q(y_0) Q(y_0)
\end{eqnarray*}
where $q_\mp$ are quadratic polynomials with real coefficients whose roots are real, and $Q_\mp$ are polynomials of degree $2 \nu -2$ with real coefficients whose roots come in complex conjugste pairs.  $q$ is a quadratic polynomial in $y_0$ whose roots are squares of the roots of $q_+$ and also of $q_-$. As a consequence $q_+, q_-$ and $q$ all have a common splitting field extension over $\mathbb{Q}$. The roots of $q(y_0)$ are the $y_0$ coordinates of the real turning points of the spectral curve. 
As an example, consider again the case of valence $j=3$ for which 
\begin{eqnarray*}
q_- &=& (\sqrt{y_0})^2 + 2 \sqrt{y_0} -2 = \left(\sqrt{y_0} - (-1 + \sqrt{3})\right) \left( \sqrt{y_0} - (-1 - \sqrt{3})\right)\\
q_+ &=& (\sqrt{y_0})^2 - 2 \sqrt{y_0} -2 = \left(\sqrt{y_0} - (1 + \sqrt{3})\right) \left( \sqrt{y_0} - (1 - \sqrt{3})\right)\\\\
Q_\mp &=& -1\\
q &=& y_0^2 - 8 y_0 + 4 = \left(y_0 - 2(2 + \sqrt{3})\right) \left(y_0 - 2(2 - \sqrt{3})\right)
\end{eqnarray*}
and as stated above we see that 
\begin{eqnarray*}
\left( \pm (1 + \sqrt{3})\right)^2 &=& 2(2 + \sqrt{3})\\
\left( \pm (1 - \sqrt{3})\right)^2 &=& 2(2 - \sqrt{3})
\end{eqnarray*}
so that, indeed, $\mathbb{Q}(\sqrt{3})$ is a splitting field for $q_+, q_-$ and $q$. This leads to a natural conjecture for the form of the generating functions $f_g, h_{2g}$ and $h_{2 g + 1}$ near the turning points of the spectral curve. To explain this, first consider the form of $f_{0x}$, $j = 3$, near these points:
\begin{eqnarray*}
\frac{f_{0x}}{f_0} &=& -\frac{2-y_0}{2} \left( \frac1{\left(\sqrt{y_0} - (-1 + \sqrt{3})\right) \left( \sqrt{y_0} - (-1 - \sqrt{3})\right)} + \frac1{\left(\sqrt{y_0} - (1 + \sqrt{3})\right) \left( \sqrt{y_0} - (1 - \sqrt{3})\right)}\right)\\
&=& -\frac{2-y_0}{\sqrt{3}} \left( \frac1{\left(\sqrt{y_0} - (-1 + \sqrt{3})\right)} - \frac1{ \left( \sqrt{y_0} - (-1 - \sqrt{3})\right)} + \frac1{\left(\sqrt{y_0} - (1 + \sqrt{3})\right)} - \frac1{ \left( \sqrt{y_0} - (1 - \sqrt{3})\right)}\right)\\
&=& 2 \frac{y_0 - 2}{\sqrt{3}} \left( \frac{(1 + \sqrt{3})}{y_0 - 2(2 + \sqrt{3})} - \frac{(1 - \sqrt{3})}{y_0 - 2(2 - \sqrt{3})} \right)
\end{eqnarray*}
so that the principal part of $\dfrac{f_{0x}}{f_0}$near the turning point on the right is
\begin{eqnarray*}
\frac4{\sqrt{3}} \frac{(1 + \sqrt{3})}{y_0 - 2(2 + \sqrt{3})}
\end{eqnarray*}
while the principal part of $\dfrac{f_{0x}}{f_0}$near the turning point on the right is
\begin{eqnarray*}
-\frac4{\sqrt{3}} \frac{(1 - \sqrt{3})}{y_0 - 2(2 - \sqrt{3})}
\end{eqnarray*}
so we see that the principal parts are interchanged by applying the Galois involution to $\pm \sqrt{3}$.
So in general we conjucture that $f_g/f_0, h_{2g}/h_0$ and $h_{2 g + 1}/h_0$ have partial fractions expansions at the turning points whose coefficients lie in a splitting field for $q_{\mp}$  with maximal pole order in each case equal to a uniform expression in $g$ and with the principal parts at the two turning points interchanged by a Galois involution of this field extension. 
In Section \ref{finestruc} we sill see that this is in fact the case. 

As further illustration of all the above expressions, when $j = 3$, one has
\begin{eqnarray*}
f_0 &=& x \frac{2 + y_0}{2 - y_0}\\
h_0 &=& \sqrt{y_0} \sqrt{f_0}\\
d_\mp &=& (3x - f_0) \mp \sqrt{y_0} f_0\\
\Pi_\mp &=& -\left( \sqrt{y_0}^2 \pm 2 \sqrt{y_0} -2\right) = -\left((y_0 - 2) \pm 2 \sqrt{y_0}\right)\\
\Pi_- \Pi_+ &=& y_0^2 - 8 y_0 + 4\\
\frac{d_\mp}{r_\pm} &=& \pm x^{1/2} \frac{(y_0 - 2) \pm 2 \sqrt{y_0}}{\sqrt{4 - y_0^2}}\\
h_{0x} &=&  \frac{2 \sqrt{y_0}}{\sqrt{f_0}}\frac{y_0 + 2}{y_0^2 - 8 y_0 + 4} = 2  x^{-1/2} \frac{\sqrt{y_0(2 + y_0)(2 - y_0)}}{y_0^2 - 8 y_0 + 4}\\
\frac{h_{0x}}{h_0} &=& - 2 x^{-1} \frac{y_0 - 2}{y_0^2 - 8 y_0 + 4}\\
f_{0x}  &=& - \frac{(y_0 + 2)(y_0 - 2)}{y_0^2 - 8 y_0 + 4}  \\
\frac{f_{0x}}{f_0} &=& 2 x^{-1} \frac{(y_0 - 2)^2}{y_0^2 - 8 y_0 + 4}\\
\frac{h_{0xx}}{h_0} &=& -\frac{24}{x^2} \frac{y_0(y_0 - 2)^3}{(y_0^2 - 8 y_0 +4)^3}\\
\frac{f_{0xx}}{f_0} &=& \frac{8}{x^2}\frac{(2-y_0)^2(y_0^2 -2 y_0 +4)}{ (y_0^2 - 8 y_0 +4)^3} \\
\end{eqnarray*}

\subsection{Symmetries of $\mathcal{C}$}

We consider the projection from $\mathcal{C}$ to the $\xi^2$-line. The critical points (in $y_0$) of this projection are given by the zeroes of $d \xi^2/ dy_0$ which by (\ref{crit2}) and (\ref{red1}) may be written as
\begin{eqnarray*}
\frac{d \xi^2}{d y_0} &=& \left[ \frac1j \frac{\left(\widehat{B}_{12} - y_0^{1/2} \widehat{B}_{11}\right)^{\nu - 1}}{\widehat{B}_{12}^{\nu+1}}\right]^2 \Pi_- \Pi_+.
\end{eqnarray*}
The zeroes of $\widehat{B}_{12} - y_0^{1/2} \widehat{B}_{11}$ correspond to the inflection points which are all tangent to the $y_0$ axis (at real values of $y_0$). There is also an inflection point at $\infty$. The poles at the zeroes of $\widehat{B}_{12}$ correspond to  horizontal asymptotes (which are points on the curve at $\infty$ in $\xi^2$). Finally the zeroes of $\Pi_- \Pi_+$ are branch points of the covering (two of which are the real turning points we have discussed earlier). Blowing up the two-dimensional surface coordinatized by $(\xi^2, y_0)$ at the inflection points yields a (rational) surface on which the spectral curve becomes a $\nu + 1$-degree cover of the $\xi^2$-line with branch point locus  of degree $2 \nu$ coinciding with the zeroes (in $y_0$) of $\Pi_- \Pi_+$. This is consistent with the Riemann-Hurwitz formula and the fact that $\mathcal{C}$ is a rational curve. Thus these critical branch points have their $y_0$-coordinates in the splitting field of  $ \Pi_- \Pi_+$, and, as a consequence of (\ref{curve}), so do their $\xi^2$-coordinates. For instance, in the trivalent case, the critical values are $\xi^2 = \pm \frac1{108 \sqrt{3}}$, which is also an element of $\mathbb{Q}(\sqrt{3})$. As a consequence of (\ref{curve}) these critical values can also be computed as the roots (in $\xi^2$) of the discriminant of 
\begin{eqnarray*}
j^2 \xi^2 \widehat{B}_{12}^{2 \nu + 1} - y_0 \left(\widehat{B}_{12} - y_0^{1/2} \widehat{B}_{11}\right)^{2 \nu - 1}.
\end{eqnarray*} 
Now consider the basic generating functions: $\frac{f_g}{f_0}$ or $E_g$ for $g > 0$. We will see in Section \ref{finestruc} that these are all rational functions in $y_0$ with poles contained in the zeroes of $ \Pi_- \Pi_+$ (which are algebraic numbers in the splitting field of $ \Pi_- \Pi_+$) and with coefficients in the same splitting field. Clearly the Galois group of $ \Pi_- \Pi_+$ (over $\mathbb{Q}$) acts on this expansion by permuting the principal parts associated to the different roots. 

The Galois group here is the same as the monodromy group of the branched covering of the 
$\mathcal{C}$ over the $\xi^2$-plane. In the event that these roots are all simple (which would be the case if the number of critical values for the above-mentioned discriminant was $2 \nu$) then this is the full permutation group on $2 \nu$ roots which induces, as well,  a specific transformation on the coefficients. In this case knowing the principal part of a generating function at one of the simple roots would determine the principal part at all roots via the Galois action. For us it would be most natural to take the root corresponding to the real turning point on the right (i.e., for $\xi^2 > 0$) since the $\xi^2$-coordinate for this turning point is the radius of convergence for the generating function viewed as a Taylor-Maclaurin expansion. If there are multiplicities then the monodromy will not be the full permuation group but there will be subgroup associated to the right turning point that permutes it with a subset of the other simple branch points while preserving $\mathcal{C}$. This point of view is a very natural parallel to what one has in the case of even valence.

\subsection{Divisibility Properties}
We start again with (\ref{Pi}) and transform it:
\begin{eqnarray*}
\frac{\Pi_\mp}{\widehat{B}_{12}} &=& \frac{\widehat{d}_\mp}{2 \pm \sqrt{y_0}}  \\
&=& \frac{2 - j \left(1 - \frac1{z_0}\right) \mp (j-2) \sqrt{y_0}}{2 \pm \sqrt{y_0}} \\
&=& \frac{2 - j \left(1 - \frac1{z_0}\right) \mp (j-2) \sqrt{y_0} + 2(j-2) -2(j-2)}{2 \pm \sqrt{y_0}} \\
&=& \frac{2 - j \left(1 - \frac1{z_0}\right) - (j-2) \left( 2 \pm \sqrt{y_0}\right) + 2(j-2)}{2 \pm \sqrt{y_0}} \\
&=& \frac{(j-2) + j\frac{\left(\widehat{B}_{12} - y_0^{1/2} \widehat{B}_{11}\right)}{\widehat{B}_{12}}}{2 \pm \sqrt{y_0}}  - (j-2)\\
&=& \frac1{\widehat{B}_{12}} \left[\frac{(j-2) \widehat{B}_{12} + j\left(\widehat{B}_{12} - y_0^{1/2} \widehat{B}_{11}\right)}{2 \pm \sqrt{y_0}}  - (j-2) \widehat{B}_{12} \right]\\
&=& \frac1{\widehat{B}_{12}} \left[\frac{(2j-2) \widehat{B}_{12} - j  y_0^{1/2} \widehat{B}_{11}}{2 \pm \sqrt{y_0}}  - (j-2) \widehat{B}_{12} \right]\\
\end{eqnarray*}

where in the fifth line we have applied (\ref{red1}). From this we may then conclude that
\begin{eqnarray*}
\Pi_\mp &=& \frac{(2j-2) \widehat{B}_{12} - j  y_0^{1/2} \widehat{B}_{11}}{2 \pm \sqrt{y_0}}  - (j-2) \widehat{B}_{12}\\
\Pi_\mp + (j-2) \widehat{B}_{12}&=& \frac{(2j-2) \widehat{B}_{12} - j  y_0^{1/2} \widehat{B}_{11}}{2 \pm \sqrt{y_0}}.  
\end{eqnarray*}

It follows that, in fact,
\begin{eqnarray}
\frac{(2j-2) \widehat{B}_{12} - j  y_0^{1/2} \widehat{B}_{11}}{4 - y_0}
\end{eqnarray}
is a polynomial! This may be rephrased in terms of our Appell polynomials (where $\zeta = \sqrt{y_0})$ as stating, equivalently, that

\begin{eqnarray*}
&&\frac{4\nu S_{2\nu}(\zeta) -(2\nu + 1) \partial R_{2\nu}(\zeta)}{4 - \zeta^2}\\
&=& \frac{4\nu S_{2\nu}(\zeta) -(2\nu + 1) 2\nu  R_{2\nu - 1}(\zeta)}{4 - \zeta^2}\\
&=& 2\nu \frac{2 S_{2\nu}(\zeta) -(2\nu + 1)   R_{2\nu - 1}(\zeta)}{4 - \zeta^2}
\end{eqnarray*}
is a polynomial of degree $\nu - 1$ in $\zeta^2$ with rational coefficients.

\subsection{Mixed Valence and Stability} \label{stability}
 
We first consider the regular case of our system (\ref{SYSTEM}). For pure odd valence, we have seen in Theorem \ref{genthm} (and its proof in Appendix \ref{GenThm}) that the left and right real turning points are of square root branch point type located at a real simple zero of $d_+/ r_-$ and $d_-/r_+$ respectively and that they do indeed travel with the respective characteistic speeds $\lambda_-$ and $\lambda_+$. Under the reduction from $(x, \xi, h_0, f_0)$ to self-similar variables $(\xi^2, y_0)$, that can be done in the regular case, we have denoted the branch point location  on the right (the one closest to the origin) by  $(\xi_c^2, y_{0 c})$ where 
\beann
\xi_c^2 &=& x^{j-2} t_c^2 \\
y_{0 c} &=& \frac{h_0^2(\xi_c)}{f_0(\xi_c^2)}.
\eeann
A completely similar description holds in the case of pure even valence except that (\ref{SYSTEM}) now collapses to a scalar conservation law and there is only a single simple real turning point. 

When one passes to the fuller setting of mixed valence, one no longer has the self-similar reduction enabling one to view the integral surface associated to the pure valence initial data as  the  scaling of a spectral curve. Instead, for mixed valence initial data, one must work in the full phase space $(x, \xi, h_0, f_0)$. Nevertheless, the results of \cite{CEHL93} say that a qualitatively similar situation persists concerning branch point behavior. In particular, it is of most interest to focus on the situation which is a generic mixed valence perturbation of a pure valence case; i.e., where then initial $\vec{t}$ parameters are in a small neighborhood of $\vec{t}^{(j)} = (0, \dots, 0 , t_j, 0, \dots, 0)$. Since \cite{CEHL93} shows that the real turning points are stable singularities, the local structure must persist; in particular they remain square-root type branch points. On the other hand, some aspects of the integral surface description in Theorem \ref{genthm} will change. The only qualitatively significant point here concerns the vertical tangencies described in part (a) of Theorem \ref{genthm} which arise for pure valence $j > 3$. Since these types of singularities are not of the stable types mentioned at the start of this subsection (and detailed in Theorem 1.2 of \cite{CEHL93}), these tangencies will break up into a collection of stable types under genric multivalent perturbations. However for many applications this will not affect results for enumerative questions about maps. In particular, for asymptotic (in large size) statements such perturbations will not alter the asymptotic results found in the pure valence case. The reason for this stems from methods of {\it singularity analysis} in analytic combinatorics (see \cite{FS}). The essential point is that all singularities potentially contribute to, and in fact completely determine, asymptotic expansions of Taylor-Maclaurin coefficients in the generating functions centered at $\xi = 0$. However, the contributions of the different singularities are exponentially separated in terms of the distance of the respective singularities from $\xi =0$. Our analysis in Theorem \ref{genthm} showed that the turning point singularity on the right in the case of pure odd valence, which is the only singularity in the case of pure even valence, is the one that is closest to $\xi = 0$. This feature will not change under sufficiently small perturbations; i.e., that turning point is stable and it will remain the closest singularity. The contributions of all other singularities will be beyond all orders in terms of the asymptotic expansion at the right turning point.

We will say more about the relevance of these stability observations in section \ref{asympenum}.

\section{Topological Recursion} \label{toprec} 
Finally we turn to the mechanism for our original goal: the explicit compact calculation
of the map generating functions $E_g$ as well as the higher genus associated generating
functions $h_g, f_g$. This will all be based on the contunuum string equation formulation presented in section \ref{motzstring} but made effective after a number of transformations based on String polynomials and associated Hopf algebra constructions. We outline those stages here first before getting into them in detail.
\begin{enumerate}
\item[I.] {\it Continuum String Equations in Differential-Difference Resolvent Form}
This is the content of equations (\ref{ContString} - \ref{StringCoeff}). The differential operators here are in terms of $\partial_\eta$ where $\eta$ labels the {\it height} associated to the continuum raising operator. 
\item[II.] {\it Continuum String Coefficients in terms of Bessel-type Operators acting on String Polynomials} This is a key step in which the height differentiation operators are transmuted into Bessel-type operators acting on string polynomials. This is where the connection to Hopf-algebra analysis can come in. The differential operators here are in terms of $\partial_f$ and $\partial_h$
where $f$ and $h$ are the continuum genearting functions. The basic ingredients of this transition are described in subsection \ref{BesselRedux}. This is formulated in terms of the extended string polynomials (\ref{PHI}). Then the somewhat lengthy "change of variables" is outlined in subsection \ref{StringOps}.
\item[III.] {\it Order Expansion in $1/N$} As was explained earlier in the paper, the hierarchy of continuum string equations is extracted by sytematically collecting terms of fixed order in $1/N$ from the continuum string equations which here will be the fixed order terms coming from (\ref{finalform}). In subsection \ref{loworder} we will illustrate this in a few low order cases. The key observation here is that the ladder-like structure that the extended String Polynomials inherit from Bessel generating functions enable one to reduce everything down to ordinary string polynomials.
\item[IV.] {\it Unwinding Analysis} In this final step, repeated usage of the unwinding identity (Prop \ref{prop05}) and integration by parts enable one to eliminate higher order string polynomials and reduce everything down to the leading order generating functions and their $x$-derivatives. In section \ref{UnivEg} we will see how, in general, the higher genus generating functions may be compactly expressed in terms of $h_0, f_0$ and their $x$-derivatives in a form that is independent of valence. 
\item[V.] Although our general formula (\ref{finalform}) for the string coefficients is stated for a fixed valence $j$, this only enters through coefficient extraction in terms of $[z^j]$. Hence the formula extends naturally to the case of mixed valence by linearity.
\end{enumerate}

\subsection{Some Fundamental Reduction Formulas based on Bessel Identities} \label{BesselRedux}
We will make use of the following basic Bessel identities:
\begin{eqnarray} \label{eq:BessId1}
\frac12 (I_{n-1}(t) - I_{n+1}(t)) &=& \frac{n}t I_n(t) \\  \label{eq:BessId2}
\frac12 (I_{n-1}(t) + I_{n+1}(t)) &=&  I^\prime_n(t)
\end{eqnarray}
From this, one immediately has the system
\beann
\left( \begin{array}{cc} -t \partial_t  & t \\ -t  & t \partial_t \end{array}\right)
\left( \begin{array}{c}  I_n \\ I_{n-1}\end{array}\right) &=&  \left( \begin{array}{c} n I_n \\ (n-1) I_{n-1}\end{array}\right)\\
\left( \begin{array}{cc} -t \partial_t  & t  \\ - t & t \partial_t \end{array}\right)^2
\left( \begin{array}{c}  I_n \\ I_{n-1}\end{array}\right) &=& 
\left( \begin{array}{cc} (t \partial_t)^2 - t^2  & t \\ -t & (t \partial_t)^2 - t^2\end{array}\right)
\left( \begin{array}{c}  I_n \\ I_{n-1}\end{array}\right)\\
&=&  \left( \begin{array}{c} n^2 I_n -t I_{n-1}\\ (n-1)^2 I_{n-1} + t I_n\end{array}\right)\\
\left[(t \partial_t)^2 - t^2\right] \left( \begin{array}{c}  I_n \\ I_{n-1}\end{array}\right) &=& \left( \begin{array}{c} n^2 I_n \\ (n-1)^2 I_{n-1} \end{array}\right).
\eeann

Now making the substitutions
\begin{eqnarray} \label{extendPHI}
x = 2 s z\sqrt{f} \qquad \partial_x = \frac{\sqrt{f}}{sz} \partial_f \qquad \Phi_n = I_n(2sz\sqrt{f})
e^{szh}
\end{eqnarray}
the above systems become
\beann
\left( \begin{array}{cc} -2f \partial_f   & 2 \sqrt{f} \partial_h \\ -2 \sqrt{f} \partial_h  & 2 f \partial_f\end{array}\right)
\left( \begin{array}{c}  \Phi_n \\ \Phi_{n-1}\end{array}\right) =  \left( \begin{array}{cc} n & 0 \\ 0 & n - 1\end{array}\right)   \left( \begin{array}{c}  \Phi_n \\ \Phi_{n-1}\end{array}\right) \\ 
  \left[ 4 (f \partial_f)^2 - 4 f \partial_h^2  \right] 
\left( \begin{array}{c} \Phi_n \\ \Phi_{n-1}\end{array}\right)
= \left( \begin{array}{cc} n^2  & 0 \\ 0 & (n - 1)^2 \end{array}\right) \left( \begin{array}{c} \Phi_n \\ \Phi_{n-1}\end{array}\right) .
\eeann
\begin{rem}
$\Phi_n$ introduced here extends the definition given in (\ref{PHI}) in that here $h_0$ and $f_0$
are prolonged to the full generating functions $h$ and $f$ and $s$ has been extended to $sz$.
\end{rem}

\begin{cor} Applying the first variation in the case where $n = 1$ we have
\begin{eqnarray} \label{StringRel}
\left( \begin{array}{cc} -2f \partial_f - 1 & 2 \sqrt{f} \partial_h \\ -2 \sqrt{f} \partial_h & 2 f \partial_f\end{array}\right)
\left( \begin{array}{c}  \Phi_1 \\ \Phi_{0}\end{array}\right) &=&  0
\end{eqnarray}
\end{cor}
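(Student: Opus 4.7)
The plan is to obtain this corollary as a one-line specialization of the matrix Bessel identity established immediately above in the same subsection. That identity reads
\begin{equation*}
\left( \begin{array}{cc} -2f \partial_f   & 2 \sqrt{f} \partial_h \\ -2 \sqrt{f} \partial_h  & 2 f \partial_f\end{array}\right)
\left( \begin{array}{c}  \Phi_n \\ \Phi_{n-1}\end{array}\right) =  \left( \begin{array}{cc} n & 0 \\ 0 & n-1\end{array}\right)   \left( \begin{array}{c}  \Phi_n \\ \Phi_{n-1}\end{array}\right),
\end{equation*}
and was derived from the two standard Bessel recurrences $\tfrac{1}{2}(I_{n-1}(t)\mp I_{n+1}(t)) = (n/t)I_n(t)$ or $I_n'(t)$ together with the change of variables $x=2sz\sqrt{f}$, $\partial_x = \tfrac{\sqrt{f}}{sz}\partial_f$ (and, through the exponential factor $e^{szh}$, the identification $\partial_x = \tfrac{1}{sz}\partial_h$ on the $h$-variable). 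Specializing to $n=1$ is the entire step.

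With $n=1$ the right-hand diagonal matrix becomes $\mathrm{diag}(1,0)$, so the right-hand column reduces to $(\Phi_1,0)^{T}$. Subtracting this column from both sides absorbs the scalar $1$ into the $(1,1)$-entry of the operator on the left, converting $-2f\partial_f$ into $-2f\partial_f -1$, while the second row is unaffected because the lower-right entry $n-1$ vanishes at $n=1$. The resulting homogeneous vector equation is precisely (\ref{StringRel}). As a consistency check, writing out the two rows componentwise gives
\begin{align*}
(-2f\partial_f - 1)\Phi_1 + 2\sqrt{f}\,\partial_h \Phi_0 &= 0, \\
-2\sqrt{f}\,\partial_h \Phi_1 + 2 f\,\partial_f \Phi_0 &= 0,
\end{align*}
which, when the substitutions (\ref{extendPHI}) are inverted, are exactly the $n=1$ and $n=0$ instances of the normalized Bessel recurrences applied to $I_1$ and $I_0$.

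The only potential obstacle is bookkeeping around the extension of $\Phi_n$ in (\ref{extendPHI}) from $f_0,h_0$ to the full generating functions $f,h$ and from $s$ to $sz$. This is automatic: the chain-rule factor produced by $\partial_h e^{szh} = sz\, e^{szh}$ is exactly the factor absorbed when the raw Bessel identities were converted into the preceding matrix form, so no new calculation is required beyond verifying the $n=1$ substitution. There is no real obstacle; the content of the corollary is the observation that at $n=1$ the off-diagonal Bessel ladder closes into a homogeneous system on $(\Phi_1,\Phi_0)$.
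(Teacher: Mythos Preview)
Your proposal is correct and takes essentially the same approach as the paper: the corollary is stated immediately after the general matrix identity and is obtained simply by setting $n=1$, which collapses the right-hand side to $(\Phi_1,0)^{T}$ and, after moving that term to the left, produces the $-1$ in the $(1,1)$-entry.
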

Note that this generates the ideal of relations in the algebra of string functions since 
$\left( \begin{array}{c} \Phi_1 \\ \Phi_{0}\end{array}\right)$ is the generating function for the $\left( \begin{array}{c}  \psi_m \\ \phi_{m}\end{array}\right)$.
\medskip

\noindent More generally,
\beann
\left( \begin{array}{cc} -t \partial_t  & t \\ -t  & t \partial_t \end{array}\right)
\left( \begin{array}{c}  n^{2m}I_n \\ (n-1)^{2m} I_{n-1} \end{array}\right) &=&  \left( \begin{array}{c} n^{2m+1} I_n + t (n-1)^{2m} I_{n-1} \\ (n-1)^{2m+1} I_{n-1} - t n^{2m}I_n\end{array}\right)\\
\left( \begin{array}{cc} -t \partial_t  & 0 \\ 0  & t \partial_t \end{array}\right) \left[(t \partial_t)^2 - t^2\right]^m \left( \begin{array}{c}  I_n \\  I_{n-1} \end{array}\right) &=&
\left( \begin{array}{c}  n^{2m+1} I_n \\ (n-1)^{2m+1} I_{n-1} \end{array}\right) 
\eeann
which yields
\begin{eqnarray} \label{BesselGenerator}
\left( \begin{array}{cc} -2f \partial_f  & 0 \\ 0  & 2f \partial_f \end{array}\right) \left[ 4 (f \partial_f)^2 - 4 f \partial_h^2  \right]^m \left( \begin{array}{c} \Phi_n \\ \Phi_{n-1}\end{array}\right) &=&
\left( \begin{array}{c}  n^{2m+1} \Phi_n \\ (n-1)^{2m+1} \Phi_{n-1} \end{array}\right).
\end{eqnarray}

\subsection{String Equations in terms of Operators on String Functions} \label{StringOps}

Now we turn to studying the string equations in the string function representation of section \ref{motzstring}, where the sum over distinguished steps $\vec{m}, \vec{\sigma}$ consistent with $\alpha, \beta$ is understood. For some notational compression we set
\beann
\mathcal{O}(p) &=& \left\{ \begin{array}{cc}  \left[ 4 (f \partial_f)^2 - 4 f \partial_h^2  \right]^m & p = 2m\\
\left( \begin{array}{cc} -2f \partial_f  & 0 \\ 0  & 2f \partial_f \end{array}\right) \left[ 4 (f \partial_f)^2 - 4 f \partial_h^2  \right]^m & p =2m+1 > 1 \\
\left( \begin{array}{cc} -2f \partial_f  & 2 \sqrt{f} \partial_h  \\ - \sqrt{f} \partial_h  & 2f \partial_f \end{array}\right) & p = 1\end{array} \right.
\eeann

Also, $\mathcal{M}^{d}_{n}$ denotes the space of monomials $p$ of degree $d$ in $n+1$ variables $p_0, \dots, p_{n}$ which are of the form
\beann
\frac{p_0^{\sum_{q=0}^{n-1} k_0^{(q)}} \cdots p_{n-1}^{\sum_{q=0}^{n-1} k_{n-1}^{(q)}}}{\prod_{q=0}^{n-1} \prod_{0 \leq i \leq q} k_i^{(q)}!} \times 
\left(|\alpha| + |\beta| - \sum_{j=1}^{n-1} M_j\right) ! \frac{p_n^{\sum_{i=0}^{n-1} k_i^{(n)}}}{\prod_{0 \leq i \leq n-1} k_i^{(n)}!}
\eeann
where $k_0^{(0)} = M_0 \leq m_1, k_0^{(1)} + k_1^{(1)} = M_1 \leq m_2, \dots,  k_0^{(n-1)} + \cdots + k_{n-1}^{(n-1)} = M_{n-1} \leq m_n, k_0^{(n)} + \cdots + k_{n-1}^{(n)} = |\alpha| + |\beta| - \sum_{j=1}^{n-1} M_j$, with the stipulation that if $\Sigma_q = 0$ then $M_{q-1} = m_{q}$. Also, we set $\Sigma_q = \sum_{\gamma=0}^{q-1} \sigma_\gamma$. (Note that it is always the case that $\sigma_0 =0$; hence, one always has $\Sigma_0 = 0$.)
\bigskip

We now proceed to re-express (\ref{StringCoeff}) in terms of Bessel-type operators (as defined in the previous section) acting on string functions:
\beann
\left( \begin{array}{c}\tilde{P}^{(a)}_{\alpha, \beta, j} (h, f) \\ \tilde{P}^{(b)}_{\alpha, \beta, j} (h, f) \end{array} \right) 
=    \left( \begin{array}{c}   [\eta^{\ell(\beta)} z^{j-1-\ell(\alpha) - \ell(\beta)}] \frac{1}{f^{\ell(\beta) /2}} \cr [\eta^{\ell(\beta) - 1} z^{j-1-\ell(\alpha) - \ell(\beta)}] \frac{1}{f^{((\ell(\beta)-1)/2)}}  \end{array} \right)^\dagger   \prod_{q=1}^{\ell(\alpha) + \ell(\beta)} \frac1{m_q !} \,\,\,\,\, \\
\left[\sum_{\gamma=0}^{q-1}  \left( \eta_\gamma \partial_{\eta_\gamma}  -   \sigma_\gamma    \right)  \right]_{\eta_\gamma = \eta}^{m_q}
\prod_{\gamma = 0}^{\ell(\alpha) + \ell(\beta)} \frac1{1 - z(\sqrt{f^{(\gamma)}} \eta_\gamma + h^{(\gamma)} + \sqrt{f^{(\gamma)}} \eta_\gamma^{-1})}  \,\,\,\,\, \\
=   [\eta^{\ell(\beta)-1} z^{j-1-\ell(\alpha) - \ell(\beta)}]  \left(\begin{array}{c} \frac{1}{f^{\ell(\beta)/2}}  \frac1{\eta} \\ \frac{1}{f^{((\ell(\beta)-1)/2)}}  \end{array}\right)^\dagger \,\,\,\,\, \\
\prod_{q=1}^{\ell(\alpha) + \ell(\beta)} \frac1{m_q !}
\left[ \sum_{\gamma=0}^{q-1}                                                       \eta_\gamma \partial_{\eta_\gamma}  -   \Sigma_q    \right]^{m_q} \int_0^\infty \cdots \int_0^\infty ds_0 \dots ds_{\ell_\alpha + \ell_\beta} \,\,\,\,\, \\
\left(  \prod_{\gamma = 0}^{\ell(\alpha) + \ell(\beta)}  e^{-s_\gamma\left(1 -z\left(\sqrt{f^{(\gamma)}} \eta_\gamma + h^{(\gamma)} + \sqrt{f^{(\gamma)}} \eta_\gamma^{-1}\right)\right)} \right) \Big|_{\eta_\gamma = \eta}
\eeann

\beann
=  [\eta^{\ell(\beta) - 1} z^{j-1-\ell(\alpha) - \ell(\beta)}] \frac1{f^{\ell(\beta)/2}} \left( \begin{array}{c} \frac{ 1}{\eta} \\ 1 \end{array}\right)^\dagger \left(\begin{array}{cc} 1  & 0 \cr 0 & f^{1/2}  \end{array} \right) \,\,\,\,\, \\ 
\sum_{p \in \mathcal{M}^{|\alpha|+|\beta|}_{\ell(\alpha) + \ell(\beta)}}\prod_{i=0}^{\ell(\alpha) + \ell(\beta) - 1 } \\
\int_0^\infty ds_{i}  \frac{\left(\eta_\gamma \partial_{\eta_\gamma} \right)^{\deg p_i \in p}}{\prod_{q=0}^{\ell(\alpha) + \ell(\beta)-1} k_i^{(q)} ! }
  e^{-s_i \left(1 -z\left(\sqrt{f^{(\gamma)}} \eta_\gamma + h^{(\gamma)} + \sqrt{f^{(\gamma)}} \eta_\gamma^{-1}\right)\right)}\Big|_{\eta_\gamma = \eta} 
\times \\ \left(|\alpha| + |\beta| - \sum_{j=1}^{\ell(\alpha) + \ell(\beta)-1} M_j\right) ! \\ 
\frac{\prod_{q=1}^{\ell(\alpha) + \ell(\beta)} \left(-\Sigma_{q} \right)^{k_{q-1}^{(\ell(\alpha) + \ell(\beta))}}}{\prod_{q=1}^{\ell(\alpha) + \ell(\beta)} k_{q-1}^{(\ell(\alpha) + \ell(\beta))} ! }  \frac1{1 - z(\sqrt{f} \eta + h + \sqrt{f} \eta^{-1})} 
\eeann

\beann
=  [\eta^{\ell(\beta) - 1} z^{j-1-\ell(\alpha) - \ell(\beta)}] \frac1{f^{\ell(\beta)/2}} \left( \begin{array}{c} \frac{ 1}{\eta} \\ 1 \end{array}\right)^\dagger \left(\begin{array}{cc} 1 & 0 \cr 0 & f^{1/2}  \end{array} \right) \,\,\,\,\, \\
\sum_{p \in \mathcal{M}^{|\alpha|+|\beta|}_{\ell(\alpha) + \ell(\beta)}}\prod_{i=0}^{\ell(\alpha) + \ell(\beta) - 1 }\int_0^\infty ds_{i}  \frac{ \left(\eta \partial_{\eta} \right)^{\deg p_i \in p}}{\prod_{q=0}^{\ell(\alpha) + \ell(\beta) - 1} k_i^{(q)} ! }
  \sum_{n = -\infty}^\infty     I_n(2 s_i z \sqrt{f^{(i)}}) e^{s_i z h^{(i)}} \eta^n \\
\times  \left(|\alpha| + |\beta| - \sum_{j=1}^{\ell(\alpha) + \ell(\beta)-1} M_j\right) ! \\
\frac{\prod_{q=1}^{\ell(\alpha) + \ell(\beta)} \left( -\Sigma_{q} \right)^{k_{q-1}^{(\ell(\alpha) + \ell(\beta))}}}{\prod_{q=1}^{\ell(\alpha) + \ell(\beta)} k_{q-1}^{(\ell(\alpha) + \ell(\beta))} ! }  \frac1{1 - z(\sqrt{f} \eta + h + \sqrt{f} \eta^{-1})} \,\,\,\,\, \\
=  [\eta^{\ell(\beta) - 1} z^{j-1-\ell(\alpha) - \ell(\beta)}] \frac1{f^{\ell(\beta)/2}} \left( \begin{array}{c} \frac1{\eta} \\ 1 \end{array}\right)^\dagger \left(\begin{array}{cc} 1 & 0 \cr 0 & f^{1/2}  \end{array} \right) 
\sum_{p \in \mathcal{M}^{|\alpha|+|\beta|}_{\ell(\alpha) + \ell(\beta) }}  \\  \prod_{i=0}^{\ell(\alpha) + \ell(\beta) - 1 }\int_0^\infty ds_{i}  \frac{1}{\prod_{q=0}^{\ell(\alpha) + \ell(\beta)-1} k_i^{(q)} ! }
  \sum_{n = -\infty}^\infty   \left(n\right)^{\deg p_i \in p}   I_n(2 s_i z \sqrt{f^{(i)}}) e^{s_i z h^{(i)}} \eta^n \\
\times  \left(|\alpha| + |\beta| - \sum_{j=1}^{\ell(\alpha) + \ell(\beta)-1} M_j\right) ! \\ 
\frac{\prod_{q=1}^{\ell(\alpha) + \ell(\beta)} \left(-\Sigma_{q} \right)^{k_{q-1}^{(\ell(\alpha) + \ell(\beta))}}}{\prod_{q=1}^{\ell(\alpha) + \ell(\beta)} k_{q-1}^{(\ell(\alpha) + \ell(\beta))} ! }   \frac1{1 - z(\sqrt{f} \eta + h + \sqrt{f} \eta^{-1})}
\eeann

\beann
=  \frac{[\eta^{\ell(\beta) - 1} z^{j-1-\ell(\alpha) - \ell(\beta)}]}{f^{\ell(\beta)/2}}  \times \left(\begin{array}{cc} 1 & 0 \cr 0 & f^{1/2}  \end{array} \right) \\
\sum_{p \in \mathcal{M}^{|\alpha|+|\beta|}_{\ell(\alpha) + \ell(\beta)}} \int_0^\infty ds_{0}  \frac{1}{\prod_{q=0}^{\ell(\alpha) + \ell(\beta)} k_0^{(q)} ! }
  \sum_{n = -\infty}^\infty  \mbox{diag} \left(n + 1, n\right)^{\deg p_0 \in p}  \left( \begin{array}{c}  \Phi_{n+1} \\  \Phi_{n}  \end{array}\right)^\dagger \eta^n  \,\,\,\,\, \\
  \times \prod_{i=1}^{\ell(\alpha) + \ell(\beta) - 1 }\int_0^\infty ds_{i}  \frac{1}{\prod_{q=0}^{\ell(\alpha) + \ell(\beta)-1} k_i^{(q)} ! }
  \sum_{n = -\infty}^\infty  (n)^{\deg p_i \in p}  I_n(2 s_i z \sqrt{f^{(i)}}) e^{s_i z h^{(i)}} \eta^n \,\,\,\,\, \\
\times \left(|\alpha| + |\beta| - \sum_{j=1}^{\ell(\alpha) + \ell(\beta)-1} M_j\right) ! \frac{\prod_{q=0}^{\ell(\alpha) + \ell(\beta)} \left(- \Sigma_q \right)^{k_{\ell(\alpha) + \ell(\beta)}^{(q)}}}{\prod_{q=0}^{\ell(\alpha) + \ell(\beta)} k_{\ell(\alpha) + \ell(\beta)}^{(q)} ! }  \frac1{1 - z(\sqrt{f} \eta + h + \sqrt{f} \eta^{-1})}
\eeann

\beann
=  \frac{[\eta^{\ell(\beta) - 1} z^{j-1-\ell(\alpha) - \ell(\beta)}]}{f^{\ell(\beta)/2}}  \left(\begin{array}{cc} 1 & 0 \cr 0 & f^{1/2}  \end{array} \right) \,\,\,\,\, 
\sum_{p \in \mathcal{M}^{|\alpha|+|\beta|}_{\ell(\alpha) + \ell(\beta)}}   \frac{1}{\prod_{q=0}^{\ell(\alpha) + \ell(\beta)-1} k_0^{(q)} ! }  \\  
\mathcal{O}(\deg p_0 \in p) 
 \frac1{1 - z(\sqrt{f} \eta + h + \sqrt{f} \eta^{-1})}  \\
\times \left( \begin{array}{c} \frac{1}{\eta}  \\ 1 \end{array}\right)   \prod_{i=1}^{\ell(\alpha) + \ell(\beta) - 1 }\int_0^\infty ds_{i}  \frac{1}{\prod_{q=0}^{\ell(\alpha) + \ell(\beta)-1} k_i^{(q)} ! }
  \sum_{n = -\infty}^\infty   \left(n\right)^{\deg p_i \in p}   I_n(2 s_i z \sqrt{f}) e^{s_i z h} \eta^n \,\,\,\,\, \\
\times \left(|\alpha| + |\beta| - \sum_{j=1}^{\ell(\alpha) + \ell(\beta)-1} M_j\right) ! \frac{\prod_{q=1}^{\ell(\alpha) + \ell(\beta)} \mbox{diag}\left(- \Sigma_q\right)^{k_{q-1}^{(\ell(\alpha) + \ell(\beta))}}}{\prod_{q=1}^{\ell(\alpha) + \ell(\beta)} k_{q-1}^{(\ell(\alpha) + \ell(\beta))} ! }   \frac1{1 - z(\sqrt{f} \eta + h + \sqrt{f} \eta^{-1})} \,\,\,\,\, \\
=  \frac{[\eta^{\ell(\beta) - 1} z^{j-1-\ell(\alpha) - \ell(\beta)}]}{f^{\ell(\beta)/2}}  \left(\begin{array}{cc} 1 & 0 \cr 0 & f^{1/2}  \end{array} \right)  \sum_{p \in \mathcal{M}^{|\alpha|+|\beta|}_{\ell(\alpha) + \ell(\beta)}} \,\,\,\,\, 
 \frac{2^{\deg p_0 \in p}}{\prod_{q=0}^{\ell(\alpha) + \ell(\beta)-1} k_0^{(q)} ! }   \\ \mathcal{O}({\deg p_0 \in p})   \frac1{1 - z(\sqrt{f} \eta + h + \sqrt{f} \eta^{-1})}   \,\,\,\, 
 \left( \begin{array}{c} \frac{1}{\eta}  \\ 1 \end{array}\right) \\ 
\prod_{i=1}^{\ell(\alpha) + \ell(\beta) - 1 }\int_0^\infty ds_{i}  \frac{1}{\prod_{q=0}^{\ell(\alpha) + \ell(\beta)-1} k_i^{(q)} ! }
  \sum_{n = -\infty}^\infty   \left(n\right)^{\deg p_i \in p}   I_n(2 s_i z \sqrt{f^{(i)}}) e^{s_i z h^{(i)}} \eta^n \,\,\,\,\, \\
 \times \left(|\alpha| + |\beta| - \sum_{j=1}^{\ell(\alpha) + \ell(\beta)-1} M_j\right) ! \frac{\prod_{q=1}^{\ell(\alpha) + \ell(\beta)} \left(-\Sigma_q \right)^{k_{q-1}^{(\ell(\alpha) + \ell(\beta))}}}{\prod_{q=1}^{\ell(\alpha) + \ell(\beta)} k_{q-1}^{(\ell(\alpha) + \ell(\beta))} ! }   \frac1{1 - z(\sqrt{f} \eta + h + \sqrt{f} \eta^{-1})} \,\,\,\,\, 
\eeann

Iterating this shift, we ultimately get

\beann
=   \frac{[\eta^{\ell(\beta) - 1} z^{j-1-\ell(\alpha) - \ell(\beta)}]}{f^{(\ell(\beta)-1)/2}}  \left( \begin{array}{cc} 1/\sqrt{f} & 0 \\ 0 & 1  \end{array}\right) \sum_{p \in \mathcal{M}^{|\alpha|+|\beta|}_{\ell(\alpha) + \ell(\beta)}} 2^{\sum_{j=1}^{\ell(\alpha) + \ell(\beta)-1} M_j}\\
   \left[ \left(\prod_{i=0}^{\ell(\alpha) + \ell(\beta)-1} \frac{1}{\prod_{q=0}^{\ell(\alpha) + \ell(\beta)-1} k_i^{(q)} ! } \mathcal{O}({\deg p_i \in p}) 
 \frac1{1 - z(\sqrt{f} \eta + h + \sqrt{f} \eta^{-1})} \right) \left( \begin{array}{c} \frac{1}{\eta} \\ 1 \end{array}\right) \right]  \\
\times \left(|\alpha| + |\beta| - \sum_{j=1}^{\ell(\alpha) + \ell(\beta)-1} M_j\right) ! \frac{\prod_{q=1}^{\ell(\alpha) + \ell(\beta)}\left(-\Sigma_q\right)^{k_{q-1}^{(\ell(\alpha) + \ell(\beta))}}}{\prod_{q=1}^{\ell(\alpha) + \ell(\beta)} k_{q-1}^{(\ell(\alpha) + \ell(\beta))} ! }   \frac1{1 - z(\sqrt{f} \eta + h + \sqrt{f} \eta^{-1})} 
\eeann
where
the derivatives, $\partial_f, \partial_h$ act only on their particular associated factor, $\frac1{1 - z(\sqrt{f} \eta + h + \sqrt{f} \eta^{-1})}$.

So, in summary, we have:
\begin{eqnarray} \nonumber
\left( \begin{array}{c}  \tilde{P}^{(a)}_{\alpha, \beta, j} (h, f) \\ \tilde{P}^{(b)}_{\alpha, \beta, } (h, f) \end{array} \right) 
 =   \frac{[\eta^{\ell(\beta) - 1} z^{j-1-\ell(\alpha) - \ell(\beta)}]}{f^{(\ell(\beta)-1)/2}}  \left( \begin{array}{cc} 1/\sqrt{f} & 0 \\ 0 & 1  \end{array}\right) \sum_{p \in \mathcal{M}^{|\alpha|+|\beta|}_{\ell(\alpha) + \ell(\beta)}} 2^{\sum_{j=1}^{\ell(\alpha) + \ell(\beta)-1} M_j}\\ 
\label{finalform}
   \left(\prod_{i=0}^{\ell(\alpha) + \ell(\beta)-1} \frac{1}{\prod_{q=0}^{\ell(\alpha) + \ell(\beta)-1} k_i^{(q)} ! } \mathcal{O}({\deg p_i \in p}) 
 \frac1{1 - z(\sqrt{f} \eta + h + \sqrt{f} \eta^{-1})} \right) \left( \begin{array}{c} \frac{1}{\eta} \\ 1 \end{array}\right)  \,\,\,\,\,\,\,\,\,\, \, \,\,\,\ \\ \nonumber
\times \left(|\alpha| + |\beta| - \sum_{j=1}^{\ell(\alpha) + \ell(\beta)-1} M_j\right) ! \frac{\prod_{q=1}^{\ell(\alpha) + \ell(\beta)}\left(-\Sigma_q\right)^{k_{q-1}^{(\ell(\alpha) + \ell(\beta))}}}{\prod_{q=1}^{\ell(\alpha) + \ell(\beta)} k_{q-1}^{(\ell(\alpha) + \ell(\beta))} ! }   \frac1{1 - z(\sqrt{f} \eta + h + \sqrt{f} \eta^{-1})} .
\end{eqnarray}

\subsection{Evaluation at Low Orders} \label{loworder}

Evaluating the String equaitions (\ref{ContString}) using (\ref{finalform})
at leading order in $\frac1n (j = 2 \nu +1)$ yields
\begin{eqnarray*} 
&& \left( \begin{array}{c} 0 \\ x \end{array} \right) = \left(\begin{array}{c} h_0 \\ f_0 \end{array} \right) + j
t  [\eta^{-1} z^{j-1}]  \frac1{1 - z(\sqrt{f_0}\eta + h_0 + \sqrt{f_0}\eta^{-1})}  \left( \begin{array}{c}\frac{1}\eta \\ \sqrt{f_0} \end{array}\right)\\
&& \left( \begin{array}{c} 0 \\ x \end{array} \right) = \left(\begin{array}{c} h_0 \\ f_0 \end{array} \right) +j
t  [\eta^{0} z^{j-1}] \frac1{1 - z(\sqrt{f_0}\eta + h_0 + \sqrt{f_0}\eta^{-1})}  \left( \begin{array}{c} 1\\ \sqrt{f_0}\eta \end{array}\right)\\
&&x =  f_0 +  (2 \nu + 1)  t \sum_{\mu = 1}^{\nu}  {2\nu  \choose 2\mu-1, \nu - \mu, \nu - \mu + 1} h_0^{2\mu-1} f_0^{\nu - \mu + 1}  \\ 
&& \qquad = f_0 +  (2 \nu + 1)  t B_{11} \\
&& \qquad = f_0 + t \psi_0 = \psi_{V,0}\\
&& 0  =  h_0 +  (2 \nu + 1)  t \sum_{\mu = 0}^{\nu}  {2\nu  \choose 2\mu, \nu - \mu, \nu - \mu } h_0^{2\mu} f_0^{\nu - \mu}\\
&& \qquad  = h_0 +  (2 \nu + 1)  t B_{12}\\
&& \qquad = h_0 + t \phi_0 = \phi_{V,0}
\end{eqnarray*}
which coincides with our earlier calculation in (\ref{STRING} - \ref{A12}).

At $\mathcal{O}\left(\frac1n\right)$ one has contributions from $(\alpha, \beta) = ((1,0), \emptyset)$ and $(\emptyset, (1,0) )$. In preparation for this we spell out equation (\ref{finalform}) for these cases. The underlying form of (\ref{finalform}) here is
\begin{eqnarray} \label{FF1} \nonumber
&& 2 \left[ \left( \begin{array}{cc} - \sqrt{f} \partial_f & \partial_h \\ -  \sqrt{f}\partial_h &  f \partial_f\end{array}\right)
\frac1{1 - z(\sqrt{f} \eta + h + \sqrt{f} \eta^{-1})}  \left( \begin{array}{c} \frac{1}{\eta} \\ 1 \end{array}\right)\right] \frac1{1 - z(\sqrt{f} \eta + h + \sqrt{f} \eta^{-1})} \\ 
&=&   \frac{z(\eta - \eta^{-1})}{\left(1 - z(\sqrt{f} \eta + h + \sqrt{f} \eta^{-1})\right)^3}  \left( \begin{array}{c} \frac{1}{\eta} \\ \sqrt{f} \end{array}\right) 
\end{eqnarray}
Now in the case of $(\alpha, \beta) = (\emptyset, (1,0) )$, (\ref{finalform}) applies the order prefix $[\eta^0 z^{j-2}] $ to (\ref{FF1}) with translate, $\Sigma_0 = 0$,  to get 
\beann
 && \frac{1}2 \left( \begin{array}{c} [\eta^0] - [\eta^2]  \cr \sqrt{f} ([\eta^{-1}] 
- [\eta^1]) \end{array}\right) (j-1)_2 \left(\sqrt{f} \eta + h + \sqrt{f} \eta^{-1}\right)^{j-3} \\
&=&  \frac12 \left( \begin{array}{c}  [s^{j-3}](\Phi_0 - \Phi_2)  \cr \sqrt{f} [s^{j-3}](\Phi_{-1} - \Phi_1) \end{array}\right) 
\eeann
where, here, we have set $z=1$ in $\Phi_n$ from (\ref{extendPHI}). Continuing
\beann
&=&  \frac12 \left( \begin{array}{c}  [s^{j-3}] \frac2{2s\sqrt{f}}\Phi_1  \cr 0 \end{array}\right) 
\eeann
where we have again used the general Bessel identity
\begin{eqnarray} \label{BesselIdent}
I_{n-1}(t) - I_{n+1}(t) &=& \frac{2n}{t} I_n(t),
\end{eqnarray}
so that
\beann
&=&  t \left( \begin{array}{c}  \frac1{2 {f_0}}  \psi_1  \cr 0 \end{array}\right).
\eeann
where in the  last line we have evaluated $f$ at $f_0$.

In the case of $(\alpha, \beta) = ( (1,0), \emptyset)$, (\ref{finalform}) applies the order prefix $[\eta^{-1} z^{j-2}] \times  f^{1/2}$ to (\ref{FF1}), with the same previous translate, to get, by a simlar derivation, 
\beann
 \left(\begin{array}{c} 0 \cr 
  - \frac{t}{2} \psi_1
\end{array} \right).
\eeann
Finally, one also needs to extract the coefficient of $h_1$ in the $(\emptyset, \emptyset)$ term, which is straightforward. Putting this all together one has 
\beann
\left( \begin{array}{c} 0 \\ 0 \end{array} \right) &=& \left( \begin{array}{c} h_1 \\ 0 \end{array} \right) + t  \left[j [\eta^0 z^{j-1}]  \left( \begin{array}{c} 1 \\ \sqrt{f_0} \eta \end{array}\right)\frac{z h_1}{(1 - z(\sqrt{f_0}\eta + h_0 + \sqrt{f_0}\eta^{-1}))^2} 
+ \left( \begin{array}{c}\tilde{P}^{(a)}_{(1,0), \emptyset, j} (h_0, f_0)  \partial_x h_0  + \tilde{P}^{(a)}_{\emptyset, (1,0), j} (h_0, f_0)  \partial_x f_0\cr \tilde{P}^{(b)}_{(1,0), \emptyset, j} (h_0, f_0) \partial_x h_0 + \tilde{P}^{(b)}_{\emptyset, (1,0), j} (h_0, f_0)  \partial_x f_0\end{array} \right) \right]\\
&=& \left( \begin{array}{c} h_1 \\ 0 \end{array} \right) +  t  \left[j [\eta^0 z^{j-2}]  \left( \begin{array}{c}  1 \\ \sqrt{f_0} \eta \end{array}\right)\frac{h_1}{(1 - z(\sqrt{f_0}\eta + h_0 + \sqrt{f_0}\eta^{-1}))^2} 
+ \left( \begin{array}{c}\tilde{P}^{(a)}_{(1,0), \emptyset, j} (h_0, f_0)  \partial_x h_0  + \tilde{P}^{(a)}_{\emptyset, (1,0), j} (h_0, f_0)  \partial_x f_0\cr \tilde{P}^{(b)}_{(1,0), \emptyset, j} (h_0, f_0) \partial_x h_0 + \tilde{P}^{(b)}_{\emptyset, (1,0), J} (h_0, f_0)  \partial_x f_0\end{array} \right) \right]\\
&=& \left( \begin{array}{c} h_1 \\ 0 \end{array} \right) + t   \left( \begin{array}{c} \phi_1\\ \psi_1 \end{array}\right) h_1 
- h_{0 x} 
\left(\begin{array}{c} 0  \cr   \frac{t}2 \psi_1 
\end{array} \right)
+ t f_{0 x} 
\left(\begin{array}{c}  
  \frac1{2 f_0} \psi_1  \cr 0
\end{array} \right)\\
&=& \left( \begin{array}{c} A_{11} \\ A_{21} \end{array}\right) h_1 
+  h_{0 x} 
\left(\begin{array}{c} 0  \cr 
 - \frac12 A_{21}
\end{array} \right)
+  f_{0 x} 
\left(\begin{array}{c}    
   \frac1{2} A_{12}  \cr 0
\end{array} \right)
\eeann
where in the last equality we have made use of the identities (\ref{A11}) and (\ref{A12}). Continuing, we have
\beann
\left(\begin{array}{c}  0 \cr 0
\end{array} \right) &=& \left(\begin{array}{c}   
 A_{11} \left( h_1 + \frac12 \frac{A_{12}}{A_{11}} f_{0x} \right) \cr A_{21} \left( h_1 - \frac12 h_{0x}\right) 
\end{array} \right)\\
&=& \left(\begin{array}{c}   A_{11} \left( h_1 - \frac12 h_{0x} \right) \cr A_{21} \left( h_1 - \frac12 h_{0x}\right)  
\end{array} \right)\\
&=& \left(\begin{array}{c}   \hat{\phi}_1 \left( h_1 - \frac12 h_{0x}\right)\cr 
 \hat{\psi}_1 \left( h_1 - \frac12 h_{0x} \right)
\end{array} \right)
\eeann
which recovers a leading order term in the global identity (\ref{hoddident}).

At $\mathcal{O}\left(\frac1{n^2}\right)$ one has additional contributions from\\
 $(\alpha, \beta) = ((2,0), \emptyset), ((1,1), \emptyset), (\emptyset, (1,1) ), ((1,0), (1,0))$ and $(\emptyset, (2,0) )$. Setting $\epsilon = \frac1n$, one has,

\beann
\left( \begin{array}{c} 0 \\ 0 \end{array} \right) = \left( \begin{array}{c}   h_2 \\ f_1 \end{array} \right)
+  \left[  j t [\eta^0 z^{j-1}] \frac12 \frac{d^2}{d\epsilon^2}\Big|_{\epsilon = 0} \left( \begin{array}{c} 1 \\ \sqrt{f} \eta \end{array}\right)\frac{1}{1 - z(\sqrt{f}\eta + h + \sqrt{f}\eta^{-1})} 
+ t \left( \begin{array}{c}\tilde{P}^{(a)}_{(1,0), \emptyset, j} (h_0, f_0)  \partial_x h_1  \cr \tilde{P}^{(b)}_{(1,0), \emptyset, j} (h_0, f_0) \partial_x h_1 \end{array} \right) \right]\\
+  j t h_1 \left(  h_{0 x} [\eta^{-1} z^{j-2}]  \left(\begin{array}{cc} f^{-1/2} & 0 \cr 0 & f^{1/2}  \end{array} \right) + f_{0 x}[\eta^0 z^{j-2}] \left(\begin{array}{cc}  f^{-1}  & 0 \cr 0 & 1  \end{array} \right)  \right) 
 \times \sqrt{f}  \frac{3  z^2 (\eta - \eta^{-1})}{\left(1 - z(\sqrt{f} \eta + h + \sqrt{f} \eta^{-1})\right)^4}  \left( \begin{array}{c} \frac{\sqrt{f}}{\eta} \\ 1 \end{array}\right) \\
+  t \left( \begin{array}{c}\tilde{P}^{(a)}_{(2,0), \emptyset, j} (h_0, f_0)  h_{0xx} +  \tilde{P}^{(a)}_{(1,1), \emptyset, j} (h_0, f_0)  h^2_{0x}  +  \tilde{P}^{(a)}_{(1,0), (1,0), j} (h_0, f_0)  h_{0x}f_{0x} \cr \tilde{P}^{(b)}_{(2,0), \emptyset, j} (h_0, f_0)  h_{0xx} +  \tilde{P}^{(b)}_{(1,1), \emptyset, j} (h_0, f_0)  h^2_{0x}  +  \tilde{P}^{(b)}_{(1,0), (1,0), j} (h_0, f_0)  h_{0x}f_{0x} \end{array} 
\begin{array}{c} + \tilde{P}^{(a)}_{\emptyset, (1,1), j} (h_0, f_0)  f^2_{0x} +\tilde{P}^{(a)}_{\emptyset, (2,0), j} (h_0, f_0)  f_{0xx} \cr + \tilde{P}^{(b)}_{\emptyset, (1,1), j} (h_0, f_0)  f^2_{0x} +\tilde{P}^{(b)}_{\emptyset, (2,0), j} (h_0, f_0)  f_{0xx}\end{array} \right)
\eeann
\beann
\left( \begin{array}{c} 0 \\ 0 \end{array} \right) = \left( \begin{array}{c}  h_2  \\ f_1 \end{array} \right) + j t  \left( \begin{array}{c}  \phi_1 h_2 + \frac1{f_0} \psi_1 f_1 + \frac12 \phi_2 h_1^2  \cr   \psi_1 h_2 + \phi_1 f_1 + \frac12 \psi_2 h_1^2  \end{array}\right) - jt \left( \begin{array}{c}  0 \cr    \frac12 \psi_1 h_{1x}  \end{array}\right)  \\
+  j t h_1 \left(  h_{0 x} [\eta^{-1} z^{j-2}]    \left(\begin{array}{cc}  1  & 0 \cr 0 & f \end{array} \right)   + f_{0 x}[\eta^0 z^{j-2}]  \left(\begin{array}{cc} f^{-1/2} & 0 \cr 0 & f^{1/2}  \end{array} \right)  \right) 
\times   \frac{3 z^2 (\eta - \eta^{-1})}{\left(1 - z(\sqrt{f} \eta + h + \sqrt{f} \eta^{-1})\right)^4}  \left( \begin{array}{c} \frac{\sqrt{f}}{\eta} \\ 1 \end{array}\right) 
\\
+  t \left( \begin{array}{c}\tilde{P}^{(a)}_{(2,0), \emptyset, j} (h_0, f_0)  h_{0xx} +  \tilde{P}^{(a)}_{(1,1), \emptyset, j} (h_0, f_0)  h^2_{0x}  +  \tilde{P}^{(a)}_{(1,0), (1,0), j} (h_0, f_0)  h_{0x}f_{0x} \cr \tilde{P}^{(b)}_{(2,0), \emptyset, j} (h_0, f_0)  h_{0xx} +  \tilde{P}^{(b)}_{(1,1), \emptyset, j} (h_0, f_0)  h^2_{0x}  +  \tilde{P}^{(b)}_{(1,0), (1,0), j} (h_0, f_0)  h_{0x}f_{0x} \end{array} 
\begin{array}{c} + \tilde{P}^{(a)}_{\emptyset, (1,1),jJ} (h_0, f_0)  f^2_{0x} +  \tilde{P}^{(a)}_{\emptyset, (2,0), j} (h_0, f_0)  f_{0xx} \cr + \tilde{P}^{(b)}_{\emptyset, (1,1), j} (h_0, f_0)  f^2_{0x} +\tilde{P}^{(b)}_{\emptyset, (2,0), j} (h_0, f_0)  f_{0xx}\end{array} \right)
\eeann
\beann
\left( \begin{array}{c} 0 \\ 0 \end{array} \right) = \left( \begin{array}{cc} A_{11} & A_{12} \\ A_{21} & A_{22} \end{array} \right) \left( \begin{array}{c} h_2 \\  f_1\end{array} \right) - j t h_{1x} \left( \begin{array}{c}  0  \\ \frac12 {\psi}_1 \end{array} \right) + \frac12 j t h^2_{1} \left( \begin{array}{c}  \phi_2 \\ \psi_2  \end{array} \right) + j t h_1 \left( \begin{array}{c}    \frac12 \psi_2 \frac{f_{0x}}{f_0}  \\ -  \frac12 \psi_2 h_{0x} \end{array} \right) \,\,\,\,\, \\
+ t \left( \begin{array}{c}\tilde{P}^{(a)}_{(2,0), \emptyset, j} (h_0, f_0)  h_{0xx} +  \tilde{P}^{(a)}_{(1,1), \emptyset, j} (h_0, f_0)  h^2_{0x}  +  \tilde{P}^{(a)}_{(1,0), (1,0), j} (h_0, f_0)  h_{0x}f_{0x} \cr \tilde{P}^{(b)}_{(2,0), \emptyset, j} (h_0, f_0)  h_{0xx} +  \tilde{P}^{(b)}_{(1,1), \emptyset, j} (h_0, f_0)  h^2_{0x}  +  \tilde{P}^{(b)}_{(1,0), (1,0), j} (h_0, f_0)  h_{0x}f_{0x} \end{array} 
\begin{array}{c} + \tilde{P}^{(a)}_{\emptyset, (1,1), j} (h_0, f_0)  f^2_{0x} +\tilde{P}^{(a)}_{\emptyset, (2,0), j} (h_0, f_0)  f_{0xx} \cr + \tilde{P}^{(b)}_{\emptyset, (1,1), j} (h_0, f_0)  f^2_{0x} +\tilde{P}^{(b)}_{\emptyset, (2,0), j} (h_0, f_0)  f_{0xx}\end{array} \right). \,\,\,\,\,
\eeann 
We now evaluate the underlying "Bessel operator" forms  
of (\ref{finalform}) for which $|\alpha| + |\beta| = 2$. In the cases with $\ell(\alpha) + \ell(\beta) = 1$, $(\emptyset, (2,0))$ and $((2,0), \emptyset)$, this form is
\begin{eqnarray} \label{FF2}
&& \frac42 \left[ \left[  (f \partial_f)^2 -  f \partial_h^2  \right]
\frac1{1 - z(\sqrt{f} \eta + h + \sqrt{f} \eta^{-1})}  \left( \begin{array}{c} \frac{1}{\eta} \\ 1 \end{array}\right)\right] \frac1{1 - z(\sqrt{f} \eta + h + \sqrt{f} \eta^{-1})} \,\,\,\,\,\,\,\,\, \\ \nonumber
&=&  \left[ \frac{z^2 (\eta - \eta^{-1})^2 f}{\left(1 - z(\sqrt{f} \eta + h + \sqrt{f} \eta^{-1})\right)^4}  \left( \begin{array}{c} \frac{1}{\eta} \\ 1\end{array}\right) 
+\frac{z/2 \sqrt{f} (\eta + \eta^{-1})}{\left(1 - z(\sqrt{f} \eta + h + \sqrt{f} \eta^{-1})\right)^3}  \left( \begin{array}{c} \frac1{\eta} \\ 1 \end{array}\right) \right] .
\end{eqnarray}

In the cases with $\ell(\alpha) + \ell(\beta) = 2$, $(\alpha, \beta) = ((1,1), \emptyset), (\emptyset, (1,1) ),$ and $((1,0), (1,0))$, the underlying form of (\ref{finalform}) is
\begin{eqnarray} \nonumber 
&& 4 \left[ \left( \begin{array}{cc} -f \partial_f & \sqrt{f}\partial_h \\ -  \sqrt{f}\partial_h &  f \partial_f\end{array}\right)
\frac1{1 - z(\sqrt{f} \eta + h + \sqrt{f} \eta^{-1})}  \right]^2 \left( \begin{array}{c} \frac{1}{\eta} \\ 1 \end{array}\right)\frac1{1 - z(\sqrt{f} \eta + h + \sqrt{f} \eta^{-1})} \\ \label{FF4}
&+& 4\left[  \left[  (f \partial_f)^2 -  f \partial_h^2  \right] \frac1{1 - z(\sqrt{f} \eta + h + \sqrt{f} \eta^{-1})}  \left( \begin{array}{c} \frac{1}{\eta} \\ 1 \end{array}\right) \right]\frac1{(1 - z(\sqrt{f} \eta + h + \sqrt{f} \eta^{-1}))^2} \,\,\,\,\,\,\, \\ \nonumber
&=& \left[ \frac{ 3 z^2 (\eta - \eta^{-1})^2 f }{\left(1 - z(\sqrt{f} \eta + h + \sqrt{f} \eta^{-1})\right)^5}  \left( \begin{array}{c}  \frac1{\eta} \\ 1  \end{array}\right) 
+ \frac{z(\eta + \eta^{-1})\sqrt{f}}{\left(1 - z(\sqrt{f} \eta + h + \sqrt{f} \eta^{-1})\right)^4}  \left( \begin{array}{c} \frac{1}{\eta} \\ 1 \end{array}\right) \right] .
\end{eqnarray}

Inserting these forms into the $\tilde{P}^{(a)}_{\alpha, \beta, j}$ and $\tilde{P}^{(b)}_{\alpha, \beta, j}$ of bi-size $|\alpha| + |\beta| = 2$ in  (\ref{finalform}) and applying the corresponding order evaluation prefactors we can solve for $\left( \begin{array}{c}   h_2 \\ f_1 \end{array} \right)$ in the $\mathcal{O}(1/n^2)$- string equations to arrive at:
\begin{eqnarray} \label{unwound1}
\left( \begin{array}{c} h_2 \\  f_1\end{array} \right) &=& -   \left( \begin{array}{cc} A_{11} & A_{12} \\ A_{21} & A_{22} \end{array}  \right)^{-1}  \left[  - h_{1x} \left( \begin{array}{c}  0 \\ \frac12 {\psi}_1 \end{array} \right) + \frac12  h^2_{1} \left( \begin{array}{c}  \phi_2 \\ \psi_2  \end{array} \right) +  h_1 \left( \begin{array}{c}    \frac12 \psi_2 \frac{f_{0x}}{f_0}  \\ -  \frac12 \psi_2 h_{0x} \end{array} \right) \right] \\ \label{unwound2}
&-& \left( \begin{array}{cc} A_{11} & A_{12} \\ A_{21} & A_{22} \end{array} \right)^{-1} 
\left( \begin{array}{c} [\eta^0] \\  \sqrt{f_0} [\eta^{-1}]\end{array}\right) \\ \nonumber
&\times&\left[\left(h_{0xx} + \eta^{-1} \frac{f_{0xx}}{\sqrt{f_0}}\right) \left( \frac{[z^{J-4}]  (\eta - \eta^{-1})^2 f_0}{\left(1 - z(\sqrt{f} \eta + h + \sqrt{f} \eta^{-1})\right)^4}
 + \frac12 \frac{[z^{J-3}] (\eta + \eta^{-1})\sqrt{f_0}}{\left(1 - z(\sqrt{f} \eta + h + \sqrt{f} \eta^{-1})\right)^3}\right)  \right.\\ \nonumber
&+& \left.  \left( h_{0x}^2 + \frac{2 h_{0x}f_{0x}}{\sqrt{f_0}}\eta^{-1} + \frac{f_{0x}^2}{f_0} \eta^{-2}\right) \left( 3 \frac{[z^{J-5}]  (\eta - \eta^{-1})^2 f_0}{\left(1 - z(\sqrt{f} \eta + h + \sqrt{f} \eta^{-1})\right)^5} +  \frac{[z^{J-4}]  (\eta + \eta^{-1}) \sqrt{f_0}}{\left(1 - z(\sqrt{f} \eta + h + \sqrt{f} \eta^{-1})\right)^4}\right) \right.\\ \label{unwound4}
&-& \left. \left(  \frac{h_{0x}f_{0x}}{\sqrt{f_0}}\eta^{-1} + \frac{f_{0x}^2}{f_0} \eta^{-2}\right)  \frac{[z^{J-4}] (\eta - \eta^{-1}) \sqrt{f_0}}{\left(1 - z(\sqrt{f} \eta + h + \sqrt{f} \eta^{-1})\right)^4}   \right]
\end{eqnarray}
where the two terms in the last line above are coming from $\vec{\sigma}$-shifts.
Before continuing we will make use of umbral relations, based on (\ref{PHI}) and repeated  application (\ref{BesselIdent}) in the setting of (\ref{extendPHI}), to reduce the $\eta$-order evaluation operators by shifting. The following should be viewed as equations between order evaluation operators.

\begin{eqnarray}
\left( \begin{array}{c} [\eta^0] \\ \sqrt{f_0} [\eta^{-1}]\end{array}\right) (\eta - \eta^{-1})^2 f_0 &=& f_0 \left( \begin{array}{c} -2 ([\eta^0] - [\eta^2]) \\ - \sqrt{f_0} ([\eta^1] - [\eta^3] \end{array} \right)\\ \nonumber
&=& -\frac2s \left( \begin{array}{c} \sqrt{f_0} [\eta^1] \\ {f_0} [\eta^{0}]\end{array}\right) + \frac2{s^2} \left( \begin{array}{c} 0 \\ \sqrt{f_0} [\eta^{1}]\end{array}\right)\\
\left( \begin{array}{c} [\eta^0] \\ \sqrt{f_0} [\eta^{-1}]\end{array}\right) (\eta + \eta^{-1})  \sqrt{f_0} &=& \sqrt{f_0} \left( \begin{array}{c} 2 [\eta^1] \\  \sqrt{f_0} ([\eta^0] + [\eta^2] ) \end{array}\right)\\ \nonumber
&=& 2 \left( \begin{array}{c} \sqrt{f_0} [\eta^1] \\ {f_0} [\eta^{0}]\end{array}\right) 
- \frac1s \left( \begin{array}{c} 0 \\ \sqrt{f_0} [\eta^{1}]\end{array}\right)\\
\left( \begin{array}{c} [\eta^0] \\ \sqrt{f_0} [\eta^{-1}]\end{array}\right) \eta^{-1}(\eta - \eta^{-1})^2  \sqrt{f_0} &=& \sqrt{f_0} \left( \begin{array}{c} - ([\eta^1] - [\eta^3] )\\ -2 \sqrt{f_0} ([\eta^{0}] - [\eta^2])\end{array}\right)\\ \nonumber
&=& - \frac2{s} \left( \begin{array}{c}  [\eta^{0}] \\ \sqrt{f_0} [\eta^1] \end{array}\right) + \frac2{s^2} \left( \begin{array}{c}\frac1{\sqrt{f_0}}  [\eta^{1}] \\ 0 \end{array}\right)
\end{eqnarray}

\begin{eqnarray}
 \left( \begin{array}{c} [\eta^0] \\ \sqrt{f_0} [\eta^{-1}]\end{array}\right)\eta^{-1}(\eta + \eta^{-1})  &=& \left( \begin{array}{c} [\eta^0]+ [\eta^2]\\ \sqrt{f_0} ([\eta^{-1}] + [\eta^1])\end{array}\right)\\ \nonumber
&=& 2 \left( \begin{array}{c}  [\eta^0] \cr \sqrt{f_0} [\eta^1] \end{array} \right)
- \frac1{s}  \left( \begin{array}{c} \frac1{\sqrt{f_0}} [\eta^1] \cr 0 \end{array} \right)\\
\left( \begin{array}{c} [\eta^0] \cr \sqrt{f_0} [\eta^{-1}]\end{array}\right) \eta^{-2}(\eta - \eta^{-1})^2  &=& \left( \begin{array}{c}  ([\eta^0] - [\eta^2]) - ([\eta^2] - [\eta^4])\\ - \sqrt{f_0} ([\eta^1] - [\eta^3]) \end{array} \right)\\ \nonumber
&=&  - \frac{2}{s} \left( \begin{array}{c} \frac1{\sqrt{f_0}} [\eta^1] \cr [\eta^0] \end{array} \right)
+ \frac2{s^2 f_0}  \left( \begin{array}{c} 3  [\eta^0] \cr  \sqrt{f_0} [\eta^1] \end{array} \right)
- \frac2{s^3 f^2_0} \left( \begin{array}{c} 3 \sqrt{f_0} [\eta^1] \cr  0 \end{array} \right)\\
\left( \begin{array}{c} [\eta^0] \cr \sqrt{f_0} [\eta^{-1}]\end{array}\right) \eta^{-2}(\eta + \eta^{-1}) \frac1{\sqrt{f_0}} &=& 
\frac1{\sqrt{f_0}} \left( \begin{array}{c} [\eta^1] + [\eta^3] \cr  \sqrt{f_0} ([\eta^0] + [\eta^2] ) \end{array}\right) \\ \nonumber
&=& 2 \left( \begin{array}{c} \frac1{\sqrt{f_0}} [\eta^1] \cr [\eta^0] \end{array} \right)
- \frac1{s f_0}  \left( \begin{array}{c} 2  [\eta^0] \cr \sqrt{f_0} [\eta^1] \end{array} \right)
+ \frac2{s^2 f^2_0} \left( \begin{array}{c} \sqrt{f_0} [\eta^1] \cr  0 \end{array} \right)\\
 \left( \begin{array}{c} [\eta^0] \\ \sqrt{f_0} [\eta^{-1}]\end{array}\right)\eta^{-1}(\eta - \eta^{-1})  &=& \left( \begin{array}{c} [\eta^0]- [\eta^2]\\ \sqrt{f_0} ([\eta^{-1}] - [\eta^1])\end{array}\right)\\ \nonumber
&=& \frac1s \left( \begin{array}{c} \frac1{f_0}\sqrt{f_0}[\eta^1] \\ 0 \end{array}\right)\\
\left( \begin{array}{c} [\eta^0] \cr \sqrt{f_0} [\eta^{-1}]\end{array}\right) \eta^{-2}(\eta - \eta^{-1}) \frac1{\sqrt{f_0}} &=& 
\frac1{\sqrt{f_0}} \left( \begin{array}{c} [\eta^1] - [\eta^3] \cr  \sqrt{f_0} ([\eta^0] - [\eta^2] ) \end{array}\right) \\ \nonumber
&=& \frac1{s f_0} \left( \begin{array}{c}  2 [\eta^{0}] \\ \sqrt{f_0}[\eta^1] \end{array}\right)
- \frac2{s^2 f^2_0} \left( \begin{array}{c} \sqrt{f_0}[\eta^1] \\ 0\end{array}\right)
\end{eqnarray}

Applying these reductions in (\ref{unwound2} - \ref{unwound4}) and using (\ref{PHI}), we arrive at
\beann
\left( \frac16 h_{0xx} [s^{J-4}] + \frac18 h^2_{0x} [s^{J-5}]\right) \left[ -\frac2s \left( \begin{array}{c} \sqrt{f_0}\Phi_1 \\ {f_0} \Phi_0 \end{array}\right) + \frac2{s^2} \left( \begin{array}{c} 0 \\ \sqrt{f_0}\Phi_1\end{array}\right)\right]\\
+ \left( \frac14 h_{0xx} [s^{J-3}] + \frac16 h^2_{0x} [s^{J-4}]\right) \left[ 2 \left( \begin{array}{c} \sqrt{f_0} \Phi_1 \\ {f_0} \Phi_0\end{array}\right) 
- \frac1s \left( \begin{array}{c} 0 \\ \sqrt{f_0} \Phi_1\end{array}\right)\right]\\
+ \left(\frac16 f_{0xx} [s^{J-4}]+ \frac28 h_{0x} f_{0x} [s^{J-5}] \right)\left[ - \frac2s \left( \begin{array}{c}  \Phi_0  \\  \sqrt{f_0}\Phi_1\end{array}\right) + \frac2{s^2} \left( \begin{array}{c} \frac{\sqrt{f_0}}{f_0} \Phi_1 \\ 0\end{array}\right)\right]\\
+ \left(\frac14 f_{0xx} [s^{J-3}]+ \frac26 h_{0x} f_{0x} [s^{J-4}] \right)
\left[ 2 \left( \begin{array}{c}  \Phi_0 \cr \sqrt{f_0} \Phi_1 \end{array} \right)
- \frac1{s f_0}  \left( \begin{array}{c} \sqrt{f_0} \Phi_1 \cr 0 \end{array} \right)\right]\\
+ \left( \frac18 f^2_{0x} [s^{J-5}] \right) \left[   - \frac{2}{s} \left( \begin{array}{c} \frac{\sqrt{f_0}}{f_0} \Phi_1 \cr \Phi_0 \end{array} \right)
+ \frac2{s^2 f_0}  \left( \begin{array}{c} 3  \Phi_0 \cr  \sqrt{f_0}\Phi_1 \end{array} \right)
- \frac2{s^3 f^2_0} \left( \begin{array}{c} 3 \sqrt{f_0} \Phi_1 \cr  0 \end{array} \right)\right]\\
+ \left( \frac16 f^2_{0x} [s^{J-4}] \right) \left[2 \left( \begin{array}{c} \frac{\sqrt{f_0}}{f_0} \Phi_1 \cr \Phi_0 \end{array} \right)
- \frac1{s f_0}  \left( \begin{array}{c} 2  \Phi_0 \cr \sqrt{f_0} \Phi_1 \end{array} \right)
+ \frac2{s^2 f^2_0} \left( \begin{array}{c} \sqrt{f_0} \Phi_1 \cr  0 \end{array} \right) \right]\\
 -\left( \frac16 f_{0x} h_{0x} [s^{J-4}] \right) \left[
 \frac1{s f_0}  \left( \begin{array}{c}  \sqrt{f_0}  \Phi_1 \cr 0 \end{array} \right)\right]\\
-\left( \frac16  f^2_{0x} [s^{J-4}] \right) \left[
 \frac1{s f_0}  \left( \begin{array}{c}  2 \Phi_0  \cr   \sqrt{f_0} \Phi_1  \end{array} \right)
- \frac2{s^2 f^2_0} \left( \begin{array}{c} \sqrt{f_0} \Phi_1 \cr 0 \end{array} \right) \right]
\end{eqnarray*}

\begin{eqnarray*}
= \left( \frac16 h_{0xx} [s^{J-3}] + \frac18 h^2_{0x} [s^{J-4}]\right) \left[ -2 \left( \begin{array}{c} \sqrt{f_0} \Phi_1 \\ {f_0} \Phi_0 \end{array}\right) + \frac2{s} \left( \begin{array}{c} 0 \\ \sqrt{f_0} \Phi_1\end{array}\right)\right]\\
+ \left( \frac14 h_{0xx} [s^{J-3}] + \frac16 h^2_{0x} [s^{J-4}]\right) \left[ 2 \left( \begin{array}{c} \sqrt{f_0} \Phi_1 \\ {f_0} \Phi_0\end{array}\right) 
- \frac1s \left( \begin{array}{c} 0 \\ \sqrt{f_0} \Phi_1\end{array}\right)\right]\\
+ \left(\frac16 f_{0xx} [s^{J-3}]+ \frac28 h_{0x} f_{0x} [s^{J-4}] \right)\left[ - 2 \left( \begin{array}{c}  \Phi_0  \\  \sqrt{f_0} \Phi_1\end{array}\right) + \frac2{s} \left( \begin{array}{c} \frac{\sqrt{f_0}}{f_0}  \Phi_1 \\ 0\end{array}\right)\right]\\
+ \left(\frac14 f_{0xx} [s^{J-3}]+ \frac26 h_{0x} f_{0x} [s^{J-4}] \right)
\left[ 2 \left( \begin{array}{c}  \Phi_0 \cr \sqrt{f_0} \Phi_1 \end{array} \right)
- \frac1{s f_0}  \left( \begin{array}{c} \sqrt{f_0} \Phi_1 \cr 0 \end{array} \right)\right]\\
+ \left( \frac18 f^2_{0x} [s^{J-4}] \right) \left[   - 2 \left( \begin{array}{c} \frac{\sqrt{f_0}}{f_0} \Phi_1 \cr \Phi_0 \end{array} \right)
+ \frac2{s f_0}  \left( \begin{array}{c} 3  \Phi_0 \cr  \sqrt{f_0} \Phi_1 \end{array} \right)
- \frac2{s^2 f^2_0} \left( \begin{array}{c} 3 \sqrt{f_0} \Phi_1 \cr  0 \end{array} \right)\right]\\
+ \left( \frac16 f^2_{0x} [s^{J-4}] \right) \left[2 \left( \begin{array}{c} \frac{\sqrt{f_0}}{f_0} \Phi_1 \cr \Phi_0 \end{array} \right)
- \frac1{s f_0}  \left( \begin{array}{c} 2  \Phi_0 \cr \sqrt{f_0} \Phi_1 \end{array} \right)
+ \frac2{s^2 f^2_0} \left( \begin{array}{c} \sqrt{f_0} \Phi_1 \cr  0 \end{array} \right) \right]\\
 -\left( \frac16 f_{0x} h_{0x} [s^{J-4}] \right) \left[
 \frac1{s f_0}  \left( \begin{array}{c} \sqrt{f_0}  \Phi_1 \cr 0 \end{array} \right)\right]\\
-\left( \frac16 f^2_{0x} [s^{J-4}] \right) \left[
 \frac1{s f_0}  \left( \begin{array}{c}  2 \Phi_0  \cr \sqrt{f_0}  \Phi_1  \end{array} \right)
- \frac2{s^2 f^2_{0}} \left( \begin{array}{c} \sqrt{f_0} \Phi_1 \cr 0 \end{array} \right) \right]
\eeann

\beann
= \left( \frac16 h_{0xx} [s^{J-3}] + \frac1{12} h^2_{0x} [s^{J-4}]\right)   \left( \begin{array}{c} \sqrt{f_0} \Phi_1 \\ {f_0} \Phi_0 \end{array}\right) \\
+ \left( \frac16 f_{0xx} [s^{J-3}] + \frac1{6} h_{0x} f_{0x} [s^{J-4}]\right)   \left( \begin{array}{c} \Phi_0 \\ \sqrt{f_0} \Phi_1 \end{array}\right) \\
+ \left( \frac1{12} h_{0xx} [s^{J-2}] + \frac1{12} h^2_{0x} [s^{J-3}]\right)   \left( \begin{array}{c} 0 \\ \sqrt{f_0} \Phi_1 \end{array}\right) \\
+ \left( \frac1{12} f_{0xx} [s^{J-2}] + \frac1{6} h_{0x} f_{0x} [s^{J-3}]\right)   \left( \begin{array}{c} \frac{\sqrt{f_0}}{f_0}\Phi_1 \\  0 \end{array}\right) \\
+ f^2_{0x} \left( \frac{[s^{J-4}]}{12} \left( \begin{array}{c} \frac{\sqrt{f_0}}{f_0} \Phi_1 \cr \Phi_0 \end{array} \right)   + \frac{[s^{J-3}]}{f_0} \left( \begin{array}{c} \frac1{12} \Phi_0 \cr -  \frac{\sqrt{f_0}}{12} \Phi_1 \end{array} \right) - \frac{[s^{J-2}]}{f^2_0} \left( \begin{array}{c}  \frac{\sqrt{f_0}}{12} \Phi_1 \cr 0 \end{array} \right)  \right)\\
-f_{0x} h_{0x}  \frac{[s^{J-3}]}{6} \left( \begin{array}{c}  \frac{\sqrt{f_0}}{f_0} \Phi_1 \cr 0 \end{array} \right)\\
\eeann

\beann
&=& \left( \frac16 h_{0xx} \left( \begin{array}{c} \psi_2 \\ {f_0} \phi_2 \end{array}\right) + \frac1{12} h^2_{0x} \left( \begin{array}{c} \psi_3 \\ {f_0} \phi_3 \end{array}\right)\right)  \\
&+& \left( \frac16 f_{0xx} \left( \begin{array}{c} \phi_2 \\  \psi_2 \end{array}\right) 
+ \frac1{6} h_{0x} f_{0x} \left( \begin{array}{c} \phi_3 \\  \psi_3 \end{array}\right)\right)  \\
&+& \left( \frac1{12} h_{0xx} \left( \begin{array}{c} 0 \\  \psi_1 \end{array}\right) 
+ \frac1{12} h^2_{0x}  \left( \begin{array}{c} 0 \\ \psi_2  \end{array}\right)\right)  \\
&+& \frac1{12} f_{0xx} \left( \begin{array}{c} \frac1{f_0} \psi_1 \\  0 \end{array}\right) \\
&+& \frac1{12} f^2_{0x} \left[ \left( \begin{array}{c} \frac1{f_0} \psi_3 \\  \phi_3 \end{array}\right) + \frac1{f_0} \left( \begin{array}{c} \phi_2 \\  - \psi_2 \end{array}\right)
- \frac1{f^2_0} \left( \begin{array}{c}  \psi_1 \\  0 \end{array}\right)\right]
\eeann

\subsection{Unwinding Analysis} \label{unwindanal}

To connect back to $h_2$ and $f_1$ we bring (\ref{unwound1}) back in, apply the identity
\beann
\left( \begin{array}{cc} A_{11} & A_{12} \\ A_{21} & A_{22} \end{array} \right)^{-1} &=&
\left( \begin{array}{cc} f_{0x} & h_{0x} \\ f_0 h_{0x} & f_{0x} \end{array} \right)
\eeann
and then also apply the unwinding identity to all terms involving $\phi_3$ and $\psi_3$. 
Thus equations (\ref{unwound1}) - (\ref{unwound4}) become
\beann
- \left( \begin{array}{c} h_2 \\  f_1/f_0\end{array} \right) &=&  \left( \frac2{12} h_{0xx} \left( \begin{array}{c} f_{0x} \psi_2 + f_0 h_{0x} \phi_2 \\  h_{0x}  \psi_2 +  f_{0x} \phi_2 \end{array}\right) + \frac1{12}  h^2_{0x} \left( \begin{array}{c} f_0 \phi_{2x} \\ \psi_{2x} \end{array}\right)\right)  \\
&+& \left( \frac2{12} f_{0xx} \left( \begin{array}{c} f_{0x} \phi_2 + h_{0x} \psi_2 \\  h_{0x}  \phi_2 +\frac{f_{0x}}{f_0} \psi_2 \end{array}\right) 
+ \frac2{12} h_{0x} f_{0x} \left( \begin{array}{c} \psi_{2x} \\   \phi_{2x} \end{array}\right)\right)  \\
&+& \left( \frac1{12} h_{0xx} \psi_1 \left( \begin{array}{c} h_{0x} \\ \frac{f_{0x}}{f_0}  \end{array}\right) 
+ \frac1{12} h^2_{0x} \psi_2 \left( \begin{array}{c} h_{0x} \\  \frac{f_{0x}}{f_0}  \end{array}\right)\right)  \\
&+& \frac1{12} \frac{f_{0xx}}{f_0}  \psi_1 \left( \begin{array}{c}  f_{0x} \\  h_{0x} \end{array}\right) \\
&+& \frac1{12} f^2_{0x} \left[ \frac1{f_0} \left( \begin{array}{c}  f_0 \phi_{2x} \\  \psi_{2x} \end{array}\right) + \frac1{f_0} \left( \begin{array}{c} f_{0x} \phi_2 - h_{0x} \psi_2 \\  h_{0x}  \phi_2 - \frac{f_{0x}}{f_0} \psi_2 \end{array}\right)
- \frac1{f^2_0} \psi_1 \left( \begin{array}{c}  f_{0x} \\   h_{0x} \end{array}\right)\right] \\
&+& \frac{h_{0x}}{8} \left[  \begin{array}{c} \left(\frac{2 f^2_{0x}}{f_0} - h^2_{0x}\right) \psi_2 + f_{0x} h_{0x} \phi_2  \\  \frac{f_{0x}}{f_0} h_{0x} \psi_2 +  h^2_{0x}\phi_2 \end{array} \right]
- \frac2{8}    h_{0xx} \psi_1 \left( \begin{array}{c} h_{0x} \\ \frac{f_{0x}}{f_0}  \end{array}\right) 
\eeann
where the last line comes from line (\ref{unwound1}) using the identity $h_1 = \frac12 h_{0x}$.
Rearranging a bit this becomes

\beann
- \left( \begin{array}{c} h_2 \\  f_1/f_0\end{array} \right) &=&  \frac1{12}\left( 2 h_{0xx} \left( \begin{array}{c} f_{0x} \psi_2 + f_0 h_{0x} \phi_2 \\  h_{0x}  \psi_2 +  f_{0x} \phi_2 \end{array}\right) +   h^2_{0x} \left( \begin{array}{c} f_0 \phi_{2x} \\ \psi_{2x} \end{array}\right)\right)  \\
&+& \frac2{12} \left(  f_{0xx} \left( \begin{array}{c} f_{0x} \phi_2 + h_{0x} \psi_2 \\  h_{0x}  \phi_2 +\frac{f_{0x}}{f_0} \psi_2 \end{array}\right) 
+  h_{0x} f_{0x} \left( \begin{array}{c} \psi_{2x} \\   \phi_{2x} \end{array}\right)\right)  \\
&+& \frac1{12} \frac{f^2_{0x}}{f_0} \left[    \left( \begin{array}{c} f_{0x} \phi_2 - h_{0x} \psi_2 \\  h_{0x}  \phi_2 - \frac{f_{0x}}{f_0} \psi_2 \end{array}\right)
+ \left( \begin{array}{c}  f_0 \phi_{2x} \\  \psi_{2x} \end{array}\right) \right] \\
&+& \left( \frac1{12} h_{0xx} \psi_1 \left( \begin{array}{c} h_{0x} \\ \frac{f_{0x}}{f_0}  \end{array}\right) 
+ \frac1{12} h^2_{0x} \psi_2 \left( \begin{array}{c} h_{0x} \\  \frac{f_{0x}}{f_0}  \end{array}\right)\right)  \\
&+& \frac1{12} \left( \frac{f_{0xx}}{f_0}  - \frac{f^2_{0x}}{f^2_0}\right)\psi_1 \left( \begin{array}{c}  f_{0x} \\  h_{0x} \end{array}\right) \\
&+& \frac{h_{0x}}{8} \left[  \begin{array}{c} \left(\frac{2 f^2_{0x}}{f_0} - h^2_{0x}\right) \psi_2 + f_{0x} h_{0x} \phi_2  \\  \frac{f_{0x}}{f_0} h_{0x} \psi_2 +  h^2_{0x}\phi_2 \end{array} \right]
- \frac1{4} h_{0xx} \psi_1 \left( \begin{array}{c} h_{0x} \\ \frac{f_{0x}}{f_0}  \end{array}\right)
\eeann

Integrating by parts in the second terms of the first three lines above and then making all evident cancellations, we derive
\beann
- \left( \begin{array}{c} h_2 \\  f_1/f_0\end{array} \right) &=&  
\frac1{12} \partial_x \left( \left( f^2_{0x} + f_0 h^2_{0x}\right) \left( \begin{array}{c} \phi_2 \\ \frac1{f_0} \psi_2 \end{array}\right) + 2  h_{0x} f_{0x} \left( \begin{array}{c} \psi_2 \\ \phi_2 \end{array}\right)\right)  \\
&+& \frac1{12} \left[  \begin{array}{c} f_{0x} \left(\frac{ f^2_{0x}}{f_0} - h^2_{0x}\right) \phi_2 - h_{0x} \frac{ f^2_{0x}}{f_0} \psi_2  \\  h_{0x} \frac{f^2_{0x}}{f_0}  \phi_2  \end{array} \right]\\
&+& \left(  \frac1{12} h_{0xx} \psi_1 \left( \begin{array}{c} h_{0x} \\ \frac{f_{0x}}{f_0}  \end{array}\right) 
+ \frac1{12} h^2_{0x} \psi_2 \left( \begin{array}{c} h_{0x} \\  \frac{f_{0x}}{f_0}  \end{array}\right)\right)  \\
&+& \frac1{12} \left( \frac{f_{0xx}}{f_0}  - \frac{f^2_{0x}}{f^2_0}\right)\psi_1 \left( \begin{array}{c}  f_{0x} \\  h_{0x} \end{array}\right) \\
&+& \frac{h_{0x}}{8} \left[  \begin{array}{c} \left(\frac{2 f^2_{0x}}{f_0} - h^2_{0x}\right) \psi_2 + f_{0x} h_{0x} \phi_2  \\  \frac{f_{0x}}{f_0} h_{0x} \psi_2 +  h^2_{0x}\phi_2 \end{array} \right]
- \frac1{4} h_{0xx} \psi_1 \left( \begin{array}{c} h_{0x} \\ \frac{f_{0x}}{f_0}  \end{array}\right)
\eeann

\beann
 &=&  
\frac1{12} \partial_x \left( \left( f^2_{0x} + f_0 h^2_{0x}\right) \left( \begin{array}{c} \phi_2 \\ \frac1{f_0} \psi_2 \end{array}\right) + 2  h_{0x} f_{0x} \left( \begin{array}{c} \psi_2 \\ \phi_2 \end{array}\right)\right)  \\
&+& \frac1{12} \left[  \begin{array}{c} f_{0x} \left(\frac{ f^2_{0x}}{f_0} - h^2_{0x}\right) \phi_2 - h_{0x} \left(\frac{ f^2_{0x}}{f_0} - h^2_{0x}\right)  \psi_2  \\   \frac{h_{0x} f_{0x}}{f_0}  
\left( f_{0x} \phi_2  + h_{0x} \psi_2 \right)\end{array} \right]\\
&+&  \frac1{12} h_{0xx} \psi_1 \left( \begin{array}{c} h_{0x} \\ \frac{f_{0x}}{f_0}  \end{array}\right)  \\
&+& \frac1{12} \partial_x \left( \frac{f_{0x}}{f_0}\right)\psi_1 \left( \begin{array}{c}  f_{0x} \\  h_{0x} \end{array}\right) \\
&+& \frac{h_{0x}}{8} \left[  \begin{array}{c} \left(\frac{2 f^2_{0x}}{f_0} - h^2_{0x}\right) \psi_2 + f_{0x} h_{0x} \phi_2  \\  \frac{f_{0x}}{f_0} h_{0x} \psi_2 +  h^2_{0x}\phi_2 \end{array} \right]
- \frac1{4} h_{0xx} \psi_1 \left( \begin{array}{c} h_{0x} \\ \frac{f_{0x}}{f_0}  \end{array}\right)
\eeann

\beann
 &=&  
\frac1{12} \partial_x \left( \left( f^2_{0x} + f_0 h^2_{0x}\right) \left( \begin{array}{c} \phi_2 \\ \frac1{f_0} \psi_2 \end{array}\right) + 2  h_{0x} f_{0x} \left( \begin{array}{c} \psi_2 \\ \phi_2 \end{array}\right)\right)  \\
&+& \frac1{12} \left[  \begin{array}{c}  \left(\frac{ f^2_{0x}}{f_0} - h^2_{0x}\right) \left( f_{0x} \phi_2 - h_{0x}  \psi_2 \right) \\   \frac{h_{0x} f_{0x}}{f_0}  
\left( f_{0x} \phi_2  + h_{0x} \psi_2 \right)\end{array} \right]\\
&+&  \frac1{12} h_{0xx} \psi_1 \left( \begin{array}{c} h_{0x} \\ \frac{f_{0x}}{f_0}  \end{array}\right)  \\
&+& \frac1{12} \partial_x \left( \frac{f_{0x}}{f_0}\right)\psi_1 \left( \begin{array}{c}  f_{0x} \\  h_{0x} \end{array}\right) \\
&+& \frac{h_{0x}}{8} \left[  \begin{array}{c} \left(\frac{ f^2_{0x}}{f_0} - h^2_{0x}\right) \psi_2 + f_{0x} \left(  \frac{f_{0x}}{f_0} \psi_2 + h_{0x} \phi_2  \right)\\  h_{0x}  \left( \frac{f_{0x}}{f_0} \psi_2 +  h_{0x}\phi_2 \right) \end{array} \right]
- \frac1{4} h_{0xx} \psi_1 \left( \begin{array}{c} h_{0x} \\ \frac{f_{0x}}{f_0}  \end{array}\right)
\eeann

\beann
 &=&  
\frac1{12} \partial_x  \left( \begin{array}{c} f_{0x} \psi_{1x}  + f_0 h_{0x} \phi_{1x} \\ f_{0x} \phi_{1x}  + h_{0x} \psi_{1x} \end{array}\right)   \\
&+& \frac1{12} \left[  \begin{array}{c}  \left(\frac{ f^2_{0x}}{f_0} - h^2_{0x}\right) \left( f_{0x} \phi_2 + h_{0x}  \psi_2 \right)  
- 2 h_{0x}  \left(\frac{ f^2_{0x}}{f_0} - h^2_{0x}\right)    \psi_2  \\   \frac{h_{0x} f_{0x}}{f_0}  
\left( f_{0x} \phi_2  + h_{0x} \psi_2 \right)\end{array} \right]\\
&+&  \frac1{12} h_{0xx} \psi_1 \left( \begin{array}{c} h_{0x} \\ \frac{f_{0x}}{f_0}  \end{array}\right)  \\
&+& \frac1{12} \partial_x \left( \frac{f_{0x}}{f_0}\right)\psi_1 \left( \begin{array}{c}  f_{0x} \\  h_{0x} \end{array}\right) \\
&+& \frac{h_{0x}}{8} \left[  \begin{array}{c} \left(\frac{ f^2_{0x}}{f_0} - h^2_{0x}\right) \psi_2 + f_{0x} \left(  \frac{f_{0x}}{f_0} \psi_2 + h_{0x} \phi_2  \right)\\  h_{0x}  \left( \frac{f_{0x}}{f_0} \psi_2 +  h_{0x}\phi_2 \right) \end{array} \right]
- \frac1{4} h_{0xx} \psi_1 \left( \begin{array}{c} h_{0x} \\ \frac{f_{0x}}{f_0}  \end{array}\right)
\eeann

\beann
 &=&  
\frac1{12} \partial_x  \left( \begin{array}{c} f_{0x} \psi_{1x}  +f_0  h_{0x} \phi_{1x} \\ f_{0x} \phi_{1x}  +  h_{0x} \psi_{1x} \end{array}\right)   \\
&+& \partial_x\left(\frac1{12} \left[  \begin{array}{c}  \left(\frac{ f^2_{0x}}{f_0} - h^2_{0x}\right)  \\   \frac{h_{0x} f_{0x}}{f_0}  
\end{array} \right] \psi_{1} + \frac{h_{0x}}{8} \left[  \begin{array}{c}  f_{0x} \\  h_{0x} \end{array} \right] \phi_{1}\right)\\
&-& \frac1{24} \left( \begin{array}{c} 4 h_{0x} h_{0xx} \psi_1 + 3 \left( h_{0x} f_{0xx} - h_{0xx} f_{0x} \right) \phi_1 \\ 0 \end{array} \right) \\
&-& \frac{h_{0x}}{24} \left[  \begin{array}{c}  \left(\frac{ f^2_{0x}}{f_0} - h^2_{0x}\right)  \\   0  
\end{array} \right] \psi_2
\eeann
Making use of the unwinding  identity at level 0 (and, in the last line, at level 2) we deduce that
\beann
 &=&  
\frac1{12} \partial^2_x  \left( \begin{array}{c} 0 \\ 1 \end{array}\right)   
- \frac1{12} \partial_x  \left( \begin{array}{c} f_{0xx} \psi_{1}  +  \partial_x (h_{0x} f_{0}) \phi_{1} \\ f_{0xx} \phi_{1}  + h_{0xx} \psi_{1} \end{array}\right) \\
&+& \partial_x\left(\frac1{12} \left[  \begin{array}{c}  \left(\frac{ f^2_{0x}}{f_0} - h^2_{0x}\right)  \\   0 
\end{array} \right] \psi_{1} +  h_{0x} \left[  \begin{array}{c} \frac18  f_{0x} \\ \frac1{24}  h_{0x} \end{array} \right] \phi_{1}\right)\\
&-& \left( \begin{array}{c} \frac1{12 f_0} f_{0x} f_{0xx} \psi_1 + \frac18 \left( h_{0x} f_{0xx} + h_{0xx} f_{0x} \right) \phi_1 \\ 0 \end{array} \right) \\
&+& \frac{h_{0x}}{24} \left[  \begin{array}{c}  h_{0x} \psi_{1x} -f_{0x} \phi_{1x}  \\   0  
\end{array} \right]\\
&=&  
- \frac1{12} \partial_x  \left( \begin{array}{c} f_{0xx} \psi_{1}  +\partial_x (h_{0x} f_{0}) \phi_{1} \\ f_{0xx} \phi_{1}  + h_{0xx} \psi_{1} \end{array}\right) \\
&+& \partial_x\left(\frac1{12} \left[  \begin{array}{c}  \left(\frac{ f^2_{0x}}{f_0} - h^2_{0x}\right)  \\   0 
\end{array} \right] \psi_{1} +  h_{0x} \left[  \begin{array}{c} \frac18  f_{0x} \\ \frac1{24}  h_{0x} \end{array} \right] \phi_{1}\right)\\
&-& \left( \begin{array}{c} \frac1{12 f_0} f_{0x} f_{0xx} \psi_1 + \frac18 \left( h_{0x} f_{0xx} + h_{0xx} f_{0x} \right) \phi_1 \\ 0 \end{array} \right) \\
&+& \frac{h_{0x}}{24} \partial_x \left[  \begin{array}{c}  h_{0x} \psi_{1} -f_{0x} \phi_{1}  \\   0  \end{array} \right] 
- \frac{h_{0x}}{24}  \left[  \begin{array}{c}  h_{0xx} \psi_{1} -f_{0xx} \phi_{1}  \\   0  \end{array} \right]
\eeann

\beann
&=&  
- \frac1{12} \partial_x  \left( \begin{array}{c} f_{0xx} \psi_{1}  + \partial_x (h_{0x} f_{0}) \phi_{1} \\ f_{0xx} \phi_{1}  + h_{0xx} \psi_{1} \end{array}\right) \\
&+& \partial_x\left(\frac1{12} \left[  \begin{array}{c}  \left(\frac{ f^2_{0x}}{f_0} - h^2_{0x}\right)  \\   0 
\end{array} \right] \psi_{1} +  h_{0x} \left[  \begin{array}{c} \frac18  f_{0x} \\ \frac1{24}  h_{0x} \end{array} \right] \phi_{1}\right)\\
&-& \left( \begin{array}{c} \frac1{12 f_0} f_{0x} f_{0xx} \psi_1 + \frac18 \left( h_{0x} f_{0xx} + h_{0xx} f_{0x} \right) \phi_1 \\ 0 \end{array} \right) \\
&+& \frac{1}{24} \partial_x\left( h_{0x} \left[  \begin{array}{c}  h_{0x} \psi_{1} -f_{0x} \phi_{1}  \\   0  \end{array} \right] \right)
- \frac{1}{24}  \left[  \begin{array}{c}  \partial_x\left(h^2_{0x}\right) \psi_{1} - \partial_x\left( h_{0x}f_{0x} \right) \phi_{1}  \\   0  \end{array} \right] \\
&=&  
- \frac1{12} \partial_x  \left( \begin{array}{c} f_{0xx} \psi_{1}  + \partial_x (h_{0x} f_{0}) \phi_{1} \\ f_{0xx} \phi_{1}  + h_{0xx} \psi_{1} \end{array}\right) \\
&+& \partial_x\left(\frac1{12} \left[  \begin{array}{c}  \left(\frac{ f^2_{0x}}{f_0} - \frac12 h^2_{0x}\right)  \\   0 
\end{array} \right] \psi_{1} +  h_{0x} \left[  \begin{array}{c} \frac1{12}  f_{0x} \\ \frac1{24}  h_{0x} \end{array} \right] \phi_{1}\right)\\
&-& \frac{1}{24}  \left[  \begin{array}{c}  \left( \frac1{f_0} \partial_x \left( f^2_{0x} \right) + \partial_x\left(h^2_{0x}\right) \right) \psi_{1} + 2 \partial_x\left( h_{0x}f_{0x} \right) \phi_{1}  \\   0  \end{array} \right] 
\eeann

\beann
&=&  
\frac1{12} \partial_x  \left( \begin{array}{c} f_{0x} \left( \frac{f_{0x}}{f_0} \psi_1 + h_{0x} \phi_1 \right) - h_{0x} \left( h_{0x} \psi_{1}  + f_{0x} \phi_{1} \right) + \left( \frac12 h^2_{0x} - f_{0xx} \psi_1 \right) - f_0 h_{0xx} \phi_1 \\ \frac12 h^2_{0x} - f_{0xx} \phi_{1}  - h_{0xx} \psi_{1} \end{array}\right) \\
&-& \frac{1}{24}  \left[  \begin{array}{c}  \left( \frac1{f_0} \partial_x \left( f^2_{0x} \right) + \partial_x\left(h^2_{0x}\right) \right) \psi_{1} + 2 \partial_x\left( h_{0x}f_{0x} \right) \phi_{1}  \\   0  \end{array} \right]\\
&=&  
\frac1{12} \partial_x  \left( \begin{array}{c} f_{0x} \phi_{0x} - h_{0x} \psi_{0x} + \left( \frac12 h^2_{0x} - f_{0xx} \psi_1 \right) - f_0 h_{0xx} \phi_1 \\ \frac12 h^2_{0x} - f_{0xx} \phi_{1}  - h_{0xx} \psi_{1} \end{array}\right) \\
&-& \frac{1}{24}  \left[  \begin{array}{c}  \left( \frac1{f_0} \partial_x \left( f^2_{0x} \right) + \partial_x\left(h^2_{0x}\right) \right) \psi_{1} + 2 \partial_x\left( h_{0x}f_{0x} \right) \phi_{1}  \\   0  \end{array} \right]\\
&=&  
- \frac1{12} \left( \begin{array}{c} h_{0xx} \\ 0\end{array} \right) + \frac1{12} \partial_x  \left( \begin{array}{c}  \left( \frac12 h^2_{0x} - f_{0xx} \psi_1 \right) - f_0 h_{0xx} \phi_1 \\ \frac12 h^2_{0x} - f_{0xx} \phi_{1}  - h_{0xx} \psi_{1} \end{array}\right) \\
&-& \frac{1}{24}  \left[  \begin{array}{c}  \left( \frac1{f_0} \partial_x \left( f^2_{0x} \right) + \partial_x\left(h^2_{0x}\right) \right) \psi_{1} + 2 \partial_x\left( h_{0x}f_{0x} \right) \phi_{1}  \\   0  \end{array} \right].
\eeann
Now, making the substitution
\begin{eqnarray} \label{BaseUnwind}
\left( \begin{array}{c} \phi_1 \\ \psi_1 \end{array} \right) &=& \frac1{h^2_{0x} - f^2_{0x}/f_0}\left( \begin{array}{c} \frac{-f_{0x}}{f_0} \\ h_{0x}\end{array} \right) ,
\end{eqnarray} 
which follows from (\ref{A11}), (\ref{A12}) and (\ref{Ainv}),
one finally  has
\begin{eqnarray*}
 \left( \begin{array}{c} h_2 \\  f_1/f_0\end{array} \right) 
&=&  \frac1{12} \left( \begin{array}{c} h_{0xx} \\ 0\end{array} \right) + \frac1{24} \partial_x \frac1{f^2_{0x} - f_0 h^2_{0x}} \left( \begin{array}{c}  f_0 h^3_{0x} -2 f_0 f_{0xx} h_{0x} + 2 f_0 f_{0x} h_{0xx} \\ -f_{0x} h^2_{0x} + 2 f_{0x} f_{0xx} + 2 f_0 h_{0x} h_{0xx}\end{array}\right) \\
&+& \frac1{12} \left( \begin{array}{c} h_{0xx} \\ 0\end{array} \right)\\
&=&  \frac1{6} \left( \begin{array}{c} h_{0xx} \\ 0\end{array} \right) + \frac1{24} \partial_x  \left( \begin{array}{c}  \frac{f_0 h^3_{0x} -2 f_0 f_{0xx} h_{0x} + 2 f_0 f_{0x} h_{0xx}}{f^2_{0x} - f_0 h^2_{0x}} \\ \partial_x \log \left( f^2_{0x} - f_0 h^2_{0x}\right)\end{array}\right).
\end{eqnarray*}

\section{Fine Structure of the Generating Functions} \label{finestruc}

We begin by summarizing the closed form expressions of some of the basic generating functions we
have just derived: 
\begin{eqnarray} \nonumber
h_1 &=& \frac12 h_{0x}\\ \nonumber
f_1/f_0 &=& \frac1{24}\partial^2_x \log \left( f^2_{0x} - f_0 h^2_{0x}\right) \\ \label{INDBASE}
&=& \frac1{24}\partial_x \left( \frac{-f_{0x} h^2_{0x} + 2 f_{0x} f_{0xx} + 2 f_0 h_{0x} h_{0xx}}{f^2_{0x} - f_0 h^2_{0x}}\right)\\ \nonumber
h_2 &=& \partial_x \left[ \frac16 h_{0x} + \frac1{24}\frac{f_0 h^3_{0x} -2 f_0 f_{0xx} h_{0x} + 2 f_0 f_{0x} h_{0xx}}{f^2_{0x} - f_0 h^2_{0x}} \right].
\end{eqnarray} 
We also add here some additional representations involving these basic generating functions that may prove useful for comparison,
\begin{eqnarray}  \label{t2x}
f_1/f_0 &=& \partial^2_x \left[ E_1 + \frac1{12} \log z_0\right]
\end{eqnarray}
which follows directly from (\ref{boson12}-\ref{Cum}).  This can be turned around to yield several representations of $E_1$, 
\begin{eqnarray} \nonumber
E_1 &=& \frac1{24} \log\left( \frac{f^2_{0x} - f_0 h^2_{0x}}{z^2_0}\right) \\ \nonumber
&=& \frac1{24} \log\left( \frac{4 - y_0}{z_0^2 \widehat{D}} \right)\\ \label{E1}
 &=& \frac1{24} \log\left( \frac{\widehat{B}_{12}^2}{\Pi_- \Pi_+} \right) -\frac1{12} \log(z_0)
\end{eqnarray}
where the last two equalities follow form (\ref{basic1} - \ref{basic2}) and (\ref{FundDisc}) respectively. Similarly, $h_2$ may be re-expressed, using (\ref{BOSON}), as 
\begin{eqnarray} \label{t1x}
h_2 &=&  - \partial_x \left[ \frac16 h_{0x} +  \partial_{t_1} E_1\right].
\end{eqnarray}
(We recall here that $\partial_{t_1} E_1$ is the generating function for 1-legged genus 1 maps.)

\subsection{Leading Orders}
From (\ref{E1}) and our prior analysis we see that the singularities of $E_1$ correspond precisely to degenerations in the characteristic geometry: the vanishing of $\widehat{B}_{12} - y_0^{1/2} \widehat{B}_{11}$ coincides with places where the characteristic speeds stagnate and the vanishing of $\Pi_- \Pi_+$ coincides with the formation of caustics. Except in the case of valence $j = 3$ these places are precisely the locations of the vertical tangents of $\mathcal{C}$ over finite values of $\xi^2$.

However, the singularities corresponding to the zeroes of $\widehat{B}_{12} - y_0^{1/2} \widehat{B}_{11}$ are removed in $h_1, h_2, f_1$.
\begin{prop} \label{basestep}
$h_1/h_0, h_2/h_0$, and $f_1$ are rational functions on $\mathcal{C}$ whose singularities are restricted to poles located at the zeroes of $\Pi_-$ or $\Pi_+$. In fact the global polar parts of these functions are a power of $\Pi_-\Pi_+$. This power will be referred to as the {\it polar order} of the generating function. The polar order of $h_2/h_0$ and $f_1$ is 4. The polar order of $h_1/h_0$ is 1.
\end{prop}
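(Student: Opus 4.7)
The plan is to prove each claim by direct computation from the explicit closed forms (\ref{INDBASE}), translating the derivatives $h_{0x}, f_{0x}, h_{0xx}, f_{0xx}$ into rational expressions in the uniformizing coordinate $y_0 = h_0^2/f_0$ on $\mathcal{C}$ via identities (\ref{h0x}), (\ref{f0x}), and (\ref{FundDisc}).

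For $h_1/h_0 = h_{0x}/(2h_0)$, substituting (\ref{h0x}) together with $h_0 = \sqrt{f_0 y_0}$ and $f_0 = w\widehat{B}_{12}/(\widehat{B}_{12} - y_0^{1/2}\widehat{B}_{11})$ yields
$$
\frac{h_1}{h_0} = \frac{(\Pi_+ - \Pi_-)(\widehat{B}_{12} - y_0^{1/2}\widehat{B}_{11})}{4w\sqrt{y_0}\,\Pi_-\Pi_+}.
$$
The parity identities $\Pi_\mp(-\sqrt{y_0}) = \Pi_\pm(\sqrt{y_0})$ guarantee that $(\Pi_+ - \Pi_-)/\sqrt{y_0}$ is a polynomial in $y_0$, while the Appell parity in Section \ref{AppellBessel} makes both $\widehat{B}_{12}$ and $y_0^{1/2}\widehat{B}_{11}$ polynomials in $y_0$. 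Hence $h_1/h_0$ is rational on $\mathcal{C}$ with polar locus consisting of simple poles at the zeros of $\Pi_-\Pi_+$, giving polar order $1$.

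For $f_1/f_0$ I would use the logarithmic form $f_1/f_0 = \frac{1}{24}\partial_x^2 \log(f_{0x}^2 - f_0 h_{0x}^2) = \frac{1}{24}\partial_x^2 \log(\widehat{B}_{12}^2/\Pi_-\Pi_+)$ obtained from (\ref{FundDisc}). Near a simple zero $y_*$ of $\Pi_-\Pi_+$, the quantity $(\Pi_-\Pi_+)'/(\Pi_-\Pi_+)$ has a simple pole and, by (\ref{crit3}), the pullback of $\partial_x$ to $y_0$ introduces a factor $y_{0w}$ that also has a simple pole at $y_*$; hence $\partial_x \log\Pi_-\Pi_+ \sim C/(y_0-y_*)^2$ and a second application of $\partial_x$ produces pole order exactly $4$. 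Since $\widehat{B}_{12}$ has no positive real zeros by Proposition \ref{mainprop}, the $2\log\widehat{B}_{12}$ piece is regular at $y_*$. For $h_2/h_0$, I would use the equivalent form $h_2 = \partial_x[\tfrac{1}{6}h_{0x} - \partial_{t_1} E_1]$ derived in (\ref{INDBASE}), together with the Toda exchange (\ref{t1xI})--(\ref{t1xII}) which makes $\partial_{t_1}$ a rational first-order operator when pushed forward to $y_0$; the same two-step Laurent analysis ($\partial_{t_1}$ raises the polar order of $-\log\Pi_-\Pi_+$ from logarithmic to $2$, then $\partial_x$ raises it to $4$) delivers polar order $4$, while the subdominant piece $\partial_x h_{0x}/6 = h_{0xx}/6$ contributes only pole order $3$.

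The main obstacle is twofold. First, cancellations: a naive pole-order count at $\Pi_-\Pi_+ = 0$, summing the orders of $h_{0x}, f_{0x}, h_{0xx}, f_{0xx}$ in the quotient expressions of (\ref{INDBASE}), overshoots to $5$ for both $f_1/f_0$ and $h_2/h_0$, so the logarithmic and Toda reformulations above are essential to expose the hidden cancellations that bring the true polar orders down to $4$. Second, one must verify that apparent singularities elsewhere on $\mathcal{C}$ --- at $y_0 = 0$, at zeros of $\widehat{B}_{12}$, and at the inflection points where $\widehat{B}_{12} - y_0^{1/2}\widehat{B}_{11} = 0$ and $f_0$ itself blows up --- are removable in each of $h_1/h_0$, $f_1$, and $h_2/h_0$; these checks follow from the parity identities used above for $h_1/h_0$, and from a careful Laurent expansion at the inflection points using the explicit formula for $y_{0w}$ in (\ref{crit3}) combined with the expression $z_0\xi^2 \propto (\widehat{B}_{12}-y_0^{1/2}\widehat{B}_{11})^{2\nu-2}/\widehat{B}_{12}^{2\nu}$ obtained from (\ref{curve}) and (\ref{red1}) for $f_1$ and $h_2/h_0$.
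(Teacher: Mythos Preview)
Your approach is essentially the same as the paper's: both start from the closed forms (\ref{INDBASE}), use the parity identity $\Pi_\mp(-\sqrt{y_0})=\Pi_\pm(\sqrt{y_0})$ together with (\ref{h0x}) for $h_1/h_0$, pass to the logarithmic representation $f_1/f_0=\tfrac{1}{24}\partial_x^2\log(\widehat{B}_{12}^2/\Pi_-\Pi_+)$ via (\ref{FundDisc}) for $f_1$, rewrite $h_2=\partial_x[\tfrac{1}{6}h_{0x}-\partial_{t_1}E_1]$ using (\ref{t1xI})--(\ref{t1xII}), and then count pole orders by the chain rule through $y_0$. The only cosmetic difference is that the paper isolates the chain-rule factor as a separate lemma proved inside the proposition, namely $y_{0x}=(j-2)\,y_0\widehat{B}_{12}(\widehat{B}_{12}-y_0^{1/2}\widehat{B}_{11})/(x\,\Pi_-\Pi_+)$, whereas you cite (\ref{crit3}); the two are equivalent up to the exchange relation between $\partial_w$ and $\partial_\xi$, so the mathematical content and the resulting pole-order count of $4$ are identical.
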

\begin{proof}
It follows from (\ref{h0x}) that
\begin{eqnarray*}
\frac{h_{0x}}{h_0} 
&=& \frac{\widehat{B}_{12}}{2f_0 \sqrt{y_0}} \left( \frac{1}{\Pi_-} 
- \frac{1}{\Pi_+}\right) \\
&=& \frac{(\widehat{B}_{12} - y_0^{1/2} \widehat{B}_{11})}{2x \Pi_- \Pi_+} \frac{\Pi_+ - \Pi_-} {\sqrt{y_0}}
\end{eqnarray*}
If follows from (\ref{PiSymm}) that $\Pi_+(\sqrt{y_0}) - \Pi_-(\sqrt{y_0}) = \Pi_-(-\sqrt{y_0}) - \Pi_+(-\sqrt{y_0})$ which is therefore an odd function of $\sqrt{y_0}$ and so 
$\left(\Pi_+(\sqrt{y_0}) - \Pi_-(\sqrt{y_0})\right)/\sqrt{y_0}$ is a polynomial in $y_0$, showing that this is indeed a rational function on $\mathcal{C}$. Moreover, the singularities of $h_{1}/h_0$ are then seen to be confined to the zeroes of $\Pi_\pm$ and there these are poles of order 1 (i.e., simple). A similar argument applies for $f_{0x}$.

We next turn to our representations for $f_1/f_0$ and $h_2$. To that end we will make use of the following lemma.
\begin{lem} \label{y0x}
$y_{0x} = (j-2) \frac{y_0(4 - y_0)}{f_0 \widehat{D}} = (j-2) \frac{y_0 \widehat{B}_{12}(\widehat{B}_{12} - y_0^{1/2} \widehat{B}_{11})}{x \Pi_- \Pi_+} $
\end{lem}
\begin{proof}
As in (\ref{y0deriv}) it is straightforward to calculate that
\begin{eqnarray*}
y_{0x} &=& \frac{h_0}{f^2_0} \left( 2 f_0, -h_0\right) \cdot \left( \begin{array}{c}  h_{0x} \\ f_{0x} \end{array}\right).
\end{eqnarray*}
Applying (\ref{odvalinv}) this expands to 
\begin{eqnarray*}
y_{0x} &=& \frac1D \frac{h_0}{f^2_0} \left( 2 f_0, -h_0\right) \cdot  \left( j x + (j-2) \left(
\begin{array}{cc}
- f_0 & h_{0} \\ f_{0}h_0 & - f_0
\end{array}
\right) \right) \left(
\begin{array}{c}
h_{0} \\2  f_{0}
\end{array}
\right)\\
&=& \frac{(j-2)}D \frac{h_0}{f^2_0} \left( 2 f_0, -h_0\right) \cdot  \left(
\begin{array}{cc}
- f_0 & h_{0} \\ f_{0}h_0 & - f_0
\end{array}
\right) \left(
\begin{array}{c}
h_{0} \\2  f_{0}
\end{array}
\right)\\
&=& \frac{(j-2)}{\widehat{D}} \frac{h^2_0}{f^4_0} (4 - y_0)\\
&=& \frac{(j-2)}{\widehat{D}} \frac{y_0}{f_0} (4 - y_0)
\end{eqnarray*}
\end{proof}

Now applying the lemma to our previously derived expressions for $f_1/f_0$ and $h_2$ starting from $(\ref{t2x})$ and (\ref{t1x}), one has
\begin{eqnarray*}
f_1/f_0 &=& \partial_x \left[ \partial_x E_1 + \frac1{12}  z_{0x}/z_0\right] 
= - \frac1{24} \partial_x\left( \frac{\Pi_{+x}}{\Pi_+} + \frac{\Pi_{-x}}{\Pi_-} \right)\\
&=& -\frac{(j-2)}{24} \partial_x \left[ \frac{y_0(4 - y_0)}{f_0 \widehat{D}}\frac{\left(\Pi_- \Pi_+\right)_{y_0}}{\Pi_- \Pi_+} \right]
= -\frac{(j-2)}{24} \partial_x \left[ \frac{y_0\left(\widehat{B}_{12} - y_0^{1/2} \widehat{B}_{11}\right)}{x \widehat{B}_{12}}\frac{\left(\Pi_- \Pi_+\right)_{y_0}}{(\Pi_- \Pi_+)^2}\right]  
\end{eqnarray*}

\begin{eqnarray*}
h_2 &=& \partial_x \left[ \frac16 h_{0x} -  \partial_{t_1} E_1\right]
= \partial_x \left[ \frac16 h_{0x} - \partial_{y_0} E_1  y_{0t_1} \right]\\
&=& \partial_x \left[ \frac16 h_{0x} - \partial_{y_0} E_1  \frac{h_0}{f^2_0} \left( 2 f_0, -h_0\right) \cdot \left( \begin{array}{c}  h_{0t_1} \\ f_{0t_1} \end{array}\right) \right]\\
&=& \partial_x \left[ \frac16 h_{0x} +\partial_{y_0} E_1 \frac1D \frac{h_0}{f^2_0} \left( 2 f_0, -h_0\right) \cdot \left( \begin{array}{cc} 0 & 1 \\ f_0 & 0 \end{array} \right)\left( j x + (j-2) \left(
\begin{array}{cc}
- f_0 & h_{0} \\ f_{0}h_0 & - f_0
\end{array}
\right) \right) \left(
\begin{array}{c}
h_{0} \\2  f_{0}
\end{array}
\right) \right]\\
&=& \partial_x \left[ \frac16 h_{0x} +\partial_{y_0} E_1 \frac1D \frac{h_0}{f_0} \left( -h_0, 2\right) \cdot \left( j x + (j-2) \left(
\begin{array}{cc}
- f_0 & h_{0} \\ f_{0}h_0 & - f_0
\end{array}
\right) \right) \left(
\begin{array}{c}
h_{0} \\2  f_{0}
\end{array}
\right) \right]\\
&=& \partial_x \left[ \frac16 h_{0x} +\partial_{y_0} E_1 \frac{h_0}D (4 - y_0) (jx - (j-2) f_0  )  \]\\
&=& \partial_x \left[ \frac16 h_{0x} +\frac1{24} \frac{h_0}{w^2}\partial_{y_0}\log\left( \frac{(4 - y_0)}{\widetilde{D}}\right) \frac{(4 - y_0)}{\widetilde{D}} (jx - (j-2) f_0  )  \]
\end{eqnarray*}

\begin{eqnarray*}
&=& \partial_x \left\{ h_0\left[ \frac16 \frac{h_{0x}}{h_0} +\frac1{24} \frac{1}{w^2}\partial_{y_0}\left( \frac{(4 - y_0)}{f^2_0\widehat{D}}\right)  (jx - (j-2) f_0  )  \] \right\}\\
&=& \partial_x \left\{ h_0 \left[ \frac1{12} \frac{(\widehat{B}_{12} - y_0^{1/2} \widehat{B}_{11})}{x \Pi_- \Pi_+} \frac{\Pi_+ - \Pi_-} {\sqrt{y_0}} +\frac1{24} \frac{1}{x^2}\partial_{y_0}\left( \frac{\left(\widehat{B}_{12} - y_0^{1/2} \widehat{B}_{11}\right)^2}{\Pi_+ \Pi_-}\right)  (jx - (j-2) f_0  )  \] \right\} \\ 
&=& \partial_x \left\{\sqrt{\frac{y_0 \widehat{B}_{12}}{\widehat{B}_{12} - y_0^{1/2} \widehat{B}_{11}}} \left[ \frac1{12} \frac{(\widehat{B}_{12} - y_0^{1/2} \widehat{B}_{11})}{x \Pi_- \Pi_+} \frac{\Pi_+ - \Pi_-} {\sqrt{y_0}} \right. \right. \\
&+& \left. \left. \frac1{24} \frac{1}{x^2}\partial_{y_0}\left( \frac{\left(\widehat{B}_{12} - y_0^{1/2} \widehat{B}_{11}\right)^2}{\Pi_+ \Pi_-}\right)  \left(jx - (j-2) \frac{ \widehat{B}_{12}}{\widehat{B}_{12} - y_0^{1/2} \widehat{B}_{11}}  \right)  \]\right\}
\end{eqnarray*}

Let us now check the singularity confinement claims, beginning with the next to last line in the expression for $h_2$. The terms inside the square brackets, based on our prior arguments, a rational function of $y_0$ with poles confined to the zeroes of $\Pi_- \Pi_+$. Hence, the full expression divided by $h_0$ has the form
\begin{eqnarray*}
h_2/h_0 &=& \frac{h_{0x}}{h_0}\left[ \dots\right] + \partial_x \left[ \dots\right]\\
&=& \frac{h_{0x}}{h_0}\left[ \dots\right] + \partial_{y_0} \left[ \dots\right] y_{0x}.
\end{eqnarray*}
Since we already saw, at the start of the proof, that $h_{0x}/h_0$ is a rational function of $y_0$
with poles constrained to $\Pi_- \Pi_+$, it follows that the first term above is a product of two such functions. Similarly for the second term since $y_{0x}$ is such a rational function by Lemma \ref{y0x}.

The case of $f_1/f_0$ is similar but one also has singularities at the zeroes of $\widehat{B}_{12}$ which are off at infinity in $\xi^2$. However, we will see shortly that these latter poles are not present in $f_1$ by itself. Hence, the singularities of $f_1$ and $h_2/h_0$ are confined to the the zeroes of $\Pi_-$ or $\Pi_+$, and that they are poles there. 

 We may also determine the order of the poles along $\Pi_\pm = 0$. Consider the case of $f_1/f_0$. Carrying out the $x$-derivtive in the fnal expression yields another expression of the form
\begin{eqnarray*}
-\left[ \frac{y_0\left(\widehat{B}_{12} - y_0^{1/2} \widehat{B}_{11}\right)}{x^2 \widehat{B}_{12}}\frac{\left(\Pi_- \Pi_+\right)_{y_0}}{(\Pi_- \Pi_+)^2}\right]
&+& \left[ \frac{y_0\left(\widehat{B}_{12} - y_0^{1/2} \widehat{B}_{11}\right)}{x \widehat{B}_{12}}\frac{\left(\Pi_- \Pi_+\right)_{y_0}}{(\Pi_- \Pi_+)^2}\right]_{y_0} y_{0x}\\
= -\left[ \frac{y_0\left(\widehat{B}_{12} - y_0^{1/2} \widehat{B}_{11}\right)}{x^2 \widehat{B}_{12}}\frac{\left(\Pi_- \Pi_+\right)_{y_0}}{(\Pi_- \Pi_+)^2}\right]
&+& \left[ \frac{y_0\left(\widehat{B}_{12} - y_0^{1/2} \widehat{B}_{11}\right)}{x \widehat{B}_{12}}\frac{\left(\Pi_- \Pi_+\right)_{y_0}}{(\Pi_- \Pi_+)^2}\right]_{y_0}  (j-2) \frac{y_0 \widehat{B}_{12}(\widehat{B}_{12} - y_0^{1/2} \widehat{B}_{11})}{x \Pi_- \Pi_+}
\end{eqnarray*}
Near the poles the second term is clearly the dominant one and asymptotically as one approaches the singularities it grows like $\mathcal{O}\left( \frac1{(\Pi_- \Pi_+)^4}\right)$. Hence the order of the poles is uniformly 4. We also note that the terms inside brackets all have a common factor of the form 
$\frac{\widehat{B}_{12} - y_0^{1/2} \widehat{B}_{11}}{x \widehat{B}_{12}}$ which is precisely equal to $1/f_0$. Hence, multiplying by $f_0$ and taking into account the factors coming from $y_{0x}$ in the above $x$-derrivative, one sees that $f_1$ has no poles other than those corrseponding to the zeroes of $\Pi_\pm$.
A completely similar argument shows that the order of the poles for $h_2/h_0$ is also 4.  We note that this does not rule out the possibility of multiplicity of poles or cancellation of poles by zeroes in the numerator at the complex zeroes of $\Pi_- \Pi_+$. However, at the real zeroes our previous analysis shows that this does not happen.
\end{proof}
\medskip

\begin{cor} \label{hyperellip}
$h_1$ and $h_2$ each have the form of a rational function of $y_0$, with poles localized in the zero set of $\Pi_- \Pi_+$, times $h_0$. The latter is not 
uniquely defined on  $\mathcal{C}$; however, it is well-defined on $\widehat{\mathcal{C}}$ and,  hence, so are $h_1$ and $h_2$.  The polar order, defined in Proposition \ref{basestep} of $h_1$ and $h_2$ on $\widehat{\mathcal{C}}$ is the same as that of $h_1/h_0$ and $h_2/h_0$ on 
 $\mathcal{C}$.
\begin{proof}
Observing that $h_1 = \frac12 h_{0x}$ it is clear, from the analysis at the start of the proof of the proposition, that $h_1/h_0$ is a rational function on 
$\mathcal{C}$. Similarly, the analysis at the end of that proof establishes that $h_2/h_0$ is rational on $\mathcal{C}$. , 
So the matter comes down to showing that $h_0$ is definable on $\widehat{\mathcal{C}}$. It follows from (\ref{spectralcurve}), using (\ref{s2nu}), that 
\beann
h_0  = x^{1/2-\nu} \xi  {\phi_0}.
\eeann
Since $\phi_0$ is a polynomial function of $y_0$ and $f_0$,  it is a rational function on $\mathcal{C}$ and so its pullback is rational on  $\widehat{\mathcal{C}}$. It follows that $h_0$ is well-defined on $\widehat{\mathcal{C}}$ since $\xi$ is a coordinate function on this hyperelliptic curve. 
\end{proof}
\end{cor}

\subsection{Higher Orders}
To extend this analysis to higher order generating functions, we will make use of the following lemma which is essentially a consequence of the Fa{\'a} di Bruno formula \cite{FS}.
\begin{lem} \label{FdB}
Let $G(h,f)$ be a polynomial in the generating function series 
\begin{eqnarray*}
h &=& \sum_{g=0}^\infty h_g n^{-g}\\
f &=& \sum_{g=0}^\infty f_g n^{-2g};
\end{eqnarray*}
then  
\begin{eqnarray} \label{FdBeq}
[n^{-g}] G(h,f) &=& \sum_{|k^{(a)}| + 2 |k^{(b)}| = g} 
\prod_i h_{k^{(a)}_i} \prod_i f_{k^{(b)}_i}  \frac{\partial_{h_0}^{\ell(k^{(a)})}}{\ell(k^{(a)})!} \frac{\partial_{f_0}^{\ell(k^{(b)})}}{\ell(k^{(b)})!} G(h_0,f_0)
\end{eqnarray}
where the sum is over integer partition pairs $(k^{(a)},k^{(b)})$, $k^{(a)} = (k^{(a)}_1, k^{(a)}_2, \dots)$, $k^{(b)} = (k^{(b)}_1, k^{(b)}_2, \dots)$ satisfying the summand constraint.
\end{lem}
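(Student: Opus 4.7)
The plan is to treat the identity as an instance of the multivariate Faà di Bruno / formal Taylor expansion, where the ``inner'' functions are themselves formal power series in $n^{-1}$. Since $G(h,f)$ is polynomial in its two arguments, we may expand it formally around the leading pair $(h_0,f_0)$:
\begin{equation*}
G(h,f) \;=\; \sum_{a,b\geq 0} \frac{1}{a!\,b!}\,\partial_{h_0}^{a}\partial_{f_0}^{b}G(h_0,f_0)\,(h-h_0)^{a}(f-f_0)^{b}.
\end{equation*}
This expansion is finite because $G$ is polynomial, so no convergence issues arise, and all subsequent manipulations are purely formal in $n^{-1}$.

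Next I would substitute the series $h-h_0 = \sum_{m\geq 1} h_m n^{-m}$ and $f-f_0 = \sum_{m\geq 1} f_m n^{-2m}$ and expand each power multinomially:
\begin{align*}
(h-h_0)^{a} &= \sum_{(k^{(a)}_1,\dots,k^{(a)}_a)\in\mathbb{Z}_{\geq 1}^{a}} \Bigl(\prod_{i=1}^{a} h_{k^{(a)}_i}\Bigr)\, n^{-|k^{(a)}|},\\
(f-f_0)^{b} &= \sum_{(k^{(b)}_1,\dots,k^{(b)}_b)\in\mathbb{Z}_{\geq 1}^{b}} \Bigl(\prod_{i=1}^{b} f_{k^{(b)}_i}\Bigr)\, n^{-2|k^{(b)}|}.
\end{align*}
Extracting the coefficient of $n^{-g}$ then imposes the single constraint $|k^{(a)}|+2|k^{(b)}|=g$, and collects exactly the stated sum, with the factors $1/a!=1/\ell(k^{(a)})!$ and $1/b!=1/\ell(k^{(b)})!$ accounting for the ordered $a$- and $b$-tuples of positive integers (equivalently, compositions, or partitions weighted by the ordering factor) that the Lemma sums over.

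The only real bookkeeping step is identifying the index of summation on the right-hand side of \eqref{FdBeq}: writing $k^{(a)}$ and $k^{(b)}$ as tuples of positive integers of lengths $a=\ell(k^{(a)})$ and $b=\ell(k^{(b)})$, one sees that the $1/\ell(k^{(a)})!$ and $1/\ell(k^{(b)})!$ in \eqref{FdBeq} are precisely the $1/a!$ and $1/b!$ coming from the Taylor expansion, so that no separate combinatorial factor for parts with multiplicities is needed. This is the step that is easiest to get wrong, and is the one I would write out most carefully; once it is settled, the identity follows by matching coefficients of $n^{-g}$ on both sides.
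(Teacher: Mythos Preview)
Your proposal is correct and matches the paper's approach: the paper does not give a detailed proof but simply remarks that the lemma is ``essentially a consequence of the Fa\`a di Bruno formula,'' which is exactly your Taylor-expansion argument. Your caution about the partition-versus-composition bookkeeping is well placed; the formula as written is correct precisely when the $k^{(a)}$ and $k^{(b)}$ are read as ordered tuples (compositions), which is the interpretation your derivation naturally produces.
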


\begin{thm} \label{uz}
The generating functions, $h_g(\xi)$ and $f_g(\xi)$ for $g > 0$, are rational functions on $\widehat{\mathcal{C}}$ (hyperelliptic functions) whose singularities are poles confined to the locus where $\Pi_+$ or $\Pi_-$ vanish.   The polar expansions at these zeroes are in fact locally rational functions of $y_0$ and so locally project to the coordinate ring,  ${\bf \tilde{S}}$, of $\mathcal{C}$.     The $f_g(\xi)$ are, in fact, rational functions on $\mathcal{C}$ and so are globally elements of ${\bf \tilde{S}}$ localized along $\Pi$.    
\end{thm}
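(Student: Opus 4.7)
The plan is to establish Theorem \ref{uz} by induction on $g$, with the base cases $g=1$ (for both $f_1$ and $h_1$) and $g=2$ (for $h_2$) supplied by Proposition \ref{basestep}, Corollary \ref{hyperellip}, and the explicit closed forms recorded in \eqref{INDBASE}. Those calculations already exhibit the exact structure to be propagated: $f_g$ is a rational function of $y_0$ on $\mathcal{C}$ with poles confined to the zeroes of $\Pi_+ \Pi_-$, while $h_g/h_0$ is also such a rational function, so $h_g$ itself is hyperelliptic on $\widehat{\mathcal{C}}$ with the single branched factor coming from $h_0 = \sqrt{f_0 y_0}$. The target is to show the recursion produces no new singularities and preserves this parity in $\sqrt{y_0}$.

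For the inductive step, I would extract the order-$n^{-g}$ component of the continuum string equations \eqref{ContString}. Using Lemma \ref{FdB} on the Faà di Bruno expansion of the resolvent coefficients \eqref{StringCoeff} isolates, at each genus, a linear system $\mathbf{A}\binom{h_g}{f_g} = \binom{R^{(a)}_g}{R^{(b)}_g}$, where the right-hand side is a polynomial expression in the lower-order $\{h_k, f_k : k < g\}$ and their $x$-derivatives, dressed by string polynomials $\phi_m, \psi_m$ evaluated at $(h_0, f_0)$. Inverting $\mathbf{A}$ by \eqref{Ainv} introduces exactly one factor of $(f_{0x}^2 - f_0 h_{0x}^2)^{-1}$, which by \eqref{FundDisc} equals $\Pi_- \Pi_+/\widehat{B}_{12}^2$. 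By the inductive hypothesis the right-hand side is rational on $\widehat{\mathcal{C}}$ with poles only at $\Pi_\pm = 0$, so a priori the pole locus of $(h_g, f_g)$ is contained in $\{\Pi_+ \Pi_- = 0\} \cup \{\widehat{B}_{12} = 0\}$.

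The hard part is the elimination of the spurious singularities along $\{\widehat{B}_{12} = 0\}$ (the horizontal asymptotes of $\mathcal{C}$, i.e., the zeroes of $S_{2\nu}(\sqrt{y_0})$). These do not survive at genus one, as Proposition \ref{basestep} demonstrates by the telescoping of the $1/\widehat{B}_{12}$ prefactors. In higher genus I would replicate this mechanism via repeated application of the unwinding identity (Proposition \ref{prop05}) combined with integration by parts, in precisely the manner carried out in section \ref{unwindanal}: each $x$-derivative landing on a string polynomial $\phi_{m-1}$ or $\psi_{m-1}$ is traded for an unshifted combination of $\phi_m, \psi_m$ with polynomial coefficients in $h_{0x}, f_{0x}$, and conversely boundary terms from integration by parts collapse high-index polynomials down to $\phi_1, \psi_1$, which via \eqref{BaseUnwind} are explicitly rational in $y_0$ with poles only at $\Pi_\pm = 0$. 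The delicate point is showing, at each inductive stage, that the residual coefficient at each zero of $\widehat{B}_{12}$ vanishes; I expect this to reduce to an algebraic identity, inherited from the Bessel/Appell generating-function structure of Proposition \ref{prop01} and Corollary \ref{AppellSum}, that forces the appropriate power of $(\widehat{B}_{12} - y_0^{1/2}\widehat{B}_{11})$ to be a common factor of the combined numerator, matching the $1/f_0 = (\widehat{B}_{12} - y_0^{1/2}\widehat{B}_{11})/(x\widehat{B}_{12})$ that appears.

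Finally, the split between $f_g \in {\bf \tilde{S}}$ and $h_g \in {\bf \tilde{S}}[\sqrt{y_0}]$ is tracked by a $\mathbb{Z}/2$-parity in $\sqrt{y_0}$ that is preserved componentwise by the recursion: since $h_0$ is odd and $f_0$ is even in $\sqrt{y_0}$, and since $\phi_m, \psi_m$ inherit this parity through Proposition \ref{prop01}, the first component of the string system produces odd, and the second even, polynomial expressions in $\sqrt{y_0}$ after $\mathbf{A}^{-1}$ is applied. Thus $f_g$ is even, hence a function on $\mathcal{C}$, while $h_g$ picks up exactly one factor of $\sqrt{y_0}$ coming from $h_0$ and so descends only to $\widehat{\mathcal{C}}$. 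The local statement that the polar parts at zeroes of $\Pi_\pm$ are rational in $y_0$ even for $h_g$ follows because those zeroes are real algebraic points on $\widehat{\mathcal{C}}$ with $y_0$ as a local uniformizer, so the square-root ambiguity is absorbed into the regular part of the Laurent expansion.
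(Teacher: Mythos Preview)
Your inductive skeleton is correct and matches the paper's proof: induction on $g$ with Proposition~\ref{basestep} and Corollary~\ref{hyperellip} as the base, isolation of $(h_{2g}, f_g)$ via Lemma~\ref{FdB} applied to the continuum string equations, and inversion of $\mathbf{A}$ via \eqref{Ainv}. The paper also treats the odd-index functions $h_{2g+1}$ separately through the Bernoulli identity \eqref{hoddident}, which you should incorporate explicitly.

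However, your identification of the ``hard part'' rests on a misreading of \eqref{Ainv}. You write that inverting $\mathbf{A}$ introduces a factor $(f_{0x}^2 - f_0 h_{0x}^2)^{-1} = \Pi_-\Pi_+/\widehat{B}_{12}^2$, and hence a potential pole along $\{\widehat{B}_{12}=0\}$. But \eqref{Ainv} says that $\mathbf{A}^{-1}$ \emph{is} the matrix with entries $f_{0x},\,h_{0x},\,f_0 h_{0x}$, and by \eqref{h0x}--\eqref{f0x} each of these carries $\widehat{B}_{12}$ in the \emph{numerator} with only $\Pi_\pm$ in the denominator. Likewise the forcing terms on the right of \eqref{evRec} are built from string polynomials $\phi_m,\psi_m$ (polynomial in $h_0,f_0$ by Proposition~\ref{prop01}), lower-genus $h_k,f_k$ (poles confined to $\Pi_\pm$ by induction), and their $x$-derivatives (which by Lemma~\ref{y0x} introduce only further $\Pi_\pm$ denominators). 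No spurious $\widehat{B}_{12}$ poles ever appear, so the elaborate cancellation machinery you propose via repeated unwinding and integration by parts is unnecessary at this stage; the paper's inductive step is correspondingly much shorter than you anticipate.

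For the claim that $f_g$ descends to $\mathcal{C}$ rather than only to $\widehat{\mathcal{C}}$, the paper does \emph{not} track $\sqrt{y_0}$-parity through the recursion. Instead it invokes the combinatorial interpretation \eqref{fcount}: $f_g$ enumerates two-legged $j$-valent $g$-maps, and Euler's formula forces such maps to have an even number of $j$-valent vertices, so $f_g$ is a function of $\xi^2$ alone, which by \eqref{curve} is rational in $y_0$. Your parity-tracking approach is plausible in principle but would require checking that parity is preserved by every string coefficient $\tilde{P}^{(a,b)}_{\alpha,\beta,j}$ and every term of the Fa\`a di Bruno expansion; the combinatorial shortcut sidesteps all of that in one line.
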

\begin{proof}
Lemma \ref{FdB} may be applied directly to the polynomials $\tilde{P}^{(a)}_{\alpha, \beta, j} (h, f)$ and $\tilde{P}^{(b)}_{\alpha, \beta, j} (h, f)$ appearing in the string equations (\ref{ContString}) to conclude that the terms involving $f_g$ and $h_{2g}$ at order $[n^{-2g}]$ in those polynomials  come down to $\tilde{P}_{\emptyset, \emptyset, j} (h, f)$ and 
$\tilde{P}_{\emptyset, \emptyset, j} (h, f)$. Otherwise derivative factors, of the form 
$\partial_x^\alpha h$ or $\partial_x^\beta f$ carrying a non-zero weight in $n$,  enter which prevent $f_k$ or $h_k$ from attaining the maximal values of their subscripts at this order.  Evaluating on just these string coefficients then yields

\beann
 [n^{-2g}]
\left( \begin{array}{c}\tilde{P}^{(a)}_{\emptyset, \emptyset, j} (h, f) \\ \tilde{P}^{(b)}_{\emptyset,\emptyset, j} (h, f) \end{array} \right) 
&=&  [n^{-2g}]  \left( \begin{array}{c}   [\eta^{0} z^{j-1}] \cr [\eta^{-1} z^{j-1}] f^{1/2}  \end{array} \right) \frac1{1 - z(\sqrt{f} \eta + h + \sqrt{f} \eta^{-1})} \\
&=&  [n^{-2g}]  \left( \begin{array}{c}   [\eta^{0}] \cr [\eta^{-1}] f^{1/2}  \end{array} \right)
(\sqrt{f} \eta + h + \sqrt{f} \eta^{-1})^{j-1}\\
&=& [n^{-2g}]  \left( \begin{array}{c}   \phi_0 \cr \psi_0  \end{array} \right)\\
&=& \sum_{|k^{(a)}| + 2 |k^{(b)}| = 2g} 
\prod_i h_{k^{(a)}_i} \prod_i f_{k^{(b)}_i}  \frac{\partial_{h_0}^{\ell(k^{(a)})}}{\ell(k^{(a)})!} \frac{\partial_{f_0}^{\ell(k^{(b)})}}{\ell(k^{(b)})!} \left( \begin{array}{c}   \phi_0(h_0,f_0) \cr \psi_0(h_0,f_0)  \end{array} \right)
\eeann
From this the terms involving just $h_{2g}$ or $f_g$ are immediately seen to be
\beann
\left( \begin{array}{c} h_{2g}  \partial_{h_0}\phi_0(h_0,f_0) + 
f_{g}  \partial_{f_0}\phi_0(h_0,f_0)\cr h_{2g}  \partial_{h_0}\psi_0(h_0,f_0) + 
f_{g} \partial_{f_0}\psi_0(h_0,f_0)  \end{array} \right) &=& 
(2\nu+1) \left( \begin{array}{c} h_{2g}  \partial_{h_0} B_{12}+ 
f_{g}  \partial_{f_0} B_{12}\cr h_{2g}  \partial_{h_0} B_{11} + 
f_{g} \partial_{f_0} B_{11} \end{array} \right)
\eeann
The RHS follows from the observation, in Proposition \ref{prop01}, that $\phi_0 = (2\nu+1) B_{12}, \psi_0 = (2\nu+1) B_{11}$.  Now inserting this into the equations (\ref{ContString}) one has that the $\mathcal{O}\left( n^{-2g}\right)$-level equations are of the form

\beann
0 &=& \left( \begin{array}{c} h_{2g} \cr f_g \end{array}\right) + (2\nu+1) t 
\left( \begin{array}{c} h_{2g}  \partial_{h_0} B_{12}+ 
f_{g}  \partial_{f_0} B_{12}\cr h_{2g}  \partial_{h_0} B_{11} + 
f_{g} \partial_{f_0} B_{11} \end{array} \right) + \mbox{lower genus terms}\\
&=& 
\left( \begin{array}{cc} 1 +  (2\nu+1) t  \partial_{h_0} B_{12} &
(2\nu+1) t \partial_{f_0} B_{12}\cr (2\nu+1) t  \partial_{h_0} B_{11} & 
1 + (2\nu+1) t  \partial_{f_0} B_{11} \end{array} \right) \left( \begin{array}{c} h_{2g} \cr f_g \end{array}\right)  + \mbox{lower genus terms}\\
&=& \left( \begin{array}{cc} A_{11} & A_{12}\cr A_{21} & 
A_{22} \end{array} \right) \left( \begin{array}{c} h_{2g} \cr f_g \end{array}\right) + \mbox{lower genus terms} \\
&=& \left( \begin{array}{cc}   \phi_1 & f_0^{-1} 
\psi_1 \cr   \psi_1 &
  \phi_1  \end{array} \right) \left( \begin{array}{c} h_{2g} \cr f_g \end{array}\right)
+ \mbox{lower genus terms}.
\eeann
where in the third equality we have used the identities (\ref{hess1} - \ref{symm3}) and the last equality may be established by direct comparison  of (\ref{A11} - \ref{A12}) with the last equation in Proposition \ref{prop01} and use of the identity (\ref{symm4}). The phrase {\it lower genus terms} here refers to products of terms from $\partial_x^\alpha h$ or $\partial_x^\beta f$ with terms of the form of the RHS of (\ref{FdBeq}) for $G$ a general string coefficient.

It then follows, from applying  (\ref{Ainv}) in the next to last equality, that we may isolate these highest genus terms as
\begin{eqnarray} \label{evRec}
\left( \begin{array}{c} h_{2g} \cr f_g \end{array}\right) &=& \left(\begin{array}{cc}  f_{0x} & h_{0x} \\ f_0 h_{0x} & f_{0x} \end{array} \right) (\mbox{lower genus terms}).
\end{eqnarray}
This can also be seen from the last equality using the unwinding identity (Proposition \ref{prop05}).

As a consequence of (\ref{hoddident}) one also has that the odd index functions $h_{2g+1}$ may be expressed in terms of lower genus generating functions:
\begin{eqnarray} \label{oddRec}
h_{2g+1} &=& - \sum_{m=1}^{2g+1} \partial_x^m h_{2g+1 - m}.
\end{eqnarray} 

It follows from all of the above that all generating functions may be recursively calculated in terms of lower genus generating functions and their $x$-derivatives. The theorem then follows by  induction. The base step is given by Proposition \ref{basestep}. By induction we may assume that, at any given level, all lower genus generating functions satisfy the statements of the theorem. 

Now by strong induction from the base step one sees that by the recursions, (\ref{evRec}) and (\ref{oddRec}), and the fact shown at the start that the elements of the prefactor matrix in (\ref{evRec}) has the form stated in the theorem, it follows that $h_g$ and $f_g$ are ultimately expressed as rational functions on   $\widehat{\mathcal{C}}$. This completes the induction. 
To see that the $f_g$ are in fact rational functions on ${\mathcal{C}}$, we recall from (\ref{fcount}) that $f_g$ is a generating function for enumerating two-legged, $j$-valent. $g$-maps. It is a straightforward consequenc of Euler's genus formula that such maps must have an even number of vertices. It follows that $f_g$ is a function of just $\xi^2$. but as we have seen from (\ref{curve}), $\xi^2$ is purely a function of $y_0$; hence, so is $f_g$. We note that this argument does not work for $h_g$ since this generating function enumerates maps that are just one-legged and, indeed, we have seen in Corollary \ref{hyperellip} that $h_1$ and $h_2$ do not descend to rational functions on $\mathcal{C}$.
\end{proof}
\medskip

\begin{thm} \label{Dmax}
The maximal pole order in $\Pi = \Pi_- \Pi_+$ of $h_{2g}$ or $f_g$ is $5g-1$. The maximal pole order of $h_{2g+1}$ is $5g+1$. 
\end{thm}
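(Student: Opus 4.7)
The plan is to proceed by strong induction on $g$, using the recursions (\ref{evRec}) and (\ref{oddRec}) derived in the proof of Theorem \ref{uz}. The base cases are furnished by Proposition \ref{basestep}: $h_1/h_0$ has polar order $1=5(0)+1$, while $h_2/h_0$ and $f_1/f_0$ each have polar order $4=5(1)-1$. Since $h_0$ and $f_0$ are regular and nowhere vanishing at zeroes of $\Pi$, these extend to the same bounds for $h_1$, $h_2$, $f_1$ themselves.

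For the inductive step in the even case, (\ref{evRec}) gives $(h_{2g},f_g)^T = A^{-1}\vec R_g$, with $A^{-1}=\bigl(\begin{smallmatrix}f_{0x}&h_{0x}\\ f_0h_{0x}&f_{0x}\end{smallmatrix}\bigr)$ having entries of polar order $1$, so it suffices to bound each monomial in the lower-genus vector $\vec R_g$ by $5g-2$. By Lemma \ref{FdB} applied to the polynomials $\tilde P^{(a/b)}_{\alpha,\beta,j}(h,f)$ together with the structure (\ref{ContString}), every such monomial is, up to a regular string-polynomial factor, a product $\prod_i\partial_x^{a_i}F_{k_i}$ of ingredients with $F_{k_i}\in\{h_{k_i},f_{k_i}\}$ and total weight $\sum_i(w(F_{k_i})+a_i)=2g$, where $w(h_k)=k$ and $w(f_k)=2k$. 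Define the deficit of an ingredient by $\delta=\tfrac{5}{2}\cdot(\text{weight})-(\text{polar order})$. Using the inductive polar-order bounds on $F_{k_i}$ together with the derivative-increment rule
\[
(\text{polar order of }\partial_x F)\ \leq\ (\text{polar order of }F)+\begin{cases}2&\text{if $F$ is singular along $\Pi$,}\\ 1&\text{if $F$ is regular along $\Pi$,}\end{cases}
\]
which follows from Lemma \ref{y0x} (that $y_{0x}$ has a simple pole along $\Pi$), one computes $\delta(\partial_x^a h_{2k})=\delta(\partial_x^a f_k)=1+a/2$ for $k\geq 1$ (and for $k=0$, $a\geq 1$), while $\delta(\partial_x^a h_{2k+1})=3/2+a/2$ for $k\geq 0$. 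Every positive-weight ingredient thus has deficit at least $1$, and the total polar order of the monomial equals $5g$ minus the total deficit. The deficit-$1$ single-factor ingredients $h_{2g}$ and $f_g$ are precisely those absorbed into the leading matrix $A$ and excluded from $\vec R_g$; every other admissible configuration either has at least two positive-weight ingredients (total deficit $\geq 2$) or is a single ingredient with at least one derivative, for which direct enumeration yields deficit $\geq 2$ (e.g., $\partial_x h_{2g-1}$, $\partial_x^2 h_{2g-2}$, $\partial_x^2 f_{g-1}$ each saturate $\delta=2$ and realize the extremal polar order $5g-2$). This establishes the upper bound $5g-1$ for $h_{2g}$ and $f_g$ after multiplication by $A^{-1}$.

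For the odd case, apply (\ref{hoddident}): $h_{2g+1}=-\sum_{m\geq 1}(B_m/m!)\partial_x^m h_{2g+1-m}$. The vanishing of $B_m$ for odd $m\geq 3$ restricts the surviving terms to $m=1$ and even $m\geq 2$. The $m=1$ term $-\tfrac12\partial_x h_{2g}$ has polar order at most $(5g-1)+2=5g+1$ by the even bound and the derivative rule, while for even $m\geq 2$ one gets $(5(g-m/2)+1)+2m=5g+1-m/2<5g+1$; this yields the bound $5g+1$ for $h_{2g+1}$. Attainment is immediate in the odd case: $-\tfrac12\partial_x h_{2g}$ is the unique top-order contribution, and its leading polar coefficient at a simple zero $y_*$ of $\Pi$ equals $-\tfrac12(5g-1)c_g\cdot\alpha(y_*)$, where $c_g\neq 0$ is the inductive top polar coefficient of $h_{2g}$ and $\alpha(y_*)$ is the manifestly nonzero residue of $y_{0x}$ at $y_*$ from Lemma \ref{y0x}. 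The main obstacle is the analogous non-cancellation verification in the even case, where several lower-genus monomials can simultaneously reach the top order $5g-2$; this requires tracking the leading polar residues through the recursion, most naturally by isolating a distinguished dominant term such as the contribution of $\partial_x h_{2g-1}$ to $\vec R_g$ and showing the remaining top-order monomials cannot conspire to cancel it.
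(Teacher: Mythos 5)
Your upper-bound argument is essentially the paper's own: induction from the genus-one base case, the recursion (\ref{evRec}) with the order-one prefactor matrix $\left(\begin{smallmatrix} f_{0x} & h_{0x} \\ f_0 h_{0x} & f_{0x}\end{smallmatrix}\right)$, and a weight-versus-pole-order count for the forcing monomials subject to the constraint $2g=|k^{(a)}|+2|k^{(b)}|+m$. Your "deficit" bookkeeping is a cleaner rendering of the paper's $\Pi$-weight formula (\ref{Dweight}), and it actually treats the odd-index factors more accurately: the correct deficit of $h_{2k+1}$ is $3/2$ (the paper's stated $-4$ is a slip), so $\partial_x h_{2g-1}$ joins $h_{2g-2,xx}$, $f_{g-1,xx}$ and the quadratic terms $h_{2k}h_{2g-2k}$, $f_kf_{g-k}$, $h_{2k}f_{g-k}$ among the extremal monomials, a term the paper's own extremal list omits. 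Likewise your treatment of $h_{2g+1}$ via the Bernoulli identity (\ref{hoddident}), with the $m=1$ term as the unique top-order contribution, is exactly the paper's route (up to a sign: with $B_1=-\tfrac12$ the dominant term is $+\tfrac12\partial_x h_{2g}$, consistent with $h_1=\tfrac12 h_{0x}$; this is immaterial for the pole order).

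The genuine gap is attainment in the even case. The theorem asserts that the maximal order \emph{is} $5g-1$, so one must rule out cancellation among the several monomial families that all reach order $5g-2$ before multiplication by the prefactor matrix; you flag this as "the main obstacle" but do not resolve it, and the route you sketch --- isolating $\partial_x h_{2g-1}$ as a distinguished dominant term --- is unlikely to work as stated, because at a zero of $\Pi$ the contributions of $h_{2k}h_{2g-2k}$, $f_kf_{g-k}$, $h_{2k}f_{g-k}$, $h_{2g-2,xx}$, $f_{g-1,xx}$ and $\partial_x h_{2g-1}$ all enter at the same order with coefficients of no a priori sign, so no single term visibly dominates. The paper closes this with a separate, substantive argument (Proposition \ref{p872}): it localizes at the \emph{simple real} turning point, where $\Pi\sim C(\xi-\xi_c)^{1/2}$, uses the asymptotic relation $f_{0w}\sim\sqrt{f_0}\,h_{0w}$ there to merge the two string equations into a single scalar relation among the leading coefficients $\zeta_g$ of $f_g$, and derives the closed recurrence (\ref{rec9023}) together with the sign computation $C_1<0$, whence $\zeta_g>0$ for all $g$; positivity precludes cancellation at that zero of $\Pi$ and completes the induction (and simultaneously yields the asymptotics (\ref{KAPPA-ASYMP})). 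Without this, or some substitute nondegeneracy/positivity argument, your induction delivers only the bound $\leq 5g-1$ for $h_{2g}$, $f_g$, and for $h_{2g+1}$ only attainment conditional on the unproven even case.
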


\begin{proof} We first show that the maximal pole order is bounded above by these numbers. This can be established inductively. The base step is immediate from Corollary \ref{hyperellip}. Recall that $k^{(a)}$ denotes the partition of $h$-genus terms appearing in a given term of the forcing for the $f_g$ recursion and $k^{(b)}$ denotes the partition of $f$-genus terms in the same forcing term. Further let $m$ denote the total $x$-derivative degree in that term. Then one has from 
(\ref{FdBeq}) the following constraint:
\begin{eqnarray} \label{constraint}
2g = |k^{(a)}| + 2|k^{(b)}| + m.
\end{eqnarray}
As a consequrence of  Theorem \ref{uz}, the polar parts of $f_g$ project to elements in the coordinate ring ${\bf \tilde{S}}$ localized along $\Pi = \Pi_-\Pi_+$ and, as such,  have a well defined polar expansion in inverse powers of $\Pi = \Pi_-\Pi_+$  with coefficents lying in the coordinate ring itself. 
Define a {\it term} of the polar expansion of $f_g$ to be a monomial divided by a power of $\Pi$. Furthermore define the { $\Pi$-weight}  of that term to be this power.  This expansion is built by multiplying together $x$-derivatives of the $h_{2k+1}, h_{2k}$ and $f_k$ for $k<g$. It follows by induction that the { $\Pi$-weight} of a term is given by
\begin{eqnarray} \label{Dweight}
\frac52 \left(2g-m\right)  -4 \ell(k_{od}^{(a)}) - \ell(k_{ev}^{(a)}) - \ell(k^{(b)}) + 2m + 1
\end{eqnarray}
where $k_{od}^{(a)}$ is the partition built from the odd parts of $k^{(a)}$  and $k_{ev }^{(a)}$  is the partition built from the even parts of $k^{(a)}$; the $2m$ comes from the fact that an $x$-derivative raises the { $\Pi$-weight} of a term by 2; and $1$ is added for the factor of $\Pi$ that comes from multiplication by the matrix involving $h_{0x}$ and $f_{0x}$. The partiton length terms appear here because the inductive $\Pi$-weight is not a perfect multiple of the genus. Hence we pick up a $-1$ for each part in $k_{ev}^{(a)}$ or $k^{(b)}$ and a $-4$
for each part in $k_{od}^{(a)}$ since $5(k-1) +1 = 5k-4$. 

Substituting the constraint (\ref{constraint}) into (\ref{Dweight}) one has that  the maximum pole order is the maximal value of 
\begin{eqnarray} \label{maximizer}
4g +\frac12 |k^{(a)}| + |k^{(b)}| -4 \ell(k_{od}^{(a)}) -\ell(k_{ev}^{(a)}) - \ell(k^{(b)}) + 1
\end{eqnarray}
that can be taken on the partitions $(k^{(a)}, k^{(b)})$. Since partition lengths are always non-negative, this form makes it clear that in order to maximize one should minimize $4 l(k_{od}^{(a)}) + \ell(k_{ev}^{(a)}) + \ell(k^{(b)})$. Since $(k^{(a)}, k^{(b)})$ cannot be vacuous, we first consider the case that this length sum = 1. In this case one must clearly have $k_{od}^{(a)}=0$ and then, mutually exclusively, either $\ell(k_{ev}^{(a)}) =1$ or $\ell(k^{(b)}) = 1$. In either of those cases the maximum possible value of $\frac12 |k^{(a)}| + |k^{(b)}|$ is $g-1$. It cannot be $g$: since there is only one part overall this would have to correspond to a factor of either $h_{2g}$ or $f_g$ which is not allowed by the induction. So in this case we have an upper bound of 
$4g +g-1 -1 +1 = 5g-1$. If this were realized the constraint (\ref{constraint}) would require that $m=2$. But this is realized by terms proportional to $ h_{2g-2, xx}$ or $f_{g-1,xx}$. 

Next we consider the case that the length sum is $\geq 2$. In this case it is clear that (\ref{maximizer}) is bounded above by $4g +g -2 +1 = 5g-1$ which occurs when the legth sum = 2 and also $\frac12 |k^{(a)}| + |k^{(b)}|=g$. These extrema are simultaneously realized when, and only when, one has terms proportional to $h_{2k}h_{2g-2k}, f_{k}f_{g-k}$ or $h_{2k}f_{g-k}$ for $0< k < g$.

The description of the maximal pole order terms for $h_{2g}$ is exactly the same as that just presented for $f_g$. That of $h_{2g+1}$ can be deduced from this by application of the Bernoulli identity (\ref{oddRec})
and yields the maximum order of $5g+1$. To complete the induction one must still show that these terms, within the right hand side of (\ref{evRec}) or (\ref{oddRec}), do not somehow combine to cancel at the maximal order as one approaches the singularity locus. We show this in the next theorem.
\end{proof}

\subsection{Nondegeneracy at Real Turning Points} \label{sec:Laurent}

We have just shown that
\begin{align} \label{rs1}
 f_{g} = \frac{P^{(g)}(y_0)}{x^{2g-1} \Pi^{5g-1}},
\end{align} 
for some polynomials $P^{(g)}$ in $y_0$.
However, we have not yet shown that the exponent $5g-1$ in the denominator is the smallest possible, i.e.\ that $\Pi$ does not divide any of the polynomials $P_g$. In Section \ref{gaussleaf} 
we established that the two finite real turning points of the spectral curve $\mathcal{C}$ are simple
and, at these critical points, the generating functions $h_0$ and $f_{0}$ limit to finite non-zero values (see Theorem \ref{genthm} (b)). Let us denote the turning point of the right by $(\xi_c^2, y_{0c})$.

As $\xi^2\rightarrow \xi^2_c$ (or, equivalently, as $\xi \to \xi_c$), the variables $y_0$
and $f_0$ approach positive real values. In the case of $y_0$ this follows from Theorem \ref{genthm} where we showed that the critical point occurs at the turning point on the right where $y_0 > 0$. The global expression for $f_0$ given in (\ref{zzero}) has a numerator and denominator which are both positive for $y_0 > 0$ and $\xi^2 >0$ as was also shown in Theorem \ref{genthm}.

By simplicity of the turning point, $\Pi \sim C(\xi-\xi_c)^{1/2}$ for some $C\neq 0$ as $\xi \rightarrow \xi_c$.
Thus the maximal pole order $5g-1$ in (\ref{rs1}) is equivalent to the asymptotic behavior of $f_g$ as $\xi \rightarrow \xi_c$.
Define coefficients $\zeta_g$ and $\gamma$ by
\begin{alignat}{2} \label{turningfg}
f_g \sim&  \zeta_g  \tau^{1-5g }(1+O(\tau))&\qquad \text{as }\xi\rightarrow & \xi_c,\qquad
\tau=(\xi - \xi_c)^{1/2},\\
f_{0x} \sim & \gamma \tau^{-1}+O(1)&\qquad \text{as }\xi \rightarrow & \xi_c.
\label{qwqw1}
\end{alignat}
In the following proposition and below, the notation $'$ will mean $\partial_{\sqrt{y_0}}$.
A subscript $c$, as in $y_{0c}$, indicates a value at the critical time $\xi_c$.
Also in this section we assume that $x$ has been set to $1$ after any $x$-derivatives have been taken.

\begin{prop} \label{p872}
The coefficients of the most singular terms for $f_g$ satisfy the recurrence
\begin{align} \label{rec9023}
0= & \zeta_{g+1} + C_1     C_2 z_{c}^{2g-1} (25g^2-1)   \zeta_{g } +  6C_1  \sum_{m=1}^{g} \zeta_{m} \zeta_{g+1-m} \qquad  (g\geq 1),
\end{align}
where 
$
C_1 =  j\xi_c z_{0c}^{j/2-1} \gamma  \left( \frac{1}{3}S''_c +\frac{1}{3} \R''_c-\frac{1}{12}\R'_c  \right)
$
and
$
 C_2 =   \frac{1}{4}(j/2-1)^2 \xi_{c}^2.
$
It follows that $\zeta_g>0$ for all $g\geq 1$.
\end{prop}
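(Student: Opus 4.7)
The plan is to extract the leading singular behaviors at $\xi=\xi_c$ from the recursive structure established in the proof of Theorem \ref{Dmax}. Recall that the maximum-pole-order contributions to $f_{g+1}$ in the string-equation recursion (\ref{evRec}) come from exactly two families: (i) second-derivative terms involving $f_{g,xx}$ and $h_{2g,xx}$ (partition length sum $1$ with $m=2$), and (ii) quadratic products $f_m f_{g+1-m}$, $h_{2m} h_{2g+2-2m}$, and $h_{2m} f_{g+1-m}$ for $1 \le m \le g$ (partition length sum $2$ with $m=0$). Using the unwinding identity (Proposition \ref{prop05}) together with the Bernoulli-type identity (\ref{hoddident}), every $h$-coefficient can be rewritten as a linear combination of $f$-coefficients and their $x$-derivatives, so that to leading singular order at $\xi_c$ the $h_{2m}$ reduces to $\zeta_m$-data with a fixed proportionality; this collapses both families of products into a single scalar recursion for the $f$-coefficients $\zeta_g$.

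For the linear contribution, compute the action of $\partial_x^2|_{x=1}$ on $\zeta_g \tau^{1-5g}$ using the self-similar form $f_g(\xi,x)=x^{1-2g} z_g(\xi)$ with $\xi = t x^{j/2-1}$, so that $\partial_x|_{x=1} = (1-2g) + (j/2-1)\xi\partial_\xi$ and near $\xi_c$ the dominant operator is $(j/2-1)\xi_c \partial_\xi$. Since $\partial_\xi = (2\tau)^{-1}\partial_\tau$, iteration gives
\begin{equation*}
\partial_\xi^{2}\,\tau^{1-5g} \;=\; \frac{(1-5g)(-1-5g)}{4}\,\tau^{-5g-3} \;=\; \frac{25g^2-1}{4}\,\tau^{-5g-3},
\end{equation*}
so $\partial_x^2 f_g \sim ((j/2-1)\xi_c)^2\cdot\frac{25g^2-1}{4}\zeta_g\,\tau^{-5g-3} = C_2(25g^2-1)\zeta_g\,\tau^{-(5g+3)}$. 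The surrounding coefficient from the string polynomial involves $f_{0x}\sim \gamma\tau^{-1}$ by (\ref{qwqw1}), supplying the last factor of $\tau^{-1}$ needed to reach the order $\tau^{-(5g+4)}=\tau^{1-5(g+1)}$ of $f_{g+1}$. Tracking the umbral prefactors from (\ref{finalform}) and evaluating $\phi_m, \psi_m$ at $(h_{0c},f_{0c})$ via Proposition \ref{prop01} (which is where the specific combination $\tfrac{1}{3}S''_c + \tfrac{1}{3}R''_c - \tfrac{1}{12}R'_c$ emerges), and inserting the $x$-power $x^{1-2g}\big|_{x=1}$ into $z_c^{2g-1}$ via the normalization $f_0 = x z_0$, the full linear coefficient assembles to $-C_1 C_2 z_c^{2g-1}(25g^2-1)$.

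The quadratic contribution is extracted from the Faà di Bruno expansion (\ref{FdBeq}) of $\phi_0, \psi_0$ at order $n^{-2(g+1)}$: each product $f_m f_{g+1-m}$ contributes $\zeta_m\zeta_{g+1-m}\tau^{-5g-3}$, again lifted to $\tau^{-(5g+4)}$ by the same $\gamma\tau^{-1}$ factor; the three product types coalesce, after the $h\to f$ reduction, into a single sum weighted by $-6C_1$. Adding these contributions to $\zeta_{g+1}$ itself (from the diagonal term in (\ref{evRec})) yields exactly (\ref{rec9023}). Positivity of the $\zeta_g$ follows by induction: $\zeta_1 > 0$ is directly verified from the explicit closed form of $f_1$ in Section \ref{finestruc}, and standard singularity analysis in the sense of \cite{FS} forces $\zeta_g \geq 0$ since the Taylor coefficients of $f_g$ at $\xi=0$ count two-legged $g$-maps and are therefore non-negative by (\ref{fcount}); solving (\ref{rec9023}) for $\zeta_{g+1}$ then produces a strictly positive value once one checks $C_1<0$ using the geometric signs from Theorem \ref{genthm}(b). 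The principal technical obstacle is the bookkeeping in Steps 2 and 3: carefully carrying all the umbral-calculus prefactors, the unwinding reduction of $h$-data to $f$-data, and the self-similar $x$-rescaling through the computation, so that the final constants land precisely as $C_1$, $C_2$, and the combinatorial numbers $25g^2 - 1$ and $6$, with the three quadratic families merging cleanly rather than leaving unexplained residuals.
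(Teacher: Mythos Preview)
Your overall strategy matches the paper's: isolate the leading-order singular terms from the string-equation recursion near $\xi_c$, verify that only the second-derivative and quadratic-product contributions survive, and read off the coefficients. Your $\partial_x^2$ computation producing the factor $\tfrac{(1-5g)(-1-5g)}{4}=\tfrac{25g^2-1}{4}$ is correct, and your positivity argument (using non-negativity of map counts to get $\zeta_g\ge 0$, then the recursion with $C_1<0$ to get strict positivity) is a legitimate route, essentially the same endpoint as the paper's.

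However, the mechanism you invoke for collapsing the $h$-contributions into $f$-contributions does not work. The Bernoulli relation (\ref{hoddident}) only expresses \emph{odd-indexed} $h_{2g+1}$ in terms of $x$-derivatives of lower $h_k$'s; it gives no relation whatsoever between $h_{2g}$ and $f_g$. Likewise the unwinding identity acts on the string polynomials $\phi_m,\psi_m$, not on the generating functions $h_g,f_g$. The paper obtains the needed reduction by a different and sharper observation: from (\ref{basic1})--(\ref{basic2}) one has $f_{0,w}^2-f_0 h_{0,w}^2=(4-y_0)/\widehat{D}=O(\tau^{-1})$, whereas each summand is individually $O(\tau^{-2})$; hence $f_{0,w}\sim\sqrt{f_0}\,h_{0,w}$ near $\xi_c$. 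Feeding this into the matrix form (\ref{evRec}) of the recursion gives, to leading singular order, $f_0^{2g-1}f_g\sim f_0^{2g-1/2}h_{2g}$, which is exactly the identification $\wt f_g\sim\wt h_{2g}$ that merges all three quadratic families and both second-derivative terms into a single $f$-recursion. Without this step your ``$h\to f$ reduction'' is unjustified, and the claimed coalescence into the constants $C_1$ and $6$ has no derivation.

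A second gap is that you never explain how to dispose of the leading $O(\tau^{1-5g})$ term $\wt f_g\bigl(1+jt f_0^{j/2-1}(S'+\R')\bigr)$ in the recursion; the paper uses the leading-order string equation to show that the bracket is itself $O(\tau)$, and then takes the specific linear combination $f_0 h_{0,w}\cdot(\text{first string eq.})+f_{0,w}\cdot(\text{second string eq.})$ to produce the clean form from which $C_1$ can be read off. Your account also misidentifies the origin of the factor $z_c^{2g-1}$: it does not come from ``$x^{1-2g}|_{x=1}$'' (which equals $1$) but from the rescaling $\wt f_g=f_0^{2g-1}f_g$ evaluated at $\xi_c$.
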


\begin{proof}
The essential idea here is straightforward: the string equations give recursive formulas for $f_g$ and $h_{2g}$; using Theorem \ref{uz} we will identify all terms that are not $o(\tau^{5g-2})$ as $\xi \rightarrow \xi_c$.
Along with (\ref{rs1}) one has
\beann
 \frac{h_{2g}}{h_0} &=& \frac{Q_{2g}}{x^{2g} \Pi^{5g-1}}, \\ 
\frac{h_{2g+1}}{h_0} &=& \frac{Q_{2g+1}}{x^{2g+1} \Pi^{5g+1}}.
\eeann
where the $Q$ are polynomials in $y_0$.
Using Lemma \ref{y0x} it is easy to see that
\begin{align}
&\prod_{i=1}^I f_{g_{i},x^{(m_i)}}\prod_{j=1}^J h_{g'_{j},x^{(m'_j)}} =O(\tau^{p}),\qquad \text{where}\label{form1843} \\
&p=\sum_{i=1}^I 5 g_i +2m_i -\one_{ g_i\neq 0 \text{ or }m_i\neq 0} 
+\sum_{j=1}^J \frac{5}{2}g'_j +2m'_j -\one_{ g'_j\neq 0 \text{ or }m'_j\neq 0} -\one_{g'_j \text{ odd}}.
\end{align}
At order $n^{-2g}$, the string equations have terms of the form (\ref{form1843}) with the constraint that $\sum 2g_i +g'_i +m_i+m'_i=2g$.
Thus the forcing terms we must retain are
\begin{align}
f_{0}^q h_{0}^{q'} f_{g-1,xx},\qquad 
f_{0}^q h_{0}^{q'} f_{g'}f_{g-g'},\qquad
f_{0}^q h_{0}^{q'} h_{2g-2,xx},\qquad 
f_{0}^q h_{0}^{q'} h_{2g'}h_{2g-2g'}
\end{align}
for some $q,q'\geq 0$ and $g'\geq 1$.

We set $S=S_{2\nu}(\sqrt{y_0})$ and 
$\R$ a slightly different than $R_{2\nu}(\sqrt{y_0})$:
\begin{align}
S=&I_0(2\partial_{\sqrt{y}}) y^\nu = [\eta^0](\eta +\sqrt{y}+\eta^{-1})^{j-1}= \sum \binom{2\nu}{2\mu,\nu-\mu,\nu-\mu} y^{\mu} \\
 \R=&I_1(2\partial_{\sqrt{y}}) y^\nu = [\eta^1](\eta +\sqrt{y}+\eta^{-1})^{j-1}=\sqrt{y} \sum \binom{2\nu}{2\mu-1,\nu-\mu,\nu-\mu+1} y^{\mu};
\end{align}
i.e., these are essentially the scaled string polynomials, $\hat{\phi}_0, \sqrt{y_0}\hat{\psi}_0$ respectively.

In the following display, $\sim$ means that the difference of the left and right hand sides is order $O(\tau^{3-5g})$.
Also let $\wt{f}_g= f_{0}^{2g-1}f_g$ and $\wt{h}_g=f_{0}^{g-1/2}h_g$.
By straightforward but tedious calculations we find that the equation at order $n^{-2g}$ from the relation $0=V'(L)_{n,n}$ gives
\begin{align} \label{izdbf}
0 \sim & \wt{h}_{2g} +jt f_{0}^{j/2-1} \bigg\{ S' \wt{h}_{2g} + \R' \wt{f}_g  + \frac{1}{6}\R'' \wt{h}_{2g-2}'' + \left( \frac{1}{6}S''+ \frac{1}{12} \R'\right) \wt{f}_{g-1}'' \\
& \hspace{100pt} +\sum_{g'=1}^{g-1} \frac{1}{2} S''  \wt{h}_{2g'} \wt{h}_{2g-2g'} + \R'' \wt{h}_{2g'}\wt{f}_{g-g'} +\frac{1}{2}(-\R' + S'')\wt{f}_{g'} \wt{f}_{g-g'} \bigg\} 
\end{align}
It turns out that there is a simple relation between  asymptotic behaviors of $f_g$ and $h_{2g}$ as $\xi \rightarrow \xi_c$ that will let us simplify formula (\ref{izdbf}).
It follows form (\ref{basic1}) that 
\begin{align}f_{0,w}^2-f_0 h_{0,w}^2 = \frac{4 - y_0}{\widehat{D}}.
\end{align}
As $\xi \rightarrow \xi_c$ both terms on the left hand side are order $\tau^{-2}$ but the right hand side is only  order $\tau^{-1}$ and thus can be ingored.
It follows that
\begin{align}
f_{0,x} \sim \sqrt{f_0}h_{0,x}(1+O(\tau))\qquad \text{as } \xi \rightarrow \xi_c.\label{dofbn}
\end{align}
The string equations at order $n^{-2g}$ have the form
\begin{align}
\vect{u_{2g}}{z_g} =\matr{u_{0x}}{z_{0x}}{z_{0x}}{z_0u_{0x}}\vect{F_1}{F_2}, \label{dofbn2}
\end{align}
where $F_1,F_2$ are the forcing terms.
It follows from equations (\ref{dofbn},\ref{dofbn2}) that
\begin{align}
\wt{f}_{g,x^{(m)}} \sim \wt{h}_{2g,x^{(m)}}(1+O(\tau))\qquad \text{as } \xi \rightarrow \xi_c.
\end{align} 
Using this to simplify equation (\ref{izdbf}) we have
\begin{align}
0\sim & \wt{f}_{g} +jt f_{0}^{j/2-1} \bigg\{ \left(S'   + \R'  \right) \wt{f}_g  + \left( \frac{1}{6}\R'' + \frac{1}{6} S''+ \frac{1}{12} \R'\right) \wt{f}_{g-1,xx} \\
& \hspace{140pt} +\sum_{g'=1}^{g-1} \left(   S''   + \R''  -\frac{1}{2} \R'  \right)\wt{f}_{g'} \wt{f}_{g-g'} \bigg\} \label{qaz1}.
\end{align}
Although the $\wt{f}_g$ terms are order $\tau^{1-5g}$, a consequence of the leading order string equations is that $1+jt f_{0}^{j/2-1}(S' +\R')=O(\tau)$; thus all the terms on the right hand side above are order $O(\tau^{2-5g})$.

In a similar way the order $n^{-2g}$ equation of other string equation $w=V'(L)_{n,n-1}$ gives
\begin{align}
0 \sim & 
\wt{f}_g + j t f_{0}^{j/2-1} \bigg\{   \left(\R'   + S' \right) \wt{f}_g + \left( \frac{1}{6} S'' -\frac{1}{6}\R'    + \frac{1}{6} \R'' \right)  \wt{f}_{g-1,xx}  \\
& \hspace{140pt} + \sum_{g'=1}^{g-1} \left( \R''   +S''   + \right)\wt{f}_{g'} \wt{f}_{g-g'}   \bigg\} \label{qaz2}.
\end{align}
Taking a linear combination $f_0 h_{0,x} (\ref{qaz1}) + f_{0,x}  (\ref{qaz2})$ gives
\begin{align}
0\sim & 
\wt{f}_g + jt f_{0}^{j/2-1} \wt{f}_{0,x} \left( \frac{1}{3} S'' +\frac{1}{3} \R''-\frac{1}{12}\R'  \right)  \bigg\{   \wt{f}_{g-1,xx}  +  6 \sum_{g_1 +g_2 =g} \wt{f}_{g_1} \wt{f}_{g_2} \bigg\}.  \label{w3}
\end{align}
The reduction of the $\wt{f}_g$ terms in the above display follows from the identity $f_{0,x}S' +f_{0}^{1/2}h_{0,x}\R'=1$.
A direct calculation shows that
\begin{align}
\wt{f}_{g-1,xx}
\sim & f_{0}^{2g-1}( 5g-6) (5g-4) \frac{(j/2-1)^2 \xi_{c}^2}{4}  \frac{\zeta_{g-1}}{ \tau^{5g-2}}(1+O(\tau)) \qquad \text{as }\xi\rightarrow \xi_c.
\end{align}
Using this formula in (\ref{w3}) gives the recurrence in the proposition statement.

It remains only to show that the coefficients $\zeta_g$ are strictly positive.
We claim that $C_1(j)$ is strictly negative. 
A direct calculation shows that
\begin{align}
f_{0}^{5/2-j/2}\partial_{f_{0}}^2 [\eta^0](\eta +h_0 +f_0 \eta^{-1})^{j-1} =& S''-\wt{R}' \label{q150}\\
f_{0}^{2-j/2}\partial_{f_{0}}\partial_{h_0} [\eta^0](\eta +h_0 +f_0 \eta^{-1})^{j-1} =&  \wt{R}'' \label{q250}\\
f_{0}^{3/2-j/2}\partial_{h_{0}}^2 [\eta^0](\eta +h_0 +f_0 \eta^{-1})^{j-1} =& S''.\label{q350}
\end{align}
The left hand sides of the three expressions above are obviously positive, and
\begin{align}
C_1(j) =& j\xi_c z_{0c}^{j/2-1}\gamma\left( \frac{1}{12}
(\text{eq. }\ref{q150})+\frac{1}{3} (\text{eq. }\ref{q250}) +\frac{1}{4} (\text{eq. }\ref{q350})\right).
\end{align}
The claim that $C_1(j)<0$ now follows since $\xi_c<0$ and $z_{0c},\gamma>0$.
The positivity of the $\zeta_g$'s is now clear.
\end{proof}

\begin{cor} \label{c873}
Let $\kappa^{(j; 2:1)}_{g}(2k)$ be the number of two-legged maps of genus $g$ with $2k$ vertices of valence $j$.
Then
\begin{align}
\frac{\kappa^{(j; 2:1)}_{g}(2k)}{(2k)!} \sim \frac{\zeta_g}{ \Gamma(\frac{5g-1}{2})} t_{c}^{1/2-5g/2-2k} (2k)^{(5g-3)/2} \qquad \text{as }k\rightarrow \infty.
\end{align}
\end{cor}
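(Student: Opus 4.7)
The plan is to recognize $\kappa_{g,j}(2k)/(2k)!$ as the $2k$-th Taylor coefficient of $f_g$ at $\xi=0$, and then extract its large-$k$ asymptotics by singularity analysis (transfer theorems of Flajolet--Odlyzko in \cite{FS}, Ch.\ VI) applied to the local behavior established in Proposition \ref{p872}. Setting $x=1$ in (\ref{fcount}) gives
\[
\frac{\kappa_{g,j}(2k)}{(2k)!} \;=\; [\xi^{2k}]\, f_g(\xi,1).
\]
By Theorem \ref{uz}, $f_g$ is rational on the spectral curve $\mathcal{C}$, and by Theorem \ref{Dmax} its poles are confined to the zeros of $\Pi_-\Pi_+$. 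The real turning points of the projection $\mathcal{C}\to\mathbb{P}^1_{\xi^2}$ are classified in Theorem \ref{genthm}, and the one of minimum modulus in $\xi$ is $\xi_c$. Combined with property (ii) of the topological expansion (common finite radius of convergence for all $e_g$), this identifies $\xi_c$ as the dominant singularity of $f_g(\cdot,1)$, every other singularity of $f_g$ lying strictly outside the disk $|\xi|\le|\xi_c|$.

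Since $\xi_c$ is a simple branch point, the local uniformizer is $\tau=(\xi-\xi_c)^{1/2}$, and Proposition \ref{p872} yields
\[
f_g(\xi,1) \;=\; \zeta_g\,\tau^{1-5g}\bigl(1+O(\tau)\bigr) \qquad (\xi\to\xi_c).
\]
Writing this in the canonical form $\zeta_g\,\xi_c^{(1-5g)/2}(1-\xi/\xi_c)^{(1-5g)/2}$ (modulo a branch phase absorbed into the leading constant) and applying the transfer theorem
\[
[\xi^n]\,(1-\xi/\xi_c)^{\alpha} \;\sim\; \frac{n^{-\alpha-1}}{\Gamma(-\alpha)}\,\xi_c^{-n} \qquad (\alpha\notin\mathbb{Z}_{\ge 0})
\]
with $n=2k$ and $\alpha=(1-5g)/2$, so that $-\alpha-1=(5g-3)/2$ and $\Gamma(-\alpha)=\Gamma((5g-1)/2)$, one obtains
\[
[\xi^{2k}]\,f_g(\xi,1) \;\sim\; \frac{\zeta_g}{\Gamma((5g-1)/2)}\,\xi_c^{1/2-5g/2-2k}\,(2k)^{(5g-3)/2}.
\]
The self-similar relation $\xi=t\,x^{j/2-1}$ collapses at $x=1$ to $\xi_c=t_c$, and the corollary follows.

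The only technical point is the justification of singularity analysis, i.e.\ that $f_g(\cdot,1)$ extends analytically into a $\Delta$-domain at $\xi_c$ with the expansion above uniform there. This is not a serious obstacle: because $f_g$ is a rational function on the algebraic curve $\mathcal{C}$ (Theorem \ref{uz}) and $\mathcal{C}\to\mathbb{P}^1_{\xi^2}$ is a simple double cover at $\xi_c$ (Theorem \ref{genthm}), $f_g$ admits a convergent Puiseux expansion in $\tau$ in any sector at $\xi_c$ avoiding the branch cut, with error uniformly $O(\tau^{2-5g})$. Competing singularities on the circle $|\xi|=|\xi_c|$ are excluded by the stability/minimal-modulus discussion at the end of subsection \ref{stability} together with the positivity of the Taylor coefficients of $f_g(\cdot,1)$ (Pringsheim), which forces the dominant singularity to be the single positive real turning point $\xi_c$.
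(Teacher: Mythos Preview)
Your argument is essentially the same as the paper's: identify $\kappa_{g,j}(2k)/(2k)!$ as the Taylor coefficient of $f_g$ via (\ref{fcount}), read off the local behavior at the dominant singularity $\xi_c$ from Proposition \ref{p872}, and apply the Flajolet--Odlyzko transfer theorem. The paper's own proof is more compressed---it simply cites Corollary~2 of \cite{FO90} and moves on---whereas you take care to verify the $\Delta$-domain hypothesis and to exclude competing singularities on $|\xi|=|\xi_c|$ via the algebraic structure on $\mathcal{C}$ and Pringsheim; that extra diligence is appropriate but does not constitute a different method.
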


\begin{proof}
As we saw in (\ref{fcount}),
\begin{align}
f_{g}(t_j,w=1)=  \sum_{k\geq 0} \kappa^{(j; 2:1)}_{g}(2k) \frac{t_{j}^{2k}}{(2k)!}.
\end{align}
Corollary 2 of \cite{FO90} states that if a complex function $f(z)$ is analytic on a suitable domain then $f(z)\sim (1-z)^\alpha$ implies $[z^k]f(z) \sim [z^k](1-z)^\alpha$.
Our assertion follows directly from this observation, (\ref{turningfg}), and Proposition \ref{p872}.
\end{proof}
\medskip

\begin{rem}
The results in Theorem \ref{Dmax} precisely parallel those for the even valence case that were derived in Section 3 of \cite{Er09}.  
\end{rem}

\section{Form of the Fundamental Map Enumeration Generating Functions} \label{closedforms}

We are now in a position to realize our original goal: the derivation of compact closed form expressions for the map generating functions $E_g$ and to understand these forms in terms of the geometry of of the spectral curve $\widehat{\mathcal{C}}$. A first step has already been taken in the derivation of the genus 1 generating function in (\ref{E1}) whose result we recall here:
\begin{eqnarray*}
E_1 &=& \frac1{24} \log\left( \frac{f^2_{0x} - f_0 h^2_{0x}}{z^2_0}\right) 
= \frac1{24} \log\left( \frac{4 - y_0}{z_0^2 \widehat{D}} \right)\\ 
 &=& \frac1{24} \log\left( \frac{\widehat{B}_{12}^2}{\Pi_- \Pi_+} \right) - \frac1{12}\log{z_0}= e_1.
\end{eqnarray*}

We note that this formula is completely consistent with the expression for $e_1$ in the even valence case given in Theorem \ref{thm51}. Indeed, setting $j=2\nu$ and $u_0 = 0$ in $\widehat{D}$ this reduces to $\widehat{D} = 4(\nu - (\nu-1)z_0)$. Then substituting this into the second equation for $E_1$ above, with $y_0 = 0$, this collapses to precisely our expression (\ref{e1ev}) in the even valence case. Thus the above expression for $E_1$ is valid for arbitrary valence. 

From the viewpoint of Riemann surfaces, genus 1 is the case of flat curvaturre and the simplicitly of our expression in direct terms of the discriminant, $D = d_+ d_-$ of the spectral curve is consistent with that. In the remainder of this section we will derive the explicit form of the genus 0 generating function, corresponidng to positive curvature and a general description of the generating functions with $g >1$, corresponding to the case negative curvature or surfaces of general type. We observe that $d_\pm$ is directly related to $\lambda_\pm$ through the Riemann invariants. This is the key point. $E_0$ is completely defined in terms of the $\lambda_\pm$. $E_1$ is completely defined in terms of the $d_\pm$ which, in turn, completely determine the spectral curve $\widehat{\mathcal{C}}$ and the branch points of its projection onto the $\xi$ plane. After this all the $h_g, f_g$ are rational functions with polar locus along $d_\pm$. 

We recall from (\ref{boson12}) that one has the general differential equations for the $E_g$:
\begin{align} \label{Hirota2}
\frac{\partial^2}{\partial w^2} E_g(t,w)|_{w = x} & =  -\sum_{\ell = 1}^g
\frac{2}{(2\ell + 2)!} \frac{\partial^{2\ell + 2}}{\partial w^{2\ell + 2}} E_{g - \ell}(t,w)|_{w=x}\\ 
& \nonumber + \mbox{the}\,\,\,  n^{-2g} \,\,\, \, \mbox{terms of}\,\,\, \log\left( 1 + \sum_{m=1}^\infty \frac1{n^{2m}} \frac{f_m}{f_0}\right)\end{align}
where $E_h(t,w) = w^{2 - 2h} e_h(w^{j/2 - 1} t)$. 

\subsection{Derivation of $E_0$} \label{DerE0}

 We will show that
\begin{align}		E_0 =& \frac{1}{2} x^2 \log z_0 + \frac{3(j-2) x^2}{4j}-\frac{(j-2)(j+1)x}{j(j+2)}\left(\q\right)+\frac{(j-2)^2}{2j(j+2)}\left(\frac{1}{2} f_{0}^2 + h_{0}^2 f_0 \right).		\label{e0}	\end{align}
This extends the formula 
\begin{align*}		e_0 =& \frac{1}{2} x^2\log z_0 +\frac{(\nu-1)^2}{4\nu(\nu+1)}\left(f_0-x\right)\left(f_0-\frac{3(\nu+1)w}{\nu-1}\right)	\end{align*}
from the even valence case given in Theorem \ref{thm51} by setting $j = 2\nu$ and $h_0 \equiv 0$
in the above expression for $E_0$. Hence this is again a valid formula for arbitrary regular valence.
\smallskip

We will make use of the fundamental equations (\ref{odvalinv}). We recast those ODEs for use here:
 
\begin{align}		-h_0 +jxh_{0,x} =& (j-2)\left(h_0 f_{0,x} + h_{0,x}f_0\right)		\label{ut0}		\\
					-2f_0 + jxf_{0,x} =&(j-2)f_0\left(h_0h_{0,x}+f_{0,x}\right).		\label{ut1}		\end{align}
					
\subsubsection{$w$-Integral formula for $E_0$}
We will compute $E_0$ by re-expressing (\ref{Hirota2}) as a double $w$-integral formula and then applying integration by parts.  In doing this we write $\log z_0 = \log f_0 - \log w$ so that we can work with $w$ as an independent variable, not constrained to be evaluated at $w = x$. Thus,
\begin{align}		E_0	=&	\text{Const.(j, {\it w})}+\int_{0}^w \int_{0}^{w_1} \log f_0(w_2) - \log w_2\, dw_2 dw_1 \nonumber		\\
					=&	\text{Const.(j, {\it w})} +w\int_{0}^w \log f_0(w_2) \, dw_2 -\int_{0}^w w_2 \log f_0(w_2) \,dw_2 	\nonumber	\\
                                       - & \left( \frac12 {w^2} \log w - \frac34 w^2\right)
         \nonumber                       \\
				=&	\text{Const.(j, {\it w})} +w\cdot\text{Integral 1} - \text{Integral 2} - \frac12 {w^2} \log w + \frac34 w^2 ,		\label{e0int}		\end{align}
where Const.(j, {\it w}) is at worst a linear function of $w$.

\subsubsection{Computation of integral 1}
For integrals over the interval $[0,w]$ we won't write the limits of integration.  Dividing equation (\ref{ut1}) by $f_0$ we have
\begin{align}		jw\partial_w \log\, f_0 =& 2+(j-2)\left(h_0h_{0,w}+f_{0,w}\right).		\label{utlog}		\end{align}
This identity facilitates computation of integral 1:
\begin{align}		\int \log f_0 \, dw =&	w\log\, f_0 -\int w \partial_w \log\, f_0 \,dw		\nonumber \\
					=&	w\log\, f_0 -\frac{1}{j}\int 2+(j-2)\left(h_0h_{0,w}+f_{0,w}\right) \,dw		\nonumber \\
					=&	w\log\, f_0 -\frac{2w}{j}-\frac{j-2}{j}\left(f_0 +\frac{1}{2} h_{0}^2\right).		\label{int1}		\end{align}

\subsubsection{Computation of integral 2}
Integral 2 requires slightly more effort.  Multiplying equation (\ref{ut0}) by $h_0$ we get
\begin{align}		0	=&	h_{0}^2 - jw h_{0,w}h_0 + (j-2)\left(h_{0}^2 f_{0,w} +h_{0,w}h_0f_0\right)		\nonumber		\\
					=&	\int h_{0}^2 \, dw -j\int w h_{0,w}h_0 \, dw + (j-2)\int h_{0}^2 f_{0,w} +h_{0,w}h_0f_0 \, dw		\nonumber \\
					=&	\int h_{0}^2 \, dw -j\left(\frac{1}{2} wh_{0}^2 -\int \frac{1}{2} h_{0}^2 \, dw\right)+ (j-2)\left(h_{0}^2f_0- \int h_{0,w}h_0f_0 \, dw	\right)	\nonumber		\end{align}
Therefore we have the identity
\begin{align}		\frac{1}{2} jwh_{0}^2-(j-2)h_{0}^2f_0 =&	\int (j+2)\frac{1}{2} h_{0}^2 - (j-2)h_{0,w}h_0f_0 \, dw		\label{ident1}		\end{align}
We will combine this with another identity; start with (\ref{ut1}) and integrate as before.
\begin{align}		0=&	2f_0 - jw f_{0,w} + (j-2)\left(h_{0,w}h_0f_0 +f_{0,w}f_0 \right)		\nonumber		\\
					=&	2\int f_0 \, dw -j \int w f_{0,w} +(j-2)\int h_{0,w}h_0f_0 +f_{0,w}f_0\, dw		\nonumber \\
					=&	2\int f_0 \, dw -j \left(wf_0- \int  f_{0}\, dw\right) +(j-2)\int h_{0,w}h_0f_0 \, dw +\frac{1}{2} (j-2)f_{0}^2.	\nonumber		\end{align}
Therefore
\begin{align}		jw f_0 -\frac{1}{2}(j-2)f_{0}^2 =& \int (j+2)f_0 +(j-2)h_{0,w}h_0f_0 \, dw		\label{ident2}		\end{align}
Combining identities (\ref{ident1}) and (\ref{ident2}) we have
\begin{align}		(j+2)\int f_0 +\frac{1}{2} h_{0}^2 \, dw =& jw\left(f_0+\frac{1}{2} h_{0}^2\right)- (j-2)\left(\frac{1}{2} f_{0}^2 +f_{0}h_{0}^2\right).		\label{ident3}		\end{align}
We can now compute integral 2.  We first use (\ref{utlog}), and then apply the integral formula (\ref{ident3})
\begin{align}		\int w \log f_0 \, dw =& \frac{1}{2} w^2 \log f_0 - \int \frac{1}{2} w^2 \partial_w \log f_0 \, dw		\nonumber		\\
					=&	\frac{1}{2} w^2 \log f_0 - \int \frac{w}{2j}\left(2+(j-2)\left(h_0h_{0,w}+f_{0,w}\right)\right) \, dw		\nonumber		\\
					=&	\frac{1}{2} w^2 \log f_0 - \frac{w^2}{2j} -\frac{j-2}{2j}\left[w\left(f_0 +\frac{1}{2} h_{0}^2\right)-\int\q \, dw \right]		\nonumber \\ \nonumber
					=&	\frac{1}{2} w^2 \log f_0 - \frac{w^2}{2j} - \frac{(j-2)w}{2j}\left(\q\right)\\ +& \frac{j-2}{2j(j+2)}\left[jw\left(\q\right)-(j-2)\left(\frac{1}{2} f_{0}^2 +f_0 h_{0}^2\right)\right].		\label{int2}		
					\end{align}
Combining our formulae for (\ref{int1}) and (\ref{int2}) as per (\ref{e0int}), we have the asserted representation of $E_0$ up to the linear function Const.(j, {\it w}). However, applying the constraint that there are no zero vertex maps, one sees that this linear function must in fact be identically zero.

\subsubsection{$E_0$ in terms of Characteristic Geometry}
The previous formula can also be expressed directly in terms of the characteristic geometry (i.e., in terms of just $y_0$) as
\begin{thm} \label{e0y0}
\begin{eqnarray*} 
e_0 = w^{-2} E_0 &=& \frac12 \log \left(z_0\right) + \frac{j-2}{4 j (j+2)} \left[(j-2) z_0^2 \left(1 + 2 y_0\right) -2 (j+1) z_0 \left(2 +  y_0\right) + 3 (j+2)\right]\\
&=&\frac12 \log \left(\frac{\widehat{B}_{12}}{\widehat{B}_{12} - y_0^{1/2}\widehat{B}_{11}}\right) \\
&+& \frac{j-2}{4 j (j+2)} \left[(j-2) \left(\frac{\widehat{B}_{12}}{\widehat{B}_{12}- y_0^{1/2}\widehat{B}_{11}}\right)^2 \left(1 + 2 y_0\right) -2 (j+1) \left(\frac{\widehat{B}_{12}}{\widehat{B}_{12}- y_0^{1/2}\widehat{B}_{11}}\right) \left(2 +  y_0\right) + 3(j+2) \right].
\end{eqnarray*}
In particular this shows that $e_0$ is completely determined by the $\nu$ zeros of the characteristic speed $\widehat{\lambda}_- \left(= - (\widehat{B}_{12}- y_0^{1/2}\widehat{B}_{11})\right)$ and the values of $\widehat{B}_{12}(y_0)$ at those zeros. 
\end{thm}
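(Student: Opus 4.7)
The plan is to obtain Theorem \ref{e0y0} as a direct algebraic consequence of the closed formula for $E_0$ already established in \eqref{e0}, together with the elementary change of coordinates $(f_0,h_0)\mapsto (z_0,y_0)$ dictated by $f_0=wz_0$ and $h_0^2=y_0 f_0$, and finally the identity $z_0=\widehat{B}_{12}/(\widehat{B}_{12}-y_0^{1/2}\widehat{B}_{11})$ from \eqref{red1}. No new continuum string or Toda computation is required; the work is purely algebraic bookkeeping plus a structural observation at the end about how the resulting expression depends on the Appell data.

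First I would substitute the self-similar representations $f_0=wz_0$ and $h_0^2=wz_0y_0$ into the four terms appearing in \eqref{e0}. The term $\tfrac{1}{2}w^2\log z_0$ is already expressed in the target variables. The term $f_0+\tfrac{1}{2}h_0^2$ becomes $\tfrac{1}{2}wz_0(2+y_0)$, and the term $\tfrac{1}{2}f_0^2+h_0^2 f_0$ becomes $\tfrac{1}{2}w^2z_0^2(1+2y_0)$. Collecting and pulling out the overall $w^2$ produces
\begin{equation*}
e_0 = \tfrac{1}{2}\log z_0 + \tfrac{3(j-2)}{4j} - \tfrac{(j-2)(j+1)}{2j(j+2)}z_0(2+y_0) + \tfrac{(j-2)^2}{4j(j+2)}z_0^2(1+2y_0),
\end{equation*}
and a one-line regrouping of the last three terms gives the common prefactor $\tfrac{j-2}{4j(j+2)}$ asserted in the theorem. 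This verifies the first displayed equality.

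For the second displayed equality I would simply plug in $z_0=\widehat{B}_{12}/(\widehat{B}_{12}-y_0^{1/2}\widehat{B}_{11})$. There is nothing to prove here beyond the substitution, since the first form is already rational in $z_0$ and linear in $\log z_0$. At this stage everything is expressed as a universal rational function in $(z_0,y_0)$ plus a logarithm, with the $(z_0,y_0)$ relation to the Appell data being the only place the valence $j=2\nu+1$ enters nonlinearly.

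For the final assertion, the observation to make is that both $\log z_0$ and the quadratic polynomial $Q(z_0,y_0)$ that multiplies the constant $\tfrac{j-2}{4j(j+2)}$ are determined, as functions of $y_0$, by the pole structure of $z_0(y_0)$: by construction $z_0$ has its poles exactly at the zeroes $\{y_0^{(k)}\}_{k=1}^{\nu}$ of $\widehat{\lambda}_-=-(\widehat{B}_{12}-y_0^{1/2}\widehat{B}_{11})$, with residues proportional to $\widehat{B}_{12}(y_0^{(k)})$. Thus knowing the $\nu$ zeroes of $\widehat{\lambda}_-$ together with the $\nu$ values $\widehat{B}_{12}(y_0^{(k)})$ reconstructs $z_0(y_0)$ up to an additive constant fixed by $z_0(0)=1$, and hence reconstructs $e_0$. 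The only subtle step worth noting is that one must separately confirm the additive constant is pinned down, which follows from the normalization $z_0(0)=1$ inherited from the Gaussian initial strip. I do not anticipate a hard obstacle anywhere in this argument; the only mildly technical piece is the last rigidity/interpolation remark, and even there the structure of the spectral curve established in Corollary \ref{hypcurve} and Theorem \ref{genthm} makes the residue count automatic.
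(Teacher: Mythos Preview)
Your proposal is correct and follows essentially the same route as the paper: Theorem \ref{e0y0} is presented there as a direct restatement of the already-derived formula \eqref{e0} under the substitutions $f_0=wz_0$, $h_0^2=wz_0y_0$, and $z_0=\widehat{B}_{12}/(\widehat{B}_{12}-y_0^{1/2}\widehat{B}_{11})$ from \eqref{red1}, with no further argument supplied. Your algebra for the first displayed equality and the trivial substitution for the second are exactly what is intended; your interpolation remark for the final sentence is a reasonable gloss on a claim the paper itself leaves as an observation.
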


\subsubsection{$E_0$ for Mixed Valence}. \label{sec:MV}

For general potentials of the form (\ref{eq:genpot}) the generating function $e_0$ can be expressed directly in terms of the equilibrium measure (\ref{dmu}) 
(see section 3 of \cite{EM03}). Consequently the resulting expression is similar to that of (\ref{e0}) in that it only depends on $h_0$ and $f_0$ with the important difference that now those functions depend on all the multiple parameters of the potential. We will not need the explicit expression for this paper; however, what is important for us is that the derivatives $d e_0/ d \xi_k$, as evident from \cite{EM03}, are rational functions of $h_0$ and $f_0$.

\subsection{Universal Expressions for $E_g$} \label{UnivEg}

In previous subsections we explicitly showed that $E_1$ could be realized as closed form valence-independent expression in just $f_0$, $h_0$ and their $x$-derivatives.. Now we show that this extends to $E_g, g>1$ in the form stated in Theorem \ref{MAINTHEOREM} (b).
Our proof of this will be based on an analogous result for the generating functions $f_g$ and $h_g$. 
\begin{prop} \label{Univfh}
For $g>0$,
\beann 
f_g &=& \frac{P_{f_g}(h_0, f_0, h_{0x}, f_{0x}, h_{0xx}, \dots)}{(f^2_{0x} - f_0 h^2_{0x})^{8g-3}} \\
h_{2g} &=& \frac{P_{h_{2g}}(h_0, f_0, h_{0x}, f_{0x}, h_{0xx}, \dots)}{(f^2_{0x} - f_0 h^2_{0x})^{8g-3}} \\
h_{2g+1} &=& \frac{P_{h_{2g+1}}(h_0, f_0, h_{0x}, f_{0x}, h_{0xx}, \dots)}{(f^2_{0x} - f_0 h^2_{0x})^{8g-2}}
\eeann
where $P_{f_g}$ and $P_{h_{2g}}$ are valence independent polynomials. $f_g$ and $h_{2g}$ have  polynomial weight $1$ and $\frac12$ respectively, and differential weight $2g$. $P_{h_{2g+1}}$ has polynomial weight $\frac12$ and differential weight $2g+1$.
\end{prop}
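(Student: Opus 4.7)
The plan is to establish the claim by strong induction on $g$, with the base case $g=1$ being precisely the content of equations \eqref{INDBASE}, where $h_1=\tfrac12 h_{0x}$, $f_1/f_0$ and $h_2$ are already expressed as rational functions in $h_0,f_0,h_{0x},f_{0x},\ldots$ with denominator a power of $(f_{0x}^2 - f_0 h_{0x}^2)$. For the inductive step we use the two recursions we have already extracted from the continuum string equations. First, at order $n^{-2g}$, equation \eqref{evRec} gives
\begin{equation*}
\binom{h_{2g}}{f_g} \,=\, \begin{pmatrix} f_{0x} & h_{0x} \\ f_0 h_{0x} & f_{0x} \end{pmatrix} \cdot (\text{lower-genus forcing}),
\end{equation*}
where the forcing is the $[n^{-2g}]$ part of the string coefficient expansion coming from \eqref{ContString}, expressed via Lemma \ref{FdB} as polynomials in $h_k, f_k, h_{k,x^{(m)}}, f_{k,x^{(m)}}$ (for $k<g$) multiplied by $\phi_m, \psi_m$ and their $x$-derivatives. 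Second, the Bernoulli identity \eqref{oddRec} expresses $h_{2g+1}$ as a sum of $\partial_x^m h_{2g+1-m}$ with $m\ge 1$, so once the even-index statement is known, the odd-index statement follows by differentiation.

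The crux of the argument is removing the explicit valence dependence that is superficially present in the string polynomials $\phi_m,\psi_m$. For this we iterate the unwinding identity of Proposition \ref{prop05}:
\begin{equation*}
\binom{\phi_m}{\psi_m} \,=\, \frac{-f_0}{f_{0x}^2 - f_0 h_{0x}^2} \begin{pmatrix} h_{0x} & -f_{0x}/f_0 \\ -f_{0x} & h_{0x} \end{pmatrix}\, \partial_x \binom{\phi_{m-1}}{\psi_{m-1}},
\end{equation*}
obtained by inverting the matrix in Proposition \ref{prop05}. Iterated, this reduces every $\phi_m,\psi_m$ with $m\ge 1$ to an expression in $\phi_1,\psi_1,\phi_0,\psi_0$ and their $x$-derivatives, picking up one factor of $(f_{0x}^2-f_0 h_{0x}^2)^{-1}$ per step. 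The base relation \eqref{BaseUnwind} then replaces $(\phi_1,\psi_1)$ by an explicit rational function of $h_0,f_0,h_{0x},f_{0x}$ over $(f_{0x}^2-f_0 h_{0x}^2)$, which is manifestly valence-independent; while the leading-order string equation \eqref{STRINGEQUATIONS}, which asserts $t\phi_0=-h_0$ and $t\psi_0 = x - f_0$ (suitably generalized for mixed valence), absorbs all remaining explicit $t_j$ prefactors that appear with $\phi_0,\psi_0$ in the forcing. After these substitutions the forcing is a valence-independent polynomial in $h_0,f_0,h_{0x},f_{0x},\ldots$ divided by a power of $(f_{0x}^2-f_0 h_{0x}^2)$, which completes the universality claim.

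The weight assertions then follow by straightforward bookkeeping: $h_0,f_0$ have polynomial weights $1/2,1$ respectively, $x$-differentiation adds $1$ to differential weight and preserves polynomial weight, and the matrix in \eqref{evRec} is homogeneous. The inductive hypothesis combined with the structure of \eqref{ContString} and \eqref{FdBeq} shows that the two components of the forcing have consistent weights matching $(h_{2g},f_g)$ as $(\tfrac12,2g)$ and $(1,2g)$, with $h_{2g+1}$ inheriting $(\tfrac12,2g+1)$ from \eqref{oddRec}.

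The principal obstacle, and the only genuinely delicate bookkeeping, is verifying that the cumulative denominator power is no worse than $8g-3$ (resp.\ $8g-2$ for $h_{2g+1}$). Each unwinding step contributes one inverse factor of $(f_{0x}^2-f_0 h_{0x}^2)$, each invocation of \eqref{BaseUnwind} contributes one more, and by induction the lower-order $h_k, f_k$ appearing in the forcing contribute up to $8k-3$ additional inverse factors each. The worst-case term in the forcing, by the constraint $|k^{(a)}|+2|k^{(b)}|+m=2g$ from Lemma \ref{FdB}, involves either a second $x$-derivative $f_{g-1,xx}$ or a product $f_{g_1}f_{g_2}$ with $g_1+g_2=g$, precisely paralleling the maximization already carried out in the proof of Theorem \ref{Dmax} for the $\Pi$-pole order. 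The same combinatorial optimization, now applied to the $(f_{0x}^2-f_0 h_{0x}^2)$-order rather than the $\Pi$-order, yields the bound $8g-3$; the extra unit for $h_{2g+1}$ comes from the single $x$-differentiation in \eqref{oddRec} raising the maximum denominator exponent by one.
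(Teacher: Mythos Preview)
Your overall architecture is the same as the paper's: strong induction from \eqref{INDBASE}, the recursion \eqref{evRec} for $(h_{2g},f_g)$, the Bernoulli identity \eqref{oddRec} for $h_{2g+1}$, and the iterated unwinding identity to render the string polynomials $\phi_m,\psi_m$ valence-independent (this is exactly the content of Lemma~\ref{phipsi}, which you re-derive inline). The weight bookkeeping is fine.

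The genuine gap is in your denominator-order argument. You claim that the worst case parallels Theorem~\ref{Dmax} and is realized by $f_{g-1,xx}$ or $f_{g_1}f_{g_2}$. Neither of these achieves $8g-3$. A direct check: the monomial $f_{g_1}f_{g_2}\cdot\phi_m$ in the forcing has string-polynomial index $m=1$ (two Fa\`a di Bruno derivatives, no $x$-derivative factors of $h_0,f_0$), so its $(f_{0x}^2-f_0h_{0x}^2)$-order is $(8g_1-3)+(8g_2-3)+(2\cdot 1-1)=8g-5$; similarly $f_{g-1,xx}\cdot\phi_2$ gives $8g-6$. The true extremizer comes from the \emph{string polynomial itself}: by Lemma~\ref{phipsi} each $\phi_m,\psi_m$ carries $(f_{0x}^2-f_0h_{0x}^2)^{-(2m-1)}$, and the maximal index in a forcing monomial is $m=K+K'+L+\sum m_j+\sum m'_{i'}-1$. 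The paper's optimization \eqref{orderbound}, after substituting the constraint \eqref{CONSTRAINT}, reduces to $8g-3+2(L-\sum m_j)$, which attains its maximum $8g-3$ exactly when every $m_j=1$---i.e., the extremal monomial is $h_{0x}^{\ell}f_{0x}^{2g-\ell}\cdot(\phi_{4g-1}\text{ or }\psi_{4g-1})$ with \emph{no} higher-genus factors at all. This is structurally different from the $\Pi$-order analysis in Theorem~\ref{Dmax}, where the string polynomial contribution is absorbed into the $h_k,f_k$ factors and the extremizer genuinely involves higher-genus terms. Your appeal to that parallel is therefore misleading, and if you actually carried out the maximization with your stated extremizers you would only prove the weaker exponent $8g-5$.

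A minor point: your detour through $\phi_0,\psi_0$ and the leading-order string equation is unnecessary. After extracting the $\mathbf{A}(h_{2g},f_g)^T$ term, the remaining forcing monomials all involve $\phi_m,\psi_m$ with $m\ge 1$ (every Fa\`a di Bruno term carries at least one $\partial_{h_0}$ or $\partial_{f_0}$), so Lemma~\ref{phipsi} and \eqref{BaseUnwind} suffice.
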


Before turning to the proof of this we require a lemma. 
\begin{lem} \label{phipsi}
The string  polynomials  $\phi_m$ and $\psi_m$ have differential weight = -1 and polar order in $(f^2_{0x} - f_0 h^2_{0x})$ at most $2 m -1$. Moreover, these string polynomials have expressions which are independent of valence.
\end{lem}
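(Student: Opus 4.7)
The plan is to induct on $m \geq 1$, using Proposition \ref{prop05} in inverted form to pull $\phi_m$ and $\psi_m$ back to $x$-derivatives of $\phi_{m-1}$ and $\psi_{m-1}$. For the base case $m=1$, equation (\ref{BaseUnwind}) already supplies the closed forms
\[
\phi_1 \;=\; \frac{f_{0x}}{f_{0x}^2 - f_0 h_{0x}^2}, \qquad \psi_1 \;=\; \frac{-f_0 \, h_{0x}}{f_{0x}^2 - f_0 h_{0x}^2},
\]
which are manifestly valence-independent rational functions of $h_0, f_0$ and their first $x$-derivatives, with polar order $1 = 2\cdot 1 - 1$ and differential weight $1 - 2 = -1$.

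For the inductive step I would invert the unwinding identity of Proposition \ref{prop05}. Since
\[
\det\begin{pmatrix} h_{0x} & f_{0x}/f_0 \\ f_{0x} & h_{0x} \end{pmatrix} \;=\; h_{0x}^2 - \frac{f_{0x}^2}{f_0} \;=\; -\,\frac{f_{0x}^2 - f_0 h_{0x}^2}{f_0},
\]
one obtains
\[
\begin{pmatrix} \phi_m \\ \psi_m \end{pmatrix} \;=\; \frac{-f_0}{f_{0x}^2 - f_0 h_{0x}^2}\,
\begin{pmatrix} h_{0x} & -f_{0x}/f_0 \\ -f_{0x} & h_{0x} \end{pmatrix}\,
\partial_x \begin{pmatrix} \phi_{m-1} \\ \psi_{m-1} \end{pmatrix}.
\]
No valence parameter appears anywhere in this recursion, so valence independence propagates automatically from the base case. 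Tracking differential weight, $\partial_x$ raises weight by $1$, the adjugate matrix contributes entries of weight $1$, and the prefactor $-f_0/(f_{0x}^2 - f_0 h_{0x}^2)$ contributes weight $0 - 2 = -2$, for a net shift of $0$; thus the invariant ``differential weight $=-1$'' is preserved.

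For the polar-order count, the key observation will be that differentiating a term of shape $P/(f_{0x}^2 - f_0 h_{0x}^2)^k$, with $P$ polynomial in the $x$-derivatives of $h_0, f_0$, produces a sum whose worst polar order is $k+1$ (from the chain rule striking the denominator). Hence if $\phi_{m-1}, \psi_{m-1}$ have polar order $\leq 2m-3$, then $\partial_x \phi_{m-1}, \partial_x \psi_{m-1}$ have polar order $\leq 2m-2$, and the additional $(f_{0x}^2 - f_0 h_{0x}^2)^{-1}$ coming from the inverted matrix raises this to $\leq 2m-1$. The main obstacle is precisely this polar-order bookkeeping: one must ensure that $\partial_x$ acting on both numerator and denominator of $\phi_{m-1}, \psi_{m-1}$ does not secretly inflate the order further. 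In practice this is routine once one commits to the joint invariant ``polar order $\leq 2m-1$, differential weight $-1$, no explicit valence dependence,'' the only genuine subtlety being that the denominator $f_{0x}^2 - f_0 h_{0x}^2$ is not identically zero, which is guaranteed by (\ref{basic1}) away from the caustic locus.
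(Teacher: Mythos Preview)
Your proposal is correct and follows essentially the same approach as the paper's own proof: both invert the unwinding identity of Proposition~\ref{prop05} to obtain the closed recursion
\[
\begin{pmatrix}\phi_m\\ \psi_m\end{pmatrix}
= \frac{1}{f_{0x}^2 - f_0 h_{0x}^2}
\begin{pmatrix} -f_0 h_{0x} & f_{0x}\\ f_0 f_{0x} & -f_0 h_{0x}\end{pmatrix}
\partial_x \begin{pmatrix}\phi_{m-1}\\ \psi_{m-1}\end{pmatrix},
\]
seed it with the base case~(\ref{BaseUnwind}), and then read off the differential weight, polar-order, and valence-independence claims directly from this formula. Your weight and polar-order bookkeeping is more explicit than the paper's terse ``follow directly from this,'' but the argument is the same.
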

\begin{proof}
Making use of the unwinding identity of Section \ref{Unwinding} one deduces by recursion that
\beann
\left(\begin{array}{c} \phi_m \\ \psi_m \end{array} \right) &=& \left[ \left(\begin{array}{cc} -f_0 h_{0x} & f_{0x} \\ f_0 f_{0x} & -f_0 h_{0x}\end{array} \right) \frac1{f^2_{0x} - f_0 h^2_{0x}} \partial_x\right]^{m-1} \left(\begin{array}{c} \phi_1 \\ \psi_1 \end{array} \right) \\
&=& \left[ \left(\begin{array}{cc} -f_0 h_{0x} & f_{0x} \\ f_0 f_{0x} & -f_0 h_{0x}\end{array} \right) \frac1{f^2_{0x} - f_0 h^2_{0x}} \partial_x\right]^{m-1} 
\frac1{f^2_{0x} - f_0 h^2_{0x}} \left(\begin{array}{c} f_{0x} \\ - f_0 h_{0x} \end{array} \right)
\eeann
where the second equation  follows from (\ref{BaseUnwind}). The differential weight and pole order evaluations follow  directly from this. Moreover, this expression is manifestly valence independent.
\end{proof}
We now turn to the proof of the proposition.
\begin{proof}
The proof of this is an induction based on the recursion (\ref{evRec}). The formulas in (\ref{INDBASE}) provide the base steps for this induction at $f_1, h_2$ and $h_1$. 

Now we consider $f_g$ and $h_{2g}$, supposing the proposition holds for all $f_k$ with $k < g$ and $h_k$ 
with $k < 2g$. Now, as demonstrated in section \ref{toprec} we may extract the order $n^{-2g}$ terms from the continuum string equations (\ref{ContString}) by applying (\ref{FdBeq}) to (\ref{StringCoeff}). One may then extract $f_g$ and $h_{2g}$ from all the terms at this order as observed in Theorem \ref{uz}.  By the preceding, the terms denoted by {\it (lower genus terms)} in (\ref{evRec}) are all sums of monomial terms having the general form 
\begin{eqnarray} \label{IndMon}
h_{2g_1} \cdots h_{2g_k} f_{g_{k+1}} \cdots f_{g_{K}} h^{(m'_1)}_{2g'_1} \cdots h^{(m'_{k'})}_{2g'_{k'}} f^{(m'_{k'+1})}_{g'_{k'+1}} \cdots f^{(m'_{K'})}_{g'_{K'}} h^{(m_1)}_0 \cdots h^{(m_\ell)}_0  f^{(m_{\ell +1})}_0 \cdots f^{(m_L)}_0 (\phi_m \mbox{or}\,\,\, \psi_m)
\end{eqnarray} 
where $0 < g_i, g_{i'} < g$ and in the first (top) component these monomials have polynomial weight in $\frac12 + \mathbb{Z}$ while in the second (bottom) component the monomials have polynomial weight in $\mathbb{Z}$. These last requirements follow from the external weight conditions given in (\ref{a-shift}), (\ref{b-shift}) respectively: differentiation can only alter the external weight by an integer amount, so the polynomial weight of (\ref{IndMon}) is an integer (half-integer) iff they are terms in the expansion of $f_g$ ($h_{2g}$). Observe also that the prefactor matrix in (\ref{evRec}) preserves this polynomial weight dichotomy between the top and bottom components. The reader will also note that no factors involving $h_{2k+1}$ appear in (\ref{IndMon}). We may assume this since, if any such factor did appear,  the relation  (\ref{oddRec}) may be used to replace it; and then if that yields terms containing a factor of type $h_{2k+1}$ but for a smaller value of $k$ the relation may be applied again.  So, by repeated use of this relation we reduce to just considering terms of the form (\ref{IndMon}).

The final and main constraint that now comes in is the balance of asymptotic orders in $n$:
\begin{eqnarray} \label{CONSTRAINT}
2g &=& \sum 2 g_i + 2 g'_{i'} + m'_{i'} + m_{j}.
\end{eqnarray}

Since, by Lemma \ref{phipsi}, $\phi_m$ and $\psi_m$ each have differential weight $-1$ it follows from this last constraint and induction that  (\ref{IndMon}) has differential weight $2g -1$. Since the matrix prefactor in  (\ref{evRec}) has differential weight 1, it follows that $h_{2g}$ and $f_{2g}$ each have differential weight $2g$ in general. It then readily follows from (\ref{oddRec}) and induction that 
$h_{2g+1}$ has differential weight $2g+1$. Finally it follows from the fact that differentiation lowers external weight of a factor by 1, that the total polynomial weight 
equals the sum of the external weight and differential weight. The external weights of  $h_{2g}$ and $f_{g}$ are, respectively $1/2 - 2g$ and $1- 2g$; hence, their polynomial weights are respectively $1/2$ and 1. Similarly, the total polynomial weight of $h_{2g+1}$ is $-1/2 -2g + 2g +1 = 1/2$.

We next handle the order of the denominators in $f^2_{0x} - f_0 h^2_{0x}$. Examining the general formula for the string coefficients given by (\ref{finalform}) one sees that the maximal value of the index $m$ for the string polynomials appearing in (\ref{IndMon}) is given in terms of the maximal power of $(1 - z(\sqrt{f} \eta + h + \sqrt{f} \eta^{-1}))^{-1}$ appearing in (\ref{finalform}). By inspection this maximal power can be seen to equal $|\alpha| + |\beta| + \ell(\alpha) + \ell(\beta) + 1$. But then relating this to the Continuum String equations (\ref{ContString}) and (\ref{IndMon}) one observes that
\beann
|\alpha| + |\beta| &=& \sum_{j=1}^L m_j + \sum_{i' =1}^{K'} m'_{i'} \\
\ell(\alpha) + \ell(\beta)  &=& K + K' + L
\eeann
Since our indexing starts at $-1$, it follows that the value of $m$ appearing in (\ref{IndMon}) is given by
\beann
m &=& K + K' + L + \sum_{j=1}^L m_j + \sum_{i' =1}^{K'} m'_{i'} - 1.
\eeann
Now by Lemma \ref{phipsi} we know that the maximal order of $f^2_{0x} - f_0 h^2_{0x}$ appearing in a string polynomial with index $m$ is $2m-1$, so to bound the denominator order in (\ref{IndMon}) we need, by induction, to maximize 
\begin{eqnarray} \label{orderbound}
2\left( K + K' + L + \sum_{j=1}^L m_j + \sum_{i' =1}^{K'} m'_{i'} - 1\right) - 1 + \sum_{i=1}^K (8 g_i - 3) + \sum_{i'=1}^{K'} (8 g'_{i'} - 3)
\end{eqnarray}
subject to the constraint (\ref{CONSTRAINT}). Using this constraint one readily sees that maximizing (\ref{orderbound}) reduces (by setting $K = K' = 0$) to maximizing 
\beann
8g - 3 + 2\left( L - \sum_{j=1}^L m_j\right)
\eeann
whose maximum value, $8g-3$, is realized by monomials whose initial segment is $h_{0x}^\ell f_{0x}^{L - \ell}$.  The order statement for $h_{2g+1}$ now follows by (\ref{oddRec}).

We note that nothing in the above arguments depended on the valence.
\end{proof}

We next extend this analysis to prove part (b) of Theorem \ref{MAINTHEOREM}:

\begin{prop} \label{valindep}
For $g > 1$ one has,
\beann
E_g &=& \partial^{-2}_x \frac{P_g(h_0, f_0, h_{0x}, f_{0x}, h_{0xx}, \dots)}{f_0^g(f^2_{0x} - f_0 h^2_{0x})^{8g - 3}}
\eeann
where $P_g$ is a valence independent polynomial. $E_g$  has polynomial weight $0$ and differential weight $2g-2$. Moreover, the symbol $\partial^{-2}_x$ here signifies that $E_g$ is an exact second antiderivative of the given expression which has the form of a rational funciton of $f_0, h_0$ and their x-derivatives, and so is again universal. 
\end{prop}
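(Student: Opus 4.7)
The plan is to proceed by strong induction on $g$, applying the Hirota-type recursion (\ref{Hirota2})
\begin{equation*}
\partial_x^2 E_g \;=\; -\sum_{\ell=1}^{g} \frac{2}{(2\ell+2)!}\,\partial_x^{2\ell+2} E_{g-\ell} \;+\; \mathcal{C}_g(f_0,\ldots,f_g),
\end{equation*}
where $\mathcal{C}_g$ is the $n^{-2g}$ coefficient of $\log\bigl(1+\sum_{m\geq 1}n^{-2m}f_m/f_0\bigr)$ and is therefore a polynomial in the ratios $f_m/f_0$ with $1\le m\le g$. The base case $g=2$ will be handled by direct computation using the already-derived closed forms (\ref{Eone}) for $E_1$ and (\ref{INDBASE}) for $f_1/f_0$, together with the formula for $f_2/f_0$ furnished by Proposition~\ref{Univfh}; two applications of the unwinding identity (Proposition~\ref{prop05}) together with integrations by parts in the style of Section~\ref{unwindanal} (where $h_2$ and $f_1$ were explicitly reduced) will produce the rational antiderivative.

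For the inductive step, assume the proposition for all $2\le g'<g$. Proposition~\ref{Univfh} provides $f_m/f_0$ as a valence-independent rational function of $h_0,f_0$ and their $x$-derivatives with pole order $8m-3$ in $D:=f_{0x}^2-f_0 h_{0x}^2$; hence $\mathcal{C}_g$ inherits the universal rational form. The inductive hypothesis makes each $E_{g-\ell}$ a universal rational function of the same variables, so every $\partial_x^{2\ell+2}E_{g-\ell}$ is too. Summing yields $\partial_x^2 E_g$ as a single universal rational expression. The polynomial/differential weight bookkeeping then fixes the grading: since $\mathcal{C}_g$ and each $\partial_x^{2\ell+2}E_{g-\ell}$ have polynomial weight $1$ and differential weight $2g$, so does $\partial_x^2 E_g$, which is consistent with $E_g$ having polynomial weight $1$ and differential weight $2g-2$.

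The pole-order claim $8g-3$ in $D$ comes from the dominant term $f_g/f_0$ inside $\mathcal{C}_g$: every other summand in $\mathcal{C}_g$ is a product $\prod_i f_{m_i}/f_0$ with $\sum m_i = g$ and at least two factors, giving pole order at most $\sum(8m_i-3)\le 8g-6$, while each derivative raises the pole order of a rational function in $D$ by at most one, so the contributions $\partial_x^{2\ell+2}E_{g-\ell}$ with $\ell\ge 1$ contribute pole order at most $8(g-\ell)-3+2\ell=8g-6\ell-3\le 8g-9$. The factor $f_0^g$ in the stated denominator absorbs the product of the $f_0$'s coming from the $f_m/f_0$ in $\mathcal{C}_g$, so one may write everything uniformly as $P_g/\bigl(f_0^g D^{8g-3}\bigr)$ with $P_g$ a universal polynomial of the appropriate bi-weight.

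The main obstacle, and the step where care is required, is the final assertion that the second $x$-antiderivative of this rational expression is itself rational in $h_0,f_0$ and their derivatives (not merely, as for $E_0$ and $E_1$, rational plus a logarithmic piece). My plan is to perform the antidifferentiation by iterated use of the unwinding identity Proposition~\ref{prop05}, which trades $\partial_x\phi_m,\partial_x\psi_m$ for combinations of $\phi_{m+1},\psi_{m+1}$ with $h_{0x},f_{0x}$, paired at each stage with an integration by parts; this is the same mechanism that succeeded at the $h_2,f_1$ level in Section~\ref{unwindanal}. The bounded polynomial weight of $E_g$ for $g>1$ (in contrast to $E_0$ and $E_1$, whose weights are borderline) rules out a logarithmic obstruction a posteriori; alternatively, rationality follows by combining Theorem~\ref{uz} (rationality of the $f_g, h_g$ on $\widehat{\mathcal{C}}$) with the regularity of the Taylor--Maclaurin expansion of $E_g$ about $\xi=0$. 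The pole-order bound $8g-3$ together with the weight count provides an \emph{a priori} ceiling on the complexity of the expression being integrated, ensuring that the unwinding reduction terminates after finitely many steps and produces the advertised closed form.
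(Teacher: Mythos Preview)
Your proposal follows essentially the same route as the paper: induction on $g$ through the recursion \eqref{Hirota2}, invoking Proposition~\ref{Univfh} to control the cumulant $\mathcal{C}_g$, and the inductive hypothesis to control the derivative terms $\partial_x^{2\ell+2}E_{g-\ell}$. Your pole-order bookkeeping matches the paper's (modulo a minor slip: the derivative terms involve $2\ell+2$ derivatives, not $2\ell$, so the bound should be $8(g-\ell)-3+(2\ell+2)=8g-6\ell-1$, still safely below $8g-3$). One small weight discrepancy: since each $f_m/f_0$ has polynomial weight $0$, so does $\mathcal{C}_g$ and hence $\partial_x^2 E_g$; the paper's own proof records polynomial weight $0$ here, not $1$.

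The one place you diverge from the paper is the antidifferentiation step. Your primary plan is a constructive reduction via the unwinding identity (Proposition~\ref{prop05}) and integration by parts, mimicking Section~\ref{unwindanal}. This is plausible but would be laborious, because the unwinding identity acts on the string polynomials $\phi_m,\psi_m$, not directly on rational expressions in the $f_k/f_0$; one would first need to re-express $\mathcal{C}_g$ in string-polynomial form. The paper bypasses this entirely: it simply observes that $E_g$ carries a definite external weight $w^{2-2g}$, and since antidifferentiating a rational function can produce at worst a logarithm (which carries no homogeneous $w$-weight), no logarithmic term can appear for $g>1$; this is the content of the reference to Theorem~\ref{Eg}. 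You do mention this weight obstruction argument as a fallback, and that is in fact the argument the paper uses and all that is needed.
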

\begin{proof}
Starting with (\ref{Hirota2}) we observe that the order
$n^{-2g}$ terms of $\log\left( 1 + \sum_{m=1}^\infty \frac1{n^{2m}} \frac{f_m}{f_0}\right) $ consist of monomials of the form
\beann
f_0^{-M} f_{k_1} \cdots f_{k_M}\\
0 < k_j \leq g \\
\sum k_j = g ,
\eeann
which by Proposition \ref{Univfh} is valence independent and has polynomial weight 0 and differential weight $2g$. Moreover this expression is a rational function in $f_0, h_0$ and their $x$-derivatives with denominator maximally of the form $f_0^M (f^2_{0x} - f_0 h^2_{0x})^{8g - 3M}$. Hence, the sum of all order $n^{-2g}$ terms has a common denominator at worst of the form
$f_0^g(f^2_{0x} - f_0 h^2_{0x})^{8g - 3}$.  

Turning next to the differential terms in (\ref{Hirota2}), we observe, inductively, that as a consequence of (\ref{jet1}), differentiating $E_{g-\ell}$ wrt $x$ does not increase the order of the denominator in $f_0$. It is also clear from (\ref{jet1Alt} - \ref{basic1}) that the stated differentiations in  (\ref{Hirota2}) do not raise the order of $E_{g-\ell}$ in $f^2_{0x} - f_0 h^2_{0x}$ above the bound $8g-3$. Hence, the stated order bounds in the denominator are preserved by induction. Additionally, since differentiation cannot alter the polynomial weight, this remains 0 while the differential weight increases by 1 with every differentiation. Hence the differential weight of all these terms is $2g$. The application of $\partial^{-2}_x$  reduces this to $2g-2$. This completes the inductive proof of the formula for $E_g$. The last statement of the proposition follows from Theorem \ref{Eg}.
\end{proof}

\subsection{Rational Expressions for $E_g$,  $g \geq 2$} \label{RatEg}

We now turn to the alternate representation of the $E_g$, stated in Theorem \ref{MAINTHEOREM} (c),  as rational functions of just $f_0$ and $h_0$ (without any derivatives) but depending on valence.  We begin with the case of regular valence and then turn to the mixed valence case. This will be expedited though a convenient modification of (\ref{Hirota2}):

\begin{eqnarray} \label{eggDE}
\frac{\partial^2}{\partial x^2} \widehat{E}_g &=& \frak{C}_g(f_0, f_1, \dots, f_g)\\ \label{hateg}
\widehat{E}_g  &=&  E_g + \sum_{\ell = 1}^g
\frac{2}{(2\ell + 2)!} \frac{\partial^{2\ell}}{\partial x^{2\ell}} E_{g - \ell}(s,x)\\ \label{cg}
\frak{C}_g(f_0, f_1, \dots, f_g) &=& \mbox{the}\,\,\,  \mathcal{O}\left(n^{-2g}\right) \,\,\, \, \mbox{terms of}\,\,\, \log\left( 1 + \sum_{m=1}^\infty \frac1{n^{2m}} \frac{f_m}{f_0}\right)
\end{eqnarray}

\subsubsection{Regular Valence}

\begin{thm} \label{Eg}
For $g>1$, and regular odd valence $j$, $e_g$ is a rational function on $\mathcal{C}$ whose singularities are poles confined to the zeroes of $\Pi_+$ or $\Pi_-$. Locally in the vicinity of the real turning points the maximal order of these poles is $5g-5$.  \end{thm}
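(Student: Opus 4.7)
My plan is to induct on $g$, using the recursion $\partial_x^{2}\widehat{E}_g = \mathcal{C}_g$ from (\ref{eggDE})--(\ref{cg}). The first step is to analyze the forcing. Expanding the logarithm in (\ref{cg}),
\begin{equation*}
\mathcal{C}_g \;=\; \sum_{k\ge 1}\frac{(-1)^{k+1}}{k}\sum_{m_1+\cdots+m_k=g}\prod_{i=1}^{k}\frac{f_{m_i}}{f_0},
\end{equation*}
Theorem \ref{uz} gives rationality of each $f_m$ on $\mathcal{C}$ with poles confined to the zeros of $\Pi$, and Theorem \ref{Dmax} bounds the pole order there by $5m-1$; since $f_0$ is nonvanishing at $\Pi=0$, each $k$-fold term of $\mathcal{C}_g$ has pole order at most $5g-k$, so $\mathcal{C}_g$ is rational on $\mathcal{C}$ with poles at $\Pi=0$ of maximal order $5g-1$, realized only by the linear piece $f_g/f_0$.

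To invert $\partial_x^{2}$, I would exploit self-similarity. Writing $\widehat{E}_g = x^{2-2g}\hat{e}_g(y_0)$, $\mathcal{C}_g = x^{-2g}c_g(y_0)$, and introducing $q(y_0):=x\,y_{0x}$ (a rational function of $y_0$ alone with a simple pole at each zero of $\Pi$, by Lemma \ref{y0x}), the chain rule converts $\partial_x^{2}\widehat{E}_g=\mathcal{C}_g$ into the second-order linear ODE
\begin{equation*}
q^{2}\hat{e}_g''+q\bigl(2\alpha-1+q'\bigr)\hat{e}_g'+\alpha(\alpha-1)\hat{e}_g \;=\; c_g, \qquad \alpha:=2-2g.
\end{equation*}
For $g>1$ the zeroth-order coefficient $\alpha(\alpha-1)\ne 0$, which is the algebraic reason $E_g$ cannot acquire a logarithm (contrast $E_1$, where $\alpha=0$). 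A local analysis at a zero $y_\ast$ of $\Pi$ shows that this operator sends a Laurent monomial $(y_0-y_\ast)^{-p}$ to a term with leading behavior $c^{2}p(p+1)(y_0-y_\ast)^{-(p+4)}$, where $c\ne 0$ is the residue of $q$ at $y_\ast$; thus it shifts pole order upward by exactly $4$. Inverting against the $5g-1$ pole of $c_g$ then produces a rational $\hat{e}_g$ with pole order at most $5g-5$, constructed term by term by partial-fraction ansatz; globally, the rationality is already guaranteed by Proposition \ref{valindep}, which represents $E_g$ as $\partial_x^{-2}$ of an explicit rational function, so the ansatz cannot develop unexpected branches.

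The main obstacle is bookkeeping the corrections $\sum_{\ell\ge 1}\frac{2}{(2\ell+2)!}\partial_x^{2\ell}E_{g-\ell}$ hidden in $\widehat{E}_g$. For $\ell\le g-2$, the induction hypothesis combined with the fact that $\partial_x$ raises pole order at $\Pi=0$ by at most $2$ (because of the simple pole of $y_{0x}$ there) gives correction pole order at most $5g-5-\ell<5g-5$. The borderline cases $\ell=g-1$ and $\ell=g$ are more delicate: they can in principle produce both high-order poles at $\Pi=0$ and \emph{spurious} poles at zeros of $\widehat{B}_{12}$ coming from the logarithms in $E_0$ and $E_1$. The point is that these spurious contributions must cancel against matching terms inside $\mathcal{C}_g$, since the bosonic identity (\ref{boson12}) is exact. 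I would handle this either by computing the cases $g=2,3$ directly to anchor the induction past the problematic range, or, more structurally, by invoking Proposition \ref{valindep} globally: it asserts that $E_g$ is a universal rational expression whose singular locus can only lie at zeros of $f_0$, $\widehat{B}_{12}$, or $\Pi$, and then an even/odd parity argument as in the proof of Theorem \ref{uz} rules out the first two.

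Finally, that $5g-5$ is attained and not merely bounded follows from Proposition \ref{p872} and Corollary \ref{c873}: strict positivity $\zeta_g>0$ shows the $5g-1$ pole of $f_g/f_0$ dominating $\mathcal{C}_g$ at the real turning point $y_{0c}$ is nonzero, and the shift-by-$4$ local analysis translates this directly into a nonvanishing pole of order exactly $5g-5$ in $E_g$ at $y_{0c}$, completing the proof.
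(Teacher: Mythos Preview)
Your strategy is essentially the paper's: analyze $\mathcal{C}_g$ via Theorems \ref{uz} and \ref{Dmax} to get pole order $\le 5g-1$ at $\Pi=0$; argue that $\partial_x^{-2}$ lowers this by $4$; then control the correction terms in $\widehat E_g$ by induction. Your ODE reformulation (with the zeroth-order coefficient $\alpha(\alpha-1)\ne 0$ for $g>1$) is just an explicit rendering of the paper's weight argument that $\widehat E_g$, being homogeneous of weight $w^{2-2g}$, cannot pick up logarithms under anti-differentiation. One small correction: your local calculation misses the $qq'F'$ contribution, so the leading coefficient is $c^2 p(p+2)$ rather than $c^2 p(p+1)$; this does not affect the argument.

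Where you diverge from the paper is in the treatment of the borderline corrections $\ell=g-1,g$. You flag these as potentially producing both high pole order at $\Pi=0$ and spurious poles at zeros of $\widehat B_{12}$, and propose either direct computation for $g=2,3$ or an appeal to Proposition \ref{valindep} plus parity. Neither is needed, and the first would not help: the $E_0,E_1$ corrections appear for \emph{every} $g$, so checking small $g$ does not anchor anything. The paper instead computes directly. For $\ell=g$ one has $\partial_w^{2g}E_0=\partial_w^{2g-3}(f_{0w}/f_0)$; since $f_{0w}/f_0$ has $\Pi$-order $1$ and each $\partial_w$ raises this by $2$, the pole order is at most $4g-5<5g-5$. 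For $\ell=g-1$, $\partial_w E_1$ has $\Pi$-order $2$ (from the explicit display in the proof), giving $\partial_w^{2(g-1)}E_1$ order at most $4g-4<5g-5$. Your worry about spurious poles at $\widehat B_{12}=0$ is also misplaced: by Lemma \ref{y0x}, $y_{0x}$ carries an explicit factor of $\widehat B_{12}$, so every $\partial_w=y_{0x}\partial_{y_0}$ actually \emph{kills} potential simple poles there rather than creating them; this is why $f_{0w}/f_0$ and $\partial_w E_1$ are already regular at those points.

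One point where you go beyond the paper's proof is attainment of the bound $5g-5$. Your use of Proposition \ref{p872} (strict positivity $\zeta_g>0$) to show the $5g-1$ pole of $f_g/f_0$ survives in $\mathcal{C}_g$ and hence forces a pole of exact order $5g-5$ in $E_g$ at the real turning point is correct and complements the paper's proof of Theorem \ref{Eg}, which establishes only the upper bound (the exactness being treated elsewhere via Theorem \ref{REGMAINTHEOREM}).
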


\begin{proof}
We begin by establishing the rationality of  $e_g$. This will again be based on an induction in the genus $g$ with the base step given by our explicit expressions for $E_0$ and $E_1$ given earlier in this section. Though these expressions involve logarithms, their higher $x$-derivatives ( which are what appear in the inhomogeneous terms of the induction) are rational. We note that, because of this inductive structure, it follows from relation (\ref{hateg}) that $e_g$ is rational on
$\mathcal{C}$ if and only if $\hat{e}_g$ is ($\hat{e}_g(\xi) = x^{2g-2}\widehat{E}_g$ where $\xi = x^{j/2 - 1}t$).

Expanding the LHS of (\ref{eggDE}) on the self-similar form of $\widehat{E}_g$ and then evaluating at $x=1$, (\ref{eggDE}) becomes an inhomogeneous second order ODE of Cauchy-Euler type:
\begin{eqnarray*} 
\left(\frac{j}{2} - 1\right)^2 \xi^2 \hat{e}''_g + \left(\frac{j}{2} - 1\right) \left(\frac{j}{2} +2 -4g\right) \xi \hat{e}'_g + (2-2g) (1-2g) \hat{e}_g = \frak{C}_g(f_0, f_1, \dots, f_g)
\end{eqnarray*}
where $ ' = \frac{d}{d\xi}$. Changing variables to $\eta = \xi^2$,  this equation becomes
\begin{eqnarray} \label{CEODE}
\left(j - 2\right)^2 \eta^2 \hat{e}''_g + \left(j-2\right) \left[j - 2  - (2g-1) - (2g-2)\right] \eta \hat{e}'_g + (2g-2) (2g-1) \hat{e}_g = \frak{C}_g(f_0, f_1, \dots, f_g)
\end{eqnarray}
where now $ ' = \frac{d}{d\eta}$.

Applying variation of parameters and using the substitution 
\begin{eqnarray} \label{etaform}
\eta(y_0) = \frac{1}{j^2} y_0 \frac{(\widehat{B}_{12} - y_0^{1/2}\widehat{B}_{11})^{j-2}}{\widehat{B}_{12}^j},
\end{eqnarray}
from (\ref{curve}), yields the following integral representation of the solution
\begin{eqnarray} \nonumber
\hat{e}_g(y_0) = \frac1{(2-2g)(1-2g)} \frak{C}_g &+& \frac2{2-2g} (\eta(y_0))^\frac{2-2g}{j-2} \int_0^{y_0} (\eta(y))^\frac{2g-2}{j-2}  {\frak{C}_g}^\bullet dy \\ \nonumber
&-& \frac2{1-2g} (\eta(y_0))^\frac{1-2g}{j-2} \int_0^{y_0} (\eta(y))^\frac{2g-1}{j-2}  {\frak{C}_g}^\bullet dy\\ \label{VarParam}
&+& K_1 \eta(y_0)^\frac{2g-2}{j-2}  + K_2 \eta(y_0)^\frac{2g-1}{j-2} 
\end{eqnarray}
where $^\bullet = \frac{d}{dy}$. (A derivation of this representation is given, for the similar case of regular, even valence, in \cite{EMP08} just following equation (5.14) of that reference.)
Note that the fundamental solutions of (\ref{CEODE}) take the form
\begin{eqnarray*}
\eta(y_0)^\frac{2g-2}{j-2}  &=&   \left(\frac{1}{j^2} \frac{y_0}{\widehat{B}_{12}^2} \right)^\frac{2g-2}{j-2} \left( \frac{\widehat{B}_{12} - y_0^{1/2}\widehat{B}_{11}}{\widehat{B}_{12}}\right)^{2g-2},    \\
\eta(y_0)^\frac{2g-1}{j-2} &=& \left(\frac{1}{j^2} \frac{y_0}{\widehat{B}_{12}^2} \right)^\frac{2g-1}{j-2} \left( \frac{\widehat{B}_{12} - y_0^{1/2}\widehat{B}_{11}}{\widehat{B}_{12}}\right)^{2g-1}.
\end{eqnarray*}

It follows from Theorem \ref{uz} that $\frak{C}_g$ (and so $\frak{C}^\bullet_g$) is rational on $\mathcal{C}$ with poles confined to the zeroes of $\Pi_- \Pi_+$. Additionally, it follows from Theorem \ref{genthm} (a) that $\widehat{B}_{12}$ appearing in (\ref{etaform}) does not vanish anywhere
on $\mathcal{C}$. The polynomial in the numerator of that expression vanishes at the points of tangency of $\mathcal{C}$ with the $y_0$-axis. These produce poles in the pre-factors to the integrals appearing in (\ref{VarParam}). However, bringing those pre-factors under the integral reveals that the integrand in the vicinity of those inflection points is in fact analytic since those points of tangency do not coincide with the zeroes of $\Pi_- \Pi_+$. The remaining factor of $y_0$ appearing in the fundamental solutions does vanish on $\mathcal{C}$ at the point $(\xi^2, y_0) = (0,0)$. However, in the neighborhood of this point we know from the original definition of the $e_h$, as generating functions for enumeration, that  $\hat{e}_g(y_0$ is analytic. Finally we consider the point at infinity on $\mathcal{C}$ which occurs at $(\xi^2, y_0) = (0,\infty)$. It is straightforward to see that as $y_0 \to \infty$, $\eta(y_0) \sim y_0^{j-2}$.  As with the inflection points just discussed, it follows that $\hat{e}_g$ is analytic at $\infty$. So it is indeed the case that the global singularities of $\hat{e}_g(y_0)$ are confined to the zeroes of $\Pi_- \Pi_+$. 

While $\frak{C}^\bullet_g$ clearly has no simple pole, it is a priori possible that its product with one of the fundamental solutions could result in the formation oof simple pole making the integrand fail to be exact. To rule out this possibility we first make use of the {\it loop equations} \cite{EM7} to show that $\xi de_{g}/d\xi$ must be a rational function on $\mathcal{C}$. These equations may be briefly presented as follows.  Recalling the one point function, $\rho_n^{(1)}$, from (\ref{eq:onepoint}) a mild extension \cite{EM7} of Theorem \ref{Workhorse} yields the asymptotic expansion
\begin{eqnarray*}
\int_{-\infty}^\infty \frac{\rho_n^{(1)}(\lambda)}{z - \lambda} d\lambda = \sum_{g = 0}^\infty n^{-2g} R_g(z), 
\end{eqnarray*}
where $R_g(z) = \sum_{j=0}^J \frac{1}{z^{j+1}} \frac{de_{g}(\bf{\xi})}{d\xi_j}$. The loop equations may then be stated as
\begin{eqnarray} \label{loops}
\frac1{2 \pi i} \oint_C \frac{V'(\lambda) - 2 R_0(\lambda)}{\lambda - z} R_g(\lambda) d\lambda &=& \sum_{j=0}^J \frac{1}{z^{j+1}} \frac{dR_{g-1}(z)}{dt_j} 
+ \sum_{g'=1}^{g-1} R_{g'}(z) R_{g-g'}(z),
\end{eqnarray}
where  
$C$ denotes  a counterclockwise contour in the complex $\lambda$-plane surrounding the slit $[r_-, r_+]$.  The LHS iis evaluated by the  residue formula. 
We are interested in picking off the coefficient of $z^{-2J -2}$. It is easy to see that this comes entirely from the residue at $\lambda = z$
Thus, from (\ref{loops}), one sees that
\begin{eqnarray*}
\left[\frac1{2 \pi i} \oint_C \frac{-2 R_0(\lambda)R_g(\lambda)}{\lambda - z}  d\lambda\right]_{-2J - 2} &=& 
 \frac{d^2e_{g-1}}{d\xi_J^2}  + \sum_{g'=1}^{g-1} \frac{de_{g'}}{d\xi_J}  \frac{de_{g-g'}}{d\xi_J} \\
\left [-2 R_0(z) R_g(z)\right]_{-2J - 2} &=&   \frac{d^2e_{g-1}}{d\xi_J^2}  + \sum_{g'=1}^{g-1} \frac{de_{g'}}{d\xi_J}  \frac{de_{g-g'}}{d\xi_J} \\
 \frac{de_0}{d\xi_J}  \frac{de_g}{d\xi_J} &=&   \frac{d^2e_{g-1}}{d\xi_J^2}  + \sum_{g'=1}^{g-1} \frac{de_{g'}}{d\xi_J}  \frac{de_{g-g'}}{d\xi_J} \\
 \xi \frac{de_0}{d\xi}  \xi \frac{de_g}{d\xi} &=&  \xi^2  \frac{d^2e_{g-1}}{d\xi^2}  + \sum_{g'=1}^{g-1} \xi \frac{de_{g'}}{d\xi} \xi \frac{de_{g-g'}}{d\xi}
\end{eqnarray*}
where in the last line we have multiplied both sides of the equation by $\xi_J^2$ and then suppressed the explicit $J$ dependence since we are evaluating at our original $j$-regular potential. By induction the RHS is globally rational as is  $\xi de_0/d\xi$. Hence, by the loop equation, so is $\xi de_g/d\xi (= \eta de_g/d\eta )$. 
It follows that $\eta^2 d^2e_g/d\eta ^2$ is rational as well, since $\eta$ is a coordinate on $\mathcal{C}$. Then, from our observation at the outset, $\eta d\hat{e}_g/d\eta $ and $\eta^2 d^2\hat{e}_g/d\eta ^2$ are rational. Finally applying these last observations to (\ref{CEODE}) we have that $\hat{e}_g$ is globally rational on $\mathcal{C}$. Combining this with our prior analysis we may finally conclude that ${e}_g$  is a rational function on $\mathcal{C}$ whose singularities are poles confined to the zeroes of $\Pi_+$ or $\Pi_-$.
\smallskip

We next turn to determining the precise order of the poles of ${e}_g$ at the real turning points.${e}_g$. It follows from Theorems \ref{uz} and \ref{Dmax} that $\frak{C}_g(f_0, f_1, \dots, f_g)$ as given by (\ref{cg}) is a rational function whose singularities are poles confined to the locus where $\Pi(y_0) = 0$ with maximal pole order equal to $5g-1$. Observing this in (\ref{eggDE}) and noting that anti-differentiation twice decreases the $\Pi$-weight by 2 and hence the pole order by 4, it follows that the maximal pole order of $\widehat{E}_g$ is $5g-5$. 

It is still possible that the maximal pole order of $E_g$ is greater than  $5g-5$ but that this is cancelled by one of the other terms in the definition (\ref{hateg}) of $\widehat{E}_g$. This can be ruled out by strong induction on $k < g$. For $0< \ell < g-1$ one would then have that the maximal pole order of $\frac{\partial^{2\ell}}{\partial w^{2\ell}} E_{g - \ell}(s,x)$ must be equal to $5(g-\ell) -5 + 4\ell = 5g - \ell -5 < 5g-5$. So it will suffice to establish this premise for the base cases of $\ell = g-1, g$ with $g \geq 2$; i.e., for $x$-derivatives of $E_0$ and $E_1$. For $E_0$ this amounts to considering 
\begin{eqnarray*}
\frac{\partial^{2\ell}}{\partial x^{2\ell}} E_0 &=& \frac{\partial^{2\ell-2}}{\partial x^{2\ell-2}} \log(f_0)\\
&=& \frac{\partial^{2\ell-3}}{\partial x^{2\ell-3}} \frac{f_{0x}}{f_0},
\end{eqnarray*}
which, as a consequence of (\ref{jet1}), has maximal pole order  $\leq 4\ell -4 \leq 4g - 12$. Hence this presents no problem. 

Similarly for $E_1$ one has, by direct calculation,
\begin{eqnarray*}
\frac{\partial^{2\ell}}{\partial x^{2\ell}} E_1 &=& \frac{\partial^{2\ell}}{\partial x^{2\ell}} \frac1{24} \log\left(\frac{4 f_0 - h_0^2}{f_0 D}\right)\\
&=& \frac{\partial^{2\ell - 1}}{\partial x^{2\ell -1}} 
\frac1{24}\left[ \frac {-2j\,\left(jx-\, \left( j-2 \right) f_0\right) - \left( j-2
 \right)  h_0^2 }{D} \right. \\
&+& \left. \frac { \left( j-2 \right) ^2 f_0 h_0^2 \left( 2\,jx-2\,
 \left( j-2 \right) f_0+ \left( j-2 \right)  h_0^2 \right)}{D^2}\right. \\
&+& \left. \frac{4\, \left( j-2 \right) ^{2}f_0 h_0^2 \left( jx-
 \left( j-2 \right) f_0 \right) +4\, \left( j-2
 \right) f_0 \left( jw^2- \left( j-2 \right) f_0^2+ \left( j-2 \right)^2 f_0 h_0^2 \right)}{D^2}\right]
\end{eqnarray*}
whose maximal pole order is $\leq 4\ell+2 \leq 4g - 6$. So this does not challenge the maximality of $5g-5$ either. 
\smallskip

Finally, to see that this maximal pole order is actually achieved, we argue that the coefficient of this order in the Laurrent expansion of $e_g$ is proportional to the coefficient in the expansion of $f_g$ at order $5g-1$, $\zeta_g$, which was shown to be positive from Theorem \ref{Dmax} and Proposition \ref{p872}. To establish this identification of coefficients one simply needs to study the structure of the $\mathcal{O}(n^{-2g})$-cumulant in (\ref{cg}). But this is explicitly given by

\begin{eqnarray*}
\frak{C}_g &=& \sum_{m=1}^g (-1)^{m-1} \sum_{\begin{array}{c} \lambda \in \Lambda^g_{m}\end{array}} 
\frac{(m-1)!}{\prod r_j(\lambda)!} \prod_{j=1}^g \left(\frac{ f_j}{f_0}\right)^{r_j(\lambda)}
\end{eqnarray*}
where $\Lambda^g_{m}$ denotes the subset of partitions of $g$ having length $m$ and the $r_j(\lambda)$ are the occupation numbers of the partition $\lambda$.
It follows from Theorem \ref{Dmax} that
\begin{eqnarray*}
\frac{ f_j}{f_0} = \frac{\zeta_j}{z_{0c}(y-y_{0c})^{5j-1}}(1 + o(1))
\end{eqnarray*}
where $\zeta_j > 0$ and satisfies the recursion (\ref{rec9023}). Recalling from elementary partition theory that $\sum r_j = m$ and $\sum j r_j =g$ it follows that that the $m^{th}$ term in the above sum for $\frak{C}_g $ is $\mathcal{O}((y-y_{0c})^{5g-m})$. Hence,
$$
\frak{C}_g = \frac{\zeta_g}{z_{0c}(y-y_{0c})^{5g-1}}(1 + o(1))
$$
and so by the prior arguments 
\begin{eqnarray} \label{turningeg}
e_g = \frac{c_g}{(y-y_{0c})^{5g-5}}(1 + o(1))
\end{eqnarray}
where $c_g$ is proportional to $\zeta_g$. 
\end{proof}

\begin{rem}
The arguments presented in the above proof for the rationality of $e_g$ extend, mutatis mutandis, to the case of regular even valence and serve to amplify the somewhat terse argument for that case given in \cite{Er09}.
 \end{rem}
 
 \begin{cor} \label{cor:egasymp} The large size asymptotics for regular j-valent (j odd) g-maps is given by 
 \begin{eqnarray} 
\frac{\kappa^{(j)}_{g}(2k)}{(2k)!} &\sim& \frac{c_g}{ \Gamma(\frac{5g-5}{2})} \xi_{c}^{1/2 - 5g/2 -2k} (2k)^{(5g-7)/2} \qquad \text{as }k\rightarrow \infty,
\end{eqnarray}
where $t_c$ is the radius of convergence for the Taylor-Maclaurin expansion of $e_g(t)$. 
 \end{cor}
This follows from (\ref{turningeg}) in precisely the same way as Corollary \ref{c873} follows from (\ref{turningfg}) and Proposition \ref{p872}.

\subsubsection{Mixed Valence}

\begin{thm} \label{Eg2}
For $g>1$, $e_g$ is a rational function on $\mathcal{S}$ whose poles are confined to the zero-locus of $\psi^2_{V,1} - f_0 \phi^2_{V,1}$, where $\psi_{V,1}, \,  \phi_{V,1}$ are the string polynomials for the mixed valence potnetial $V$. \end{thm}

\begin{proof} To establish the rationality of the $e_g$ for general potentials we will make use again of the loop equations (\ref{loops}). As in the regular case we use this to first inductively establish the rationality. Also as before we focus, in the residue sum on the LHS of these equations, on the terms that are Laurent in $z$. In particular we consider the equations that come from the balance of coefficients on both sides of (\ref{loops}) for $z^{-2}, z^{-3}, \dots, z^{-J-2}$. This yields a lower triangular system of Toeplitz type:
\begin{eqnarray} \label{loops2}
\left(\begin{array}{ccccc}
\dfrac{d e_0}{d\xi_0} & 0 & \dots & &   \\
\dfrac{d e_0}{d\xi_1} & \dfrac{d e_0}{d\xi_0} & 0 & \dots &\\
\dfrac{d e_0}{d\xi_2} & \dfrac{d e_1}{d\xi_1} & \dfrac{d e_0}{d\xi_0} & 0 & \dots\\
\vdots & \ddots & \ddots & \ddots & 0   \\
\dfrac{d e_0}{d\xi_J} & \dots & \dfrac{d e_0}{d\xi_2}& \dfrac{d e_0}{d\xi_1} & \dfrac{d e_0}{d\xi_0} 
\end{array}\right) 
\left(\begin{array}{c}
\dfrac{d e_g}{d\xi_0} \\ \dfrac{d e_g}{d\xi_1} \\ \dfrac{d e_g}{d\xi_2} \\ \vdots \\ \dfrac{d e_g}{d\xi_J}
\end{array}\right) 
&=& \text{vector of terms involving derivatives of}\,\,\, e_h, \,\,\,\, h <g.
\end{eqnarray} 
The diagonal elements, $\frac{d e_0}{d\xi_0}$, are all equal to the total mass of the equilibrium measure, (\ref{dmu}), and so are positive (the elements of the $j^{th}$ sub-diagonal all equal the $j^{th}$ moment of the equilibrium measure). Thus the system is invertible, showing, by induction, that the derivative $d e_g/ d\xi_j$ are rational functions of $h_0$ and $f_0$. The base step is given by the observation from section \ref{sec:MV} that the just mentioned derivatives 
of $e_0$ are rational.

Again, as in the regular case, we expand $\partial^2_x E_g$ in (\ref{eggDE}) which, for mixed valence, takes the form
\begin{eqnarray*}
(1-2g)(2-2g) e_g + (3-4g) \sum_{j=1}^J \left( \frac{j}{2} - 1\right) \xi_j \frac{d}{d\xi_j} e_g + \sum_{j,k =1}^J \left( \frac{j}{2} - 1\right) \left( \frac{k}{2} - 1\right) \xi_j \frac{d}{d\xi_j}   \xi_k \frac{d}{d\xi_k} e_g &=& \text{rational function}.
\end{eqnarray*}
It then follows that $e_g$ must be a rational function on $\mathcal{S}$. The statement on the polar locus follows from the same construction that established 
(\ref{A11} - \ref{A12}) and (\ref{basic1}). 
\end{proof}

\subsection{Special Cases and a Conjecture}
It follows from Theorem \ref{Eg} that $E_g$, for $g \geq 2$  are rational functions on $\mathcal{C}$ with poles contained in the locus $\Pi(y_0)=0$. Moreover, as was shown in the proof of this theorem, $E_g$ is a function of  just $y_0$ since it is built up inductively from the cumulants, $\frak{C}_g$ which are functions of the $f_k$ that in turn are just functions of  $y_0$. Hence, the $E_g$ are rational functions on $\mathcal{C}$ with poles restricted to the zero-set of $\Pi$ on $\mathcal{C}$. Let us denote this locus by $[\Pi]$ which is a divisor on $\mathcal{C}$. $E_g$ is then an element of a sub-linear system  of the divisor $(5g-5) [D]$ on $\mathcal{C}$.  The dimension of this vector space plays a role analogous to that of the the degrees of freedom represented by the coefficients $c^{(g)}_k$ appearing in (\ref{leadcoeff}). Similarly $z_g$ is an element of the linear system with divisor $(5g-1) [D]$ and $u_{2g}$ is an element of the pullback of this system to $\widehat{\mathcal{C}}$. Also, $u_{2g+1}$ is an element of the system associated to the pullback of the divisor $(5g+1) [D]$. A direct comparison can be made in the trivalent case and so we take a moment to do this in order to better understand how things should work. 

Recall that in the trivalent case $\mathcal{C}$, as given by (\ref{3curve}), is a degree 3 covering of the $\xi^2$ projective line branched simply at the roots of $y^2_0 -8 y_0 +4$, $y_0 = 2(2 \mp \sqrt{3})$. Noting that in this case there is a simple change of variables from $z_0$ to $y_0$ given by
\begin{eqnarray} \label{ztoy}
z_0 &=& \frac{2+y_0}{2-y_0}
\end{eqnarray}
we use this to transform the generating functions given in \cite{EP11} to their forms on $\mathcal{C}$:
\begin{eqnarray}
u_1 &=& \frac{y_0 - 2}{4 -8y_0 + y_0^2} \\
\frac{u_2}{u_0} &=& -1/2\,{\frac { \left( -2 + y_0 \right) ^{3} \left( 20-16\,y_0+5\,{y_0}^{2}
 \right) y_0}{ \left( 4-8\,y_0+{y_0}^{2} \right) ^{4}}}\\
\frac{z_1}{z_0} &=& 2\,{\frac { \left( -2+y_0 \right) ^{2} \left( 20-16\,y_0+5\,{y_0}^{2}
 \right) {y_0}^{2}}{ \left( 4-8\,y_0+{y_0}^{2} \right) ^{4}}}.
\end{eqnarray}
Thus $u_1$ is partially characterized by saying that it is a rational function on $\mathcal{C}$ with poles in $[D]$ that vanishes simply at $y_0 = -2, \infty$. This determines $u_1$ up to one degree of freedom.  
Similarly $z_1/z_0$ (respectively $u_2/u_0$) is partially characterized by saying that it is a rational function on $\mathcal{C}$ with poles in $4[D]$ that vanishes doubly (i.e., to second order) at $y_0 = 0, 2, \infty$ (resp. simply at $y_0 = 0$, doubly at $y_0 = \infty$ and triply at $y_0 = 2$). This determines $z_1/z_0$ (respectively $u_2/u_0$) up to three degrees of freedom. We note that this coincides with the three degrees of freedom needed to determine $z_1/z_0$ in the even valence cases. 

Based on \cite{EP11}, one can show that
\begin{eqnarray} \label{e2j3}
e_2 &=& 1/30\,{\frac { \left( 2800-4240\,y_0+2712\,{y_0}^{2}-1060\,{y_0}^{3}+175\,{y_0
}^{4} \right) {y_0}^{3}}{ \left( 4-8\,y_0+{y_0}^{2} \right) ^{5}}}
\end{eqnarray}
which is partially characterized by saying that it is a rational function on $\mathcal{C}$ with poles in $5[D]$ that vanishes triply at $y_0 = 0, \infty$. This determines $e_2$ up to five degrees of freedom which  coincides with the number of degrees of freedom for $e_2$ when the valence is even.

The perspective illustrated by this extends naturally to higher odd valence. Based on computer-aided evaluations of a number of other low valence cases, we close this section with the following conjectured  characterization of the divisor structure of higher genus generating functions:
\begin{conj} 
In the following $y_0^*$ denotes the value at which the real spectral curve crosses the $y_0$-axis. For $\nu > 1$,
\begin{itemize}
\item[a)] $u_{2g+1}$ is a rational function on $\widehat{\mathcal{C}}$ with poles in the pullback of $(5g+1)[D]$ that vanishes to order $2g+1$ at $y_0 = y_0^*$, to order $r = \max\left( 0, \lfloor \frac{2g-1/2}{\nu - 1/2} \rfloor\right)$ at $y_0 = 0$ and to order $2g+1$ at $y_0 = \infty$;
\item[b)] $u_{2g}$ is a rational function on $\widehat{\mathcal{C}}$ with poles in the pullback of $(5g-1)[D]$ that vanishes to order $2g$ at $y_0 = y_0^*$, to order $r = \max\left( 0, \lfloor \frac{2g-3/2}{\nu - 1/2} \rfloor\right)$ at $y_0 = 0$ and to order $2g$ at $y_0 = \infty$;
\item[c)] $z_{g}$ is a rational function on $\mathcal{C}$ with poles in $(5g-1)[D]$ that vanishes to order $2g$ at $y_0 = y_0^*$, to order $r = \max\left( 0, \lceil \frac{2g-3/2}{\nu - 1/2} \rceil\right)$ at $y_0 = 0$ and to order $2g$ at $y_0 = \infty$;
\item[d)] $e_{g}$, for $g \geq 2$, is a rational function on $\mathcal{C}$ with poles in $(5g-5)[D]$ that vanishes to order $r = \max\left( 0, \lceil \frac{g-1/2}{\nu - 1/2} \rceil\right)$ at $y_0 = 0$ and to order $2g-1$ at $y_0 = \infty$.
\end{itemize}
\end{conj}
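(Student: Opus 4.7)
The plan is to establish the conjecture by strong induction on the genus $g$, treating the four assertions in parallel. The pole containment in the prescribed multiples of $[\Pi]$ and the maximal pole orders are already the content of Theorems \ref{uz}, \ref{Dmax}, and \ref{Eg}, so all of the new work lies in controlling the vanishing orders at the three special loci $y_0 = 0$, $y_0 = y_0^*$, and $y_0 = \infty$. The inductive engine is the pair of recursions (\ref{evRec})–(\ref{oddRec}) for $(h_g, f_g)$ together with the cumulant formula (\ref{eggDE})–(\ref{cg}) for $\widehat{E}_g$. Throughout, I will view each quantity as an element of the coordinate ring ${\bf \tilde{S}}$ of $\mathcal{C}$ (or its pullback to $\widehat{\mathcal{C}}$), and read off vanishing orders from the closed-form differential expressions of Proposition \ref{Univfh} and Theorem \ref{MAINTHEOREM}(b).

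For the vanishing at $y_0 = 0$ (the Gaussian point), the key input is the combinatorial interpretation (\ref{fcount})–(\ref{hcount}) combined with Euler's formula. Along the Gaussian branch of $\mathcal{C}$ one has $\xi^2 \sim y_0/j^2$ from (\ref{curve}), so each monomial $\xi^{2k}$ in the Taylor–Maclaurin series translates into a monomial $y_0^{k}$. The conjectured ceiling function is then forced by the minimal admissible vertex count coming from $F = 2-2g + n(j-2)/2 - (\text{legs})$ and the parity constraint $jn \equiv 0 \bmod 2$ (needed since $j$ is odd), provided one also shows that this minimal count is actually realized by at least one map. Producing such a minimal map explicitly, or deducing its existence from a recursive construction (gluing legs onto smaller maps), would complete this case. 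For $y_0 = \infty$, the same idea applies after passing to the conjugate branch $\xi^{2} \sim y_0^{1-2\nu}$, on which the Taylor–Maclaurin monomials $\xi^{2k}$ re-expand as $y_0^{-(2\nu-1)k}$, giving the vanishing orders $2g$ and $2g\pm 1$ directly from the counts above.

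For the vanishing at $y_0 = y_0^*$, the strategy is to localize at this point on $\widehat{\mathcal{C}}$ and exploit that $y_0^*$ is simultaneously a zero of $\widehat{B}_{12} - y_0^{1/2}\widehat{B}_{11}$ and a pole (of the maximal order allowed by part (a) of Theorem \ref{genthm}) for $f_0$ and $h_0$. Using Corollary \ref{AppellSum} and equation (\ref{zzero}) one can compute the precise orders of $f_0$, $f_{0x}$, $h_0$, $h_{0x}$, $f^2_{0x} - f_0 h^2_{0x}$ at $y_0^*$, and then feed these into the universal polynomial expressions of Proposition \ref{Univfh} to read off orders for $h_g, f_g$. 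An identity analogous to (\ref{dofbn}) must first be established at the inflectionary locus $y_0^*$ to control the competing highest-order terms. A similar but simpler argument then propagates the result to $E_g$ via (\ref{eggDE}), noting that second antidifferentiation with respect to $x$ lowers the $\Pi$-weight by $2$ on the singular side but is compatible with the vanishing at $y_0^*$ because $\partial_x = y_{0x}\partial_{y_0}$ and $y_{0x}$ itself vanishes to definite order at this point by Lemma \ref{y0x}.

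The main obstacle will be making the strong induction self-consistent at the antiderivative step. The cumulant $\mathcal{C}_g$ is built from products $\prod (f_m/f_0)$ whose individual vanishing orders add, but one must verify that no constant of integration introduced by $\partial_x^{-2}$ spoils the prescribed vanishing; equivalently, one must show that the relevant components $\widehat{E}_g$ and $E_g$ agree on their leading local expansions at each of the three points. Near $y_0^*$ this is delicate because $y_{0x}$ has an isolated singular structure there, and one must track the cancellations between the $h_{0,xx}$ and $h_{0,x}^2$ contributions (and their $f_0$ counterparts) that appear in (\ref{INDBASE}) and its higher-genus analogues. A secondary, more concrete difficulty is to rule out accidental over-vanishing at $y_0 = 0$ for certain $(g,\nu)$; this should ultimately reduce to an explicit non-vanishing check at the critical vertex count, plausibly handled case-by-case via the integral representation (\ref{KAPPA}).
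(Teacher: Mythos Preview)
The statement you are addressing is explicitly labeled a \emph{conjecture} in the paper, not a theorem. The authors introduce it with the remark that it is ``Based on computer-aided evaluations of a number of other low valence cases,'' and they offer no proof; the supporting evidence is the trivalent examples for $u_1$, $u_2/u_0$, $z_1/z_0$, and $e_2$ worked out immediately before the statement. So there is no paper proof to compare your proposal against.

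As a plan of attack on an open problem, your outline is plausible in parts but has real gaps. The $y_0 = 0$ argument is the most solid piece: the lower bound on vanishing order does follow from Euler's formula exactly as you say (for $e_g$, $F \ge 1$ forces $2m \ge (4g-2)/(j-2)$, giving $m \ge \lceil (g-1/2)/(\nu-1/2) \rceil$), and the matching upper bound reduces to exhibiting a single $j$-regular $g$-map with the minimal vertex count, which is a finite combinatorial problem. Your $y_0 = \infty$ argument is less convincing: there is no ``conjugate branch'' in the paper's set-up, and the leading terms of $\widehat{B}_{12}$ and $y_0^{1/2}\widehat{B}_{11}$ both go like $y_0^{\nu}$, so the asymptotics of $\xi^2$ at infinity require tracking a cancellation you have not carried out. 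The $y_0 = y_0^*$ case is the weakest: at this point $z_0 = f_0/x$ blows up (it is a zero of $\widehat{B}_{12} - y_0^{1/2}\widehat{B}_{11}$), so the universal denominators $f_{0x}^2 - f_0 h_{0x}^2$ in Proposition \ref{Univfh} behave singularly there, and reading off a \emph{vanishing} order for $u_g, z_g$ from those expressions would require a delicate cancellation argument that you have not supplied. The ``identity analogous to (\ref{dofbn})'' you invoke is exactly the missing ingredient, and there is no indication in the paper that such an identity exists at the inflectionary locus.
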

\subsection{Integral Formulas for Map Enumeration} \label{integform}

We can now return to one of our initial primary motivations: the systematic concise enumeration of maps with fixed discrete characteristics. Starting with the case of $g=0$, or planar maps and with $j$ odd, the Taylor coefficients in (\ref{TMexp}) can be presented in terms of contour integration by
\beann
\frac{\kappa^{(j)}_{0}(2m)}{(2m)!} &=& \frac1{2\pi i}\oint e_0(\xi) \frac{2 d\xi}{\xi^{2m+1}}.
\eeann
The factor of 2 here stems from the fact that the contour here is taken to be around the preimages on $\widehat{\mathcal{C}}$ of a small circle about the origin in the $\xi^2$-plane.

Note that it follows from Euler's genus formula that in all cases of a single odd valence, the number of vertices for a map must be even; hence, only the even Taylor coefficients need be considered. Integrating by parts, this conotur integral may be re-expressed as
\beann
\frac{\kappa^{(j)}_{0}(2m)}{(2m)!} &=& \frac1{2\pi i}\oint \frac1{2m}\frac{de_0}{dy_0} 2 \xi^{-2m} dy_0\\
&=& \frac1{2\pi i}\oint \frac1{m}\frac{de_0}{dy_0} \frac{j^{2m}}{y_0^m}\frac{\widehat{B}_{12}^{jm}}{(\widehat{B}_{12} - y_0^{1/2}\widehat{B}_{11})^{(j-2)m}} dy_0
\eeann
where we have used (\ref{curve}) to derive the second equation. The integrand in the second equation is a differential on the rational curve $\mathcal{C}$ expressed entirely in terms of the global uniformizing parameter $y_0$ on that curve. In particular note that as a consequence of the representation in Theorem \ref{e0y0}, $e_0$ is a rational function of $y_0$ and hence so is $d e_0 / d y_0$. A more explicit form of this derivative is given by
\beann
\frac{de_0}{dy_0} &=& C \frac{\Pi(y_0) p_{j-1}(y_0)}{\widehat{B}_{12} (\widehat{B}_{12} - y_0^{1/2}\widehat{B}_{11})^3}
\eeann
where $p_{j-1}(y_0)$ is a monic polynomial of degree $j-1$ (the same degree as $\Pi$). Furthermore, $C$ is a constant such that $\lim_{y_0 \to \infty} \frac{de_0}{dy_0} = \frac1{2j}$. Some examples are 
\beann
\frac{de_0}{dy_0} &=& 1/6\,{\frac { \left( {y}^{2}-8\,y+4
 \right) \left( {y}^{2}+4\,y-4 \right)   }{  \left( 2+y \right) \left( -2+y \right) ^{3}}}; \,\,\,\,\, j=3\,\,\,\,\,\,\, (C = -\frac16)\\
\frac{de_0}{dy_0} &=& 1/10\,{\frac { \left( {y}^{4}-8\,{y}^{3}-20\,{y}^{2}-32\,y+4 \right) 
 \left( {y}^{4}+20\,{y}^{3}+68\,{y}^{2}+40\,y-12 \right) }{ \left( {y}^{2}+12\,y+6 \right) \left( {y}^{2}-2 \right) ^{3}  }}; \,\,\,\,\, j = 5\,\,\,\,\,\,\, (C = - \frac95).
\eeann
Inserting this expression for the derivative into the Taylor coefficient formula yields
\begin{eqnarray} \label{residue}
\frac{\kappa^{(j)}_{0}(2m)}{(2m)!} &=& \frac1{2\pi i} \frac{C j^{2m}}{m}\oint \frac{\Pi(y_0) p_{j-1}(y_0) \widehat{B}_{12}^{jm - 1}}{ (\widehat{B}_{12} - y_0^{1/2}\widehat{B}_{11})^{(j-2)m + 3}} \frac{dy_0}{y_0^m}.
\end{eqnarray}
From here it is straightforward to calculate the residues of this rational differential. Some examples are
\beann
\left[\frac{\kappa^{(3)}_{0}(2)}{3 \cdot 2!}, \frac{\kappa^{(3)}_{0}(4)}{3 \cdot 4!}, \frac{\kappa^{(3)}_{0}(6)}{3 \cdot 6!}, \frac{\kappa^{(3)}_{0}(8)}{3 \cdot 8!}, \frac{\kappa^{(3)}_{0}(10)}{3 \cdot 10!}, \dots\right] 
&=& \left[2, 72, 4536, 373248, \frac{180138816}{5}, \dots \right]\\
\left[\frac{\kappa^{(5)}_{0}(2)}{5 \cdot 2!}, \frac{\kappa^{(5)}_{0}(4)}{5 \cdot 4!}, \frac{\kappa^{(5)}_{0}(6)}{5 \cdot 6!}, \frac{\kappa^{(5)}_{0}(8)}{5 \cdot 8!}, \frac{\kappa^{(5)}_{0}(10)}{5 \cdot 10!}, \dots\right] 
&=& \left[ 18, 54000, 345060000, 3098250000000, 33814409850000000, \dots\right].
\eeann
(The additional division by 3 and 5 respectively is to take into account the removal of edge labelling in order to get a purely geometric (unlabelled) count as discussed after equation (\ref{GenFunc}).)

In the trivalent case ($j=3$) this sequence has been calculated by other means and a general formula found in \cite{EP11} and \cite{BD10}:
$\kappa^{(3)}_{0}(2m)/3 \cdot (2m)! = \frac{3^{2m} 2^{3m}}{3m} \frac{\Gamma(3m/2)}{\Gamma(m/2) \Gamma(3 + m)})$. The existence of a compact representation, in terms of known functions, of these residues in the trivalent case is not surprising given the hypergeometric character of the integral in this case. However, for higher valence one has more than three poles in the differential so one would expect a higher order function theory to be involved, albeit one that still is related to the configuration of poles in the differential. One may note further that this configuration is that of the inflection points of the curve $\mathcal{C}$ along the the $y_0$-axis; i.e., to the stagnation points of the characteristic flow.
\medskip

We turn next to the genus 1 generting functions whose Taylor coefficients are given by
\beann
\frac{\kappa^{(j)}_{1}(2m)}{(2m)!} &=& \frac1{2\pi i}\oint e_1(\xi) \frac{2 d\xi}{\xi^{2m+1}}\\
&=& \frac1{2\pi i}\oint \frac1{m}\frac{de_1}{dy_0} \xi^{-2m} dy_0\\
&=& \frac1{2\pi i}\oint \frac1{m}\frac{de_1}{dy_0} \frac{j^{2m}}{y_0^m}\frac{\widehat{B}_{12}^{jm}}{(\widehat{B}_{12} - y_0^{1/2}\widehat{B}_{11})^{(j-2)m}} dy_0.
\eeann
With $e_1$ given by (\ref{E1}), its derivative is given by 
\beann
\frac{de_1}{dy_0} &=& \frac1{24} \frac{2 (\partial \widehat{B}_{12}/ \partial y_0) \Pi  - \widehat{B}_{12} (\partial \Pi / \partial y_0 ) }{\widehat{B}_{12} \Pi}.
\eeann
Some examples are
\beann
\frac{de_1}{dy_0} &=& -1/2\,{\frac {-2+y}{ \left( 2+y \right)  \left( {y}^{2}-8\,y+4\right) }}; \,\,\,\,\, j=3\\
\frac{de_1}{dy_0} &=& -2/3\,{\frac {2\,{y}^{4}-2\,{y}^{3}-3\,{y}^{2}+8\,y-18}{ \left( {y}^{2
}+12\,y+6 \right)  \left( {y}^{4}-8\,{y}^{3}-20\,{y}^{2}-32\,y+4 \right) }}; \,\,\,\,\, j=5.
\eeann
This yields a general formula for the genus 1 Taylor coefficients:
\begin{eqnarray} \label{residue1}
\frac{\kappa^{(j)}_{1}(2m)}{(2m)!} &=& \frac1{2\pi i} \frac{j^{2m}}{24 m} \oint \frac{(2 (\partial \widehat{B}_{12}/ \partial y_0) \Pi  - \widehat{B}_{12} (\partial \Pi / \partial y_0 ) ) \widehat{B}_{12}^{jm-1}}{\Pi (\widehat{B}_{12} - y_0^{1/2}\widehat{B}_{11})^{(j-2)m}} \frac{dy_0}{y_0^m}.
\end{eqnarray}
The residue sequences for our examples start out as
\beann
\left[\frac{\kappa^{(3)}_{1}(2)}{3 \cdot 2!}, \frac{\kappa^{(3)}_{1}(4)}{3 \cdot 4!}, \frac{\kappa^{(3)}_{1}(6)}{3 \cdot 6!}, \frac{\kappa^{(3)}_{1}(8)}{3 \cdot 8!}, \frac{\kappa^{(3)}_{1}(10)}{3 \cdot 10!}, \dots\right] &=& \left[\frac32, 135, 16524, 2291976, \frac {1701555984}{5}, \dots \right]\\
\left[\frac{\kappa^{(5)}_{1}(2)}{5 \cdot 2!}, \frac{\kappa^{(5)}_{1}(4)}{5 \cdot 4!}, \frac{\kappa^{(5)}_{1}(6)}{5 \cdot 6!}, \frac{\kappa^{(5)}_{1}(8)}{5 \cdot 8!}, \frac{\kappa^{(5)}_{1}(10)}{5 \cdot 10!}, \dots\right] 
&=& \left[ 90, 1035000, 15746400000, 268824825000000, 4889505205800000000, \dots\right].
\eeann
\medskip

Finally we turn to the case of $g > 1$. We have seen that these generating functions have the form
\beann
e_g &=& \frac{P^{(g)}(y_0)}{\Pi^{5g-5}}
\eeann
for a polynomial $P^{(g)}(y_0)$. It follows that its derivative has the form
\beann
\partial e_g / \partial y_0 &=& \frac{M^{(g)}(y_0)}{\Pi^{5g-4}}
\eeann
for another polynomial $M^{(g)}(y_0)$. Proceeding as before we have the integral representation
\begin{eqnarray} \label{residueg}
\frac{\kappa^{(j)}_{g}(2m)}{(2m)!} &=& \frac1{2\pi i} \frac{j^{2m}}{m} \oint \frac{M^{(g)}(y_0)}{\Pi^{5g-4}} \frac{\widehat{B}_{12}^{jm}}{(\widehat{B}_{12} - y_0^{1/2}\widehat{B}_{11})^{(j-2)m}} \frac{dy_0}{y_0^m}.
\end{eqnarray}
We take here as an example the case of $g=2, j=3$, using (\ref{e2j3}),
\beann
\partial e_2 / \partial y_0 &=& -1/2\,{\frac {{y}^{2} \left( -2+y \right)  \left( 2+y \right)  \left( 
35\,{y}^{4}-96\,{y}^{3}+152\,{y}^{2}-384\,y+560 \right) }{ \left( {y}^
{2}-8\,y+4 \right) ^{6}}}; \,\,\,\,\, j=3
\eeann
whose initial residue sequence is
\beann
\left[\kappa^{(3)}_{2}(2)/3 \cdot (2)!, \kappa^{(3)}_{2}(4)/3 \cdot (4)!, \kappa^{(3)}_{2}(6)/3 \cdot (6)!, \kappa^{(3)}_{2}(8)/3 \cdot (8)!, \kappa^{(3)}_{2}(10)/3 \cdot (10)!, \kappa^{(3)}_{2}(12)/3 \cdot (12)!,\dots\right] \\
= \left[0, 0, \frac{2835}{2}, 739206, \frac{1301676156}{5}, 77075478720, \dots \right].
\eeann

\subsection{Asymptotic Enumeration} \label{asympenum}

As mentioned in section \ref{approaches}, more recent purely combinatorial approaches to map enumeration build on some remarkable combinatorial bijections. However building such combinatorial bridges requires conditioning the map class in some way whether by rooting or coloring or some other means. By constrast the enumerations presented in this paper are direct and free from such conditioning. The other aspect to note is that those combinatorial studies focus almost entirely on asymptotic evaluations; i.e., asymptotic descriptions of the map counts as the size of the map becomes very large. Our enumerations, on the other hand, can be equally well carried out in the small and moderate size ranges as we have illustrated in section \ref{integform}. But, by way of comparison, let us now focus on how our approach compares in the setting of asymptotic enumerations.

Many of the combinatorial approaches, such as  \cite{JV, BGR, CMS, Chap},  comment on the universal character of the leading coefficients $c_g$ in (\ref{KAPPA-ASYMP}) for various restricted settings. This is indeed related to universality in the sense we have been discussing in this paper; however, it needs to be qualified.  The more precise statement is that there are {\it basins of attraction} corresponding to the stable neighborhoods in $\vec{t}$ of regular maps with fixed genus in the sense that we have made precise in section \ref{stability}. Moreover, the dependence of the $c_g$ on valence  $j$  is in 1:1 correspondence with the {\it ladder} of equations in the integrable hierarchy  of the Painlev\'e I equation.
So the universality here is inherent in this {\it full} integrable hierarchy rather than in just one particular integrable equation. Indeed we showed in \cite{Er09} that for $j$ even the recursion (\ref{RECURRENCE}) with $C_k$ corresponding to level $j$ also is a recursion for the coefficients in an asymptotic expansion of an analogue of the tritronqu\'ee solution for the $j^{th}$ equation in the Painlev\'e I hierarchy. Frroom this paper we are able to extend this to the case of valence 3 and formulate a precise conjecture for how this should extend to arbitrary odd valence (see Appendix \ref{POne}).

Prior works on enumerative asymptotics did not make these distinctions concerning $c_g$; however, from our perspective the classes of maps considered in those works appear to fall primarily into the basin of attraction of valence 3 or valence 4 ( dually, triangulations or quadrangulations) which correspond to the original Painlev\'e I transcendent. Looking at their associated critical points, $t_c$, one may observe that these are closer to the origin than the critical parameters for higher valence. Looking  at (\ref{KAPPA-ASYMP}) one sees, therefore, that the lower valence expressions are asymptotically dominant. So from this point of view it may not be so surprising that these earlier results seem to have condensed on the lowest valence asymptotics. Of course in the case of mixed valences that are not just perturbations of a regular case, there may be more complicated phase transiitons. However, our point is that since our basic results are global and exact, we are in a position to derive complete asymptotic expansions in which one may compare asymptotic contributions coming from multiple critical points. We will not delve further into this observation in this paper, but defer it to later work.

\appendix

\section{Review of the structure of even valence maps} \label{evenvalence}
\renewcommand{\theequation}{A.\arabic{equation}}
In \cite{Er09}, \cite{CLRM} the following two theorems were respectively established, where $z_g$, inclucing $z_0$, are defined in (\ref{b-shift}). In the case of even valence, the $u_k$ and hence $h_k$ are idendentically zero.
\begin{thm} \label{result} \cite{Er09} For $j=2\nu$ one has

\begin{eqnarray} \nonumber
 z_g (z_0) &=& \frac{z_0  (z_0 -1) P_{3g-2}(z_0)}{(\nu -(\nu -1)z_0)^{5g-1}}\\
 \label{rational} &=& z_0 \left\{ \frac{a_0^{(g)}(\nu)}{(\nu - (\nu-1)z_0)^{2g}} + \frac{a_1^{(g)}(\nu)}{(\nu - (\nu-1)z_0)^{2g+1}}+ \cdots + \frac{a_{3g-1}^{(g)}(\nu)}{(\nu - (\nu-1)z_0)^{5g-1}}\right\} \,\,\,\,\,\,\,\,\,\,\,\,\,\,\,\,\,\,\,\,
\end{eqnarray}
where $P_{3g-2}$ is a polynomial of degree $3g-2$ in $z_0$ whose coefficients are rational functions of $\nu$ over the rational numbers $\mathbb{Q}$, $z_0(\xi) = \sum_{\ell \geq 0} c_\nu^\ell \zeta_\ell \xi^\ell $ where here  \cite{EMP08}
$$
\zeta_\ell = \frac1\ell{\nu \ell \choose \ell - 1}
$$
(so that $z_g$ is naturally a function of $\xi$) and where $a_{3g-1}^{(g)}(\nu)$ is positive and satisfies a recursion in $g$ entirely analogous to that for the coefficients of the asymptotic expansion at infinity of the the Painlev\'e I equation (for $\nu = 2$ the two recursions are equivalent).  
\end{thm}
From this, the coefficient asymptotics for $z_g$ are seen to be 

\begin{eqnarray} \label{zg-asymp}
\frac{\kappa^{(j; 2:1)}_{g}(m)}{m!} &=& \frac{\nu}{\nu-1} \frac{a_{3g-1}^{(g)}(\nu) }{ \left( \sqrt{2\nu} (\nu - 1)\right)^{5g-1}} \frac{m^{\frac{5g-3}{2}}}{\Gamma\left(\frac{5g-1}{2} \right)}
s_c^{-m} (1 + o(1))\\ \nonumber
s_c &=& \frac{(\nu - 1)^{\nu -1}}{c_\nu \nu^\nu},\\ \nonumber
c_\nu &=& (\nu + 1) {2\nu \choose \nu+1},
\end{eqnarray}
as $m \to \infty$. (Here $\kappa^{(j; 2:1)}_{g}$ denotes counts of $j$-valent $g$-maps with 2 legs (1-valent vertices).)

\begin{thm} \label{thm51} \cite{CLRM}
For $j=2\nu$ one has
\begin{eqnarray} \nonumber
e_0(z_0) &=& \frac12 \log z_0 + \frac{(\nu-1)^2}{4\nu(\nu+1)}\left(z_0 - 1\right)\left(z_0 - \frac{3(\nu + 1)}{\nu - 1} \right)\\  \label{e1ev}
e_1(z_0) &=& -\frac1{12} \log(\nu - (\nu - 1) z_0),
\end{eqnarray}
which may be found in \cite{EMP08} and are also explicitly derivable from (\ref{E1}) and (\ref{e0}).

For $j=2\nu$ and $g \geq 2$,
\begin{eqnarray}
\label{note} e_g(z_0) &=& C^{(g)} + \frac{c_0^{(g)}(\nu)}{(\nu - (\nu-1)z_0)^{2g-2}} + \cdots + \frac{c_{3g-3}^{(g)}(\nu)}{(\nu - (\nu-1)z_0)^{5g-5}}\\
\label{note2}&=&  \frac{(z_0 - 1)^r Q_{5g-5-r}(z_0)}{(\nu - (\nu - 1)z_0)^{5g-5}}\, ,\\
 \label{note3} r &=& \max\left\{ 1, \left\lfloor\frac{2g-1}{\nu-1}\right\rfloor\right\}\, ,
\end{eqnarray}
for all $\nu \geq 2$.  The top coefficient and the constant term are respectively given by
\begin{eqnarray} \label{leadcoeff} 
&&\\ \nonumber 
&& c_{3g-3}^{(g)}(\nu) = \frac1{(5g-5)(5g-3) \nu^2} a_{3g-1}^{(g)}(\nu) > 0\, ,\\ \label{Constant} 
&&\\  \nonumber
 && C^{(g)}  =   -2  (2g-3)! \left[\frac{1}{(2g+2)!} - \frac{1}{(2g)! 12} + \frac{(1 - \delta_{2,g})}{(2g-1)!}\sum_{k=2}^{g-1} \frac{(2-2k)_{2g-2k+2}}{(2g-2k+2)!} C^{(k)}\right] \qquad
\end{eqnarray}
 and $(r)_\ell = r(r-1)\dots(r-\ell+1)$. 
\end{thm}

From this, the coefficient asymptotics for $e_g$, $g \geq 2$, are seen to be   
\begin{eqnarray} \label{eg-asymp}
\frac{\kappa^{(2\nu)}_{g}(m)}{m!}&=& \frac{c^{(g)}_{3g-3}(\nu)}{(\sqrt{2\nu}(\nu-1))^{5g-5}} \frac{m^{\frac{5g-7}{2}}}{\Gamma\left( \frac{5g-5}{2}\right)} s_c^{-m} (1+ o(1))
\end{eqnarray}
which is consistent with asymptotics found by Tutte and his school by purely combinatorial means (see \cite{Gao}). The asymptotics in 
(\ref{zg-asymp}), (\ref{eg-asymp}) follow directly, respectively, from the Laurent polynomial representations (\ref{rational}), (\ref{note}) by a standard result of analytical combinatorics \cite{FO90}.

\section{Zeroes of the Appell Polynomials} \label{Appellzeroes}
\renewcommand{\theequation}{B.\arabic{equation}}
\begin{prop} 
$S_n(\zeta)$, viewed as a function of a complex $\zeta$ variable, has exactly $n$ zeroes (which must therefore be simple) along the imaginary axis in the complex $\zeta$ plane which are symmetric about the origin and the zeroes of $S_{n-1}(\zeta)$ interlace those of $S_n(\zeta)$.  $\widehat{B}_{12}(\nu)$ is inductively given by 
\begin{eqnarray} 
\widehat{B}_{12}(\nu) &=& \partial^{-2} S_{2 \nu -2}(\sqrt{y_0}) + C_{2\nu}
\end{eqnarray}
where $C_{2\nu} = {2\nu \choose \nu}$ are the coefficients of the generating function $(1 - w \partial_w) C(w)$  ($C(w)$ here is the generating function for the standard Catalan numbers)  which in turn is related to the Fourier transform of the Bessel function of the first kind as in (\ref{FourierBessel}). Consequently, one further has that $\widehat{B}_{12}(\nu)(y_0)$ has $\nu$ negative real zeroes as a function of $y_0$.
\end{prop}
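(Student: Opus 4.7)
The proposition has three assertions, and the cleanest route organises them as follows: Part~3 is an immediate consequence of Part~1 upon setting $y_0 = \zeta^2$; Part~2 is a direct integration; the substance lives in Part~1, whose main obstacle is establishing real-zeroness (equivalently pure-imaginary-ness) of the $S_n$.

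\textbf{Part 2.} Using the fundamental Appell differentiation relation $\partial_\zeta S_n(\zeta) = n\, S_{n-1}(\zeta)$, integrate twice from $\zeta=0$, producing
\begin{equation*}
S_{2\nu}(\zeta) \;=\; S_{2\nu}(0) \;+\; \zeta\, S_{2\nu}'(0) \;+\; 2\nu(2\nu-1)\!\int_0^\zeta\!\!\int_0^{\eta}\! S_{2\nu-2}(\eta')\,d\eta'\,d\eta .
\end{equation*}
Parity of $S_n$ forces $S'_{2\nu}(0)=0$, and direct evaluation of the explicit formula (\ref{s2n}) gives $S_{2\nu}(0)=\binom{2\nu}{\nu}$. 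Interpreting $\partial^{-2}$ as the iterated antiderivative normalised by $2\nu(2\nu-1)$ with vanishing linear term yields the asserted recurrence with $C_{2\nu}=\binom{2\nu}{\nu}$. The identification with the stated Catalan constant follows from $\binom{2\nu}{\nu}=(1-w\partial_w)C(w)\big|_{[w^\nu]}\cdot(\text{coeff})$, which is an elementary check.

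\textbf{Part 1 (the heart of the argument).} Pass to the companion polynomials $P_n(\xi):=i^{-n} S_n(i\xi)$. A direct umbral computation (substituting $\partial_\zeta = -i\partial_\xi$ into $S_n=I_0(2\partial_\zeta)\zeta^n$ and using $I_0(ix)=J_0(x)$) gives the key identity
\begin{equation*}
P_n(\xi) \;=\; J_0(2\partial_\xi)\,\xi^n .
\end{equation*}
The claim that $S_n$ has $n$ purely imaginary simple zeroes symmetric about $0$ is then equivalent to $P_n$ having $n$ real simple zeroes symmetric about $0$. To establish this, I invoke the following classical Polya--Jensen theorem (in the Craven--Csordas formulation): \emph{if $\phi$ lies in the Laguerre--Polya class, then for every polynomial $p$ with only real zeroes the polynomial $\phi(D)p$ has only real zeroes.} Since $J_0$ is an entire even function of order~$1$ with only real zeroes, its Hadamard factorisation
\begin{equation*}
J_0(2t) \;=\; \prod_{k\geq 1}\!\Bigl(1-\tfrac{4t^2}{j_k^2}\Bigr)\qquad (\textstyle\sum j_k^{-2}<\infty)
\end{equation*}
places $J_0(2t)$ in the Laguerre--Polya class. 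The monomial $\xi^n$ trivially has only real zeroes, so $P_n=J_0(2\partial_\xi)\xi^n$ has only real zeroes. Symmetry about $0$ follows because $J_0(2t)$ is even, making $P_n$ even/odd according to the parity of $n$.

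\textbf{Simplicity and interlacing.} Simplicity is automatic from the Appell tower: if $P_n$ had a double zero at $\xi_0$, then $P_n'(\xi_0)=n P_{n-1}(\xi_0)=0$, and iterating we would obtain $P_0(\xi_0)=1=0$, a contradiction. Interlacing is then immediate from Rolle's theorem applied to $P_n$: the $n-1$ critical points of $P_n$ are exactly the $n-1$ simple real zeroes of $P_{n-1}$, and each lies strictly between consecutive zeroes of $P_n$.

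\textbf{Part 3.} With Part~1 established, the zeroes of $\widehat{B}_{12}(\nu)(y_0) = S_{2\nu}(\sqrt{y_0})$ correspond, via $y_0 = \zeta^2$, to the squares of the zeroes of $S_{2\nu}(\zeta)$. The $2\nu$ purely imaginary zeroes $\pm i\gamma_1,\dots,\pm i\gamma_\nu$ of $S_{2\nu}$ (with $\gamma_k>0$ distinct by simplicity) yield precisely the $\nu$ distinct negative real values $y_0=-\gamma_k^2$.

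\textbf{Main obstacle.} The only substantive issue is the applicability of the Polya--Jensen theorem, which rests on verifying that $J_0(2t)$ is in the Laguerre--Polya class. This is a classical fact but deserves explicit citation (or a brief Weierstrass-product argument and the convergence of $\sum j_k^{-2}$, which follows from the asymptotics $j_k\sim (k+\tfrac14)\pi$). Given the authors' stated interest in contrasting these polynomials with orthogonal polynomials (they emphasise that the standard three-term-recurrence/Jacobi-matrix mechanism fails here), I would include a brief remark observing that the Polya--Jensen perspective replaces bandedness of the recurrence matrix with the multiplicative structure of $J_0$ as the operative mechanism for interlacing.
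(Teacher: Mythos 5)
Your route is genuinely different from the paper's, and for the location of the zeroes it is arguably sharper. The paper checks the low cases directly and then argues asymptotically: it writes $S_{2n}(i\sqrt{y})$ as the constant-term contour integral $\frac{1}{2\pi i}\oint e^{n g(z)}\,dz/z$ with $g(z)=2\log(z+i\sqrt{y}+z^{-1})$, performs steepest descent at the critical points $z=\pm 1$, reads off a cosine factor with $2n$ sign changes for $n$ large (conceding that uniformity of the error in $y$ is only checked numerically), and then pushes simplicity and interlacing down to all lower indices by differentiating along the Appell ladder and applying Rolle. Your identity $i^{-n}S_n(i\xi)=J_0(2\partial_\xi)\,\xi^n$ is correct (it checks coefficientwise), $J_0(2t)=\prod_k\bigl(1-4t^2/j_k^2\bigr)$ with $\sum_k j_k^{-2}<\infty$ does lie in the Laguerre--P\'olya class, and the P\'olya--Jensen theorem then gives, for every $n$ at once and with no asymptotics, that all $n$ zeroes of $S_n$ lie on the imaginary axis and are symmetric about the origin; Parts 2 and 3 are fine as you state them (the $2\nu(2\nu-1)$ normalisation you flag is implicit in the paper as well).

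The gap is in simplicity, and it propagates into your interlacing step. From a double zero of $P_n$ at $\xi_0$ you get $P_{n-1}(\xi_0)=0$, but only to first order, so nothing forces $P_{n-2}(\xi_0)=0$: the iteration ending in ``$P_0(\xi_0)=1=0$'' stops after one step. The Appell relation alone cannot yield simplicity --- the family $p_n=\xi^n$ satisfies $p_n'=np_{n-1}$, its top member has an $n$-fold zero, and $p_0=1$ never vanishes, so no contradiction of this kind is available. Since the Laguerre--P\'olya theorem only delivers real-rootedness counted with multiplicity, the clause ``exactly $n$ zeroes, hence simple'' and the strict interlacing you derive from ``the $n-1$ simple real zeroes of $P_{n-1}$'' are not yet proved. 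The defect is repairable within your framework but needs an extra input: weak interlacing of $P_{n-1}$ and $P_n$ follows from Hermite--Kakeya--Obreschkoff, because $\alpha P_n+\beta P_{n-1}=J_0(2\partial_\xi)\bigl(\xi^{n-1}(\alpha\xi+\beta)\bigr)$ is real-rooted for all real $\alpha,\beta$; distinctness, however, still needs its own argument. The paper secures it through the $2n$ sign changes of its asymptotic cosine plus Rolle downward; a fully elementary alternative compatible with your setup is the constant-term representation $S_n(\zeta)=\frac{1}{\pi}\int_0^\pi(\zeta+2\cos\theta)^n\,d\theta$, which yields $S_n(\zeta)=(\zeta^2-4)^{n/2}P^{\mathrm{Leg}}_n\bigl(\zeta/\sqrt{\zeta^2-4}\bigr)$; on $\zeta=is$ the argument $s/\sqrt{s^2+4}$ runs bijectively over $(-1,1)$, so simplicity and interlacing are inherited from the Legendre polynomials. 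State explicitly which of these supplements you use.
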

\begin{proof}
In showing this we will make essential use of (\ref{genfcn}) and (\ref{diffrecur}). We first observe that since $S_1(\zeta) = \zeta$ this statement (and in fact the Lemma) are clearly true when $n = 1$.  Continuing, we have $\partial S_2 = 2 S_1 = 2\zeta$. Hence $S_2 =\zeta^2 + C_2$. But $C_2$ is determined from the constant term in our general expression, (\ref{s2n}), for $S_{2\nu}(\zeta)$ which in general is $C_{2\nu} = {2\nu \choose \nu}$ (which is also characterized by (\ref{FourierBessel})). In this case that gives $C_2 = 2$; i.e., $S_2(\zeta) = \zeta^2 + 2$, as we already knew and which is consistent with what we saw in the valence 3 case for $\widehat{B}_{12} (= y_0 + 2)$. Then, for $n = 2$, one has $\partial S_3 = 3 S_2 = 3\zeta^2 + 6$. Since $S_{2\nu + 1}$ is odd, one must have $C_{2\nu + 1} \equiv 0$ by (\ref{s2n+1}). Hence $S_3(\zeta) = \zeta^3 + 6\zeta$. Continuing one more step one finds $\partial S_4(\zeta) = 4 S_3(\zeta)$ so that $S_4(\zeta) = \zeta^4 + 12 \zeta^2 + C_4 = \zeta^4 + 12 \zeta^2 + 6$ which is consistent with what we saw in the valence 5 case for $\widehat{B}_{12} (= y_0^2 + 12 y_0 +6)$ which has the negative real roots, $-6 \pm \sqrt{30}$, so that the four roots of  $S_4(\zeta)$ are indeed pure imaginary and symmetric about zero. 

To establish the proposition in general we consider $S_{2 \nu}$ asymptotically for large $\nu$.
We make use of the forward representations, (\ref{sys1}) and (\ref{s2nu}), which together imply
\begin{eqnarray*}
S_{2n}(i\sqrt{y}) &=& [z^0] (z + i \sqrt{y} + z^{-1})^{2n} \\
&=& \sum_{k = 0}^{n} {2n \choose 2k, n - k, n - k} (-y)^{k}
\end{eqnarray*}
by setting $h_0 = i \sqrt{y}$ and $\sqrt{f_0} = 1$. The representation in the first line above may be expressed as a conotur integral
\begin{eqnarray*}
S_{2n}(i\sqrt{y}) &=& \frac1{2 \pi i} \oint_C \exp(n g(z)) \frac{dz}{z}\\
g(z) &=& 2 \log(z + i \sqrt{y} + z^{-1}).
\end{eqnarray*}
where we take the conotur $C$ to be the unit circle. We are considering this representation for $0 < y < \infty$. It is straightforward to check that the singularities of $g$ lie outside the unit circle for $y$ in this range. It is also straightforward to check that the critical points of $g$ are located at $z = \pm 1$. We will use this contour representation and steepest descent to study the structure of $S_{2n}$ for large $n$. It is evident that the magnitude of the integrand along the unit circle but away from a small neiighborhood of the critical points is exponentially suppressed, in $n$, in comparison to its values on this circle in the neighborhoods of the critical points. So up to exponentially small errors for large $n$ it suffices consider the integral locally.For initial simplicity we Taylor expand $g(z)$ near $z = 1$:
\begin{eqnarray} \label{adtraj}
g(z) - 2 \log(2 + i \sqrt{y}) &=& \frac2{2 + i \sqrt{y}} (z-1)^2 - \frac2{2 + i \sqrt{y}} (z-1)^3 + \mathcal{O}(|z - 1|^4).
\end{eqnarray}
The steepest descent/ascent trajectories emerging from $z=1$ are then locally given by
\begin{eqnarray*}
0  &=& \Im \left( g(1 + w) - 2\log(2 + i \sqrt{y})\right)\\
&=& \Im \left( \frac{2 w^2}{2 + i \sqrt{y}} (1 - w + \mathcal{O}(|w|^2)) \right)
\end{eqnarray*}
where $w = z - 1$. Setting $w = \alpha + i \beta$, this becomes, for $\alpha \ne 0$
\begin{eqnarray*}
\alpha^2 \left( \left(\frac{\beta}{\alpha} \right)^2 + \frac4{\sqrt{y}} \frac{\beta}{\alpha} - 1 \right) 
&=& 0
\end{eqnarray*}
whose 4 descent/ascent branches therefore emerge at $\pm$ the angles, $\theta$ determined by 
\begin{eqnarray*} 
\tan \theta &=& \frac{\beta}{\alpha} \\
&=& - \frac2{\sqrt{y}} \left( 1 \mp \sqrt{1 + y/4}\right).
\end{eqnarray*}
The descent trajectories  correspond to the choice of the plus sign on the radical. Along these trajectories the values of (\ref{adtraj}) are real and, with a bit of calculation, are locally seen to be given by
\begin{eqnarray} \label{drate}
-\frac{8 \alpha^2}{y}\left( 1 + \sqrt{1 + y/4} \right) + \mathcal{O}(\alpha^3).
\end{eqnarray}
A completely similar analysis near $z = -1$ leads to steepest descent/ascent directions given by
\begin{eqnarray*} 
\tan \theta &=& \frac{\beta}{\alpha} \\
&=&  \frac2{\sqrt{y}} \left( 1 \pm \sqrt{1 + y/4}\right)
\end{eqnarray*}
with the descent directions again being given by the choice of the positive sign on the radical.
The corresponding local decay rate is again given by (\ref{drate}).
One may now deform the original countour $C$ in a vicinity of the critical points by 
re-connecting the unit circle nearby to two incremental descent branches near each critical point. By Cauchy's Theorem this leaves the values of the integral, for each $y$ unchanged. Restricitng attention again to just the neighborhood of $z=1$, Taking $w = \frac{2\alpha}{\sqrt{y}} \left( 1 + \sqrt{1 + y/4}\right) (1 + \mathcal{O}(\alpha^2))$, the integral along the incremental steepest descent branches is apporximated as
\begin{eqnarray*}
S_{2n}(i\sqrt{y}) &=& \frac{(2+ i \sqrt{y})^{2n}}{2 \pi} 
\frac2{\sqrt{y}} \left( 1 + \sqrt{1 + y/4} \right)  \int_{-A}^A \exp\left( -\frac{8\alpha^2 n}{x} \left( 1 + \sqrt{1 + y/4} \right) + n\mathcal{O}(|\alpha|^3)\right) (1 + \mathcal{O}(\alpha^2)) d\alpha
\end{eqnarray*} 
the implicit constants in the error terms depend on $y$. We expect that by choosing $A$ appropriately in relation to $y$, these error terms can be taken to be uniform in $y$. (This has been checked numerically in special cases but not proven in general.)

This thus becomes an integral of Laplace type and so may be asymptotically evaluated as
\begin{eqnarray*}
 \frac{(2+ i \sqrt{y})^{2n}}{2 \sqrt{2\pi n}} \sqrt{ 1 + \sqrt{1 + y/4}} \left(1 + \mathcal{O}\left( \frac1{\sqrt{n}}\right)\right)
\end{eqnarray*}
The evaluation near $z = -1$ is entirely similar and leads to the complex conjugate of the above contribution. This, finally, yields
\begin{eqnarray*}
S_{2n}(i\sqrt{|y|}) &=& \frac{\Re (2+ i \sqrt{|y|})^{2n}}{\sqrt{2\pi n}} \sqrt{ 1 + \sqrt{1 + |y|/4}} \left(1 + \mathcal{O}\left( \frac1{\sqrt{n}}\right)\right)\\
&=& \frac{(4 + |y|)^{n + 1/4} \sqrt{ 1 + \frac2{\sqrt{4 + |y|}}}}{\sqrt{4\pi n}} 
\cos\left(2n \tan^{-1} \left( \frac{\sqrt{|y|}}{2}\right)\right) \left(1 + \mathcal{O}\left( \frac1{\sqrt{n}}\right)\right)
\end{eqnarray*}
where, by symmetry considerations for $S_{2n}$, we are able to replace $y$ by $|y|$. 
The leading order expression has $2n$ nodes (zeroes) determined by
\begin{eqnarray*}
\sqrt{|y|} &=& 2 \tan\left( \frac{(2k-1) \pi}{4 n} \right), \,\,\,\,\, k = 1, \dots n.
\end{eqnarray*}
Equivalently there are $2n-1$ {\it zones} between these nodes where the graph of this leading term alternates between being positive and negative. For $n$ sufficiently large, the higher order corrections will not modify this last characterization. Hence, $S_{2n}(\zeta)$ has $2n$ zeroes, symmetrically distributed about the origin along the imaginary $\zeta$-axis, for $n$ suficiently large. By (\ref{diffrecur}), $S_{2n-1}(\zeta) = n \partial S_{2n}(\zeta)$ which consequently has $2n-1$ zerow
interlacing those of  $S_{2n}(\zeta)$. Continuing this differentiation, one sees that the proposiiton holds in general since the original $n$ may be taken aribitrarily large.
\end{proof}

\begin{rem}
It is interseting to note here that the critical points in the above steepest descent argument are precisely the Riemann invariants (\ref{char}) evaluated along the negative $y_0$-axis.
\end{rem}

\section{Proof of Theorem \ref{genthm}} \label{GenThm}
\renewcommand{\theequation}{C.\arabic{equation}}
\begin{proof} Part (a) follows directly from the arguments given in the proof of Corollary \ref{hypcurve} and (\ref{spectralcurve}) which shows that the negative zeros of $\xi^2$ along the curve $\mathcal{C}$ have multiplicity $2 \nu -1$ while the poles have multiplicty $2 \nu +1$.

We will prove part (b) in stages. As was seen in the proof of Corollary \ref{hypcurve}, $\xi^2$ is monotone increasing for $y_0 < 0$ as $y_0$ increases. By (\ref{spectralcurve}) and Corollary \ref{hypcurve}, $\xi^2$ cannot vanish; hence, the first place it could vanish is when $y_0$ reaches 0 and, again by (\ref{spectralcurve}) it does vanish there. From the formulas in Proposition
\ref{baseprop} one may explicitly evaluate (\ref{crit2}) at the origin to find that
\begin{eqnarray*}
\frac{d\xi^2}{dy_0}|_{(\xi^2, y_0) = (0,0)} = \left((2\nu+1) {2\nu \choose \nu}\right)^{-2} > 0.
\end{eqnarray*}
Hence the curve must continue to increase monotonically until the smallest positive value of $y_0$ at which  (\ref{crit2}) vanishes. By examination of  (\ref{crit2}) this happens for the smallest positive value of $y_0$ at which either $z_0$ vanishes or $\frac{\widehat{D}}{(4 - y_0)}$ vanishes. But, by (\ref{zzero}), $z_0$ vanishes if and only if $\widehat{B}_{12}$ vaishes and we have already seen that this polynomial only vanishes for negative values of $y_0$. Hence the first turning point occurs at
the smallest positive zero of $\widehat{D}/(4 - y_0)= - (\widehat{d}_+ /  \widehat{r}_+) \cdot (\widehat{d}_- /  \widehat{r}_-)$.

As discussed in section \ref{chargeom}, we expect the turning points on the right to be among the roots of the caustic $\widehat{d}_- /  r_+$ (which is the envelope of the $r_+$-characteristics) and the turning points on the left to be among the roots of the caustic $\widehat{d}_+ /  r_-$ (which is the envelope of the $r_-$-characteristics). We confirm this by observing that from (\ref{RinvtToda3}) one has that the equation for the $r_+$-characteristics takes the form
\begin{eqnarray}
\xi\partial_\xi r_+ &=& \frac{(1 - 1/z_0) + \sqrt{y_0} }{\widehat{d}_-}  r_+ \\ \nonumber
\frac{d}{d\xi^2} r_+ &=& \frac12 \frac{(1 - 1/z_0) + \sqrt{y_0} }{\widehat{d}_-}  r_+ \\
                               &=& \frac12 \sqrt{y_0}  \frac{ \widehat{B}_{11}/\widehat{B}_{12} + 1}{\widehat{d}_-}  r_+ \,,
\end{eqnarray}
where $\widehat{B}_{11}/\widehat{B}_{12} + 1$ is manifestly positve for all positive values of $y_0$. Hence the first turning point on the right occurs at the zero of $\widehat{d}_-/r_+$ having the smallest $y_0$ coordinate.

To see that this caustic is in fact a {\it simple} turning point (as opposed to an inflection or higher order point) consider the second derivative of $\xi^2$ evaluated at the caustic:
\begin{eqnarray*}
\frac{d^2 \xi^2}{d y_0^2}\Big|_{\mbox{caustic}}  &=& - z_0 \xi^2 \frac{\widehat{d}_+}{y_0 (y_0 - 4)} \frac{d}{d y_0} \widehat{d}_- \Big|_{\mbox{caustic}}.
\end{eqnarray*}
Since $z_0 = \widehat{B}_{12} /(\widehat{B}_{12} - y_0^{1/2}\widehat{B}_{11})$ and $y_0 < y_0^*$ (since the caustic occurs for $\xi^2 > 0$ and $y_0$ reaches $y_0^*$ only when $\xi^2 = 0$) it follows that $z_0 >0$ at the caustic.  Both $\xi^2$ and $y_0$ are positive there as well. We may further assert that $y_0 < 1$ at this caustic. To see this, observe that 
$\widehat{d}_- = 0$ may be expressed as
\begin{eqnarray*}
j \left(\frac{1}{z_0}\right) &=& (j-2) (\sqrt{y_0} + 1).
\end{eqnarray*}
As we have just seen, $z_0 >0$; moreover, since $1/z_0 = 1 - y_0^{1/2}\widehat{B}_{11}/\widehat{B}_{12}$ 
it follows that the LHS of this equation is strictly less than $j$, while, if $y_0 > 1$, then the RHS is strictly greater than $2j -4$. But  this can only hold if $j < 4$. The theorem has already been established in the case of $j = 3$, so we may therefore assume that $y_0 < 1$.  Hence the $y_0$ coordinate of the caustic is less than 4.  We further observe from (\ref{red2}) that 
\begin{eqnarray} \label{crossover}
\widehat{d}_+ - \widehat{d}_- &=&  2 (j-2) \sqrt{y_0}. 
\end{eqnarray}
Since $y_0$ is increasing from 0,  with $\widehat{d}_+ = \widehat{d}_- = 2$ at $(\xi^2, y_0) = (0,0)$, it follows that $\widehat{d}_+$ is positive at the caustic.

It remains to study the $y_0$-derivative of $\widehat{d}_-$ at the caustic. From (\ref{crit5}) and (\ref{crossover})), one sees that
\begin{eqnarray}
\frac{d}{d y_0} \widehat{d}_- &=& \frac{d}{d y_0} \left(j \frac1{z_0} - (j-2) \sqrt{y_0}\right)\\ \nonumber
&=& \frac{j}{8 y_0} \left[\left(1 - \frac1{z_0}\right)\left(\frac{\widehat{d}_+}{\sqrt{y_0} - 2} - \frac{\widehat{d}_-}{\sqrt{y_0} + 2}\right) + \sqrt{y_0} \left(\frac{\widehat{d}_+}{\sqrt{y_0} - 2} + \frac{\widehat{d}_-}{\sqrt{y_0} + 2}\right)\right] - \frac{j-2}{2 \sqrt{y_0}}\\ \label{dminuscaustic}
\frac{d}{d y_0} \widehat{d}_- \Big|_{\mbox{caustic}} &=& - \frac{j}{8 y_0} \left[\frac{1 - \frac1{z_0} + \sqrt{y_0}}{2 - \sqrt{y_0}}\cdot 2(j-2)\sqrt{y_0}\right] - \frac{j-2}{2 \sqrt{y_0}}\\ \nonumber
&<& 0
\end{eqnarray}
where in the last line we have twice used the fact that $\widehat{d}_- = 0$ at the caustic. Putting this all together one finally sees that 
\begin{eqnarray*}
\frac{d^2 \xi^2}{d y_0^2}\Big|_{\mbox{caustic}}  &<& 0 
\end{eqnarray*}
from which it follows that this first caustic is indeed a turning point on the right (i.e., a maximum of $\xi^2$ with respect to $y_0$). 

We next claim that as $y_0$ increases past this first turning point, $\widehat{d}_-$ cannot vanish again as long as $\xi^2$ remains positive. For if $\widehat{d}_- = 0$ then, by (\ref{crossover}), $\widehat{d}_+ > 0$ and so (\ref{dminuscaustic}) is again negative for the same reasons it was negative at the first turning point. But that means that this point must again be a maximum of $\xi^2$ as a function of $y_0$. But then one would have two consecutive maxima which is not possible for a smooth curve. Consequently $\widehat{d}_-$ decreases monotonically (and so does $\xi^2$) at least until $\xi^2$ vanishes (i.e., $\mathcal{C}$ crosses the positive $y_0$-axis at $y_0^*$) or $\widehat{d}_+$ vanishes. 

But we also claim that $\widehat{d}_+$ cannot vanish as long as $\xi^2$ remains positive. For, if there were such a point, then at it one would have
\begin{eqnarray*}
\frac{d}{d y_0} \left( 1 - \frac{y_0^{1/2} \widehat{B}_{11}}{\widehat{B}_{12}}\right) &=& \frac{d}{d y_0} \left( \frac1{z_0} \right)\\
&=& - \frac{j}{8 y_0} \left[\frac{1 - \frac1{z_0} - \sqrt{y_0}}{2 + \sqrt{y_0}}\cdot 2(j-2)\sqrt{y_0}\right]\\
&=& - \frac{j-2}{2 \sqrt{y_0}} \left[\frac{1 -  \sqrt{y_0}}{2 + \sqrt{y_0}}\right]\\
&<& 0
\end{eqnarray*}
where the evaluation of the derivative is similar to the derivation of (\ref{dminuscaustic}) and the final inequality holds because $\xi^2 > 0$ and $\widehat{d}_+ = 0$ together imply that $\sqrt{y_0} < 1$. Applying this observation to  (\ref{curve}) in the form
\begin{eqnarray*}
\xi^2 &=& \frac1{(2\nu+1)^2} \frac{y_0}{\widehat{B}_{12}^2}\left(1 - \frac{y_0^{1/2} \widehat{B}_{11}}{\widehat{B}_{12}} \right)^{2\nu-1}
\end{eqnarray*}
one sees that at such a point $\xi^2$ must be monotonically decreasing as a function of $y_0$ since 
$\widehat{B}_{12}^2$ grows faster than linearly with respect to $y_0$. But this decay is inconsistent with the formation of a caustic which would necessarily arise if $\widehat{d}_+$ were to vanish. This contradiction establishes our claim.

So now one sees that after turning once on the right, $\xi^2$ decreases until it reaches the inflection point at $y_0^*$ along the $y_0$-axis.

As $\xi^2$ continues to decrease through negative values for $y_0 > y_0^*$ we note that combining (\ref{spectralcurve}) with (\ref{zzero}) shows that $z_0$ changes sign from positive to negative as soon as $\xi^2$ does. Hence our earlier argument shows once again that $\widehat{d}_-$ cannot vanish a second time before before $\widehat{d}_+$ does. However for some value of $y_0 > y_0*$, $\widehat{d}_+$ must vanish since $\xi^2 \to 0$ as $y_0 \to +\infty$ and so must have at least one minimum as a function of increasing $y_0$ (which also  corresponds to the formation of a caustic along the envelope of $r_-$ characteristics). This point will necessarily have negative $\xi^2$-coordinate. Once again one may examine the second derivative of $\xi^2$ at this caustic to find that
\begin{eqnarray*}
\frac{d^2 \xi^2}{d y_0^2}\Big|_{\mbox{caustic}}  &=& z_0 \xi^2 \frac{\widehat{d}_-}{y_0 (4 - y_0)} \frac{d}{d y_0} \widehat{d}_+ \Big|_{\mbox{caustic}}.
\end{eqnarray*}
Again, combining (\ref{spectralcurve}) with (\ref{zzero}), one sees that $z_0 \xi^2/y_0 > 0$. Paralleling the calculations for (\ref{dminuscaustic}) one has here that
\begin{eqnarray*}
\frac{d}{d y_0} \widehat{d}_+ \Big|_{\mbox{caustic}} &=& - \frac{j}{8 y_0} \left[\frac{\sqrt{y_0} - (1 - \frac1{z_0}) }{2 + \sqrt{y_0}}\cdot 2(j-2)\sqrt{y_0}\right] +
\frac{j-2}{2 \sqrt{y_0}}\\
&=& - \frac{j-2}{2 \sqrt{y_0}} \left[\frac{1 - \sqrt{y_0}}{2 + \sqrt{y_0}}\right] + \frac{j-2}{2 \sqrt{y_0}} \\
&=&  \frac{j-2}{2 \sqrt{y_0}} \frac{1 + 2 \sqrt{y_0}}{2 + \sqrt{y_0}}\\
&>& 0.
\end{eqnarray*}
By (\ref{crossover}) one also has $\widehat{d}_- < 0$ at the caustic. Finally we determine the sign of  $4 - y_0$. Since $\widehat{d}_+$ vanishes at the caustic by definition and, by the previous calculation, it is increasing as it crosses the caustic, it follows that both $\widehat{d}_+$ is negative just below the caustic and, by the previous sentence, so is $\widehat{d}_-$. Hence, $\widehat{D} > 0$ just below the caustic. Now considering all these inqualities in the context of (\ref{crit2}) whose LHS is negative just below the caustic, it follows that one must have $y_0 > 4$ just below the caustic, and hence at the caustic as well. So, 
$4 - y_0 < 0$ there. Putting all this together one concludes that  the second derivative of the curve at this caustic is positive and therefore this caustic is indeed a second turning point which is a minimum of $\xi^2$ with respect to $y_0$. 

For $y_0$ larger than the $y_0$-coordinate of this second turning point, the  $y_0$-coordinate of the turning point continues to be greater than 4. As long as $\xi^2$ remains negative there can be no further turning points since, by (\ref{crossover}), for this to happen $\widehat{d}_+$ would need to vanish first thus producing two consecutive minima for $\xi^2$ which is not possible for a smooth curve. 

Now, since $\widehat{B}_{12} - y_0^{1/2}\widehat{B}_{11}$ cannot vanish further for $y_0 > y_0^*$ the curve cannot cross the $y_0$-axis again, but must asymptote to it as $y_0 \to + \infty$. Similarly, for $y_0$ less than the most negative horizontal asymptote,  $\widehat{B}_{12} - y_0^{1/2} \widehat{B}_{11}$ cannot vanish further and the curve must asymptote to the $y_0$-axis as $y_0 \to - \infty$.

Finally, turning to part (c)  we examine more directly the zeroes of the discriminant $A_{11}^2 - f_0 A_{12}^2$ which, we have been referring to as caustics. As we have seen, the discriminant has two natural factors, $\frac{\widehat{d}_\pm}{2 \mp \sqrt{y_0} }$ where
\begin{eqnarray} \label{dmp}
\widehat{d}_\mp &=& 2- y_0^{1/2} \left(\frac{ j\widehat{B}_{11}}{\widehat{B}_{12}} \pm (j-2)\right)
\end{eqnarray}
as a function of only $\sqrt{y_0}$. Since $\widehat{B}_{12}$ is nonvanishing for positive $y_0$, it follows that the zeroes of the discriminant coincide with the zeroes of 
\begin{eqnarray} \label{PImp}
\Pi_{\mp}(\sqrt{y_0}) &=& \frac{2\widehat{B}_{12} -j y_0^{1/2} \widehat{B}_{11} \mp (j-2) y_0^{1/2} \widehat{B}_{12}}{2 \pm \sqrt{y_0}}.
\end{eqnarray}
which are both polynomials of degree $2\nu$ in $\sqrt{y_0}$ since the denominator divides the numerator as will be deomonstrated shortly. Moreover, it is immediate from this representation that
\begin{eqnarray} \label{PiSymm}
\Pi_{\mp}(-\sqrt{y_0}) &=& \Pi_{\pm}(\sqrt{y_0})
\end{eqnarray}
so that $\Pi_{-}$ may be gotten from $\Pi_{+}$ by "flipping" the signs on the terms which are odd in $\sqrt{y_0}$. It follows that if $\pm \sqrt{y_0}$ is a zero of $\Pi_{+}$ then $\mp \sqrt{y_0}$ is, respectively,  a zero of $\Pi_{-}$ and furthermore the squares, $y_0$, of these zeroes yield the full set of $2 \nu$ zeroes, in $y_0$ of  $\frac{\widehat{D}}{4 - y_0}$. 
\medskip

We complete the proof by now showing that  $\Pi_{\mp}$ is indeed a polynomial in $\sqrt{y_0}$ as opposed to being rational with a pole at $\sqrt{y_0} = \pm 2$ respectively. Observe that
\begin{eqnarray*}
\widehat{B}_{12}|_{\sqrt{y_0} = \mp 2} &=& [y_0^0] \left(y_0 + 2 + y_0^{-1}\right)^{2\nu}\\
&=& [y_0^0] \left(y_0 ^{1/2}+ y_0^{-1/2}\right)^{4\nu}\\
&=& {4\nu \choose 2\nu}.\\
\widehat{B}_{11}|_{\sqrt{y_0} = \mp 2} &=& \mp [y_0^{-1}] \left(y_0 + 2 + y_0^{-1}\right)^{2\nu}\\
&=& \mp [y_0^{-1}] \left(y_0 ^{1/2}+ y_0^{-1/2}\right)^{4\nu}\\
&=& \mp {4\nu \choose 2\nu + 1}.
\end{eqnarray*}
It follows that
\begin{eqnarray} \label{dmp2}
\widehat{d}_\mp \Big|_{y_0^{1/2} = \mp 2}  &=& \frac2{\widehat{B}_{12}}\left(2\nu \widehat{B}_{12} \pm (2\nu +1) \widehat{B}_{11} \right)\Big|_{y_0^{1/2} = \mp 2}\\ \nonumber
&=& \frac2{{4\nu \choose 2\nu}}\left(2\nu {4\nu \choose 2\nu} - (2\nu +1) {4\nu \choose 2\nu + 1} \right) \\ \nonumber
&=& 0,
\end{eqnarray}
so that the pole at $\sqrt{y_0} = \mp 2$ is cancelled respectively by a corresponding zero of $\widehat{d}_\mp$. We observe that $\frac{y_0 ^{1/2}+ y_0^{-1/2}}{2}$ is the Joukowski transform which conformally maps the exterior and interior of the unit disc in the $y_0 ^{1/2}$-plane onto the complex plane slit along the interval $[-1, 1]$. From the viewpoint of characteristic geometry, $y_0 ^{1/2} = \pm 2$ correpsonds to one or the other of the characteristic directions vanishiing, which, spectrally, means that the asymptotic support of the spectrum has become either entirely positive or entirely negative: the index of the asymptotic matrix becomes extremal.

We may now also observe that
\begin{eqnarray*}
\frac1{z_0}\Big|_{y_0^{1/2} = \pm 2} &=& 1 - \frac{y_0^{1/2}\widehat{B}_{11}}{\widehat{B}_{12}}\Big|_{y_0^{1/2} = \pm 2}\\
&=& 1 - 2 \frac{{4\nu \choose 2\nu + 1}}{{4\nu \choose 2\nu}}\\
&=& 1 - \frac{4 \nu}{2\nu + 1} = - \frac{2 \nu -1}{2 \nu +1} < 0.
\end{eqnarray*}
(Figure \ref{simplezero} (b) plots a number of these values.) 

We also note that on $\mathcal{C}$, when $y_0 = 4$,
\begin{eqnarray} \label{xisqrd}
\xi^2\Big|_{y_0 = 4} &=& \frac4{(2\nu + 1)^2 \widehat{B}_{12}^2}\left(1 - \frac{y_0^{1/2}\widehat{B}_{11}}{\widehat{B}_{12}}\right)^{2\nu-1}\Big|_{y_0 = 4}\\ \nonumber
&=& \frac4{(2\nu + 1)^2 {4\nu \choose 2\nu}^2}\left(-\frac{2 \nu -1}{2 \nu + 1}\right)^{2\nu-1}\\ \nonumber
&=& - \frac{4 (2\nu)!^4}{{(4\nu) !}^2}\left(\frac{(2 \nu -1)^{2\nu-1}}{(2 \nu + 1)^{2\nu+1}}\right)\\ \nonumber
&<& 0,
\end{eqnarray}
consistent with our earlier observation that the $y_0$ coordinate of the first caustic is less than $4$.

In addition, one may observe from (\ref{dmp}) that

\begin{eqnarray} \label{dmp3}
\frac{d}{dy_0} \widehat{d}_\mp &=& -(2\nu+1) \frac{d}{dy_0} \left(\frac{y_0^{1/2} \widehat{B}_{11}}{\widehat{B}_{12}}\right) \mp \frac{2 \nu - 1}{2 y_0^{1/2}}\\
&=&-\frac{(2\nu+1) (2\nu+2)W\left( \widehat{B}_{12}, \widehat{\mathcal{F}}_1\right)}{\widehat{B}_{12}^2} \mp \frac{2\nu-1}{ 2 y_0^{1/2}}\\  \label{dmp4}
\frac{d}{dy_0} \widehat{d}_\mp\Big|_{{\sqrt{y_0} = \mp 2}} &=& -\frac14 \frac{2\nu -1}{4 \nu - 1}
\end{eqnarray}
where $W$ denotes the Wronskian. In the second line we have made use of the identity (\ref{B-identity})
and in the third line the evaluations
\begin{eqnarray*}
\widehat{\mathcal{F}}_1\Big|_{{\sqrt{y_0} = \mp 2}} &=& [y_0^1] \left(y_0 + 2 + y_0^{-1}\right)^{2\nu+1}\\
&=& \frac1{2 \nu + 1} {4 \nu +2 \choose 2 \nu}\\
\partial_{y_0} \widehat{\mathcal{F}}_1\Big|_{{\sqrt{y_0} = \mp 2}} &=&  \left[\frac1{2 y_0^{1/2}} \widehat{B}_{11}\right]_{{\sqrt{y_0} = \mp 2}}\\
&=& \frac1{4}{4\nu \choose 2\nu+1}\\
\partial_{y_0} \widehat{B}_{12}\Big|_{{\sqrt{y_0} = \mp 2}} &=& [y_0^0] \left(y_0 + 2 + y_0^{-1}\right)^{2\nu - 1}\\
&=&\frac{\nu}{2}{4\nu -2\choose 2\nu-1}.
\end{eqnarray*}
Figure \ref{simplezero} (a)  illustrates (\ref{dmp4}).
It follows that the vanishing of (\ref{dmp2}) is simple and $\frac{\widehat{D}}{y_0 - 4}$ limits to a finite, non-zero value as $y_0 \to 4$. 
\end{proof}
\medskip

\begin{figure}[h] 
\begin{center}
\resizebox{2in}{!}{\includegraphics{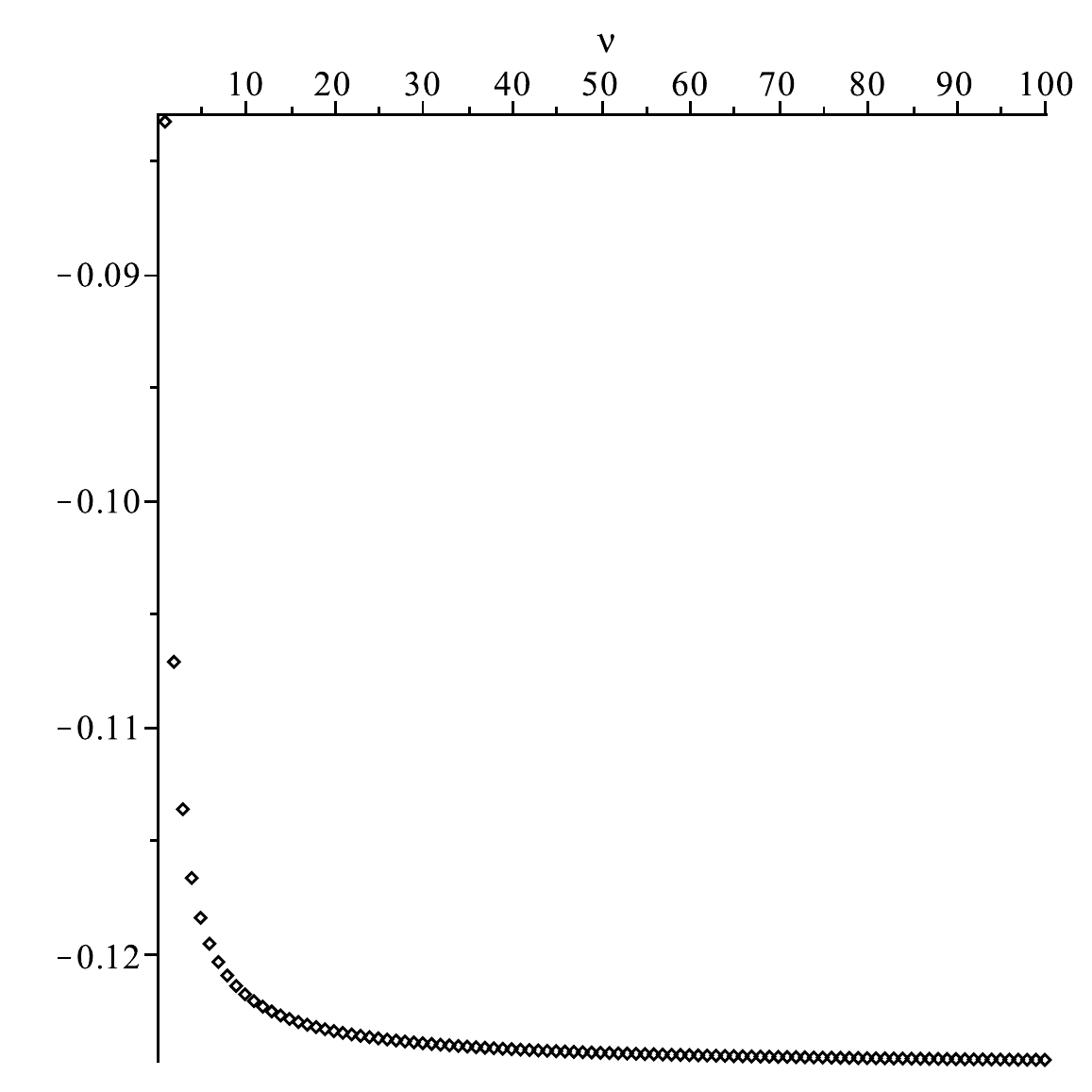}}
\resizebox{2in}{!}{\includegraphics{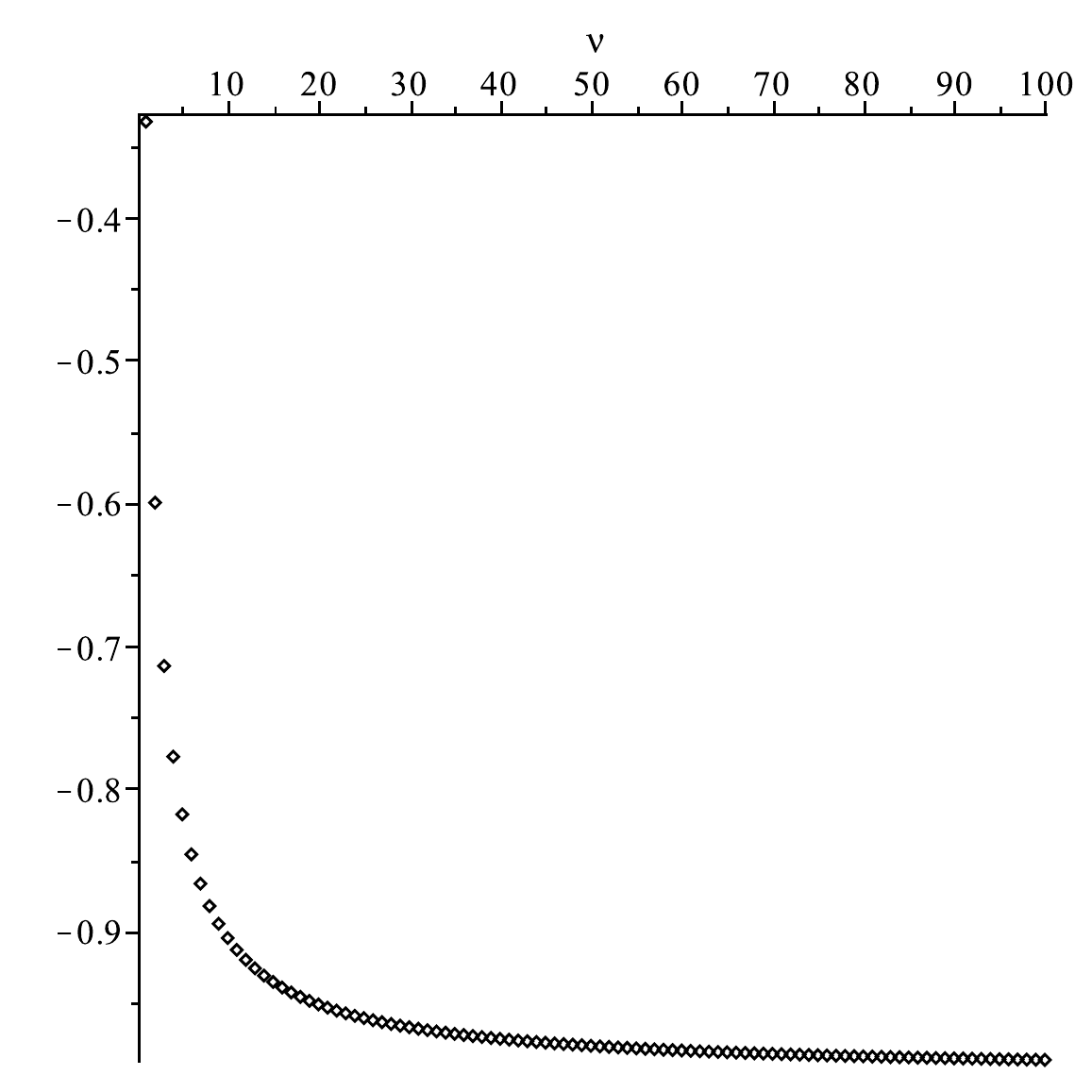}}
\end{center}
\caption{\label{simplezero}  a) $\frac{d}{dy_0} \widehat{d}_\mp \Big|_{y_0^{1/2}=\mp 2}$ \,\,\,\, 
b) $1-\frac{y_0^{1/2}\widehat{B}_{11}}{\widehat{B}_{12}}\Big|_{y_0 = 4}$ }
\end{figure}

\section{Some Additional Observations}

\subsection{Odd Valence Painlev\'e Transcendents} \label{POne}
In Proposition \ref{p872} we established a recurrence relation for the dominant coefficient of $f_g$ near the critical turning point $\xi_c$. This is the analogue of a recursion formula for the even valence case derived in \cite{Er09} that was used to establish a fundamental connection between the asymptotics of recursion coefficients near the singularity and  Painlev\'e I transcendents. 
We conjecture an extension of this result to the odd valence case based on the analysis carried out in the proof of  Proposition \ref{p872}:
\begin{conj} \label{c874}
Define $a=(-8\sqrt{6}C_1 C_2)^{2/5}$ and $c=12C_1/(z_c \sqrt{a})$.
In the double scaling limit where
\begin{align}
t_N(\lambda) =t_c +a \lambda N^{-4/5}
\end{align}
the recurrence coefficients have the following asymptotics
\begin{align}
b^{2}_N(t_N(\lambda)) \sim z_c + c^{-1} N^{-2/5} Y(\lambda),
\end{align}
where $Y(\lambda)$ is a particular solution to an equation in the Painlev\'e I hierarchy.
\end{conj}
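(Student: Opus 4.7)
\subsection*{Proof proposal for Conjecture \ref{c874}}

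The plan is to reduce the conjecture to a direct double--scaling analysis of the continuum string equations \eqref{ContString} in a neighborhood of the real turning point $(\xi_c,y_{0c})$, in close parallel to the even--valence argument carried out in Appendix \ref{evenvalence} and \cite{Er09,CLRM}. First I would substitute the ansatz
\begin{equation*}
t=t_c+aN^{-4/5}\lambda,\qquad b^2_{n,N}=z_c+c^{-1}N^{-2/5}Y(\lambda)+\sum_{k\geq 2}c_k N^{-2k/5}Y_k(\lambda),
\end{equation*}
together with the corresponding ansatz for $a_{n,N}$ dictated by Corollary~\ref{RIEMINVTS} (since in the odd--valence case the Riemann invariants $r_\pm=h_0\pm 2\sqrt{f_0}$ are the active endpoints, the $a$--scaling is fixed by demanding that only the vanishing invariant $d_-/r_+$ be resolved at the critical point). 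The string equations \eqref{stringpath1}--\eqref{stringpath2} are then expanded in powers of $N^{-2/5}$. At leading non--trivial order the coefficient balance collapses, because of the simplicity of the turning point proved in Theorem~\ref{genthm}(b) and equation~\eqref{dofbn}, to a \emph{scalar} ODE for $Y(\lambda)$.

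The second step is to identify this ODE with a member of the Painlev\'e~I hierarchy. Here I would exploit the universal conservation--law form \eqref{UnivConsLaw}: its prolongation to the $1$--jet, restricted to the spectral curve $\mathcal{C}$ and expanded near the branch point where $\Pi_-$ vanishes simply, has precisely the structure analyzed in Section~3 of \cite{CEHL93}. The local normal form near a simple fold of a hyperbolic conservation law, combined with the unwinding identity (Proposition~\ref{prop05}) which recursively eliminates all higher string polynomials, forces the local equation to be of the form
\begin{equation*}
Y''(\lambda)=Y(\lambda)^2+\lambda+\text{(lower order in the hierarchy ladder)},
\end{equation*}
where the additional ladder terms are organized by the valence $j$ exactly as in the even case. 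The constants $a$ and $c$ in the statement are then forced by matching: with the specific choices $a=(-8\sqrt{6}C_1C_2)^{2/5}$ and $c=12C_1/(z_c\sqrt{a})$, the quadratic nonlinearity $Y^2$ and the independent variable $\lambda$ appear with unit coefficients.

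The third step is to pin down \emph{which} solution $Y$ of this hierarchy equation is selected. The recurrence \eqref{rec9023} for $\zeta_g$ is, by construction, the formal coefficient recurrence for the asymptotic expansion of $Y(\lambda)$ at $\lambda\to +\infty$: reading off the $\zeta_g$ from Proposition~\ref{p872} and comparing with the formal series solution of the hierarchy equation gives a term--by--term match. This identifies $Y$ as the tritronqu\'ee solution singled out in \cite{CLRM}, whose defining property is precisely the $5g$--scaling of the coefficients in its asymptotic series and their sign pattern (positivity, established at the end of the proof of Proposition~\ref{p872}).

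The main obstacle, and the reason I state this as a conjecture rather than a theorem, is the rigorous justification of the double--scaling limit itself. In the even--valence case the Riemann--Hilbert steepest descent analysis of \cite{DKMVZ,Er09} can be adapted because the equilibrium measure is supported on a single interval with one moving endpoint; for $j$ odd one must perform the Deift--Zhou analysis on the non--self--adjoint contour $\Gamma$ of Section~\ref{OddValentExt}, controlling both endpoints $r_\pm$ of the (complex) equilibrium measure simultaneously through the $2\times 2$ conservation law structure. The uniform error estimates in Section~\ref{workhorse} are not quite strong enough in the transitional $N^{-4/5}$ window, and extending them is where the substantive new work lies; the algebraic identification of the hierarchy equation and its tritronqu\'ee solution, by contrast, follows from the machinery already developed in Sections~\ref{Hopf-String}--\ref{finestruc} of this paper together with Proposition~\ref{p872}.
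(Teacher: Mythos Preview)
The paper does not prove this statement at all: it is explicitly labelled a \emph{conjecture}, and the only supporting content the paper offers is (i) the observation that the recurrence of Proposition~\ref{p872} is the odd--valence analogue of the even--valence recursion from \cite{Er09,CLRM}, which in that setting was linked to Painlev\'e~I, and (ii) a remark that for $j=3$ the conjectured scalings have been checked against the known double--scaling result of Bleher--Dea\~no \cite{BD16} and found to agree. There is no derivation of the local ODE, no identification of the tritronqu\'ee solution, and no Riemann--Hilbert analysis in the odd case beyond what is cited from \cite{BD10}.

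Your outline therefore goes considerably further than the paper in articulating a program: the formal substitution of the double--scaling ansatz into the string equations, the use of the fold--simplicity from Theorem~\ref{genthm}(b) and the asymptotic relation \eqref{dofbn} to collapse to a scalar equation, and the matching of the $\zeta_g$--recurrence with the asymptotic series of a hierarchy solution are all plausible and in the spirit of the even--valence argument, but none of this is carried out in the paper. Your final paragraph correctly identifies the genuine gap --- the lack of uniform Riemann--Hilbert control on the non--Hermitian contour $\Gamma$ in the $N^{-4/5}$ window --- and this is precisely why the authors leave the statement as a conjecture. In short: there is nothing to compare against, because the paper offers no proof; your proposal is a reasonable sketch of what a proof might look like, honestly flagged as incomplete at the analytic step.
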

In the trivalent case ($j=3$) we have compared the statement in this conjecture to a known result due to \cite{BD16} and found it to be in exact agreement.

\subsection{The Edge Toda equations} 
We conclude with the description of another system of continuum equations emerging from our anlysis which is, in fact, a novel result of possible future interest. It can be derived from  a combination of the Toda and String equutions. They can also be obtained from the $k=0$ Virasoro constraints but we do not pursue that in this paper. 

We will show that the recurrence coefficients for orthogonal polynomials satisfy the following equations, which we refer to as the edge Toda equations.  
\begin{align}		
N^{-1} \left(1 +   jt  \partial_{t} \right)a_n =& a_{n+1}b^{2}_{n+1}-a_nb^{2}_n +a_n b^{2}_{n+1} -a_{n-1}b^{2}_n  \label{EdgeTodaA} \\
N^{-1}  \left(2+  jt  \partial_t \right)b^{2}_{n} =& b^{2}_n\left(a_{n}^2-a_{n-1}^2+b^{2}_{n+1}-b^{2}_{n-1}\right).	\label{EdgeTodaB}		
\end{align}
We call equations  (\ref{EdgeTodaA}) and (\ref{EdgeTodaB}) {\it edge Toda equations} because they can be alternately derived by introducing an auxiliary variable in the matrix model which counts the edges of maps and differentiating with respect to this variable as in the derivation of the Toda lattice equations sketched earlier in Section \ref{sec:Toda}.  

We first prove (\ref{EdgeTodaA}): starting with the Toda equation (\ref{MotzkinTodaA}), we condition the term $L^{j}_{n,n-1}$ on the first Motzkin path step, and the term $L^{j}_{n+1,n}$ on its last Motzkin path step.
\begin{align*}
\frac{1}{N}  j t_j \partial_{t_j} a_n	
=& j t_j \left[ L^{j}_{-},L\right]_{n,n} \\
=&  j t_j \left(L^{j}_{n,n-1}-L^{j}_{n+1,n}\right) \\
=& j t_j \left(
\begin{array}{c} b^{2}_n L^{j-1}_{n-1,n-1}+a_nL^{j-1}_{n,n-1}+L^{j-1}_{n+1,n-1}\\ -L^{j-1}_{n+1,n+1}b^{2}_{n+1}-L^{j-1}_{n+1,n}a_{n}-L^{j-1}_{n+1,n-1}\end{array}\right) \\
=&   j t_j \left(b^{2}_n L^{j-1}_{n-1,n-1}+a_nL^{j-1}_{n,n-1}-L^{j-1}_{n+1,n+1}b^{2}_{n+1}-L^{j-1}_{n+1,n}a_{n} \right)	
\end{align*}
We now substitute index shifts (i.e. $n\rightarrow n\pm 1$) of the string equations.  We find that
\begin{align*}
\frac{1}{N}  jt_j \partial_{t_j} a_n =& b^{2}_n (-a_{n-1})+a_n\left(\frac{n}{N}-b^{2}_n\right) \\
& -(-a_{n+1})b^{2}_{n+1}-\left(\frac{n+1}{N}-b^{2}_{n+1}\right)a_{n};
\end{align*}
this is the first of the edge-Toda equations.  The second edge Toda equation (\ref{EdgeTodaB}) can be proved by a similar calculation starting from the other Toda equation (\ref{MotzkinTodaB}).

\end{document}